	\providecommand\BibTeX{{%
			\normalfont B\kern-0.5em{\scshape i\kern-0.25em b}\kern-0.8em\TeX}}}
\theoremstyle{definition}
\newtheorem{definition}{Definition}[]
\newcommand\red[1]{{#1}}
\newcommand\blue[1]{\textcolor{blue}{#1}}
\newcommand{\ceil}[1]{\lceil #1 \rceil}
\newcommand{\myuline}[1]{%
	\uline{\phantom{#1}}%
	\llap{\contour{white}{#1}}%
}
\newcommand{\smallsection}[1]{{\noindent {\bolden{\myuline{#1}}}}}
\newcommand{\abs}[1]{\lvert#1\rvert}
\definecolor{peace}{RGB}{228, 26, 28}
\definecolor{love}{RGB}{55, 126, 184}
\definecolor{joy}{RGB}{77, 175, 74}
\definecolor{kindness}{RGB}{152, 78, 163}
\newcommand{\setbr}[1]{\{#1\}}
\newcommand{\AlignFootnote}[1]{%
	\ifmeasuring@
	\else
	\iffirstchoice@
	\footnote{#1}%
	\fi
	\fi}
\newcommand\Algphase[1]{%
	\vspace*{-.7\baselineskip}\Statex\hspace*{\dimexpr-\algorithmicindent-2pt\relax}\rule{\textwidth}{0.4pt}%
	\Statex\hspace*{-\algorithmicindent}\textbf{#1}%
	\vspace*{-.7\baselineskip}\Statex\hspace*{\dimexpr-\algorithmicindent-2pt\relax}\rule{\textwidth}{0.4pt}%
}
\def\mydefbb#1{\expandafter\def\csname bb#1\endcsname{\ensuremath{\mathbb{#1}}}}
\def\mydefallbb#1{\ifx#1\mydefallbb\else\mydefbb#1\expandafter\mydefallbb\fi}
\def\mydefcal#1{\expandafter\def\csname cal#1\endcsname{\ensuremath{\mathcal{#1}}}}
\def\mydefallcal#1{\ifx#1\mydefallcal\else\mydefcal#1\expandafter\mydefallcal\fi}
\newcommand\bolden[1]{{\boldmath\bfseries#1}}
\newcommand{\Nq}{coauth-DBLP}
\newcommand{\Nw}{coauth-Geology}
\newcommand{\Ne}{NDC-classes}
\newcommand{\Nr}{NDC-substances}
\newcommand{\Nt}{contact-high}
\newcommand{\Ny}{contact-primary}
\newcommand{\Nu}{email-Enron}
\newcommand{\Ni}{email-Eu}
\newcommand{\No}{tags-ubuntu}
\newcommand{\Np}{tags-math}
\newcommand{\Na}{tags-SO}
\newcommand{\Ns}{threads-ubuntu}
\newcommand{\Nd}{threads-math}
\newcommand{\Nf}{threads-SO}
\newtheorem{theorem}{Theorem}
\newtheorem{lemma}{Lemma}
\newtheorem{proposition}{Proposition}%
\newtheorem{remark}{Remark}%
\newtheorem{observation}{Observation}%
\newtheorem{problem}{Problem}%
\title{Hypercore Decomposition for Non-Fragile Hyperedges:\\ Concepts, Algorithms, Observations, and Applications}
\author{Fanchen Bu\thanks{School of Electrical Engineering, KAIST, Daejeon, South Korea, boqvezen97@kaist.ac.kr}, \
Geon Lee\thanks{Kim Jaechul Graduate School of AI, KAIST, Seoul, South Korea, geonlee0325@kaist.ac.kr}, \ and 
Kijung Shin\thanks{Kim Jaechul Graduate School of AI and School of Electrical Engineering, KAIST, Seoul, South Korea, kijungs@kaist.ac.kr}}
\date{}
\begin{document}

\maketitle

\begin{abstract} 
{Hypergraphs are a powerful abstraction for modeling high-order relations, which are ubiquitous in many fields. 
A hypergraph consists of nodes and hyperedges (i.e., subsets of nodes); and there have been a number of attempts to extend the notion of $k$-cores, which proved useful with numerous applications for pairwise graphs, to hypergraphs.
However, the previous extensions are based on an unrealistic assumption that hyperedges are \textit{fragile}, i.e., a high-order relation becomes obsolete as soon as a single member leaves it.

In this work, we propose a new substructure model, called $(k,t)$-hypercore, based on the assumption that high-order relations remain as long as at least $t$ fraction of the members remains.
Specifically, it is defined as the maximal subhypergraph where (1) every node \red{is contained in at least $k$ hyperedges} in it and (2) at least $t$ fraction of the nodes remain in every hyperedge.
We first prove that, given $t$ (or $k$), finding the $(k,t)$-hypercore for every possible $k$ (or $t$) can be computed in time linear w.r.t the sum of the sizes of hyperedges.
Then, we demonstrate that real-world hypergraphs from the same domain share similar $(k,t)$-hypercore structures, which capture different perspectives depending on $t$.
Lastly, we show the successful applications of our model in identifying influential nodes, dense substructures, and vulnerability in hypergraphs.}
\end{abstract}



\section{Introduction}\label{sec:intro}
Graphs are a powerful model for representing pairwise relations, and they have been used for 
recommendation systems \citep{silva2010graph, debnath2008feature},
information retrieval \citep{blanco2012graph, mihalcea2011graph},
knowledge representation \citep{chein2008graph}, and many more.
However, graphs are limited to pairwise relations and thus fail to precisely describe high-order (i.e., group-wise) relations among more than two nodes.

Hypergraphs, where each hyperedge consists of an arbitrary number of nodes, break the limitation by describing high-order relations precisely \citep{benson2018simplicial, yin2017local} and contain graphs as special cases. 
Hypergraphs have been successful in modeling real-life processes in diverse fields, including chemical reactions \citep{konstantinova2001application}, epidemic spread \citep{bodo2016sis}, and blockchain economy~\citep{qu2018hypergraph}.

For a given pairwise graph, the \textit{$k$-core} \citep{seidman1983network} is a cohesive substructure that is defined as the maximal subgraph where each node has degree at least $k$ \red{(i.e., each node is incident to at least $k$ edges)} within it.
Extensive research has been conducted to show its linear-time computability \citep{batagelj2003m} and successful applications to $k$-cores, including graph visualization \citep{alvarez2006large}, community detection \citep{corominas2014detection}, anomaly detection \citep{shin2016corescope}, and biological process modeling \citep{luo2009core}.

\begin{figure}[t!]
	\centering
	\begin{subfigure}[b]{0.9\textwidth}
		\centering
		\includegraphics[scale=0.375]{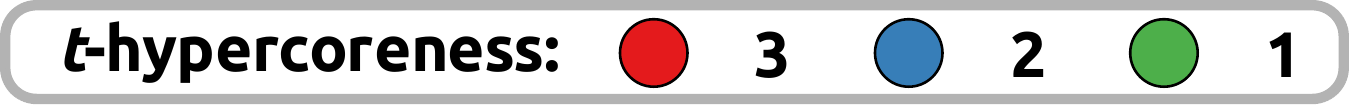}\\
	\end{subfigure}    
	\begin{subfigure}[b]{0.2\textwidth}
		\centering
		\includegraphics[width=\textwidth]{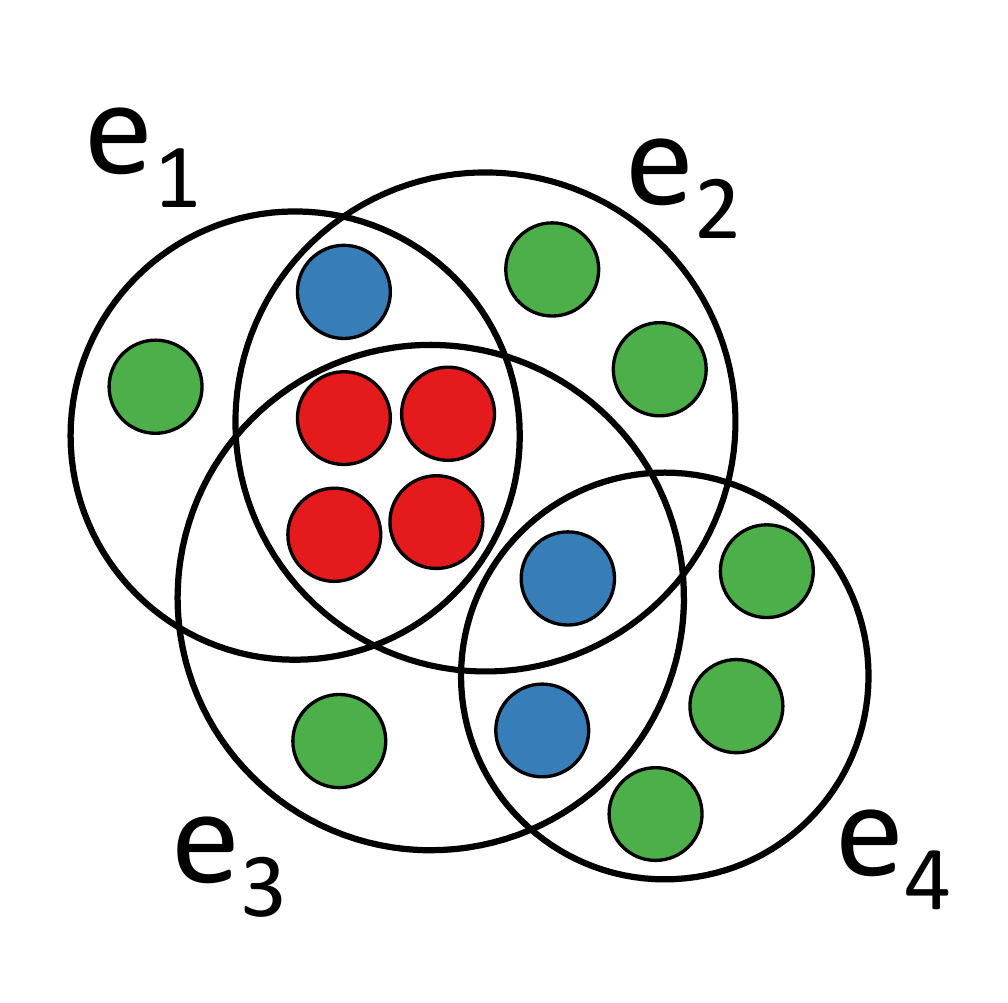}
		\caption{$t \leq \frac{2}{5}$}
		\label{fig:example_core_leq2/5}
	\end{subfigure}
	\hfill
	\begin{subfigure}[b]{0.2\textwidth}
		\centering
		\includegraphics[width=\textwidth]{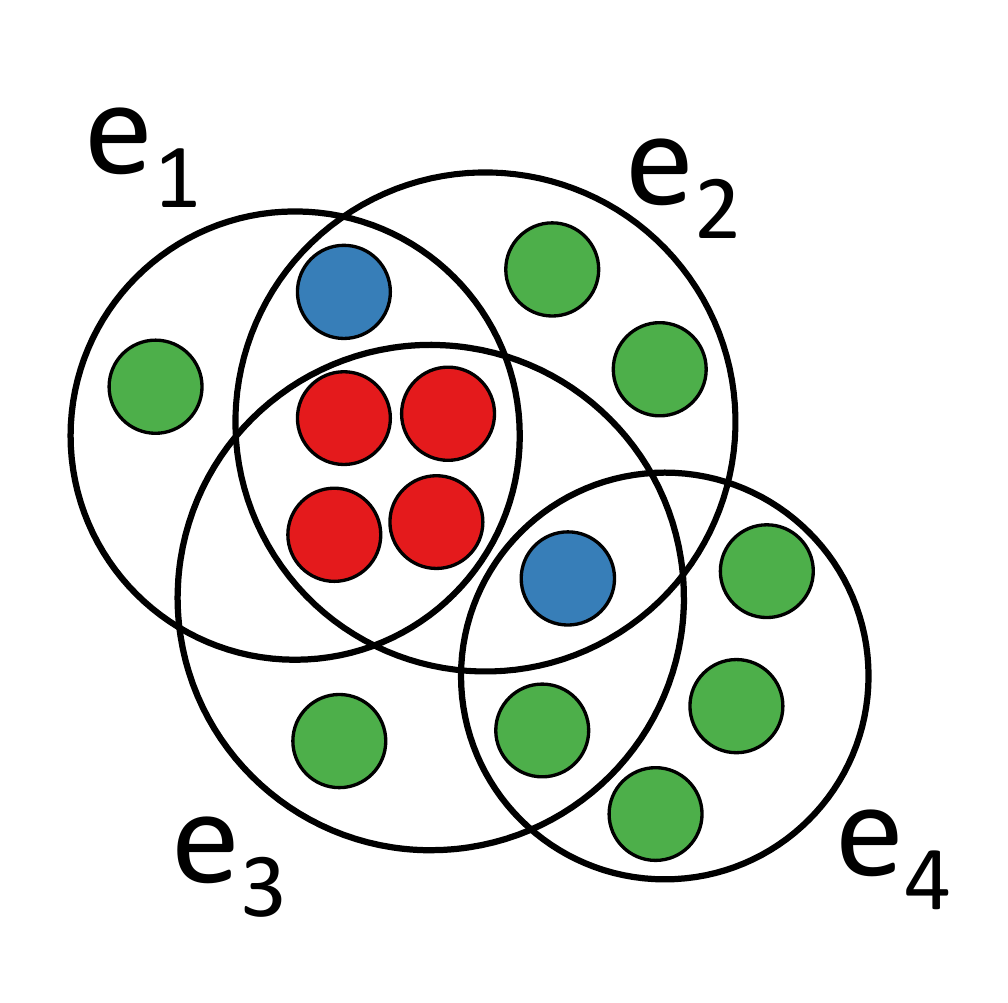}
		\caption{$\frac{2}{5} < t \leq \frac{4}{7}$}
		\label{fig:example_core_2/5-4/7}
	\end{subfigure}
	\hfill
	\begin{subfigure}[b]{0.2\textwidth}
		\centering
		\includegraphics[width=\textwidth]{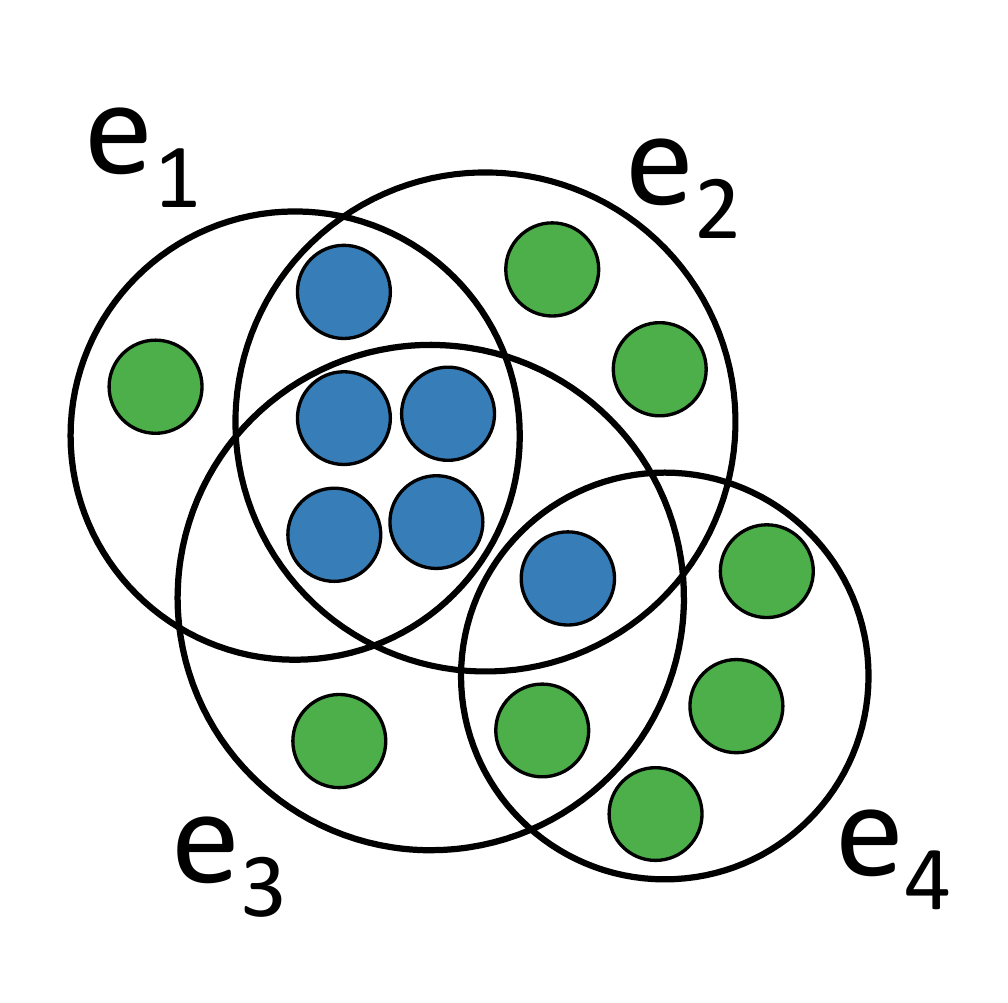}
		\caption{$\frac{4}{7} < t \leq \frac{5}{7}$}
		\label{fig:example_core_4/7-5/7}
	\end{subfigure}
	\hfill
	\begin{subfigure}[b]{0.2\textwidth}
		\centering
		\includegraphics[width=\textwidth]{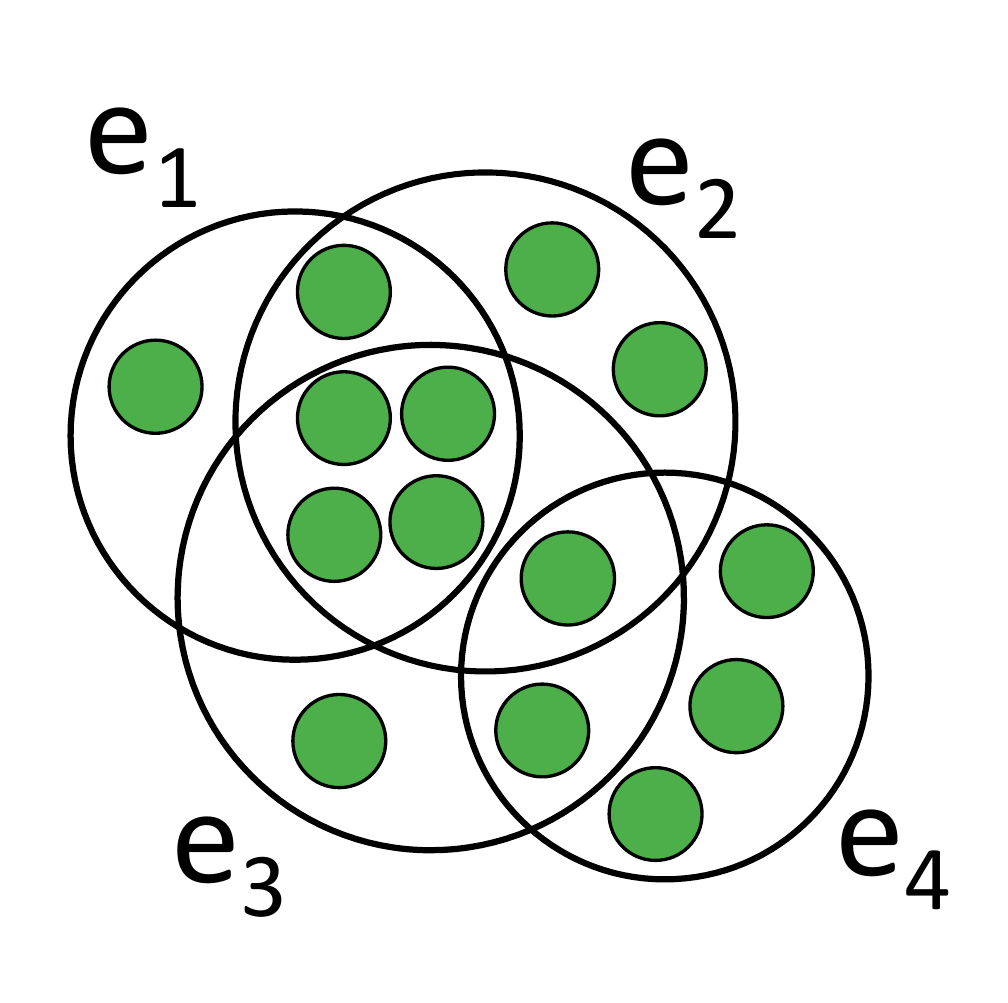}
		\caption{$t > \frac{5}{7}$}
		\label{fig:example_core_geq5/7}
	\end{subfigure} 
	\caption{\label{fig:exmaple_fragile_core}
		\bolden{An example of ($k, t$)-hypercores.}
		Assuming more robust hyperedges (i.e., decreasing the hyperedge-fraction threshold $t$) reveals cohesive substructures that are overlooked when fragile hyperedges are assumed.
		Notably, when fragile hyperedges are assumed (i.e., when $t=1$), every node has the same $t$-hypercoreness, as shown in (d).}
\end{figure}

There have been attempts to generalize the notion of $k$-cores to hypergraphs \citep{hua2023revisiting, luo2021hypercore, luo2022hypercore, gabert2021shared, gabert2021unifying, sun2020fully}, and the generalized notations, called \textit{hypercores}, commonly assume that hyperedges are \textit{fragile}.
That is, a hyperedge (i.e., a group relation) becomes obsolete as soon as \textit{any} constituent node opts out of it. Specifically, an entire hyperedge is ignored as soon as any node in it is removed during hypercore computation.
However, such an assumption is unrealistic and potentially leads to much information loss.
For example, an online group chat may remain active even if someone leaves it; and a recipe (i.e., a group of ingredients) may still produce a delicious result even if some ingredients are unavailable.
As another example, the hypergraph shown in Fig.~\ref{fig:exmaple_fragile_core}(d) cannot be decomposed into $k$-hypercores with different $k$, although the cohesiveness of subhypergraphs varies, since fragile hyperedges are assumed.

In order to better reveal the structural information in hypergraphs, we propose the notion of $(k, t)$-hypercores.
In addition to the node-degree threshold $k$, we introduce the \textit{hyperedge-fraction threshold} $t$ that determines how many constituent nodes suffice to maintain a hyperedge.
Specifically, given a hypergraph and thresholds $k$ and $t$, the \textit{$(k, t)$-hypercore} is defined as the maximal subhypergraph where (1) every node \red{is contained in at least $k$ hyperedges} in it and (2) at least $t$ fraction of the constituent nodes \red{(i.e., the nodes constituting the original hyperedge)} remain in every remaining hyperedge.
The larger the value of $t$ is, the more fragile the hyperedges are.
Based on the concept, we define the \textit{$t$-hypercoreness} of a node as the maximum $k$ {such} that the node is in the $(k, t)$-hypercore, and \textit{the $k$-fraction} of a node as the maximum $t$ {such} that the node is in the $(k, t)$-hypercore.
In Fig.~\ref{fig:exmaple_fragile_core}, we show an example where the $(k, t)$-hypercore structures change with $t$.
Notably, some other variants of hypercores have been considered.
The concept of $(k; \ell)$-hypercores has been considered by~\cite{limnios2021hcore},
where the $(k; \ell)$-hypercore requires that at least $\ell$ constituent nodes (instead of $t$ fraction required in the $(k, t)$-hypercore) remain in every remaining hyperedge.
The concept of neighbor-$k$-hypercores has been considered by~\cite{arafat2023neighborhood}, which 
focuses on the number of neighbors (i.e., nodes coexisting in at least one hyperedge) of each node,
and the concept is further extended to (neighbor, degree)-$(k, d)$-hypercores.
Compared to the existing concepts, our proposed concepts provide unique information on hypergraphs, as theoretically proven and empirically demonstrated.
\color{black}

We first show that the proposed concepts are well-defined
and have containment properties w.r.t both $k$ and $t$, then propose peeling-like computation algorithms for computing all the proposed concepts and show their correctness and time complexity.
In particular, we show that both the $t$-hypercoreness for {given $k$} and the $k$-fraction for {given $t$} of each node can be computed in time proportional to the sum of the sizes of hyperedges.

In order to demonstrate the usefulness of the proposed concepts, we investigate the $(k, t)$-hypercore structures of fourteen real-world hypergraphs in six different domains \citep{sinha2015overview, mastrandrea2015contact, leskovec2007graph} while varying $t$.
The examination leads to the following observations from different perspectives:
(1) \textit{domain-based patterns of $(k, t)$-hypercore sizes}: hypergraphs in the same domain show similar patterns of the $(k, t)$-hypercore sizes with different $k$ and $t$ values;
(2) \textit{heavy-tailed distributions of $t$-hypercoreness}: in most investigated real-world hypergraphs, the $t$-hypercoreness of nodes consistently follows heavy-tailed distributions regardless of $t$;
(3) \textit{heterogeneity of $t$-hypercoreness}: in the same real-world hypergraph, the $t$-hypercoreness with different $t$ provides statistically and information-theoretically distinct information.

We also utilize some properties of the proposed concepts in three applications:
(1) \textit{influential-node identification}: we generalize the SIR model in hypergraphs and use the model to show that $t$-hypercoreness is a reliable indicator to node-influence; 
(2) \textit{dense substructure discovery}: we show that $(k, t)$-hypercores generally have much higher density than the whole hypergraph and consider a generalized vertex cover problem to demonstrate that $t$-hypercoreness can be used to find dense substructures;
(3) \textit{vulnerability detection}: we generalize the core minimization problem to detect vulnerabilities in hypergraphs by finding the nodes whose removal reduces the size of the $(k,t)$-hypercore (for given $k$ and $t$) most, and to this end, we propose an efficient and effective algorithm.

In short, our contributions are three-fold:
\begin{itemize}
	\item \textbf{New concepts}.
	We propose the $(k, t)$-hypercore, a new substructure model for hypergraphs, together with $t$-hypercoreness and $k$-fraction (Defs.~\ref{def:kt_hypercore} to \ref{def:k_fraction}).
	In addition to the node-degree threshold $k$, the proposed concepts incorporate the hyperedge-fraction threshold $t$ to provide more comprehensive information. 
	\item \textbf{Properties and algorithms}.
	We show some theoretical properties of the proposed concepts, and computation algorithms (Algs.~\ref{alg:kt_hypercore_decomp} to \ref{alg:k_fraction}) for the proposed concepts with analyses of the correctness and time complexity (Thms.~\ref{thm:alg_kt} to \ref{thm:alg_k}).
	\item \textbf{Observations and applications}. 
	We investigate $14$ real-world hypergraphs, which leads to interesting observations (Sec.~\ref{sec:obss}), including a surprising similarity in the $(k, t)$-hypercores of hypergraphs in the same domain.
	We also show successful applications (Sec.~\ref{sec:apps}) of the proposed concepts to influence estimation, dense-substructure detection, and vulnerability detection.
\end{itemize}
\smallsection{Reproducibility.} The code and datasets are available online~\citep{onlineSuppl}.\footnote{\url{https://github.com/bokveizen/non-fragile-hypercore}}

\section{Preliminaries}\label{sec:prel}
In this section, we provide the mathematical background and preliminaries that are used {throughout} this paper.

\smallsection{Hypergraphs.}
A \textit{hypergraph} $H = (V, E)$ consists of a node set $V$ and a hyperedge multiset $E$.\footnote{A multiset is a set allowing duplicate elements.}
Given a hypergraph $H = (V, E)$, we associate each hyperedge with a distinct positive integer in $\bbN$, i.e.,
$E = \setbr{e_i: i \in I_E}$, where $I_E$ is called the \textit{index set} of $E$.
The \textit{degree} $d(v; H)$ of a node $v$ is the number of hyperedges that contain $v$, i.e., $d(v; H) = \abs{\setbr{i \in I_E: v \in e_i}}$.
The set $N(v; H)$ of \textit{neighbors} of a node $v$ is the number of nodes coexisting with $v$ in at least one hyperedge, i.e., $N(v; H) = \setbr{u \in V: u \neq v, \exists e \in E \text{~s.t.~} u, v \in e}$.
The \textit{constituent nodes} of a hyperedge $e \in E$, is the nodes in $e$.
The \textit{size} of a hyperedge $e \in E$, denoted by $\abs{E}$, is the cardinality of $E$ (i.e., the number of constituent nodes of $e$).
The \textit{size} of $H$, denoted by $\abs{H}$, is the number of nodes in $H$, i.e., $\abs{H} = \abs{V}$.
The \textit{total size} of $H$, denoted by $TS(H)$, is the sum of the size of each hyperedge in $H$ (i.e., $TS(H) = \sum_{i \in I_E} \abs{e_i}$).
\color{black}
All hypergraphs in this paper are finite, undirected, and unweighted; and in them, each node has degree at least $1$, i.e., $d(v; H) \geq 1, \forall v \in V$, and each hyperedge is of cardinality at least two, i.e., $\abs{e_i} \geq 2, \forall i \in I_E$.
If in a hypergraph $H = (V, E)$, each hyperedge is of cardinality exactly two, i.e., $\abs{e_i} = 2, \forall i \in I_E$, then $H$ is also called a (pairwise) graph.
\color{black}

\begin{definition}[Subhypergraph]
	A hypergraph $H' = (V', E')$ is a \textbf{subhypergraph} of $H = (V, E)$ if each hyperedge in $H'$ is a subset of the hyperedge with the same index in $H$, i.e., \red{$e_i' \subseteq e_i, \forall i \in I_{E'} \subseteq I_E$.}
	If {$e_i' = e_i, \forall i \in I_{E'}$,} we call $H'$ a \textbf{complete subhypergraph} of $H$.
\end{definition}
Note that a subhypergraph should be a hypergraph, and thus each hyperedge in a subhypergraph should also be of cardinality at least two.
\color{black}

We summarize the notations in Tbl.~\ref{tab:notations}.
{In the notations,} the input hypergraph $H$ may be omitted when the context is clear.

\begin{table}[t!]
	\begin{center}
		\caption{Notations.}\label{tab:notations}            
		\begin{tabular}{ll}
			\toprule
			\textbf{Notation} & \textbf{Definition}\\
			\midrule
			$H = (V, E)$    & a hypergraph with nodes $V$ and hyperedges $E$ \\
			$d(v; H)$       & the degree of $v$ in $H$\\
			$I_E$           & the index set of $E$\\
			$k, t$          & the degree and hyperedge-fraction thresholds \\
			$C_{k, t}(H)$   & the $(k, t)$-hypercore of $H$ \\
			$c_t(v; H), c_t^*(H)$ & the $t$-hypercoreness of $v$ in $H$, and that of $H$ \\
			$f_k(v; H), f_k^*(H)$     & the $k$-fraction of $v$ in $H$, and that of $H$ \\
			\bottomrule
		\end{tabular}
	\end{center}
\end{table}

\smallsection{Hypercores.}
In pairwise graphs, the concept of $k$-cores \citep{seidman1983network} {is widely used}.
Given a pairwise graph $G$ and $k \in \bbN$, the \textit{$k$-core} of $G$ is the maximal subgraph where each node has degree at least $k$ within it.
\begin{definition}[$k$-core]\label{def:k_core}
    Given a pairwise graph $G = (V, E)$ and $k \in \bbN$, the \bolden{$k$-core} of $H$, denoted by $C_k(G) = (V', E')$, is the maximal subgraph of $G$ where each node has degree at least $k$ (i.e., is incident to at least $k$ edges) within $C_k$. \footnote{In this work, the maximal subgraph (subhypergraph) satisfying some conditions means that every other graph (hypergraph) satisfying such conditions is a subgraph (subhypergraph) of the maximal one.}
\end{definition}
\color{black}
It is naturally generalized to hypergraphs~\citep{hua2023revisiting, luo2021hypercore, luo2022hypercore, gabert2021shared, gabert2021unifying, sun2020fully}, as {follows}.
\begin{definition}[$k$-hypercore]\label{def:k_hypercore}
	Given a hypergraph $H = (V, E)$ and $k \in \bbN$, the \bolden{$k$-hypercore} of $H$, denoted by $C_k(H) = (V', E')$, is the maximal \textit{complete subhypergraph} of $H$ where each node has degree at least $k$ \red{(i.e., is contained in at least $k$ hyperedges)} within $C_k$.
\end{definition}

Some variants of hypercores have been considered.
See Sec.~\ref{sec:ccpt} for some related discussions.
\color{black}

\smallsection{Clique expansion.}
One of the most common ways to convert hypergraphs into pairwise graphs is the clique expansion, where each hyperedge $e \in E$ is converted to a clique consisting of the nodes in $e$. 
Given a hypergraph $H = (V, E)$, {its} unweighted clique expansion is $G_{\mathrm{uc}}(H) = (V, \mathcal{E})$, and {its} weighted clique expansion is $G_{\mathrm{wc}}(H) = (V, \mathcal{E}, \omega)$, where the edge set $\mathcal{E} = \setbr{(u,v) \in \binom{V}{2}: \exists e \in E~s.t.~\setbr{u, v} \subseteq e}$, and the weight function
$\omega\left((u,v)\right) = \abs{\setbr{i \in I_E: \setbr{u, v} \subseteq e_i}}$.
Clique expansion provides an approach to make the hypergraphs easier to analyze, but the information on the higher-order interactions is lost,
which is natural since for a set of nodes $V$, there are $O(\abs{V}^2)$ possible pairs in $V$, while there are $O(2^{\abs{V}})$ possible subsets.
Two hypergraphs with obviously different structures may have the same clique expansions.

\smallsection{Star expansion.}
Each hypergraph can be represented as a bipartite graph, which is called its star expansion~\citep{zien1999multilevel}.
The star expansion of a hypergraph $H$ is the bipartite graph whose node set is the union of $V$ and $E$ and whose edge set consists of the incidence relations in $H$.
\begin{definition}[Star expansion]\label{def:bigraph_repr}
	Given a hypergraph $H = (V, E)$, its \textbf{star expansion} (i.e., bipartite-graph representation) is $G_{se}(H) = (V \cup E, E_{se}(H))$, where $E_{se}(H) = \setbr{(v, e): v \in V, e \in E, v \in e} \subseteq V \times E$.\footnote{Similar to clique expansion, we can also have weighted star expansion, which is, however, not used in this work.}
\end{definition}
As~\cite{yang2022semi} pointed out, although a star expansion contains all the incidence information in hypergraphs, the remaining heterogeneous structure has no explicit edges between nodes and is unsuitable for many well-studied graph algorithms designed for simple homogeneous graphs.
\color{black}

\section{Concepts}\label{sec:ccpt}
In this section, we introduce the proposed concepts and show some theoretical properties of them.
Moreover, we discuss the connections and differences between the proposed concepts and some existing related concepts.
\color{black}

\subsection{Proposed concepts}
In pairwise graphs, each edge represents a connection between two nodes, and thus the removal of either node naturally results in the complete nullification of the edge.
In contrast, a hyperedge with three or more nodes still represents the interactions among the remaining nodes even when some constituent nodes are removed.
As we have discussed and shown in Fig.~\ref{fig:exmaple_fragile_core}, the straightforward generalization in Def.~\ref{def:k_hypercore}
groundlessly assumes fragile hyperedges and suffers from information loss.
We seek to better reveal the structure of hypergraphs by considering \textit{non-fragile hyperedges}.

Therefore, we introduce the hyperedge-fraction threshold $t$ that determines the minimum proportion of constituent nodes required to maintain a hyperedge, which leads to Def.~\ref{def:kt_hypercore}.

\begin{definition}[$(k, t)$-hypercore]\label{def:kt_hypercore}
	Given $H = (V, E)$, $k \in \bbN$, and $t \in [0, 1]$, the \bolden{$(k, t)$-hypercore} of $H$, denoted by $C_{k, t}(H) = (V', E')$, is the maximal (in terms of total size) subhypergraph of $H$ where 
	(1) every node in has degree at least $k$ (i.e., is contained in at least $k$ hyperedges) within $C_{k, t}$ and 
	(2) at least $t$ proportion of the constituent nodes remain in every hyperedge of $C_{k, t}(H)$.
	Formally, $d(v; C_{k,t}(H)) \geq k, \forall v \in V'$ and $\abs{e_i' \cap e_i} \geq t\abs{e_i}, \forall i \in I_{E'} \subseteq I_E$.
\end{definition}

Note that the definition of $(k, t)$-hypercore requires that at least two nodes remain in each hyperedge because of the definition of subhypergraphs (see Sec.~\ref{sec:prel}).
See also Line~\ref{line:rem_edges_if} in Alg.~\ref{alg:kt_hypercore_decomp}.
\color{black}

\begin{definition}[$t$-hypercoreness]\label{def:t_hcoreness}
	Given $H = (V, E)$ and $t \in [0, 1]$, the \bolden{$t$-hypercoreness} of $v \in V$, denoted by $c_t(v; H)$, is the maximum positive integer such that $v$ is in the $(c_t(v), t)$-hypercore, i.e., 
	$c_t(v) = \max\setbr{k \in \bbN: v \in V(C_{k, t})}$.
	We call $c_t^*(H) \coloneqq \max \setbr{c_t(v): v \in V}$ the $t$-hypercoreness of $H$.
\end{definition}
\begin{definition}[$k$-fraction]\label{def:k_fraction}
	Given $H = (V, E)$ and $k \in \bbN$, the \bolden{$k$-fraction} of $v \in V$, denoted by $f_k(v; H)$, is the maximum real number in $[0, 1]$ such that $v$ is in the $(k, f_k(v))$-hypercore, i.e.,
	$f_k(v) = \max \setbr{t \in [0, 1]: v \in V(C_{k, t})}$.
	\red{For the completeness of definition, if $\setbr{t \in [0, 1]: v \in V(C_{k, t})} = \varnothing$, we let $f_k(v; H) = -1$.}
	We call $f_k^*(H) \coloneqq \max \setbr{f_k(v): v \in V}$ the $k$-fraction of $H$.
\end{definition}

Note that the proposed concepts are extendable to weighted hypergraphs. Specifically, as long as we have rigorous definitions of node degrees and hyperedge fractions on weighted hypergraphs, the extensions are straightforward.
\color{black}

\smallsection{Example.}
In Fig.~\ref{fig:exmaple_fragile_core}, the $t$-hypercoreness of each node changes when the $t$ changes.
Specifically, four nodes have $t$-hypercoreness $3$ when $t \leq \frac{4}{7}$.
They have $t$-hypercoreness $2$ when $\frac{4}{7} < t \leq \frac{5}{7}$, and have $t$-hypercoreness $1$ when $t > \frac{5}{7}$, which means that their $3$-fraction is $\frac{4}{7}$ and $2$-fraction is $\frac{5}{7}$.

The following propositions show that the $(k, t)$-hypercores are well-defined 
and have two-way containment properties.
\begin{proposition}[Existence and uniqueness]\label{prop:uniqueness}
	Given any hypergraph $H$, $k \in \bbN$, and $t \in [0, 1]$, $C_{k, t}$ uniquely exists \red{and is possibly empty}.
\end{proposition}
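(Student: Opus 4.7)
The plan is to exhibit a union operation on candidate subhypergraphs that preserves both defining conditions of Def.~\ref{def:kt_hypercore}, and then apply it across the finite family of all valid candidates to obtain a unique maximum, yielding both existence and uniqueness at once.

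First I would let $\calF$ denote the family of all subhypergraphs of $H$ satisfying both conditions of Def.~\ref{def:kt_hypercore}, and observe that $\calF$ is non-empty because the empty subhypergraph (no nodes, no hyperedges) vacuously satisfies both conditions. This also justifies the ``possibly empty'' clause in the statement: when $\calF$ contains only the empty subhypergraph, $C_{k,t}(H)$ is empty.

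Next, for any $H_1 = (V_1, E_1), H_2 = (V_2, E_2) \in \calF$ with $E_j = \setbr{e_i^{(j)} : i \in I_{E_j}}$ and $e_i^{(j)} \subseteq e_i$, I would define a union $H_1 \sqcup H_2 = (V_1 \cup V_2, E')$ with index set $I_{E'} = I_{E_1} \cup I_{E_2}$ and
\[
e_i' = \begin{cases} e_i^{(1)} \cup e_i^{(2)}, & i \in I_{E_1} \cap I_{E_2}, \\ e_i^{(1)}, & i \in I_{E_1} \setminus I_{E_2}, \\ e_i^{(2)}, & i \in I_{E_2} \setminus I_{E_1}. \end{cases}
\]
Since each $e_i' \subseteq e_i$ and $\abs{e_i'} \geq 2$, the union $H_1 \sqcup H_2$ is a subhypergraph of $H$. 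The fraction condition is immediate because $\abs{e_i'} \geq \max\setbr{\abs{e_i^{(1)}}, \abs{e_i^{(2)}}} \geq t \abs{e_i}$ whenever the relevant term is defined. For the degree condition, any $v \in V_1 \cup V_2$ lies in some $V_j$, and every hyperedge containing $v$ in $H_j$ still contains $v$ in $H_1 \sqcup H_2$, so $d(v; H_1 \sqcup H_2) \geq d(v; H_j) \geq k$. Hence $\calF$ is closed under $\sqcup$.

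Finally, since $H$ is finite, so is $\calF$, and iterating $\sqcup$ yields $C^{\star} \coloneqq \bigsqcup_{H' \in \calF} H' \in \calF$. By construction, every $H' \in \calF$ is contained in $C^{\star}$ both node-wise and index-wise at every hyperedge, so $TS(H') \leq TS(C^{\star})$; any other maximum must then be a subhypergraph of $C^{\star}$ with the same total size, forcing equality. Thus $C_{k,t}(H) = C^{\star}$ exists and is unique. The only real obstacle is the bookkeeping around the index-set formalism of Def.~1---in particular, confirming that unioned hyperedges still have cardinality at least two---and this is routine because $\abs{e_i^{(j)}} \geq 2$ already forces $\abs{e_i^{(1)} \cup e_i^{(2)}} \geq 2$.
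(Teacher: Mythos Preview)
Your proof is correct and follows essentially the same approach as the paper: both rely on the observation that the index-wise union of two valid subhypergraphs (merging hyperedges sharing an index) again satisfies the degree and fraction conditions. The paper argues existence by finiteness and then uniqueness by contradiction via this union, whereas you build the maximum directly as $\bigsqcup_{H'\in\calF} H'$ and read off both properties at once; the core idea is identical, and your version is slightly more explicit about verifying the cardinality-$\geq 2$ and degree conditions that the paper leaves implicit.
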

\begin{proof}
	See Appendix~\ref{subsec:pf_prop_uniqueness}.
\end{proof}

\begin{proposition}[Two-way containment] \label{prop:containment}
	Let $H$ be any hypergraph.
	Fix any $k \in \bbN$, for any $0 \leq t_1 < t_2 \leq 1$, $C_{k, t_2}(H)$ is a subhypergraph of $C_{k, t_1}(H)$.
	Similarly, fix any $t \in [0, 1]$, for any $k_1 < k_2 \in \bbN$, $C_{k_2, t}(H)$ is a subhypergraph of $C_{k_1, t}(H)$.
\end{proposition}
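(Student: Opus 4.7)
The plan is to reduce both containments to a single monotonicity observation, and then invoke Prop.~\ref{prop:uniqueness} (existence and uniqueness) to turn maximality into containment. First I would observe that the two defining conditions of a $(k,t)$-hypercore are monotone in $k$ and $t$ simultaneously: if a subhypergraph $H' = (V', E')$ of $H$ satisfies $d(v; H') \geq k_2$ for all $v \in V'$ and $\abs{e_i' \cap e_i} \geq t_2 \abs{e_i}$ for all $i \in I_{E'}$, then for any $k_1 \leq k_2$ and $t_1 \leq t_2$, it automatically satisfies $d(v; H') \geq k_1$ and $\abs{e_i' \cap e_i} \geq t_1 \abs{e_i}$. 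In particular, since a $(k,t)$-hypercore is itself a subhypergraph of $H$, $C_{k,t_2}(H)$ satisfies the defining conditions of the $(k,t_1)$-hypercore, and $C_{k_2,t}(H)$ satisfies those of the $(k_1,t)$-hypercore.

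Next I would translate maximality into containment. By Prop.~\ref{prop:uniqueness}, $C_{k,t_1}(H)$ is the unique subhypergraph of $H$ that is maximal in total size among those satisfying the $(k,t_1)$-conditions. The standard way to conclude that this maximum also \emph{contains} every other subhypergraph satisfying the same conditions is a union argument: given two subhypergraphs $H_1, H_2$ of $H$ satisfying the $(k,t_1)$-conditions, form their indexwise union $H_1 \cup H_2 = (V_1 \cup V_2, \{(e_i)_1 \cup (e_i)_2\}_{i \in I_{E_1} \cup I_{E_2}})$ (with $(e_i)_j \coloneqq \varnothing$ if $i \notin I_{E_j}$). Degrees can only grow under such a union, and $\abs{((e_i)_1 \cup (e_i)_2) \cap e_i} \geq \abs{(e_i)_j \cap e_i} \geq t_1 \abs{e_i}$ whenever $i \in I_{E_j}$, so $H_1 \cup H_2$ also satisfies the $(k,t_1)$-conditions. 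Applying this to $H_1 = C_{k,t_1}(H)$ and $H_2 = C_{k,t_2}(H)$: if $C_{k,t_2}(H)$ were not already a subhypergraph of $C_{k,t_1}(H)$, then $H_1 \cup H_2$ would strictly exceed $C_{k,t_1}(H)$ in total size while still satisfying the $(k,t_1)$-conditions, contradicting maximality. Hence $C_{k,t_2}(H)$ is a subhypergraph of $C_{k,t_1}(H)$.

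The second half of the proposition is proved identically, using $d(v; H') \geq k_2 \geq k_1$ in place of the hyperedge-fraction inequality. The only non-routine step is the union argument used to pass from ``maximal in total size'' to ``contains every competitor'', and that is presumably already carried out (or is an immediate consequence of) the proof of Prop.~\ref{prop:uniqueness}, so the main obstacle is essentially bookkeeping: verifying that indexwise union preserves both conditions and that strict non-containment implies strict increase in total size. Everything else is direct monotonicity.
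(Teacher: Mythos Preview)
Your proof is correct and essentially identical to the paper's: both argue by contradiction, taking the indexwise union $C_{k,t_1}(H) \cup C_{k,t_2}(H)$ and observing that it satisfies the $(k,t_1)$-conditions with strictly larger total size unless $C_{k,t_2}(H)$ is already a subhypergraph of $C_{k,t_1}(H)$. Your write-up is actually more careful than the paper's two-line sketch in verifying that the union preserves both the degree and fraction conditions, but the approach is the same.
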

\begin{proof}
	See Appendix~\ref{subsec:pf_prop_containment}.
\end{proof}

\subsection{Related concepts}
Below, we discuss some existing related concepts, especially the connections and differences between them and our proposed concepts.

\smallsection{Existing variants of hypercores.}
As mentioned in Sec.~\ref{sec:prel} (see Def.~\ref{def:k_hypercore}), most previous works \citep{hua2023revisiting, luo2021hypercore, luo2022hypercore, gabert2021shared, gabert2021unifying, sun2020fully} are based on the straightforward generalization of {$k$-cores to hypergraphs} assuming fragile hyperedges (i.e., a hyperedge is removed when \textit{any} node leaves it), which is equivalent to the $(k, t)$-hypercore with $t = 1$ (i.e., a special case of $(k, t)$-hypercore).
\cite{limnios2021hcore} defined the $(k; \ell)$-hypercore of a given hypergraph $H$ as the maximal subhypergraph of $H$ where  each node has degree at least $k$ \red{(i.e., is contained in at least $k$ hyperedges)} within the subhypergraph and each hyperedge contains at least $\ell$ nodes.
Based on the concept, we can define $\ell$-hypercoreness.

\begin{definition}[$(k; \ell)$-hypercore]\label{def:kl_hcore}
	Given a hypergraph $H = (V, E)$ and $k, \ell \in \bbN$, the \bolden{$(k; \ell)$-hypercore} of $H$, denoted by $\tilde{C}_{k; \ell}(H)$, is the maximal subhypergraph of $H$ such that each node has degree at least $k$ (i.e., is contained in at least $k$ hyperedges) in within $\tilde{C}_{k; \ell}(H)$ and each hyperedge contains at least $\ell$ nodes.
\end{definition}

\begin{definition}[$\ell$-hypercoreness]\label{def:l_hypercoreness}
	Given $H = (V, E)$ and $\ell \in \bbN$, the \bolden{$\ell$-hypercoreness} of $v \in V$, denoted by $\tilde{c}_\ell(v)$, is the maximum positive integer such that $v$ is in the $(\tilde{c}_\ell(v); \ell)$-hypercore, i.e., $\tilde{c}_\ell(v) = \max\setbr{k \in \bbN: v \in V(\tilde{C}_{k; \ell})}$.	
\end{definition}

The concept of $(k; \ell)$-hypercores is equivalent to a $k$-core-like concept on bipartite graphs called $(\alpha, \beta)$-cores~\citep{liu2020efficient, sariyuce2018peeling}.
\begin{definition}[$(\alpha; \beta)$-core]\label{def:alpha_beta_core}
	Given a bipartite graph $G_{B} = (V_1 \cup V_2, E)$ and $\alpha, \beta \in \bbN$,
	the \bolden{$(\alpha; \beta)$-core} of $G_B$,
	denoted by $\hat{C}_{\alpha; \beta}(G_B) = (V'_1 \cup V'_2, E')$ where $V'_1 \subseteq V_1$ and $V'_2 \subseteq V_2$,
	is the maximal subgraph of $G_B$ such that
	each node in $V'_1$ has degree at least $\alpha$ within $\hat{C}_{\alpha; \beta}(G_B)$,
	and each node in $V'_2$ has degree at least $\beta$ within $\hat{C}_{\alpha; \beta}(G_B)$.
\end{definition}
\begin{lemma}\label{lem:kl_hypercore_equiv_bipartite_core}
	Given $H$, $k$, and $\ell$, the $(k; \ell)$-hypercore of $H$ is equivalent to the $(\alpha = k, \beta = \ell)$-core of $G_{bp}(H)$, the star expansion of $H$ (see Sec.~\ref{sec:prel}).
\end{lemma}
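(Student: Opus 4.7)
The plan is to exhibit a natural, order-preserving bijection between subhypergraphs of $H$ (satisfying the basic well-definedness requirements) and edge-subgraphs of the star expansion $G_{se}(H)$, and verify that under this bijection the defining constraints of a $(k;\ell)$-hypercore match those of an $(\alpha=k,\beta=\ell)$-core. Since both maximal objects exist and are unique, they must correspond.

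Concretely, given a subhypergraph $H' = (V', E')$ of $H$ with $E' = \{e_i' : i \in I_{E'} \subseteq I_E\}$ and $e_i' \subseteq e_i$, I would associate the bipartite subgraph $\Phi(H') \subseteq G_{se}(H)$ on the vertex set $V' \cup \{e_i : i \in I_{E'}\}$ with edge set $\{(v, e_i) \in V \times E : i \in I_{E'},\ v \in e_i'\}$. Conversely, from any subgraph $G' = (V_1' \cup V_2', E')$ of $G_{se}(H)$ (with $V_2' \subseteq E$), one recovers a subhypergraph of $H$ by setting $I_{E'} = \{i : e_i \in V_2'\}$ and $e_i' = \{v \in V_1' : (v, e_i) \in E'\}$. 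These two operations are mutually inverse, and both preserve the subgraph partial order.

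The crucial verification is that the degree of $v \in V'$ in the bipartite subgraph $\Phi(H')$ equals $d(v; H')$, the number of hyperedges of $H'$ containing $v$, and that the degree of $e_i \in V_2'$ equals $|e_i'|$, the size of $e_i$ as a hyperedge of $H'$. Both equalities are immediate from the incidence structure of $G_{se}(H)$ (the edges of the star expansion encode exactly the membership relation $v \in e$). Hence $H'$ satisfies the $(k;\ell)$-hypercore conditions (every node has degree $\ge k$ and every hyperedge contains $\ge \ell$ constituent nodes) if and only if $\Phi(H')$ satisfies the $(\alpha=k,\beta=\ell)$-core conditions on the two sides of the bipartition.

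Finally, taking the maximal object on each side: the maximal subhypergraph of $H$ satisfying the $(k;\ell)$-conditions is $\tilde{C}_{k;\ell}(H)$, and the maximal subgraph of $G_{se}(H)$ satisfying the $(\alpha=k,\beta=\ell)$-conditions is $\hat{C}_{k;\ell}(G_{se}(H))$; existence and uniqueness of each follow by the same union-closure argument as in Prop.~\ref{prop:uniqueness}. Since $\Phi$ is an order isomorphism that sends one feasible family to the other, the two maxima correspond, giving $\Phi(\tilde{C}_{k;\ell}(H)) = \hat{C}_{k;\ell}(G_{se}(H))$. The only mild subtlety, which I would handle with a one-line remark, is the convention that a subhypergraph must have $|e_i'| \ge 2$; for $\ell \ge 2$ this is automatic from the $\beta$-constraint, and otherwise it simply corresponds to restricting to bipartite subgraphs in which each retained hyperedge-vertex has degree at least $2$. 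I do not expect any substantive obstacle; the whole argument is bookkeeping around the star-expansion correspondence.
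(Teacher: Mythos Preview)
Your proposal is correct and follows essentially the same approach as the paper: the paper's proof rests on exactly the two correspondences you identify (that the degree of $v$ in $H'$ equals the degree of $v$ in $\Phi(H')$, and that $|e_i'|$ equals the degree of $e_i$ in $\Phi(H')$), and declares the equivalence immediate from these. Your version simply spells out the bijection $\Phi$ and the maximality argument more carefully than the paper does.
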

\begin{proof}
	See Appendix~\ref{subsec:pf_lem_kl_hypercore_equiv_bipartite_core}.	
\end{proof}

Notable, only the special case with $\ell = 2$ was actually used by~\cite{limnios2021hcore}, and such a special case (i.e., $(k; \ell=2)$-hypercore) was also previously considered by~\cite{vogiatzis2013influence}.
Also, $(k; \ell=2)$-hypercore is equivalent to the proposed $(k, t)$-hypercore with $t = 0$.
\begin{lemma}\label{lem:kl2_hypercore_equiv_kt0_hypercore}
	Given a hypergraph $H = (V, E)$ and $k \in \bbN$, $\tilde{C}_{k; \ell = 2}(H) = C_{k; t = 0}(H)$.
\end{lemma}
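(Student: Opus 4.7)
The plan is to observe that, once unpacked, the two hypercores are defined by the same pair of constraints under the paper's convention on subhypergraphs, and then invoke uniqueness. First, I would write out both definitions side by side. The $(k;\ell=2)$-hypercore is the maximal subhypergraph $H'=(V',E')$ of $H$ such that (i) $d(v;H')\geq k$ for every $v\in V'$ and (ii) every remaining hyperedge contains at least $2$ nodes. The $(k,t=0)$-hypercore is the maximal subhypergraph $H'$ of $H$ such that (i) $d(v;H')\geq k$ for every $v\in V'$ and (ii) $\abs{e_i'\cap e_i}\geq 0\cdot\abs{e_i}=0$ for every $i\in I_{E'}$.

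The key step is to recognize that constraint (ii) in the two definitions describes the same set of subhypergraphs. Condition $\abs{e_i'\cap e_i}\geq 0$ is vacuous on its own, so for the $(k,t=0)$-hypercore the only restriction on hyperedge sizes comes from the definition of a subhypergraph. But the paper explicitly states, right after the definition of subhypergraph in Sec.~\ref{sec:prel}, that every hyperedge of a subhypergraph must have cardinality at least $2$. Hence, for any subhypergraph of $H$, condition (ii) in the $(k,t=0)$ formulation is equivalent to requiring that each remaining hyperedge has at least $2$ constituent nodes, which is exactly condition (ii) in the $(k;\ell=2)$ formulation.

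Having shown that the two definitions describe identical families of admissible subhypergraphs, the conclusion is that their maxima coincide. I would close the proof by citing Proposition~\ref{prop:uniqueness} to guarantee that $C_{k,t=0}(H)$ uniquely exists, observing that the same peeling/union argument yields uniqueness of $\tilde{C}_{k;\ell=2}(H)$, and noting that the unique maximal element of the common family is therefore equal in both cases.

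The main obstacle is only notational rather than mathematical: one must be careful to invoke the paper's convention that a subhypergraph, being itself a hypergraph, automatically has hyperedges of size at least $2$, because without this convention $t=0$ would allow singleton or empty hyperedges and the two concepts would formally differ. No inductive argument, peeling analysis, or comparison of algorithms is required.
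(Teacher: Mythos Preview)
Your proof is correct and follows essentially the same approach as the paper: you unpack both definitions, observe that the $t=0$ fraction condition is vacuous so that the only hyperedge-size constraint comes from the subhypergraph convention (cardinality at least $2$), and conclude that the two definitions coincide. The paper's proof is the same argument stated more tersely, without the explicit appeal to Proposition~\ref{prop:uniqueness}.
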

\begin{proof}
	See Appendix~\ref{subsec:pf_lem_kl2_hypercore_equiv_kt0_hypercore}.
\end{proof}

Essential differences exist between the concept of $(k; \ell)$-hypercores and the concept of $(k, t)$-hypercores proposed by us.
In Appendix~\ref{subsec:pf_lem_diff_kt_kl_hcore}, we theoretically analyze the limitations of the $(k; \ell)$-hypercores and the superiority of the proposed $(k, t)$-hypercores with empirical comparisons.
For example, Lem.~\ref{lem:diff_kt_kl_hcore} below tells us that the proposed concept of $(k, t)$-hypercores can provide unique information of a hypergraph, which is not contained in the existing concept of $(k; \ell)$-hypercores for any $\ell$.
\begin{lemma}\label{lem:diff_kt_kl_hcore}
	There exist $H, k, t$ such that $C_{k, t}(H) \neq \tilde{C}_{k; \ell}(H)$ for any $\ell$.
\end{lemma}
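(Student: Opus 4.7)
The plan is to exhibit a specific hypergraph $H$ together with concrete values of $k$ and $t$, and then show that for every $\ell \in \bbN$ the $(k;\ell)$-hypercore $\tilde{C}_{k;\ell}(H)$ differs from $C_{k,t}(H)$. The guiding observation is that the hyperedge-fraction condition $|e_i'|\ge t|e_i|$ in Def.~\ref{def:kt_hypercore} scales with the \emph{original} size $|e_i|$ of each hyperedge, whereas the condition $|e_i'|\ge \ell$ in Def.~\ref{def:kl_hcore} is uniform across all hyperedges. Hence by including hyperedges of well-separated sizes, a single fractional threshold $t$ induces several distinct absolute thresholds $\lceil t|e_i|\rceil$ that no single $\ell$ can simultaneously match.

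Concretely, I would take $H$ to contain at least one \emph{small} hyperedge $e_s$ of size $s$ and one \emph{large} hyperedge $e_L$ of size $L\gg s$, linked through a small number of shared nodes and padded with auxiliary hyperedges so that the degree-$k$ condition does not trivially wipe out either. I would then choose $t$ so that $\lceil ts\rceil=s$ (forcing $e_s$ to retain \emph{all} of its nodes in any surviving subhypergraph) while $\lceil tL\rceil<L$ (allowing $e_L$ to properly shrink), and arrange the incidences so that in $C_{k,t}(H)$ the hyperedge $e_s$ survives in full while $e_L$ survives with exactly $\lceil tL\rceil$ of its original nodes. In particular, $C_{k,t}(H)$ then contains two surviving hyperedges whose residual sizes $s$ and $\lceil tL\rceil$ are distinct and both at least $2$.

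The conclusion follows from a short case analysis on $\ell$. If $\ell \le 2$, the $\ell$-condition is so weak that peeling in $\tilde{C}_{k;\ell}(H)$ is driven purely by the degree-$k$ condition; I would exhibit a node in $C_{k,t}(H)$ that is retained \emph{only} because the fractional constraint forces $e_s$ to keep all of its nodes, and check that the same node is peeled in $\tilde{C}_{k;\ell}(H)$. For intermediate $\ell$ (say $3\le\ell\le s$), an analogous argument shows that $e_L$ now retains strictly more or strictly fewer nodes than in $C_{k,t}(H)$, because its absolute threshold has been decoupled from $L$. Finally, if $\ell > s$, the small hyperedge $e_s$ is removed outright in $\tilde{C}_{k;\ell}(H)$ while it survives in $C_{k,t}(H)$, so the two subhypergraphs differ on their hyperedge sets.

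The main obstacle is designing $H$ so that $C_{k,t}(H)$ is simultaneously nontrivial and exhibits the two distinct residual sizes, and so that no accidental coincidence with some $\ell$ slips through the case analysis. I would control this by keeping the incidence structure minimal (for instance, one copy of $e_s$ and several copies of $e_L$ sharing a carefully chosen overlap), and by selecting $s$, $L$, and $t$ so that $s\neq\lceil tL\rceil$, both are at least $2$, and removing any further node would cascade by the degree-$k$ condition. This rules out every $\ell$ by a uniform argument and establishes the claim.
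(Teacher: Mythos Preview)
Your overall strategy---build a hypergraph with hyperedges of well-separated sizes so that the proportional threshold $t$ induces distinct absolute thresholds that no single $\ell$ can match---is exactly the idea behind the paper's proof, and the paper carries it out by writing down a tiny explicit example (four hyperedges on six nodes with $k=2$, $t=3/4$) and checking each $\ell$ by direct computation.

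However, your case analysis for $\ell\le 2$ is backwards and, as written, cannot work. You propose to ``exhibit a node in $C_{k,t}(H)$ that is retained only because the fractional constraint forces $e_s$ to keep all of its nodes, and check that the same node is peeled in $\tilde{C}_{k;\ell}(H)$.'' But the fractional constraint is a \emph{removal} condition: it can only make the hypercore smaller, never larger. Concretely, by Lemma~\ref{lem:kl2_hypercore_equiv_kt0_hypercore} we have $\tilde{C}_{k;\ell=2}(H)=C_{k,t=0}(H)$, and by Proposition~\ref{prop:containment} the $(k,t)$-hypercore is a subhypergraph of $C_{k,0}$ for every $t\ge 0$. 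Hence every node of $C_{k,t}(H)$ already lies in $\tilde{C}_{k;2}(H)$, and no node can be ``retained in $C_{k,t}$ but peeled in $\tilde{C}_{k;2}$.'' The correct direction for this case is the opposite: arrange that the fractional constraint on the large hyperedge $e_L$ forces a cascade removing some node (or shrinking some hyperedge) that survives under the weaker $\ell=2$ condition, so that $\tilde{C}_{k;2}(H)$ is strictly larger than $C_{k,t}(H)$. Once you flip this, the rest of your plan (and the $\ell>s$ case in particular) is fine; but at the level of a lemma this simple, you are better served by simply writing down a concrete $H$ and verifying all $\ell$ by hand, as the paper does.
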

\begin{proof}
	See Appendix~\ref{subsec:pf_lem_diff_kt_kl_hcore}.	
\end{proof}

Recently, \cite{arafat2023neighborhood} proposed a variant of hypercores, where 
for each node, the number of neighbors (i.e., nodes coexisting in at least one hyperedge) of this node
(instead of the degree of this node) is considered,
which leads to the concept of neighbor-$k$-hypercores.
Based on the concept, we can define neighbor-hypercoreness.

\begin{definition}[neighbor-$k$-hypercores]\label{def:nbr_k_hypercore}
	Given a hypergraph $H = (V, E)$, and $k \in \bbN$, the \bolden{neighbor-$k$-hypercore} of $H$, denoted by $C^{nbr}_{k}(H)$, is the maximal complete subhypergraph of $H$ such that each node in $C^{nbr}_{k}(H)$ has at least $k$ neighbors (i.e., $\abs{N(v; C^{nbr}_{k}(H))} \geq k, \forall v \in C^{nbr}_{k}(H)$).
\end{definition}

\begin{definition}[neighbor-hypercoreness]\label{def:nbr_hypercoreness}
	Given $H = (V, E)$, the \bolden{neighbor-hypercoreness} of $v \in V$, denoted by $c^{nbr}(v)$, is the maximum positive integer such that $v$ is in the neighbor-$c^{nbr}(v)$-hypercore, i.e., $c^{nbr}(v) = \max\setbr{k \in \bbN: v \in V(C^{nbr}_k)}$.
\end{definition}

\cite{arafat2023neighborhood} further extended the concept of neighbor-$k$-hypercores by incorporating the information of the degree of each node, which leads to the concept of
(neighbor, degree)-$(k, d)$-hypercores.

\begin{definition}[(neighbor, degree)-$(k, d)$-hypercores]\label{def:kd_hypercore}
	Given a hypergraph $H = (V, E)$, and $k, d \in \bbN$, the \bolden{(neighbor, degree)-$(k, d)$-hypercore} of $H$, denote by $C^{nd}_{k, d}(H)$, is the maximal complete subhypergraph of $H$ such that each node in $C^{nd}_{k, d}(H)$ has at least $k$ neighbors and has degree at least $d$ (i.e., $\abs{N(v; C^{nd}_{k, d}(H))} \geq k \land d(v; C^{nd}_{k, d}(H)) \geq d, \forall v \in C^{nd}_{k, d}(H)$).
\end{definition}

Since two parameters are involved, we can have multiple ways to define the hypercoreness w.r.t (neighbor, degree)-$(k, d)$-hypercores, and an intuitive and straightforward way is as follows.
\begin{definition}[neighbor-degree-hypercoreness]\label{def:nbr_deg_hypercoreness}
	Given $H = (V, E)$, the \textbf{neighbor-degree-hypercoreness} of $v \in V$, denoted by $c^{nd}(v)$, is the maximum positive integer such that $v$ is in the neighbor-degree-$(c^{nd}(v), c^{nd}(v))$-hypercore, i.e., $c^{nd}(v) = \max\setbr{k \in \bbN: v \in V(C^{nd}_{k, k})}$.
\end{definition}

Notably, the above two concepts consider only complete subhypergraphs, i.e., they still assume fragile hyperedges.

\color{black}

\smallsection{Simplicial complexes.}
Another way to take the subsets of hyperedges into consideration is to use simplicial complexes~\citep{torres2021and}. 
For example, \cite{preti2021strud} considered the computation of $k$-trusses in simplicial complexes.
Similar to clique expansion, converting hypergraphs into simplicial complexes also brings information loss.
Our work shows that considering the subsets of relations is meaningful also when the data is modeled as hypergraphs;
when the data is modeled as hypergraphs, considering the subsets of the relations is also meaningful; 
and we provide a way to do so.

\begin{algorithm}[t!]
	\small
	\caption{{($k,t$)-Hypercore}}\label{alg:kt_hypercore_decomp}
	\hspace*{\algorithmicindent} \textbf{Input:} {$H = (V, E)$, $k$, $t$, and original hyperedge sizes $\mathcal{D}$} \\
	\hspace*{\algorithmicindent} \textbf{Output:} {$C_{k, t}(H)$: the $(k, t)$-hypercore of $H$}
	\begin{algorithmic}[1]
		\State $\mathcal{R} \leftarrow \setbr{v \in V: d(v; H) < k}$ \Comment{\blue{Nodes to remove}} \label{line:deg_check}	
		\While{$\mathcal{R} \neq \varnothing$}
		\State $\mathcal{R}' \leftarrow \varnothing$ \label{line:kt_while_start} \Comment{\blue{Nodes to remove in next round}}
		\For{\upshape\textbf{each} $e_i \in E$ \upshape\textbf{s.t.} $e_i \cap \mathcal{R} \neq \varnothing$}
		\State $e_i \leftarrow e_i \setminus \mathcal{R}$  \Comment{\blue{Remove nodes}} \label{line:rem_nodes}
		\If {$\abs{e_i} < t\mathcal{D}(i)$ \upshape\textbf{or} $\abs{e_i} < 2$} \label{line:rem_edges_if}
		\State $\mathcal{R}' \leftarrow \mathcal{R}' \cup \{v \in e_i : d(v; H) = k\}$ \label{line:add_node_to_remove}
		\State $E \leftarrow E \setminus \setbr{e_i}$ \label{line:rem_edges} \Comment{\blue{Remove hyperedge}}
		\EndIf			
		\EndFor
		\State $V \leftarrow V \setminus \mathcal{R}$
		\State $\mathcal{R} \leftarrow \mathcal{R}'$ \label{line:kt_while_end}		
		\EndWhile
		\State \Return{$H$}
	\end{algorithmic}
\end{algorithm}

\section{Computation Algorithms}\label{sec:algs}

In this section, we provide the computation algorithms of the proposed concepts: $(k, t)$-hypercore, $t$-hypercoreness, and $k$-fraction. We also show their correctness and time complexity.

\begin{algorithm}[t!]
	\small
	\caption{$t$-Hypercoreness}\label{alg:t_hypercoreness}
	\hspace*{\algorithmicindent} \textbf{Input:} {$H = (V, E)$ and $t$} \\
	\hspace*{\algorithmicindent} \textbf{Output:} {$t$-hypercoreness $c_t(v)$ for each node $v \in V$}
	\begin{algorithmic}[1]
		\State $\mathcal{R} \leftarrow \varnothing$
		\While{$H \neq \varnothing$} \label{line:until_empty_t}
		\If{$\mathcal{R} = \varnothing$}
		\State $k \leftarrow \min_{v \in V} d(v; H) + 1$ \label{line:increase_k}
		\State $\mathcal{R} \leftarrow \setbr{v \in V: d(v; H) = k - 1}$ \label{line:find_V_rem_t}		
		\Else
		\State $c_t(v) \leftarrow k - 1, \forall v \in \mathcal{R}$ \label{line:assign_t}
		\State $\mathcal{R}' \leftarrow \varnothing$ \Comment{\blue{Nodes to remove in next round}}
		\For{\upshape\textbf{each} $e_i \in E$ \upshape\textbf{s.t.} $e_i \cap \mathcal{R} \neq \varnothing$}
		\State $e_i \leftarrow e_i \setminus \mathcal{R}$  \Comment{\blue{Remove nodes}} 
		\If{$\abs{e_i} < t\mathcal{D}(i)$ \upshape\textbf{or} $\abs{e_i} < 2$}
		\State $\mathcal{R}' \leftarrow \mathcal{R}' \cup \{v \in e_i : d(v; H) = k\}$
		\State $E \leftarrow E \setminus \setbr{e_i}$ \Comment{\blue{Remove hyperedge}} 				
		\EndIf
		\EndFor
		\State $V \leftarrow V \setminus \mathcal{R}$
		\State $\mathcal{R} \leftarrow \mathcal{R}'$
		\EndIf
		\EndWhile
		\State \Return{$c_t(v)$ for each $v \in V$}
	\end{algorithmic}
\end{algorithm}

\subsection{Computation of $(k, t)$-hypercore}
Alg.~\ref{alg:kt_hypercore_decomp} shows the process of {finding a ($k, t$)-hypercore},
where $\mathcal{D}$ maps the index of a hyperedge to the original size of the hyperedge (in the original hypergraph $H = (V, E)$, $\mathcal{D}(i) = \abs{e_i}, \forall i \in I_E$).
During the process, we remove each node with degree less than $k$ from all its incident hyperedges (Line~\ref{line:rem_nodes}) and delete each hyperedge with the number of remaining nodes below the threshold (Lines~\ref{line:rem_edges_if} to \ref{line:rem_edges}).
Notably, in the threshold for hyperedges (Line~\ref{line:rem_edges_if}), we also require the cardinality to be at least $2$ because of the definition of hypergraphs.
When the degree of a node decreases from $k$ to $k - 1$, it is added to the set of nodes to be removed in the next round (Line~\ref{line:add_node_to_remove}).
\begin{theorem}\label{thm:alg_kt}
	Given $H = (V, E)$, $k \in \bbN$, and $t \in [0, 1]$, Alg.~\ref{alg:kt_hypercore_decomp} returns $C_{k, t}(H)$ in $O(\abs{V} + \abs{E} + (1-t) \sum_{e \in E} \abs{e})$ time.\footnote{We assume that the input hypergraph is in the memory and thus do not count the complexity of loading the hypergraph, which is $O(\sum_{e \in E} \abs{e})$.}
\end{theorem}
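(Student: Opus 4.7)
The plan is to establish correctness and the running-time bound separately, via a loop-invariant argument for the former and an amortized-counting argument for the latter.

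For correctness, I would maintain the invariant that at the beginning of every while-loop iteration, $C_{k,t}(H_0) \subseteq H$, where $H_0$ is the original input hypergraph and $H$ is the current working hypergraph. The invariant holds initially because $C_{k,t}(H_0)$ is a well-defined subhypergraph of $H_0$ by Prop.~\ref{prop:uniqueness}. It is preserved by each removal step: a node $v$ enters $\mathcal{R}$ only when $d(v;H)<k$ (at Line~\ref{line:deg_check} or Line~\ref{line:add_node_to_remove}), and by the invariant together with monotonicity of degree under subhypergraph containment, $d(v;C_{k,t}(H_0)) \leq d(v;H) < k$, so $v \notin V(C_{k,t}(H_0))$; similarly, a hyperedge $e_i$ is deleted at Line~\ref{line:rem_edges} only when $|e_i'|<t\mathcal{D}(i)$ or $|e_i'|<2$, either of which violates the defining conditions of $C_{k,t}(H_0)$ in Def.~\ref{def:kt_hypercore}. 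Upon termination ($\mathcal{R}=\varnothing$), every remaining node has degree at least $k$ and every remaining hyperedge has size at least $\max(t\mathcal{D}(i),2)$, so the current $H$ meets the defining conditions of the $(k,t)$-hypercore; combined with $C_{k,t}(H_0)\subseteq H$ and the maximality of $C_{k,t}(H_0)$, this forces $H=C_{k,t}(H_0)$.

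For the running time, I assume the standard adjacency-list representation with cross-pointers, so that removing a given $(v,e)$ incidence costs $O(1)$ and each node's current degree is kept in an integer counter. The initial $\mathcal{R}$ at Line~\ref{line:deg_check} is then produced in $O(|V|)$ time. The central counting step is to bound the total number of individual node-from-edge removal operations across the entire execution: for each hyperedge $e_i$, the size check at Line~\ref{line:rem_edges_if} triggers deletion as soon as its remaining size drops strictly below $t\mathcal{D}(i)$, so at most $(1-t)|e_i|+1$ nodes can ever be individually removed from $e_i$ in Line~\ref{line:rem_nodes} (and none afterwards, since the edge has been removed from $E$). Summing over all hyperedges, Line~\ref{line:rem_nodes} executes at most $|E|+(1-t)\sum_{e\in E}|e|$ times, each at $O(1)$ cost.

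The main obstacle is the amortized accounting for Line~\ref{line:add_node_to_remove}, which at face value iterates over every remaining node of a deleted hyperedge (up to $|e_i'|$ nodes per deletion) in order to find those whose current degree equals $k$, and which could naively contribute up to $\Theta(\sum_e|e|)$ total work. I would resolve this by merging the degree-decrement bookkeeping into the incidence-removal count: each remaining $(v,e)$ incidence of a deleted edge is treated as a single amortized event, charged jointly with the edge's deletion to the $O(|E|)$ per-edge bookkeeping rather than being counted separately per remaining node; in particular, the check $d(v;H)=k$ and the insertion into $\mathcal{R}'$ are folded into the same amortized step. Combining the $O(|V|)$ initialization, the $O(|E|)$ per-edge overhead, and the $O(1)$-per-event cost across at most $(1-t)\sum_{e\in E}|e|$ explicit removal events yields the claimed $O(|V|+|E|+(1-t)\sum_{e\in E}|e|)$ bound.
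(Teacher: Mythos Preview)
Your correctness argument via the loop invariant $C_{k,t}(H_0)\subseteq H$ is sound and tracks the paper's own reasoning, which is phrased instead as a ``first violated incidence'' contradiction; the two are equivalent.

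For the running time, your count of Line~\ref{line:rem_nodes} removals (at most $(1-t)|e_i|+1$ per hyperedge) matches the paper. The gap is in your treatment of Line~\ref{line:add_node_to_remove}. When a hyperedge $e_i$ is deleted, that line must scan every node still remaining in $e_i$ to decrement its degree and test $d(v)=k$; there can be up to $\max(\lceil t\,\mathcal{D}(i)\rceil,2)-1$ such nodes. You propose to charge each of these remaining $(v,e_i)$ incidences ``to the $O(|E|)$ per-edge bookkeeping,'' but that budget affords only $O(1)$ per edge, whereas a single deleted edge incurs $\Theta(t\,\mathcal{D}(i))$ work here. Summed over all deletions this is $\Theta\bigl(t\sum_{e}|e|\bigr)$, which for $t$ bounded away from $0$ is not absorbed by any term of the target $O\bigl(|V|+|E|+(1-t)\sum_e|e|\bigr)$; in particular, at $t=1$ your amortization would certify $O(|V|+|E|)$ even though deleting each edge already forces a scan of its $\Theta(|e|)$ remaining nodes. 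The paper's own proof is brief at exactly this point (it bounds the number of ``visits'' to each edge and the number of insertions into $\mathcal{R}'$, but does not separately bound the per-deletion scan), so the issue is genuinely delicate; nevertheless, the specific charging scheme you describe is unbalanced and does not close the gap you correctly identified.
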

\begin{proof}
	See Appendix~\ref{subsec:pf_thm_alg_kt}.
\end{proof}

\subsection{Computation of $t$-hypercoreness}
Alg.~\ref{alg:t_hypercoreness} describes the process of computing $t$-hypercoreness.
Essentially, by the containment property w.r.t $k$, we {repeatedly} find the $(k, t)$-hypercore, while increasing $k$ until the {remaining} hypergraph becomes empty; and thus Alg.~\ref{alg:t_hypercoreness} can also output the $(k,t)$-hypercores for the given $t$ and all possible $k$ with the same time complexity as shown in Thm.~\ref{thm:alg_t}.
\begin{theorem}\label{thm:alg_t}
	Given $H = (V, E)$ and $t \in [0, 1]$, Alg.~\ref{alg:t_hypercoreness} returns $c_t(v)$ for {all} $v \in V$ in $O(c_t^* \abs{V} + \abs{E} + (1 - t) \sum_{e \in E} \abs{e})$ time.
\end{theorem}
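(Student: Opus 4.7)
The plan is to establish correctness through a loop invariant and then bound the running time by an amortized charging argument, both building on Thm.~\ref{thm:alg_kt}. Let $H_0$ denote the input hypergraph. The invariant I would maintain is that every time the branch $\mathcal{R} = \varnothing$ is entered (Lines~\ref{line:increase_k}--\ref{line:find_V_rem_t}), the current hypergraph $H$ equals $C_{k_{prev}, t}(H_0)$, where $k_{prev}$ is the value of $k$ set in the previous such entry (with $H = H_0$ at the very first entry). Granting the invariant, the update to $k_{new} = \min_{v \in V} d(v; H) + 1$ and the subsequent peeling rounds coincide line-by-line with a single execution of Alg.~\ref{alg:kt_hypercore_decomp} on $H$ with thresholds $(k_{new}, t)$, so by Thm.~\ref{thm:alg_kt} the peeling terminates with the remaining hypergraph equal to $C_{k_{new}, t}(H)$. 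A short nesting identity, $C_{k_{new}, t}(C_{k_{prev}, t}(H_0)) = C_{k_{new}, t}(H_0)$ whenever $k_{new} \geq k_{prev}$---which follows from Prop.~\ref{prop:containment} together with the fact that the hyperedge-fraction condition is always checked against the original sizes $\mathcal{D}(i)$---then restores the invariant.

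With the invariant in hand, verifying the $c_t$ assignment in Line~\ref{line:assign_t} is essentially bookkeeping. A node peeled during the current phase lies in $C_{k_{prev}, t}(H_0) \setminus C_{k_{new}, t}(H_0)$; moreover, because $k_{new} - 1$ is the current minimum degree of $H$, no node in $H$ has degree in $\setbr{k_{prev}, \dots, k_{new} - 2}$, so the intermediate hypercores $C_{k_{prev}, t}(H_0), \dots, C_{k_{new} - 1, t}(H_0)$ all coincide with $H$. Therefore the maximum $k'$ with $v \in V(C_{k', t}(H_0))$ is exactly $k_{new} - 1$, matching Def.~\ref{def:t_hcoreness}.

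For the running time, I would partition the work into increase-$k$ phases and peeling phases. Because $k$ strictly advances on each entry to Line~\ref{line:increase_k} and cannot exceed $c_t^*(H_0) + 1$, there are at most $O(c_t^*(H_0))$ such phases, each of which scans the surviving nodes in $O(\abs{V})$ time, contributing $O(c_t^*(H_0)\abs{V})$ overall. The peeling work aggregated across all phases has the same structure as a single bottom-up execution of Alg.~\ref{alg:kt_hypercore_decomp}, so the charging argument from Thm.~\ref{thm:alg_kt} applies directly: each of the $\abs{E}$ hyperedges is deleted at most once, and each hyperedge $e$ accumulates at most $(1-t)\abs{e}$ node-deletion charges before it is itself deleted (since it is removed once its current size drops below $t\mathcal{D}(i)$), summing to $O(\abs{E} + (1-t)\sum_{e \in E}\abs{e})$. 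Adding the two contributions yields the claimed bound.

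The main obstacle I anticipate is the nesting identity under the original-size convention, because a naive reading of Def.~\ref{def:kt_hypercore} applied to a subhypergraph might measure the fraction against the restricted sizes rather than $\mathcal{D}$, in which case the identity can fail. A brief lemma isolating this point and deriving it from Prop.~\ref{prop:containment} is the key technical step; once it is in place, the loop-invariant argument and the charging bound both reduce to routine induction.
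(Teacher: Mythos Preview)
Your proposal is correct and follows the same two-part structure as the paper's proof: correctness via the fact that each peeling phase reproduces a run of Alg.~\ref{alg:kt_hypercore_decomp} (so Thm.~\ref{thm:alg_kt} applies), and the time bound by counting at most $O(c_t^*)$ increase-$k$ phases of cost $O(\abs{V})$ plus the amortized peeling cost from Thm.~\ref{thm:alg_kt}. The paper's own argument is considerably terser and simply asserts ``by Theorem~1'' for correctness; your explicit loop invariant, the nesting identity $C_{k_{new},t}(C_{k_{prev},t}(H_0)) = C_{k_{new},t}(H_0)$, and your observation that the hyperedge-fraction condition is always checked against the original sizes $\mathcal{D}$ are exactly the details the paper leaves implicit, so you are filling in the gaps rather than taking a different route.
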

\begin{proof}
	See Appendix~\ref{subsec:pf_thm_alg_t}.
\end{proof}

\begin{algorithm}[t!]
	\small
	\caption{$k$-Fraction}\label{alg:k_fraction}	
	\hspace*{\algorithmicindent} \textbf{Input:} {$H = (V, E)$ and $k$} \\
	\hspace*{\algorithmicindent} \textbf{Output:} {$k$-fraction $f_k(v)$ for each $v \in V$}
	\begin{algorithmic}[1]
		\State $\mathcal{D}(i) \leftarrow \abs{e_i}, \forall i \in I_E$ \Comment{\blue{Record original sizes}} \label{line:rec_dim}
		\State $H' = (V', E') \leftarrow$ {Alg.~\ref{alg:kt_hypercore_decomp} with $H$, $k$, $0$ and $\mathcal{D}$}
		\State $t \leftarrow 0$
		\State $\mathcal{R} \leftarrow \varnothing$ \label{line:Ck0}
		\While{$H' \neq \varnothing$} \label{line:while_start_k}
		\If{$\mathcal{R} = \varnothing$}
		\State $t \leftarrow \min_{e_i' \in E'} \abs{e_i'} / \mathcal{D}(i)$ \label{line:increase_t}
		\For{\upshape\textbf{each} $e_i' \in E'$ \upshape\textbf{s.t.} $\abs{e_i'} = t\mathcal{D}(i)$}\label{line:th_check_k}
		\State $\mathcal{R} \leftarrow \mathcal{R} \cup \setbr{v \in e: d(v; H') = k}$
		\State $E' \leftarrow E' \setminus \setbr{e}$ 
		\EndFor \label{line:th_check_k_end}
		\Else
		\State $f_k(v) \leftarrow t, \forall v \in \mathcal{R}$ \label{line:assign_k}
		\State $\mathcal{R}' \leftarrow \varnothing$ \Comment{\blue{Nodes to remove in next round}}
		\For{\upshape\textbf{each} $e_i \in E$ \upshape\textbf{s.t.} $e_i \cap \mathcal{R} \neq \varnothing$}
		\State $e_i \leftarrow e_i \setminus \mathcal{R}$  \Comment{\blue{Remove nodes}} 
		\If{$\abs{e_i} \leq t\mathcal{D}(i)$ \upshape\textbf{or} $\abs{e_i} < 2$}
		\State $\mathcal{R}' \leftarrow \mathcal{R}' \cup \{v \in e_i : d(v; H) = k\}$ 
		\State $E \leftarrow E \setminus \setbr{e_i}$ \Comment{\blue{Remove hyperedge}} 
		\EndIf
		\EndFor
		\State $V \leftarrow V \setminus \mathcal{R}$
		\State $\mathcal{R} \leftarrow \mathcal{R}'$\label{line:while_end_k}
		\EndIf
		\EndWhile
		\State \Return{$f_k(v)$ for each $v \in V$}
	\end{algorithmic}
\end{algorithm}

\subsection{Computation of $k$-fraction}
Alg.~\ref{alg:k_fraction} shows the process of computing $k$-fraction.
Similar to Alg.~\ref{alg:t_hypercoreness}, we {repeatedly} find the $(k, t)$-hypercore while increasing $t$ until an empty hypergraph remains.
\red{We first find the minimum fraction $t$ for the remaining hyperedges (Line~\ref{line:increase_t}), i.e., at least one hyperedge will be totally removed if we use any fraction strictly larger than $t$. We check the hyperedges that will be immediately removed and collect the nodes that will consequently be removed (Lines~\ref{line:th_check_k}-\ref{line:th_check_k_end}).}
Notably, Alg.~\ref{alg:k_fraction} can output the $(k,t)$-hypercores for the given $k$ and all possible $t$ with the same time complexity in Thm.~\ref{thm:alg_k}.
\begin{theorem}\label{thm:alg_k}
	Given $H = (V, E)$ and $k \in \bbN$, Alg.~\ref{alg:k_fraction} returns $f_k(v)$ for {all} $v \in V$ in $O(\sum_{e \in E} \abs{e})$ time.
\end{theorem}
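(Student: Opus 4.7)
The plan is to establish correctness by induction on the iterations of the outer \textbf{while} loop of Alg.~\ref{alg:k_fraction} and to bound the running time by amortized analysis, charging all work either to the initial invocation of Alg.~\ref{alg:kt_hypercore_decomp} or to the eventual removals of nodes and hyperedges.

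For correctness, I would first observe that after Line~\ref{line:Ck0}, Thm.~\ref{thm:alg_kt} applied with $t=0$ guarantees that $H'$ is exactly $C_{k,0}(H)$, while every node $v$ lying outside $C_{k,0}(H)$ satisfies the convention $f_k(v) = -1$ from Def.~\ref{def:k_fraction} and may be initialized as such before the main loop. The invariant I would maintain at the top of each outer iteration is: $H'$ equals $C_{k,t}(H)$ for the current value of $t$, and every node already assigned $f_k(v)$ satisfies $f_k(v) = \max\{t' \in [0,1] : v \in V(C_{k,t'})\}$. By Prop.~\ref{prop:containment} the cores shrink monotonically in $t$, so the moment $t$ rises infinitesimally above the minimum fraction $\min_i |e_i'|/\mathcal{D}(i)$ chosen on Line~\ref{line:increase_t}, the hyperedges attaining this minimum must leave the core, triggering a cascade of node removals that mirrors one peeling phase of Alg.~\ref{alg:kt_hypercore_decomp}. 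Nodes exiting in this cascade therefore lie in $C_{k,t}(H)$ but not in $C_{k,t'}(H)$ for any $t' > t$, which justifies the assignment on Line~\ref{line:assign_k}. Termination is immediate since each outer iteration either strictly shrinks $H'$ or strictly advances $t$ through one of at most $\sum_{e \in E}|e|$ candidate fractions.

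For the running time I would bundle the work into three buckets. First, the initial call to Alg.~\ref{alg:kt_hypercore_decomp} with $t=0$ costs $O(|V| + |E| + \sum_{e \in E}|e|) = O(\sum_{e \in E}|e|)$ by Thm.~\ref{thm:alg_kt}. Second, each node enters $\mathcal{R}$ at most once during the whole execution, so by double counting the total work spent scanning incident hyperedges and shrinking them is bounded by $\sum_{v \in V} d(v;H) = \sum_{e \in E}|e|$; similarly, each hyperedge is removed at most once. Third, I need to support the repeated min query on Line~\ref{line:increase_t} and the threshold scan on Lines~\ref{line:th_check_k}--\ref{line:th_check_k_end}; my plan is to maintain, for every original size $\mathcal{D}(i)$, a bucket indexed by the current residual size $|e_i'|$, together with a monotone pointer tracking the current global minimum fraction.

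The main obstacle is this last piece: showing that the min query and the threshold scan collectively run in $O(\sum_{e \in E}|e|)$ amortized time without a logarithmic factor. The intended argument is that every decrease of a residual size $|e_i'|$ is triggered by the removal of one specific node from $e_i$, so the total number of such events across the entire execution equals $\sum_{e \in E}|e|$; each event updates the bucket structure in $O(1)$, and because $t$ only increases, the monotone pointer visits each bucket slot at most once over the course of the algorithm. Combining the three bounds then yields the stated $O(\sum_{e \in E}|e|)$ running time.
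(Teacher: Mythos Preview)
Your proposal is correct and follows essentially the same approach as the paper: an invariant argument for correctness (each node is assigned its $k$-fraction exactly when it leaves the shrinking hypercore) and an amortization over node and hyperedge removals for the time bound. In fact you are more careful than the paper on one point---the paper's proof simply asserts that the while loop costs $O(\abs{V}+\abs{E}+\sum_e\abs{e})$ ``as shown in the previous proofs'' without discussing how Line~\ref{line:increase_t} is supported in $O(1)$ amortized time, whereas your bucket-plus-monotone-pointer scheme explicitly closes that gap.
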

\begin{proof}
    See Appendix~\ref{subsec:pf_thm_alg_k}.
\end{proof}

    There are some existing works on improving the efficiency of the computation of some related hypercore concepts~\citep{luo2021hypercore, luo2022hypercore, arafat2023neighborhood}.
    We leave potential improvements of our computation algorithms as future directions.	
\color{black}

\section{Observations}\label{sec:obss}
\begin{table}[t!]
	\begin{center}
		\caption{{The basic statistics of the $14$ real-world datasets from $6$ domains used in our empirical evaluations.} \red{See Table~\ref{tab:num_card_edges} in Appendix~\ref{appendix:extra_exp} for the number of hyperedges of different cardinality in each dataset.}}
		\label{tab:datasets}			\resizebox{0.6\linewidth}{!}{%
		\begin{tabular}{ lrrrr}
			\toprule
			\textbf{Dataset}& $\abs{V}$   & $\abs{E}$& max./avg. $d(v)$ & max./avg. $\abs{e}$ \\
			\midrule
			coauth-DBLP 	& 1,831,126 & 2,169,663 & 846 / 4.06 	    & 25 / 3.42 \\
			coauth-Geology 	& 1,087,111 & 908,516  & 716 / 3.21 	    & 25 / 3.84 \\
			\midrule
			NDC-classes 	& 1,149    & 1,047 	& 221 / 5.57	    & 24 / 6.11 \\
			NDC-substances  & 3,438    & 6,264 	& 578 / 14.51	    & 25 / 7.96 \\
			\midrule
			contact-high 	& 327     & 7,818    & 148 / 55.63	    & 5 / 2.33  \\
			contact-primary & 242	  & 12,704	& 261 / 126.98	    & 5 / 2.42  \\
			\midrule
			email-Enron		& 143	  & 1,457	& 116 / 31.43	    & 18 / 3.09 \\
			email-Eu		& 979	  & 24,399	& 910 / 86.93	    & 25 / 3.49 \\
			\midrule
			tags-ubuntu 	& 3,021 	  & 145,053  & 12,930 / 164.56  & 5 / 3.43  \\
			tags-math 		& 1,627 	  & 169,259  & 13,949 / 363.80  & 5 / 3.50  \\
			tags-SO			& 49,945	  & 5,517,054 & 520,468 / 427.77 & 5 / 3.87  \\
			\midrule
			threads-ubuntu  & 90,054   & 115,987  & 2,170 / 2.97		& 14 / 2.31 \\
			threads-math 	& 153,806  & 535,323  & 11,358 / 9.08  	& 21 / 2.61 \\
			threads-SO		& 2,321,751 & 8,589,420 & 34,925 / 9.75	& 25 / 2.64 \\
			\bottomrule %
		\end{tabular}
  }
	\end{center}
\end{table}

In this section, we present observations with regard to our proposed concepts, on real-world hypergraphs, from various perspectives. 
In particular, we show empirical properties and patterns that are pervasive or shared within each domain.

\smallsection{Datasets.}
In Tbl.~\ref{tab:datasets}, we report the basic statistics of the fourteen real-world hypergraph datasets in six different domains used in this work (source: \blue{\url{cs.cornell.edu/~arb/data}}).

For each dataset, we remove the hyperedges of cardinality $1$.
Although parallel hyperedges are allowed in our framework, we only keep one copy of each group of parallel hyperedges as in previous studies~\citep{kook2020evolution, lee2020hypergraph, do2020structural, lee2021hyperedges}.

\begin{figure*}[t]
	\centering    
	\begin{subfigure}[b]{\linewidth}
		\centering
		\includegraphics[scale=0.5]{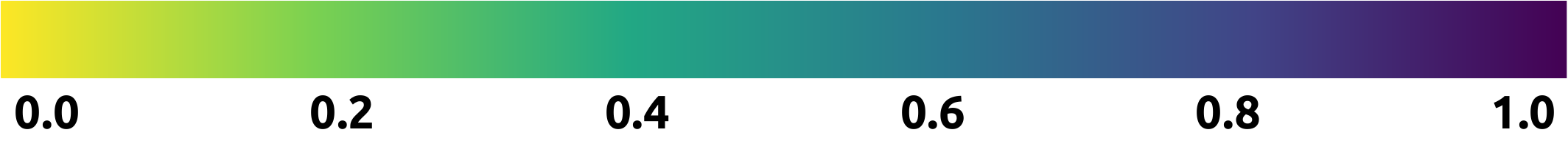}\\
	\end{subfigure}
	\begin{subfigure}[b]{0.48\linewidth}
		\centering
		\includegraphics[scale=0.35]{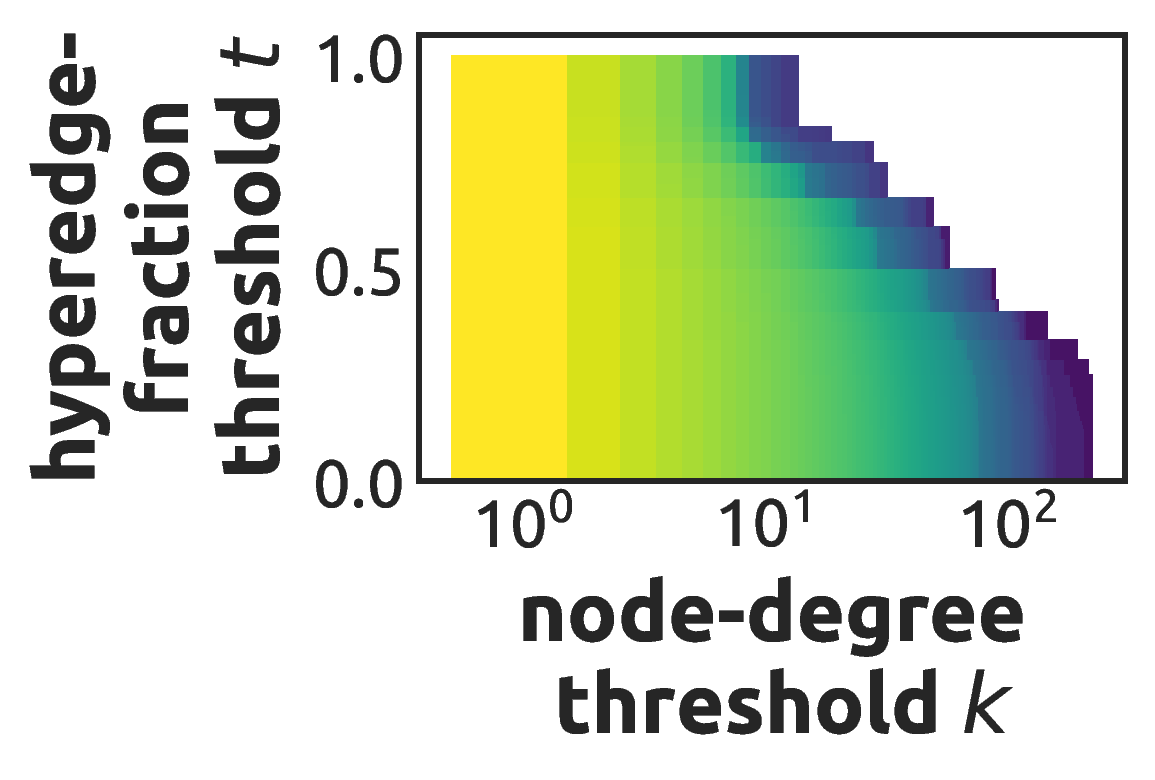}
		\includegraphics[scale=0.35]{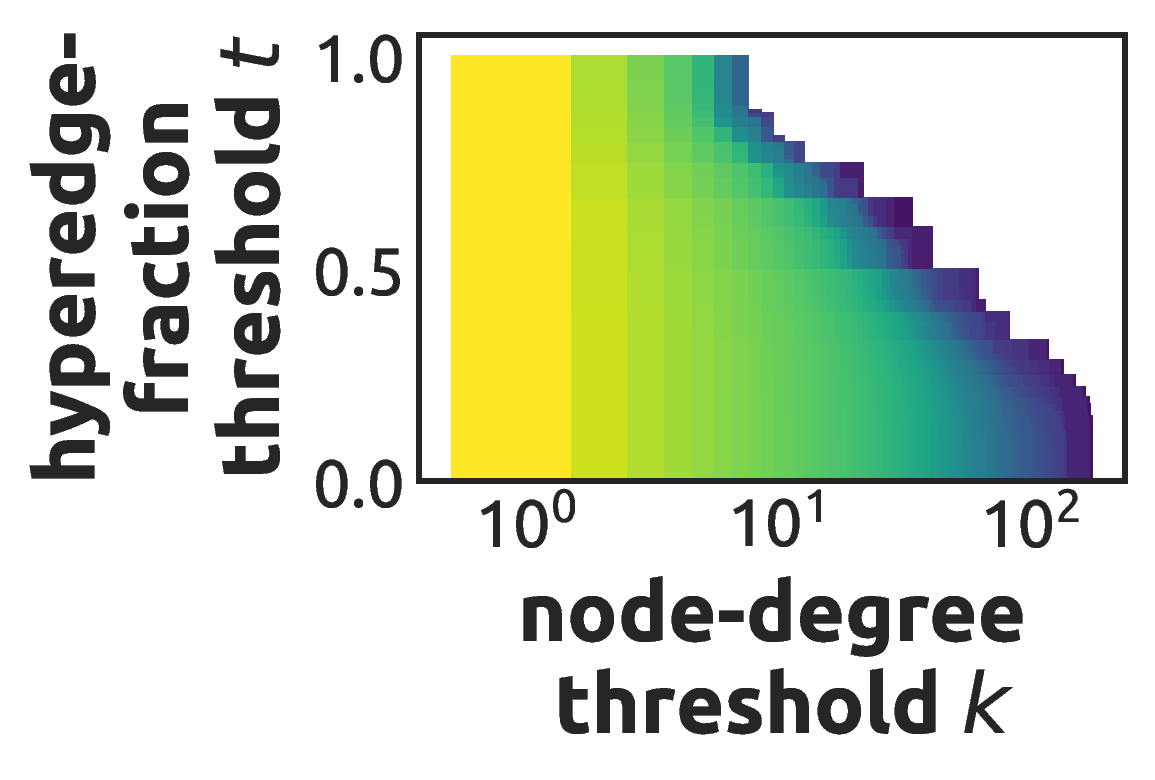}
		\caption{coauth-DBLP/Geology}
	\end{subfigure}
	\begin{subfigure}[b]{0.48\linewidth}
		\centering
		\includegraphics[scale=0.35]{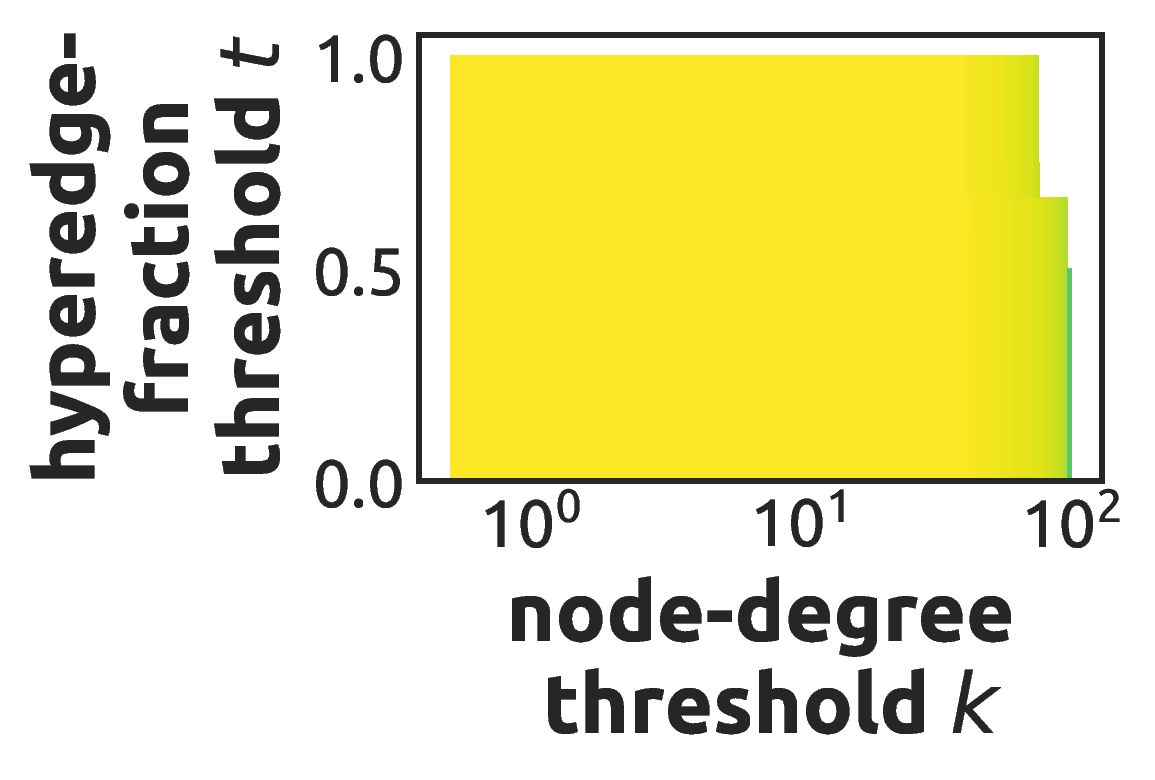}
		\includegraphics[scale=0.35]{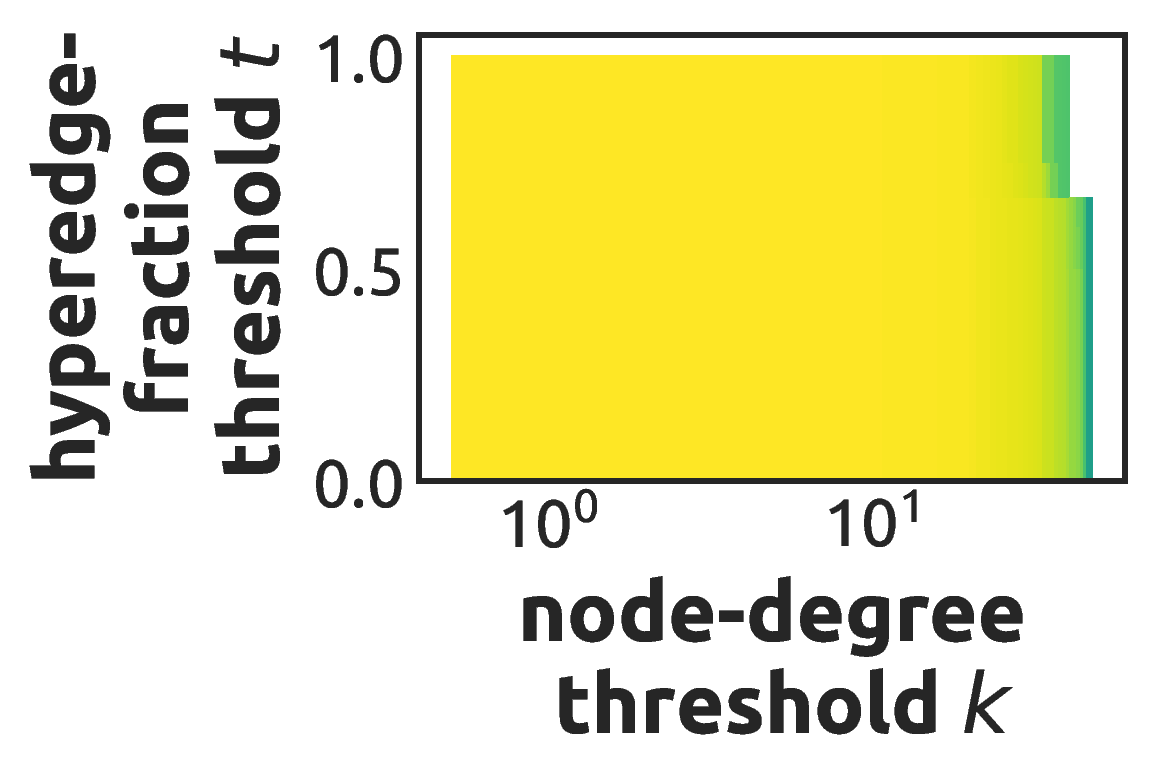}
		\caption{contact-primary/high}
	\end{subfigure}    
	\begin{subfigure}[b]{0.48\linewidth}
		\centering		
		\includegraphics[scale=0.35]{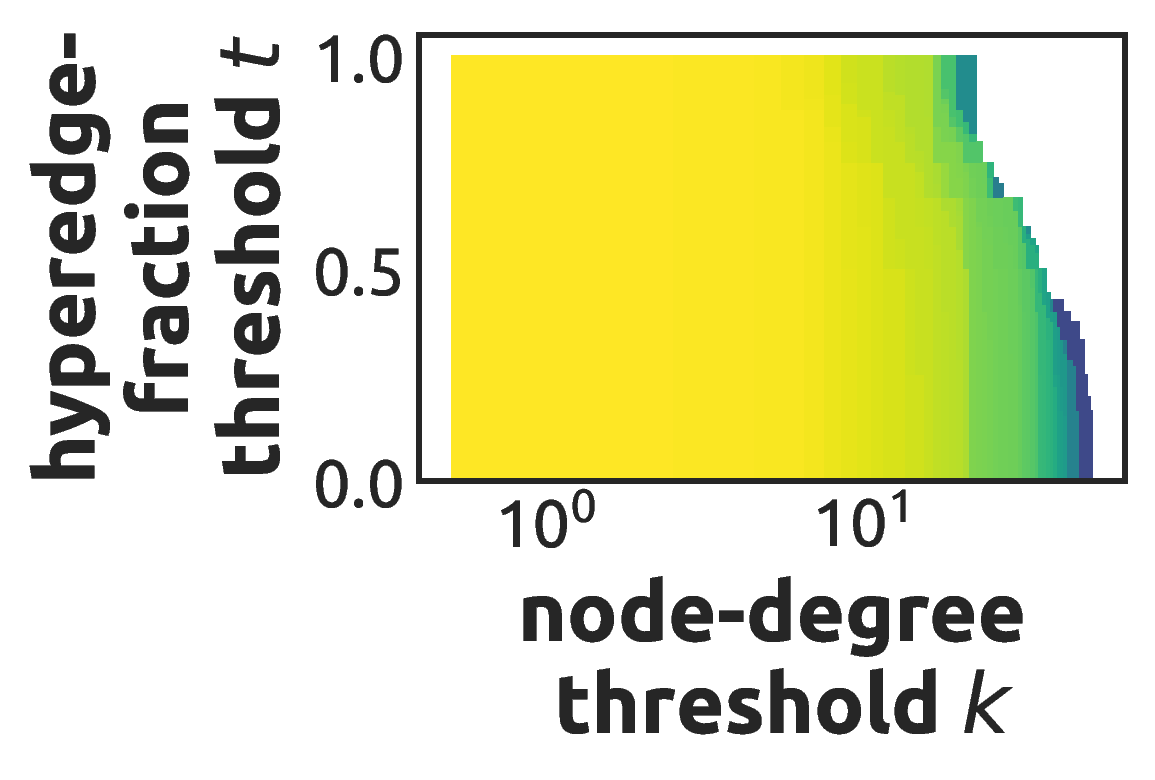}  
		\includegraphics[scale=0.35]{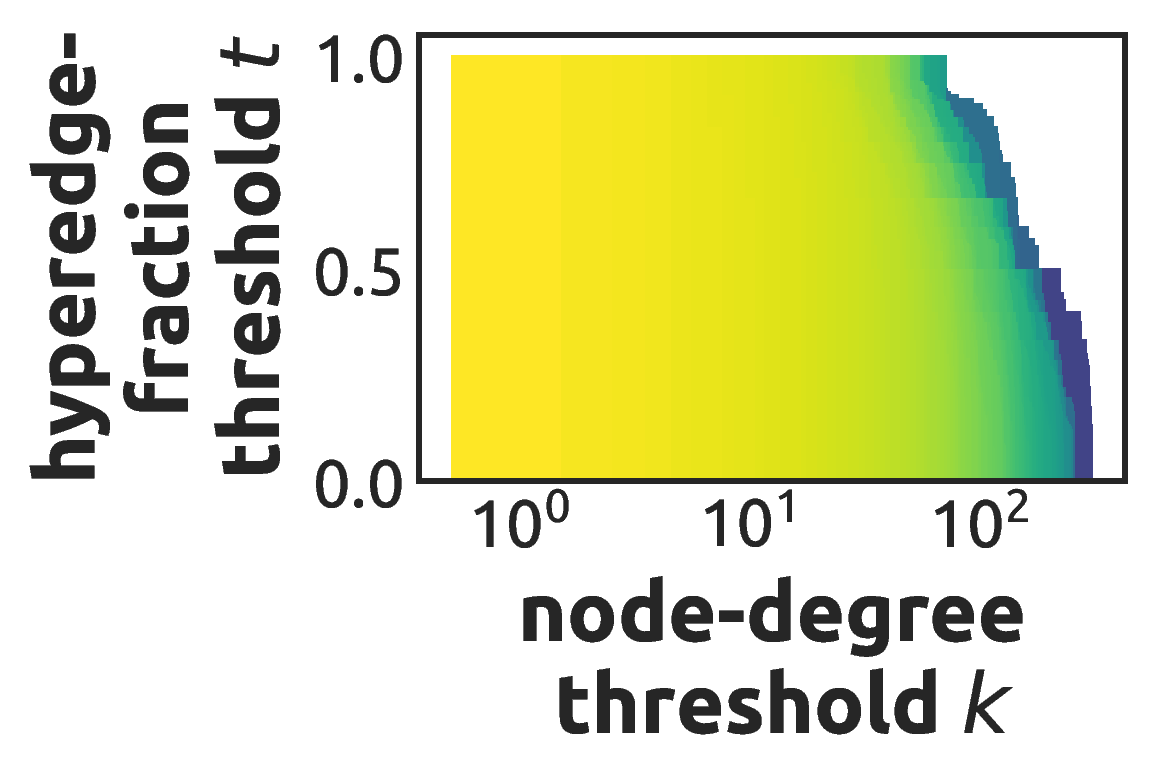}  
		\caption{email-Enron/Eu}
	\end{subfigure}
	\begin{subfigure}[b]{0.48\linewidth}
		\centering
		\includegraphics[scale=0.35]{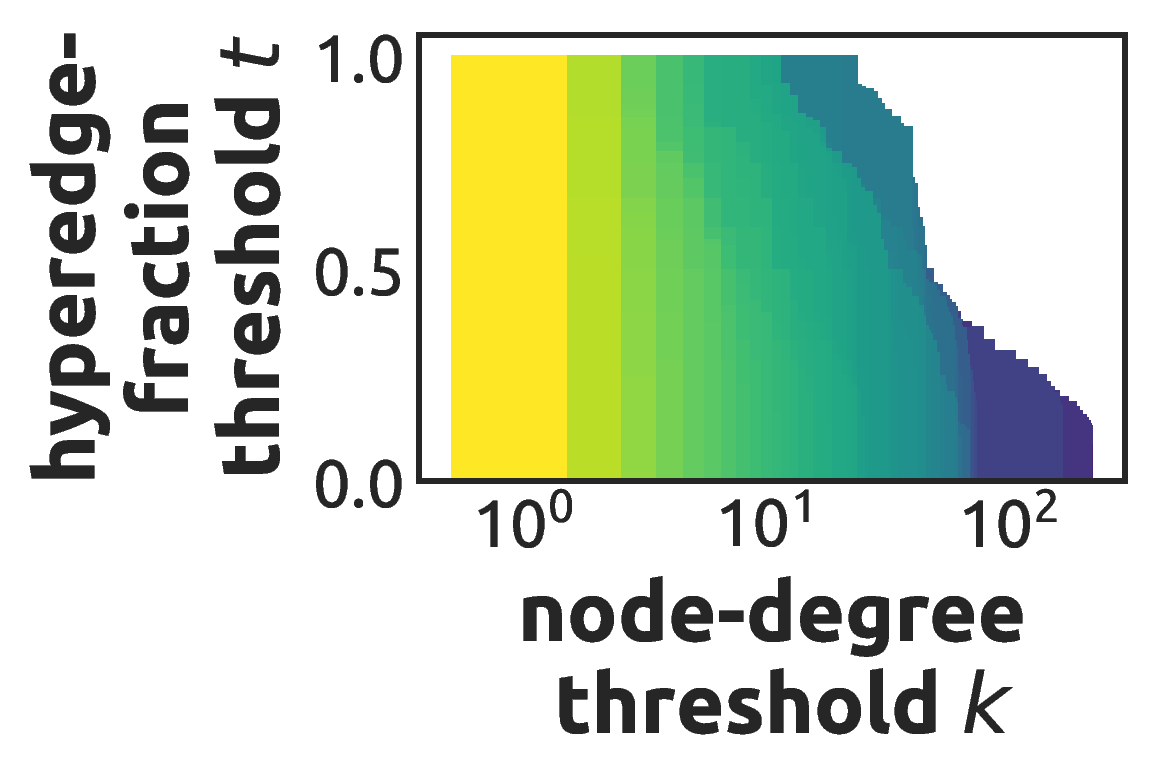}  
		\includegraphics[scale=0.35]{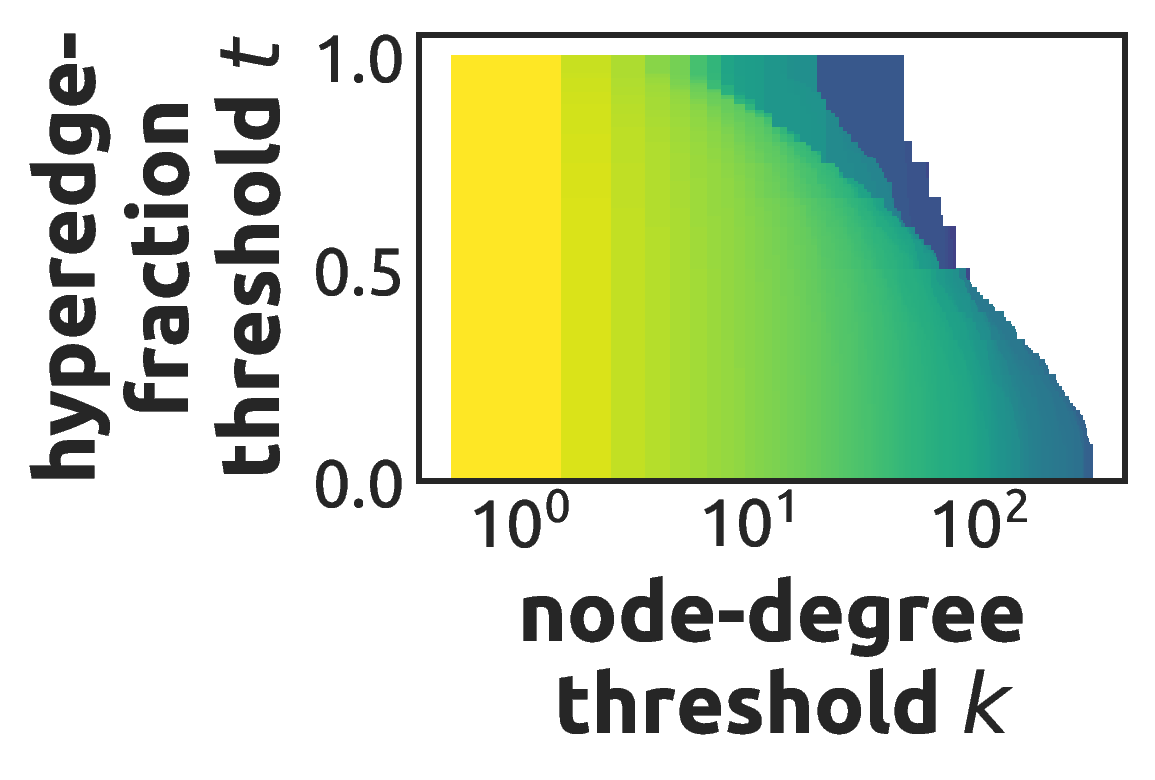}  
		\caption{NDC-classes/substances}
	\end{subfigure}   
	\begin{subfigure}[b]{0.48\linewidth}
		\centering
		\includegraphics[scale=0.35]{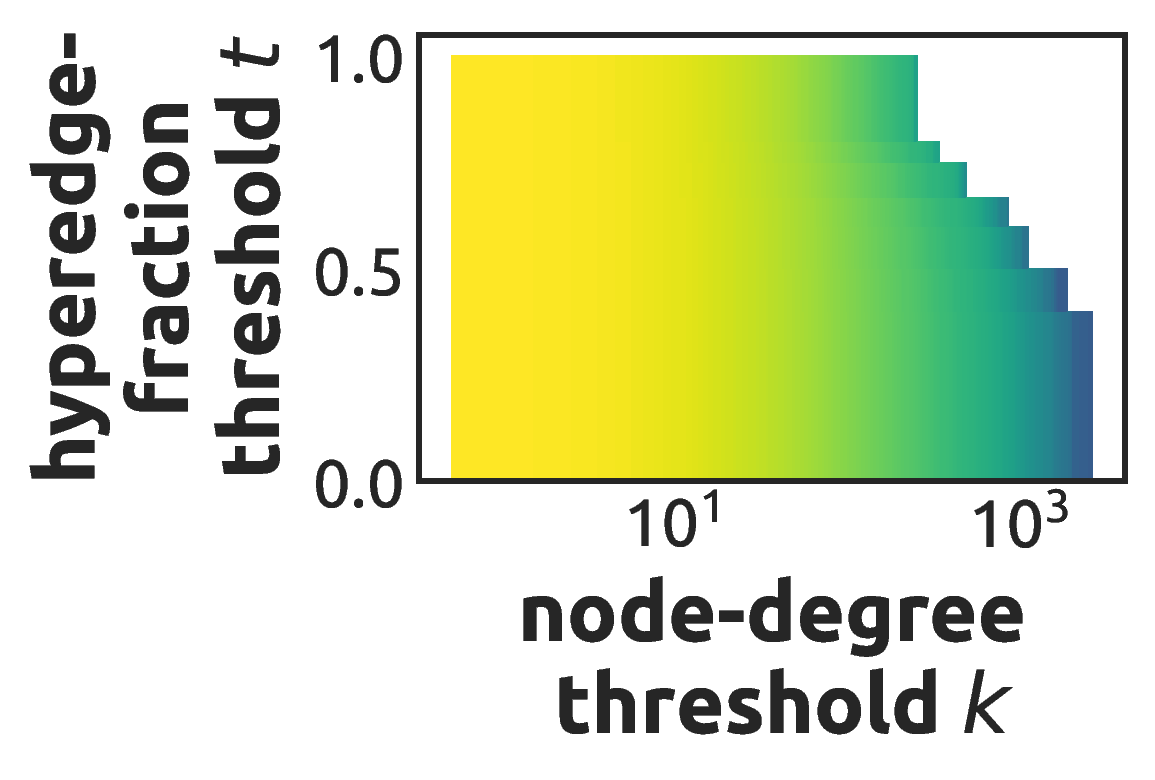}  
		\includegraphics[scale=0.35]{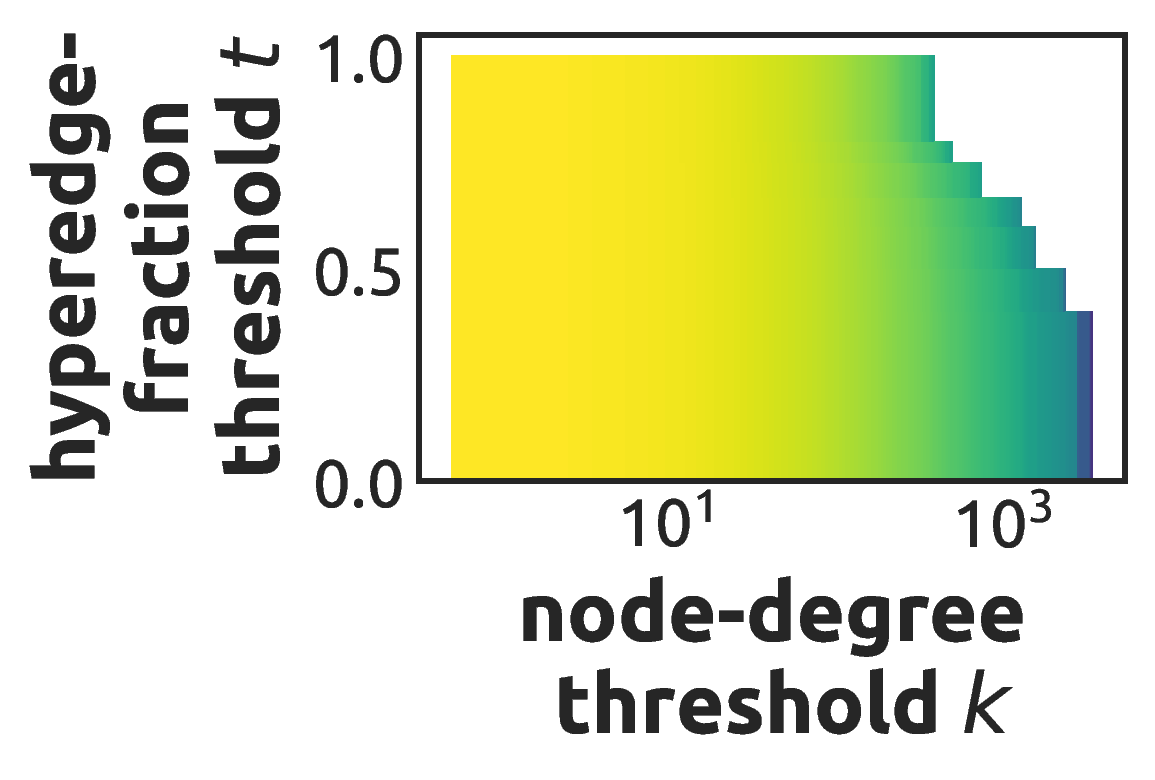}  
		\includegraphics[scale=0.35]{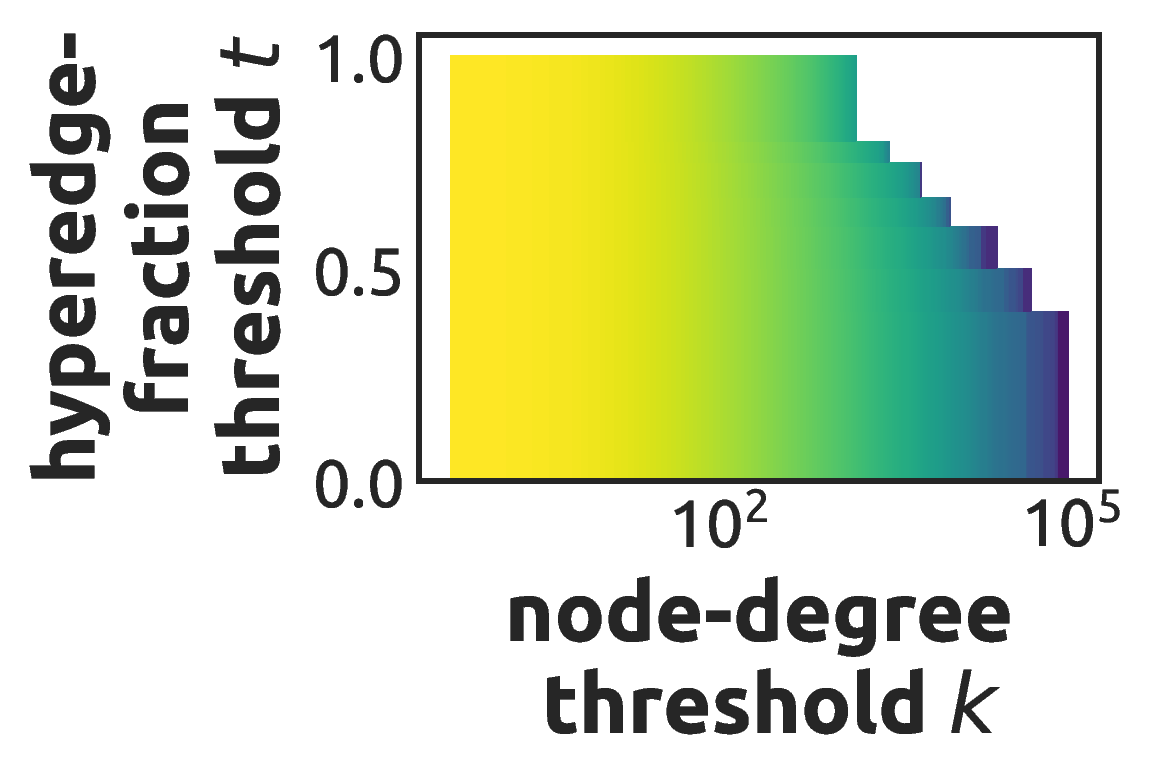}  
		\caption{tags-ubuntu/math/SO}
	\end{subfigure}   
	\begin{subfigure}[b]{0.48\linewidth}
		\centering
		\includegraphics[scale=0.35]{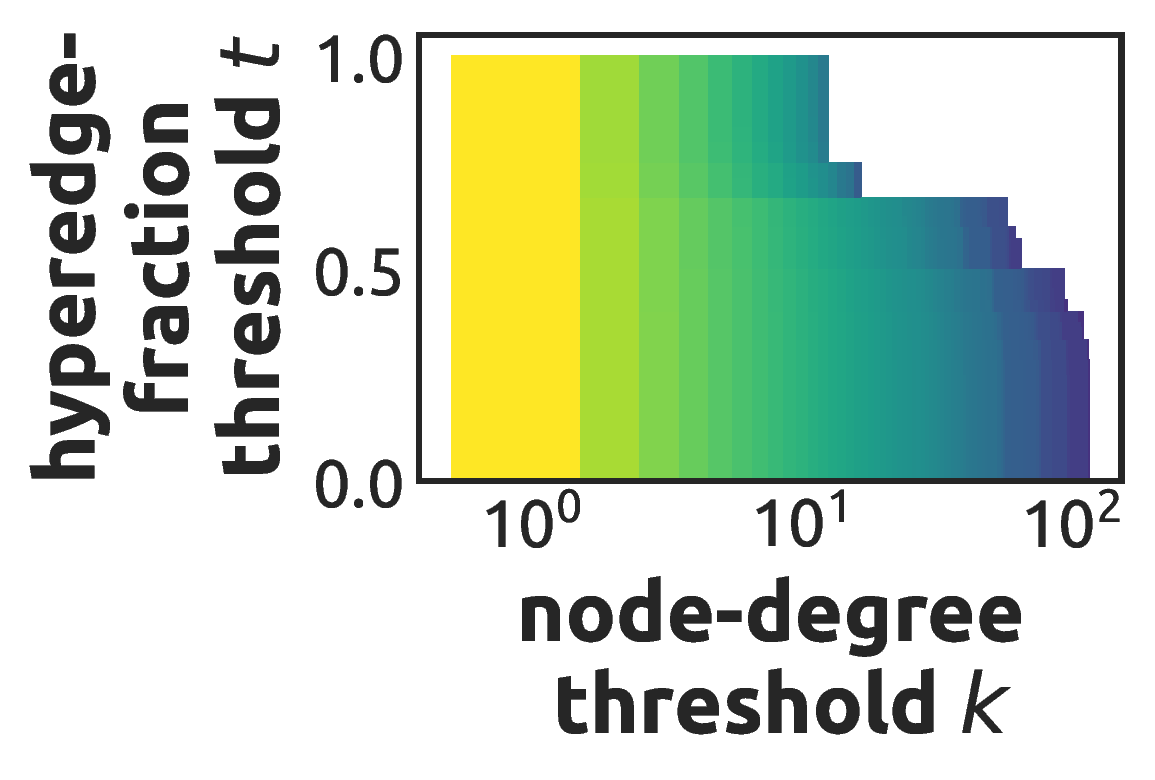}  
		\includegraphics[scale=0.35]{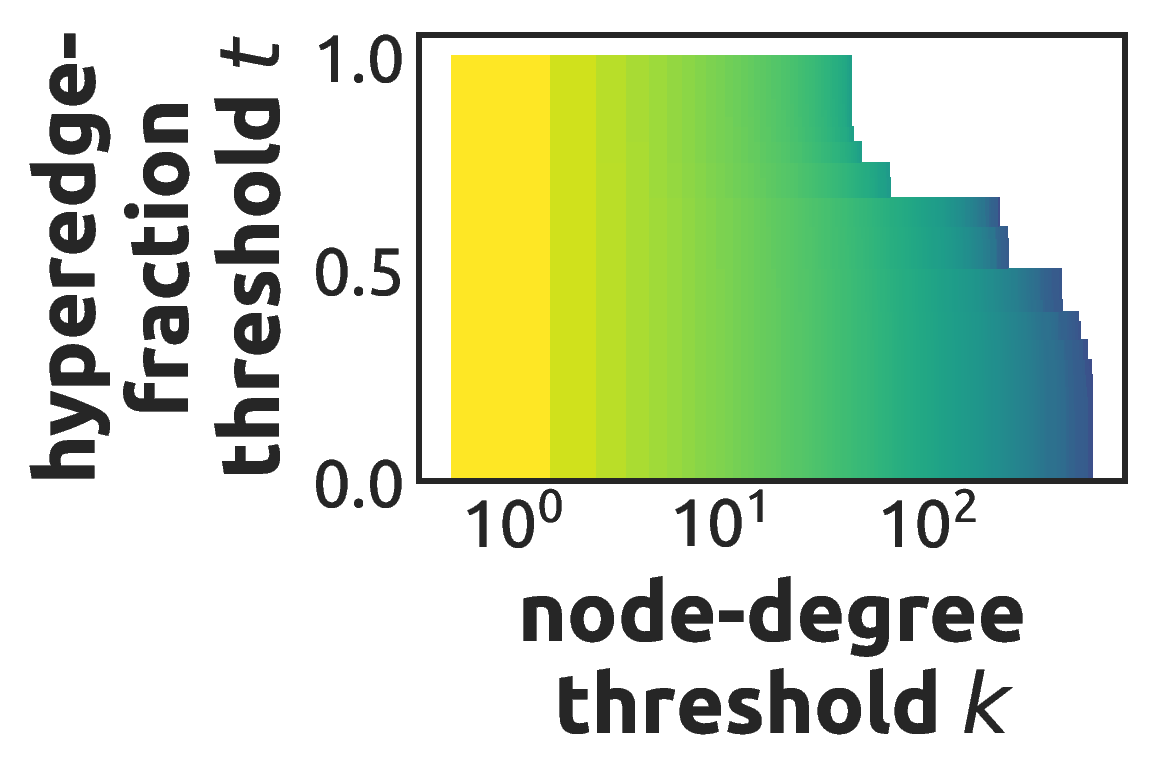}  
		\includegraphics[scale=0.35]{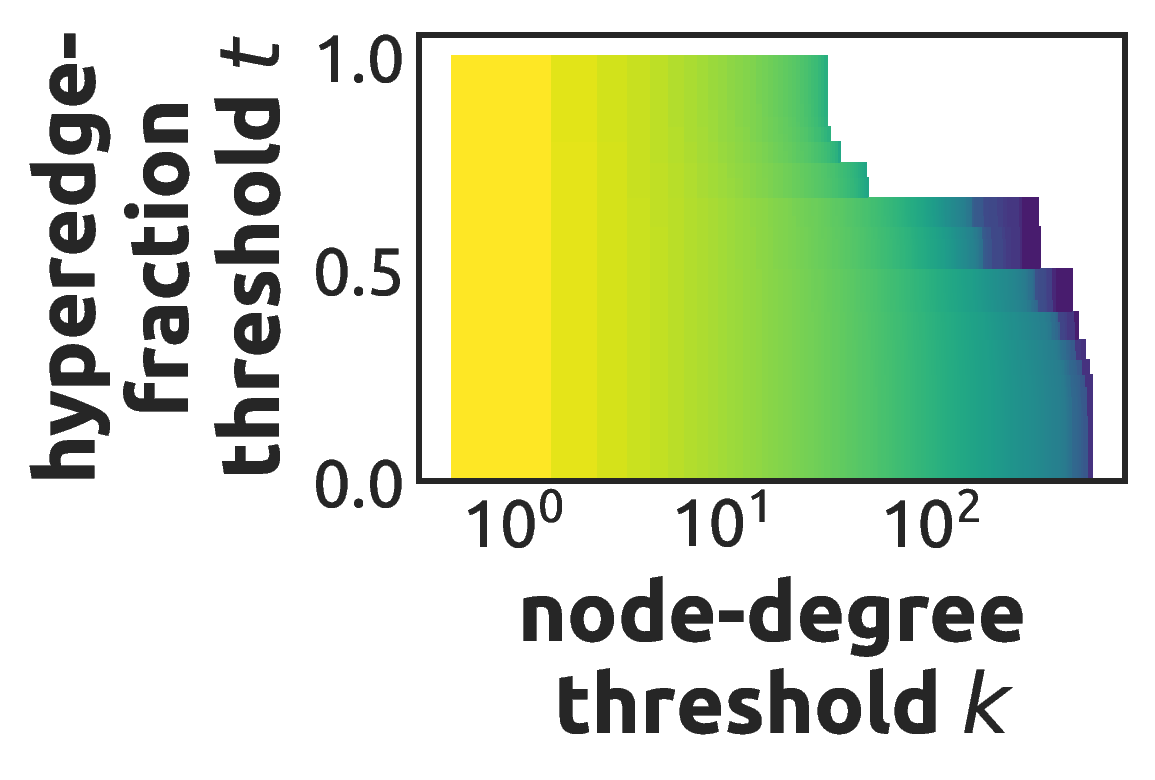}  
		\caption{threads-ubuntu/math/SO}
	\end{subfigure}    
	\caption{\bolden{Domain-based patterns of $(k,t)$-hypercore sizes.}
		The $(k, t)$-hypercore sizes vary depending on the node-degree threshold $k$ and the hyperedge-fraction threshold $t$ with datasets grouped by domains.
		The color indicates the size of the corresponding $(k, t)$-hypercore.
		The size and $k$ are in a log scale.}
	\label{fig:kt_coresize}
\end{figure*}

\subsection{Patterns of $(k,t)$-hypercore sizes}\label{subsec:obs_coresize}
Due to the newly introduced parameter $t$, we have hypercores of different sizes for different $(k, t)$ pairs.
In Fig.~\ref{fig:kt_coresize}, we report the hypercore sizes {(i.e., the number of nodes in the hypercore)} for different $k$ and $t$, where the color represents the size of the $(k, t)$-hypercore.
Specifically, the color of the position $(k, t)$ is the color assigned to $\Tilde{n}_{k, t} \coloneqq \log_{\abs{V}} \abs{V(C_{k, t})} \in [0, 1]$, for all $(k, t)$ such that $C_{k, t} \neq \varnothing$.
Fig.~\ref{fig:kt_coresize} also shows $f^*_k$ for each $k$ (see the boundary between the colored and empty regions in each subfigure).

Similarity within each domain is observed in Fig.~\ref{fig:kt_coresize}.
To numerically measure the similarity, we need to compare the size of all $(k, t)$-hypercores in different hypergraphs.
Since different hypergraphs may have different absolute sizes and thus have different ranges of $(k, t)$ pairs, normalization is needed.
Given any hypergraph $H = (V, E)$, by the containment properties (Proposition~\ref{prop:containment}), $1 \leq c_t^* \leq c_0^*, \forall t$.
Therefore, we can use the normalizer $\mathcal{N}_H: [0, 1] \rightarrow \setbr{1, 2, \ldots, c_0^*}$ defined by $\mathcal{N}_H(x) = \lceil (c_0^*)^x \rceil$.
We then define the dissimilarity between two hypercore sizes by their difference in log scale (as in Figure~\ref{fig:kt_coresize}), which is also normalized in $[0, 1]$.
Formally, the dissimilarity between two hypergraphs $H_1, H_2$ at the normalized point $(x, t)$ with $x, t \in [0, 1]$ is $\Tilde{d}(x, t; H_1, H_2) \coloneqq \min(\abs{\Tilde{n}_{\mathcal{N}_{H_1}(x), t}(H_1) - \Tilde{n}_{\mathcal{N}_{H_2}(x), t} (H_2)}, 1)$, where we let $\Tilde{n}_{k, t} = -1$ if $C_{k, t}$ is empty.
This dissimilarity can also be understood as the difference between the same position of two subfigures in Figure~\ref{fig:kt_coresize}.
Finally, we define the hypercore-size-mean-difference (HSMD) distance, {which lies between $0$ and $1$}, as follows:
\begin{definition}[Hypercore-size-mean-difference (HSMD) distance]\label{def:HSMD}
	Given two hypergraphs $H_1$ and $H_2$, the hypercore-size-mean-difference (HSMD) distance between $H_1$ and $H_2$ is defined as 
	\begin{align*}
		\operatorname{HSMD}(H_1, H_2) \coloneqq \sqrt{\int_{0}^{1} \int_{0}^{1} (\Tilde{d}(x, t; H_1, H_2))^2 \, dx \, dt}.
	\end{align*}
\end{definition}
See Fig.~\ref{fig:core_size_heatmap} for the HSMD distance between each pair of datasets, where the domain-based patterns are clearly shown by the small distance between those datasets in the same domain.

\begin{observation}[Domain-based patterns of $(k,t)$-hypercore sizes]\label{obs:domain:kt}
	Real-world hypergraphs in the same domain usually have similar patterns of the hypercore sizes with different $k$ and $t$ values, and the patterns vary from domain to domain.
\end{observation}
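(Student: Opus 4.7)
The statement is empirical rather than analytical, so the plan is to verify it by systematically computing and comparing $(k,t)$-hypercore size patterns across the fourteen datasets listed in Tbl.~\ref{tab:datasets}. First, for each dataset $H$, I would run Alg.~\ref{alg:t_hypercoreness} (or Alg.~\ref{alg:k_fraction}) on a discretized grid of $t$ values to obtain $C_{k,t}(H)$ for every relevant $(k,t)$ pair; by Thm.~\ref{thm:alg_t} and Thm.~\ref{thm:alg_k} each sweep costs time linear in $\sum_{e\in E}|e|$, so the full computation is feasible even for the largest datasets (tags-SO, threads-SO, coauth-DBLP). I would then record the normalized core size $\Tilde{n}_{k,t}(H) = \log_{|V|}|V(C_{k,t}(H))|$ with the convention $\Tilde{n}_{k,t}(H)=-1$ when $C_{k,t}(H)=\varnothing$, producing the two-dimensional signature already visualized in Fig.~\ref{fig:kt_coresize}.

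Second, since datasets differ by orders of magnitude in $|V|$ and in the maximum $k$ for which $C_{k,t}\neq\varnothing$, a direct comparison is meaningless. I would use the normalizer $\mathcal{N}_H(x)=\lceil (c_0^*(H))^x\rceil$, which by Proposition~\ref{prop:containment} is well-defined and maps $x\in[0,1]$ to a valid degree threshold. Combined with the already-normalized $t\in[0,1]$, every hypergraph produces a function on the unit square, so the quantity $\Tilde{d}(x,t;H_1,H_2)$ introduced before Def.~\ref{def:HSMD} becomes a pointwise dissimilarity on $[0,1]^2$. I would approximate $\operatorname{HSMD}(H_1,H_2)$ by a Riemann sum over a fine grid (say, $100\times100$), which reduces the double integral in Def.~\ref{def:HSMD} to a routine finite computation.

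Third, to support the observation, I would compute $\operatorname{HSMD}(H_i,H_j)$ for every one of the $\binom{14}{2}$ dataset pairs and display them in a heatmap (Fig.~\ref{fig:core_size_heatmap}). The observation is corroborated if the within-domain block averages are visibly and substantially smaller than the between-domain entries, and if a clear block-diagonal structure emerges with blocks aligned to the six domains (co-authorship, NDC, contact, email, tags, threads). As auxiliary support I would also inspect the individual subfigures of Fig.~\ref{fig:kt_coresize} and check that hypergraphs within a domain exhibit qualitatively similar frontiers $f_k^*$ and color gradients, while across domains the shapes of these regions differ markedly.

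The main obstacle is not algorithmic but methodological: choosing a normalization robust enough that the within-vs-between-domain gap does not merely reflect the rescaling. The log-base-$|V|$ normalization and the exponential $k$-normalization are designed to collapse scale differences, but they could in principle create spurious similarities (e.g., for two sparse hypergraphs both near the $-1$ floor) or spurious dissimilarities (e.g., stretching a short $k$-range over the full $[0,1]$). I would mitigate this by (i) verifying that domain clustering persists under alternative normalizations such as linear rescaling of $k$, and (ii) checking that the heatmap's block structure is statistically significant, for instance by a permutation test on the domain labels. If the block pattern survives both robustness checks, the observation is justified.
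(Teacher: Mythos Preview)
Your proposal is correct and follows essentially the same approach as the paper: compute the normalized core-size surfaces, compare all $\binom{14}{2}$ pairs via the HSMD distance of Def.~\ref{def:HSMD}, and display the result as the heatmap in Fig.~\ref{fig:core_size_heatmap}. The only substantive difference is the statistical test: the paper reports a $t$-test comparing the within-domain mean HSMD ($0.166$) to the global mean ($0.323$), obtaining $p=8.6\mathrm{e}{-10}$, whereas you propose a permutation test on the domain labels and additional robustness checks under alternative normalizations; your variant is arguably more principled (the $t$-test's independence assumptions are dubious here), but the paper does not carry out those extra checks.
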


\begin{figure*}[t]
	\centering
	\includegraphics[scale=0.4]{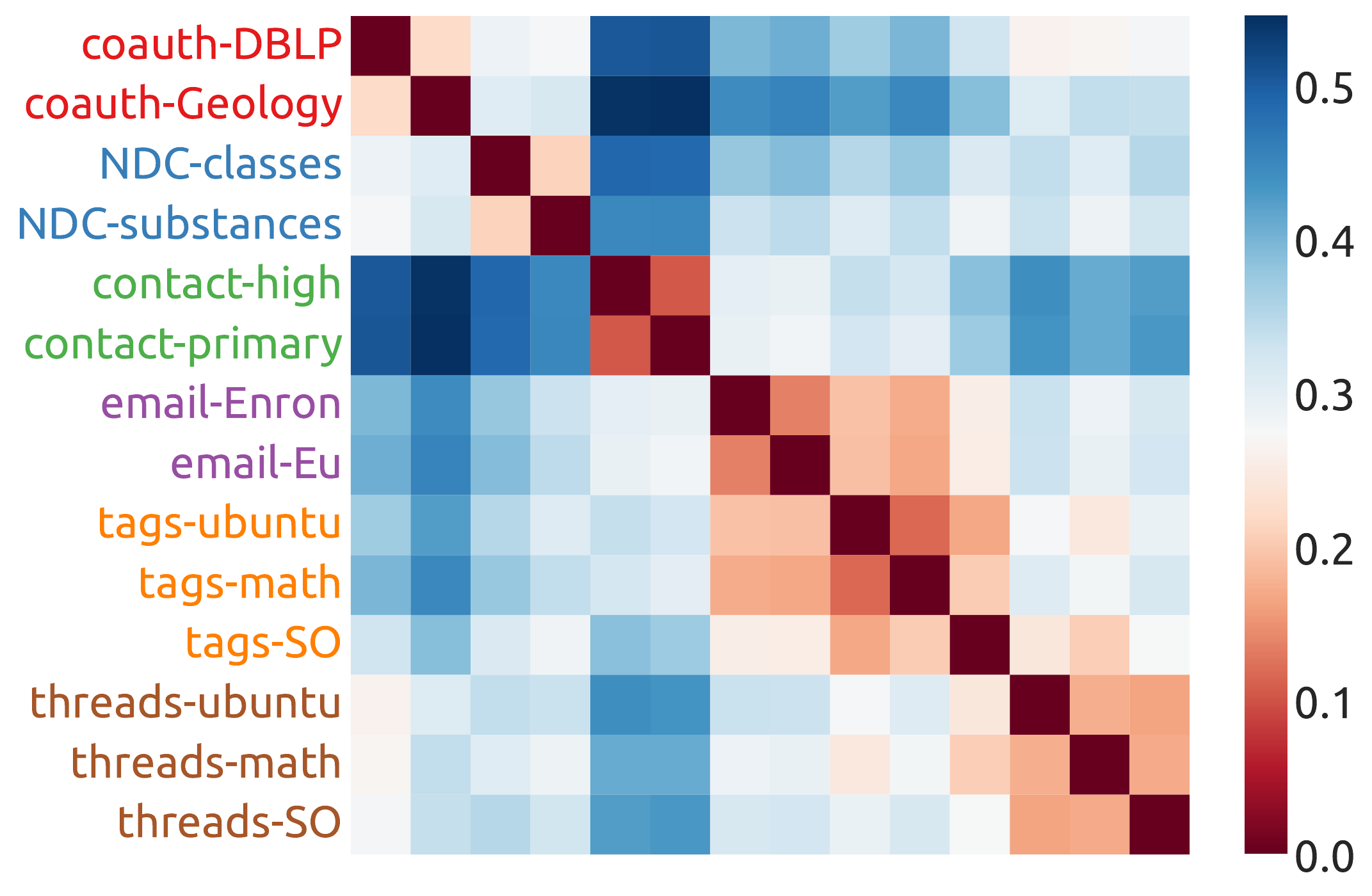}
	\caption{\textbf{Datasets in the same domain tend to have small HSMD distance, while ones in different domains usually have large HSMD distance.}
		The average within-domain distance $0.166$ and the global average distance $0.323$ are significantly different with $p = 8.6\mathrm{e}{-10}$ in the $t$-test.}
	\label{fig:core_size_heatmap}
\end{figure*}

\subsection{Distributions of $t$-hypercoreness}

We now investigate the distributions of the $t$-hypercoreness of nodes with different $t$ values, {which} show common patterns. 
Heavy-tailed distributions, especially power-law distributions, are observed in real-world (hyper)graphs w.r.t many different quantities \citep{mcglohon2008weighted, watts1998collective,albert2002statistical,adamic2001search,kook2020evolution,lee2021thyme+}.
In Fig.~\ref{fig:heavy_tailed_dist}, for the $t$-hypercoreness sequences of each dataset with $t \in \setbr{0, 0.2, 0.4, 0.6, 0.8, 1}$, we report the log-likelihood ratio ($R$-value) of heavy-tailed distributions against the exponential distribution, where a positive $R$-value indicates that heavy-tailed distributions are more promising.
In particular, we compute the log-likelihood ratio for two heavy-tailed distributions (power-law and log-normal) and take the maximum.
In most cases, the log-likelihood ratio is positive, which supports the possibility that the $t$-hypercoreness follows heavy-tailed distributions consistently regardless of the value of $t$.
Notably, regarding the distributions of $k$-fraction, {we could not find any systematic pattern}. 
Moreover, strong power-law distributions are observed in some datasets.
In Fig.~\ref{fig:powerlaw_example}, for two datasets, we show the numbers of nodes with $t$-hypercoreness at least $k$ with different $k$ values with different $t$ values,
together with the results of power-law fitting, i.e., linear regression in log-log scale{; and} consistent power-law distributions of the $t$-hypercoreness sequences are observed.
In Table~\ref{tab:dist_stats}, we provide the full results of the heavy-tailed distribution tests. Specifically, we report the log-likelihood ratio ($R$-value) of heavy-tailed distributions against the exponential distribution, where a positive $R$-value indicates that heavy-tailed distributions are more promising; and the $p$-values, where a small $p$-value indicates that the heavy-tailed or exponential distribution is significant.
\color{black}

\begin{figure}[t!]
	\centering	
	\includegraphics[scale=0.4]{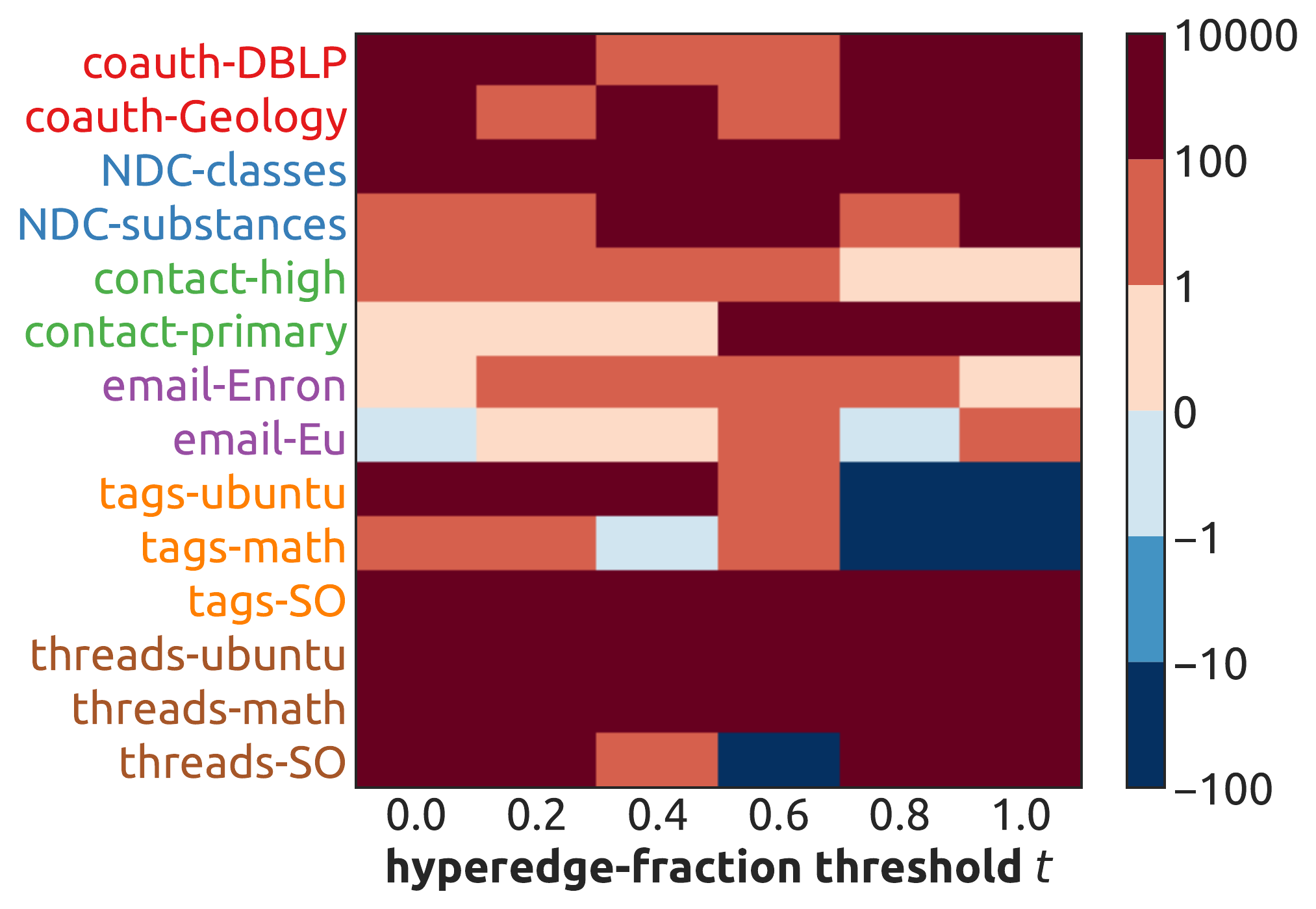}
	\caption{\textbf{$t$-Hypercoreness follows heavy-tailed distributions consistently.}
		The maximum log-likelihood ratio of two heavy-tailed distributions (power-law and log-normal) against the exponential distribution
		for the $t$-hypercoreness sequences with different $t$ values.}
	\label{fig:heavy_tailed_dist}
\end{figure}

\begin{figure*}[t!]
	\centering
	\begin{subfigure}[b]{0.9\linewidth}
		\centering		
		\includegraphics[scale=0.4]{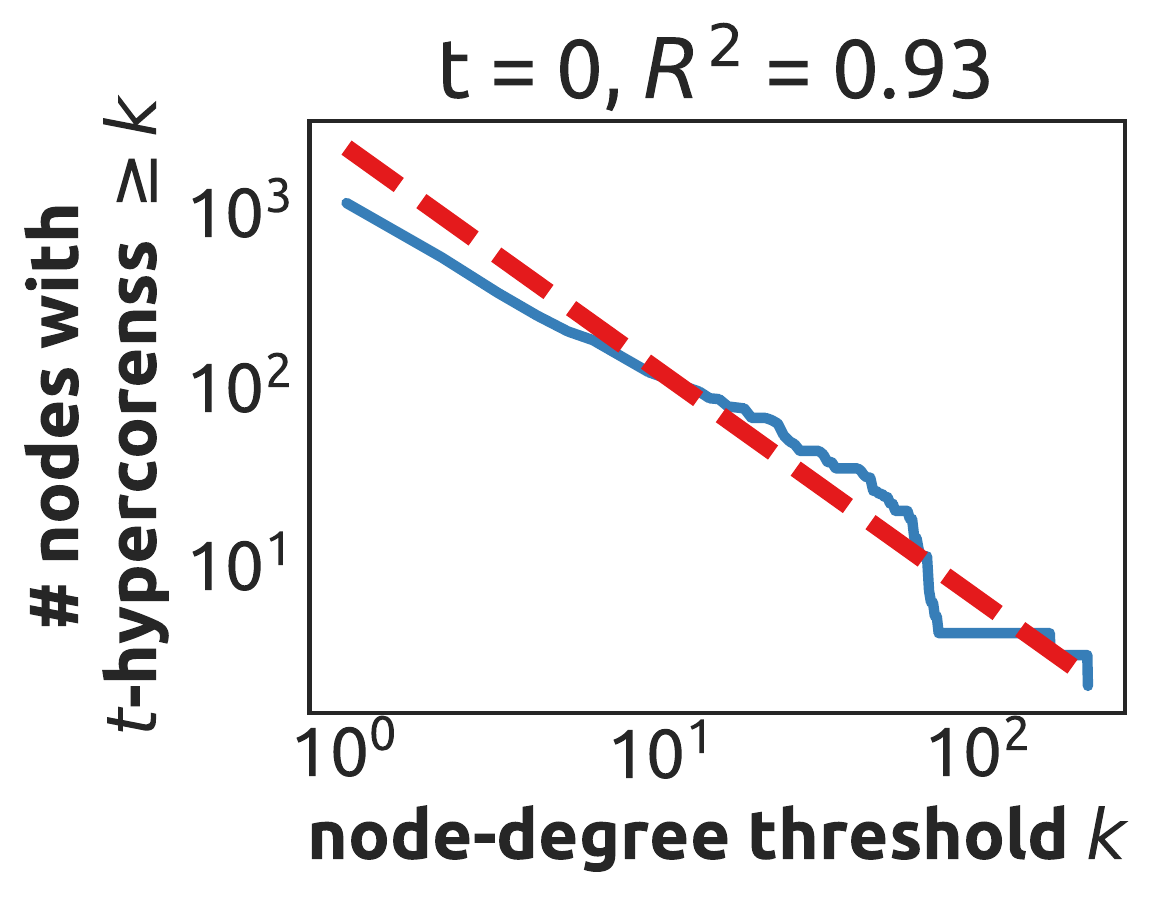}
		\includegraphics[scale=0.4]{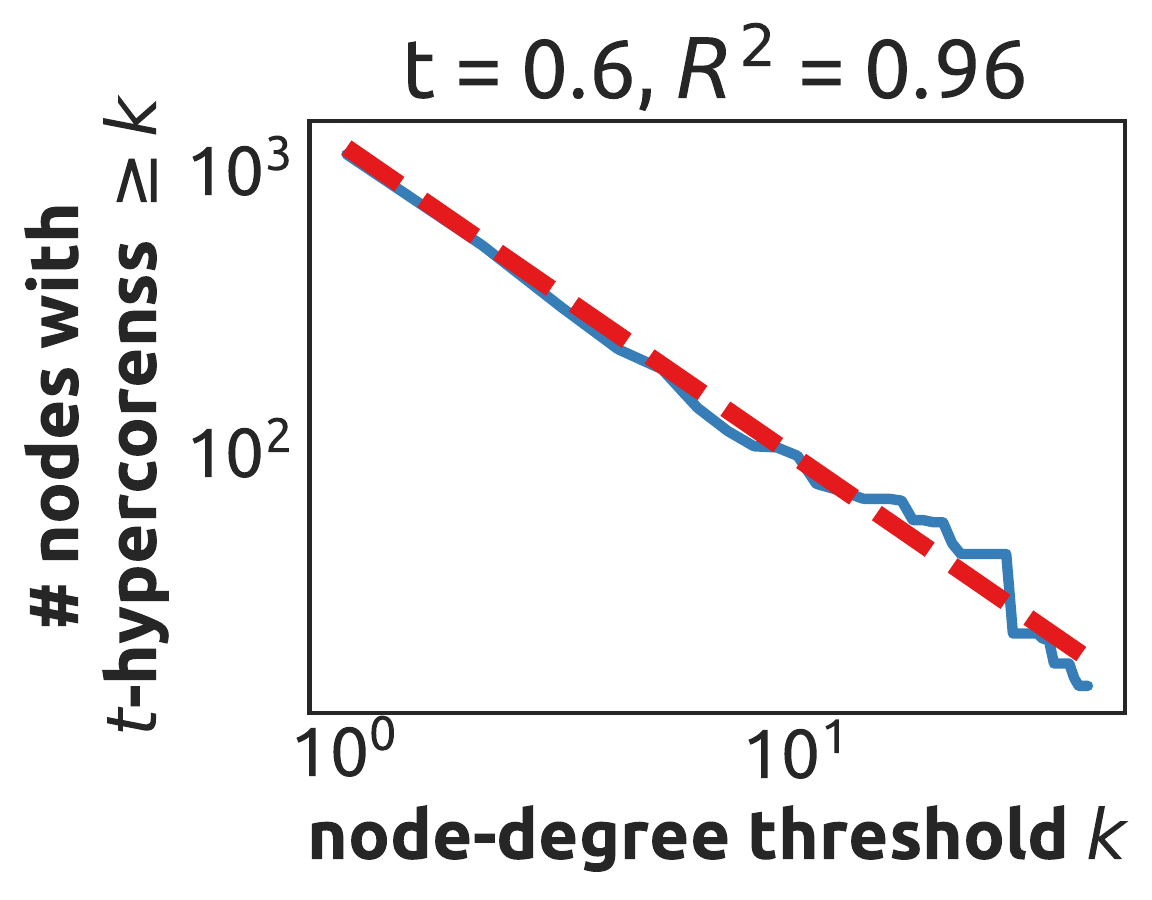}
		\includegraphics[scale=0.4]{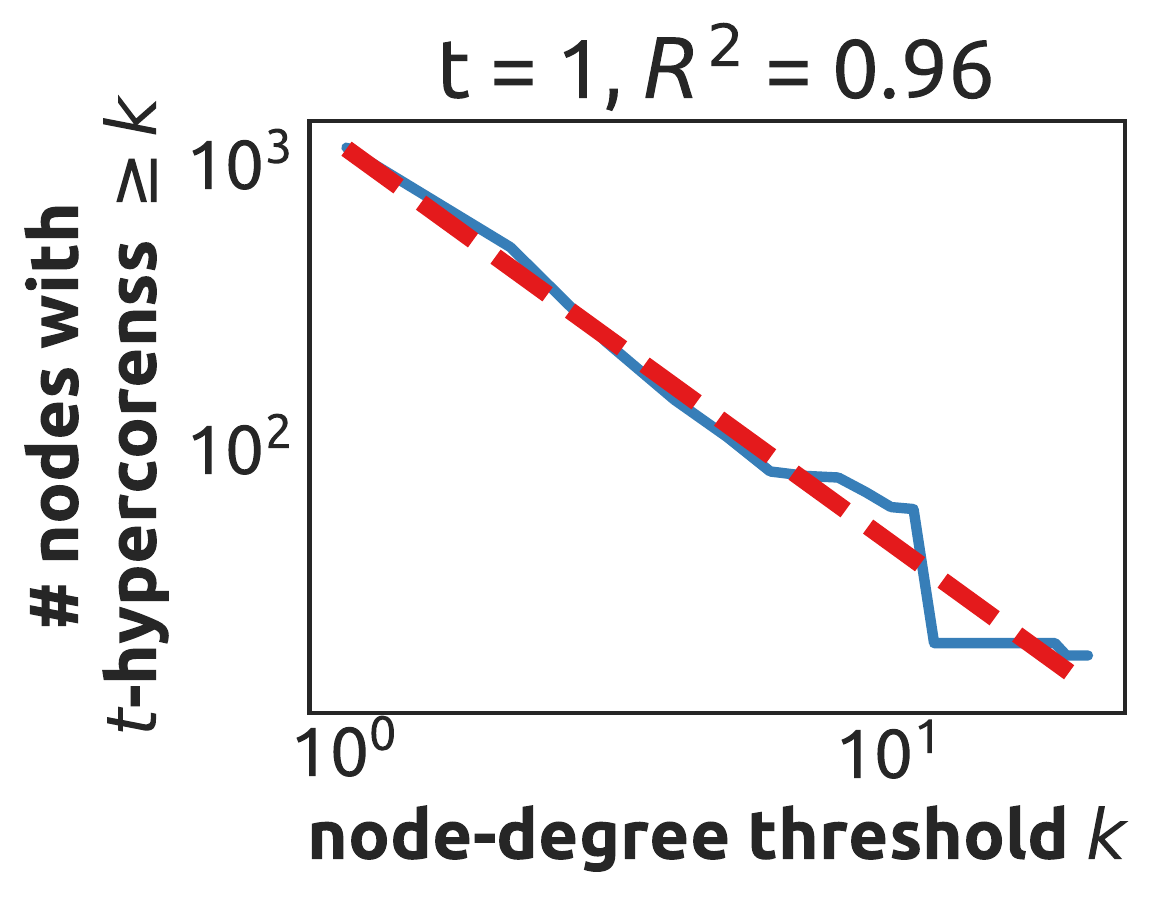}
		\caption{NDC-classes}
	\end{subfigure}
	\begin{subfigure}[b]{0.9\linewidth}
		\centering
		\includegraphics[scale=0.4]{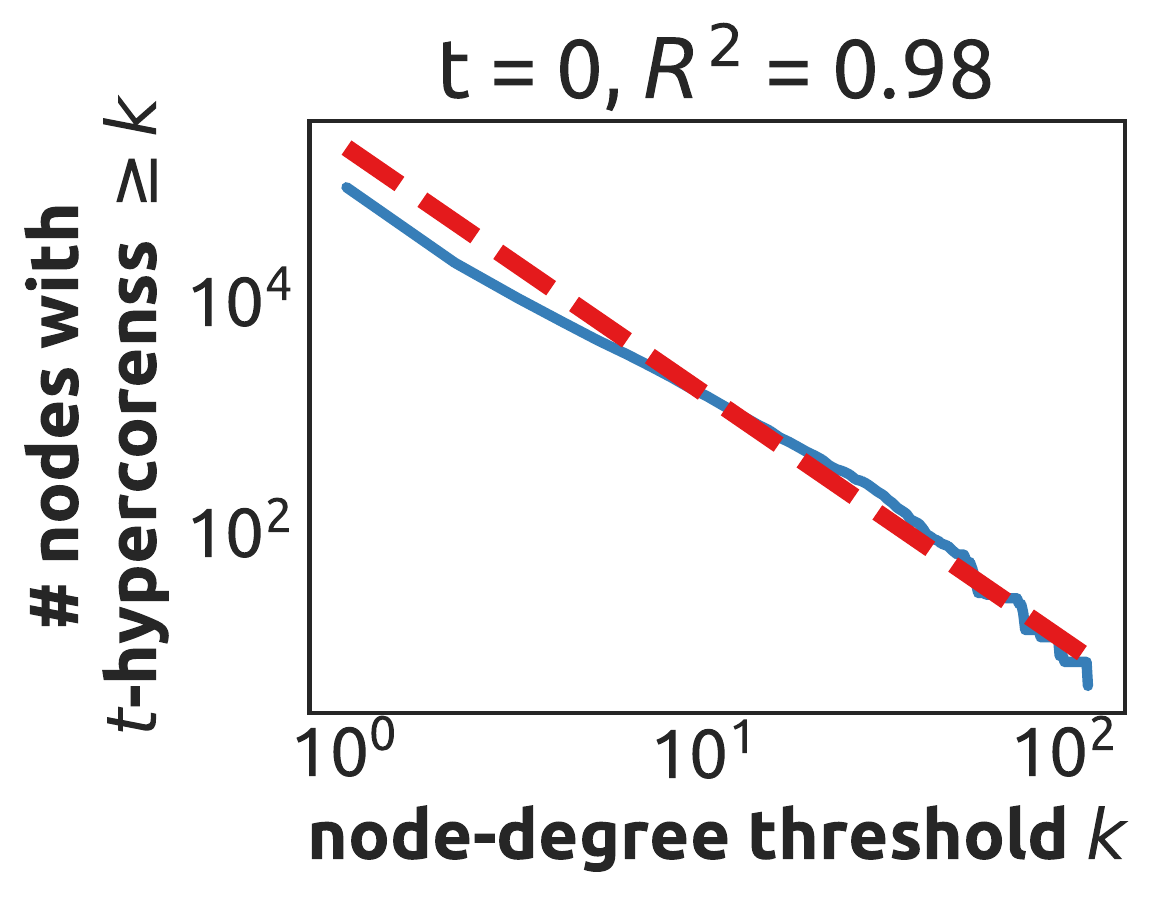}
		\includegraphics[scale=0.4]{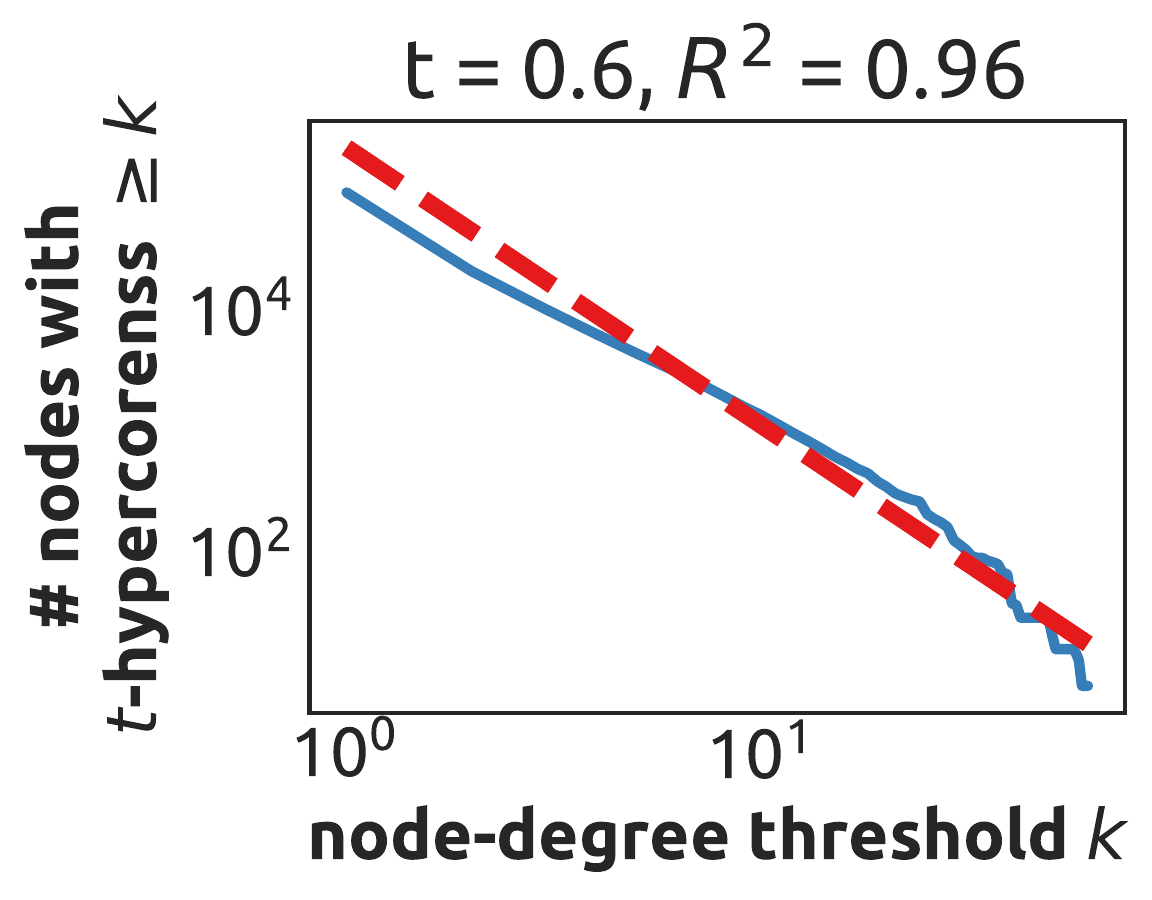}
		\includegraphics[scale=0.4]{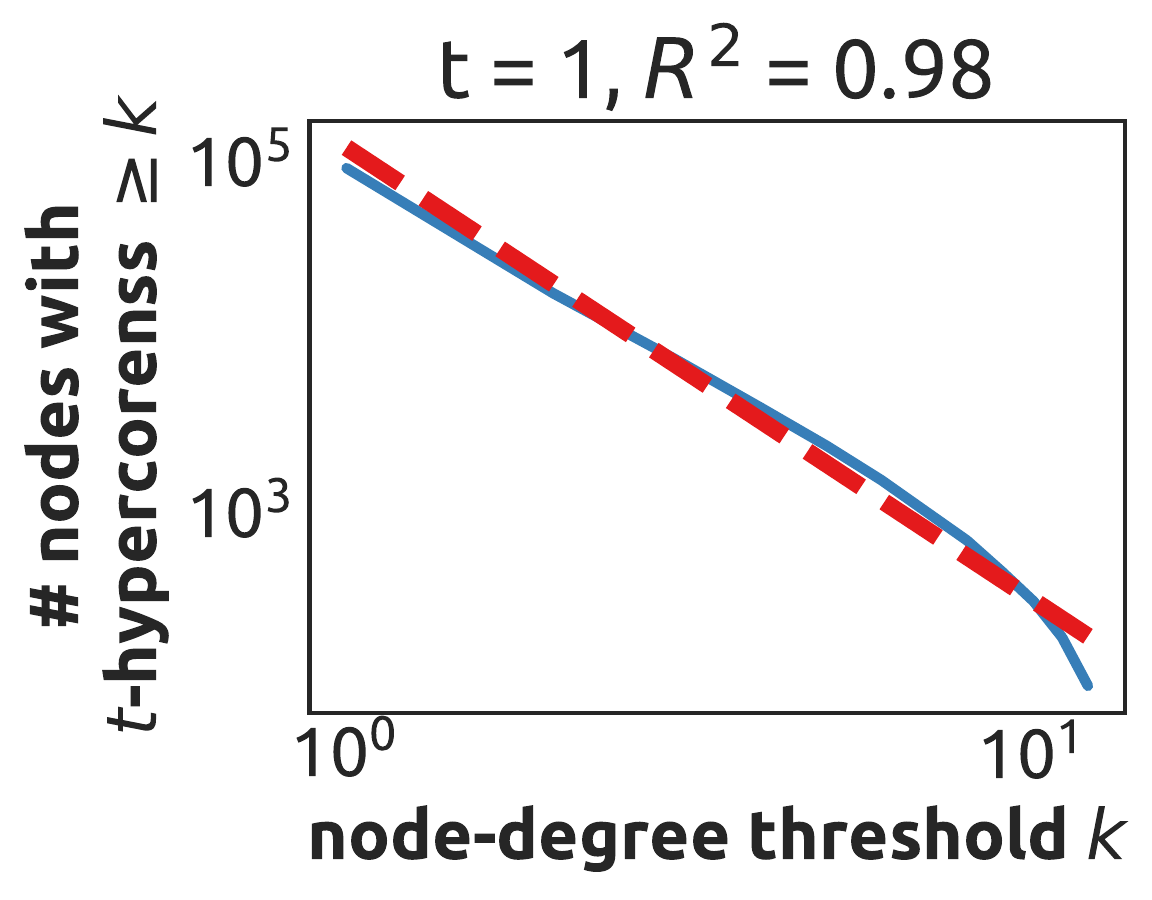}
		\caption{threads-ubuntu}
	\end{subfigure}
	\caption{\textbf{$t$-Hypercoreness consistently follows power-law distributions in some datasets.}
		For the NDC-classes and threads-ubuntu datasets with $t \in \setbr{0, 0.6, 1}$, we show the numbers of nodes with $t$-hypercoreness at least $k$ with different $k$ values.
		Each red dashed line represents the {result} of power-law fitting, i.e., the linear regression in log-log scale, with the $R^2$ value above each subfigure.
		In the two datasets, $t$-hypercoreness consistently and {strongly} follows a power law.}
	\label{fig:powerlaw_example}
\end{figure*}

\begin{table}[t!]
	\begin{center}
		\caption{\textbf{The detailed statistics on the heavy-tailed distribution tests.} For each dataset and each $t \in \setbr{0, 0.2, 0.4, 0.6, 0.8, 1}$, we report the log-likelihood ratio ($R$-value) of heavy-tailed distributions against the exponential distribution with its $p$-value. In most cases, the $R$-value is positive and the $p$-value is small, which implies the significance of the heavy-tailed distributions.} \label{tab:dist_stats}
		\resizebox{\columnwidth}{!}{%
			\begin{tabular}{ l rrrrrrrrrrrr }
				\toprule			
				& \multicolumn{2}{c}{$t = 0$} & \multicolumn{2}{c}{$t = 0.2$} & \multicolumn{2}{c}{$t = 0.4$} & \multicolumn{2}{c}{$t = 0.6$} & \multicolumn{2}{c}{$t = 0.8$} & \multicolumn{2}{c}{$t = 1$} \\ 
				\textbf{Dataset} & $R$-value & $p$-value & $R$-value & $p$-value & $R$-value & $p$-value & $R$-value & $p$-value & $R$-value & $p$-value & $R$-value & $p$-value \\
				\midrule
				coauth-DBLP 	&  156.75 & 7.24e-13 	& 184.96 & 7.28e-16 	& 139.27 & 3.36e-13 	& 50.80 & 0.001 	 	& 1685.14 & 1.45e-40 	& 117.76 & 4.75e-56 \\
				coauth-Geology 	&  106.38 & 3.06e-11 	& 83.21 & 7.70e-8   	& 31.52 & 6.86e-8	 	& 17.80 & 9.57e-5	 	& 1049.01 & 0.0	  		& 989.44 & 0.0 \\
				\midrule
				NDC-classes 	&  45.32 & 4.38e-6	 	& 364.85 & 5.80e-24 	& 103.93 & 4.90e-14 	& 282.71 & 1.06e-42 	& 290.16 & 4.21e-45  	& 242.25 & 1.28e-41 \\  
				NDC-substances  &  30.90 & 2.34e-5	 	& 26.95 & 0.00061	 	& 2608.99 & 4.33e-208	& 1884.07 & 6.54e-171	& 1175.06 & 8.51e-91	& 221.15 & 2.78e-24 \\
				\midrule
				contact-high 	&  16.15 & 3.20e-20	 	& 16.15 & 3.20e-20 		& 16.15 & 3.20e-20		& 16.76 & 0.0040		& 0.70 & 0.48			& 0.70 & 0.48 \\
				contact-primary &  0.19 & 0.23			& 0.19 & 0.23			& 0.19 & 0.23			& 136.51 & 1.75e-16		& 127.81 & 6.41e-13		& 127.81 & 6.41e-13 \\
				\midrule
				email-Enron		&  0.29 & 0.73			& 2.05 & 0.24			& 2.67 & 0.063			& 8.43 & 0.024			& 1.55 & 2.4e-267		& 0.22 & 0.76 \\
				email-Eu		& -0.47 & 0.60			& 0.05 & 0.97			& 2.40 & 3.77e-9		& 83.69 & 2.05e-11		& -0.28 & 0.36			& 11.26 & 0.005 \\
				\midrule
				tags-ubuntu 	& 201.24 & 4.28e-21		& 201.24 & 4.28e-21		& 201.24 & 4.28e-21		& 83.69 & 2.05e-11		& -14.68 & 5.76e-6		& -17.40 & 1.30e-32 \\
				tags-math 		& 8.81 & 0.06			& 8.81 & 0.06			& 8.81 & 0.06			& 15.31 & 0.027			& -17.96 & 1.03e-9		& -14.34 & 0.00052 \\
				tags-SO			& 616.59 & 2.41e-29		& 616.59 & 2.41e-29		& 3617.24 & 8.07e-222	& 2189.25 & 5.27e-234	& -17.40 & 1.30e-32 \\
				\midrule
				threads-ubuntu  & 279.41 & 2.09e-22 	& 278.50 & 2.71e-22 	& 259.53 & 8.96e-22 	& 130.95 & 1.15e-14 	& 119.14 & 8.15e-14 	& 226.74 & 6.06e-41 \\
				threads-math 	& 226.30 & 1.45e-23		& 225.66 & 1.68e-23		& 5192.50 & 2.08e-282	& 11461.10 & 0.0		& 3305.64 & 0.0			& 6632.53 & 0.0 \\
				threads-SO		& 444.93 & 3.47e-57		& 436.37 & 4.84e-56		& 153.14 & 6.50e-19		& -23.83 & 7.24e-8		& 6002.46 & 0.0			& 2682.90 & 4.64e-102 \\
				\bottomrule
			\end{tabular}
		}
	\end{center}
\end{table}

\begin{observation}[Heavy-tailed distributions of $t$-hypercoreness]
	In most real-world hypergraphs, $t$-hypercoreness
	follows heavy-tailed distributions regardless of $t$.
	In particular, in some datasets, the $t$-hypercoreness strongly follows a power law.
\end{observation}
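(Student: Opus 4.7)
Since the statement is an empirical observation rather than a mathematical theorem, the plan is to support it by a systematic statistical testing procedure on the fourteen datasets in Tbl.~\ref{tab:datasets} rather than by a deductive argument. First, I would compute, for each hypergraph $H$ and each representative value $t \in \{0, 0.2, 0.4, 0.6, 0.8, 1\}$, the full sequence $\{c_t(v;H)\}_{v \in V}$ of $t$-hypercoreness values, invoking Alg.~\ref{alg:t_hypercoreness}; by Thm.~\ref{thm:alg_t}, this is feasible in time linear (up to the factor $c_t^*$) in the input size, which is necessary because several datasets (e.g., threads-SO) are large. This produces $14 \times 6 = 84$ empirical samples whose tails are to be analyzed.

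For each such sample, I would then apply the standard Clauset--Shalizi--Newman methodology: fit a candidate heavy-tailed distribution (power-law and log-normal) and compare it against the exponential null via Vuong's log-likelihood ratio test, recording the $R$-value and the associated two-sided $p$-value. The heavy-tailedness claim is supported for a given $(H,t)$ whenever at least one of the two fits yields $R > 0$ with $p$ small; the aggregate claim ``in most hypergraphs, regardless of $t$'' is supported when this holds in a large majority of the $84$ cells. I would present the maxima across the two heavy-tailed alternatives as in Fig.~\ref{fig:heavy_tailed_dist}, and the full $R$-value/$p$-value table as in Tbl.~\ref{tab:dist_stats}, to make the evidence legible.

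For the stronger secondary claim that some datasets follow power laws, I would, for a handful of $(H,t)$ pairs (e.g., NDC-classes and threads-ubuntu at $t \in \{0, 0.6, 1\}$), plot the complementary cumulative distribution $k \mapsto |\{v : c_t(v) \ge k\}|$ in log--log scale and perform ordinary least-squares linear regression, reporting the coefficient of determination $R^2$. A value of $R^2$ close to $1$ visually and quantitatively confirms a power law on the tail, as displayed in Fig.~\ref{fig:powerlaw_example}.

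The main obstacle is statistical rather than algorithmic: for the smallest hypergraphs (e.g., contact-high with $|V|=327$, email-Enron with $|V|=143$) the $t$-hypercoreness sequence is short, so the log-likelihood ratio test has limited power and $p$-values can be large even when the tail is genuinely heavy; this is why the observation is deliberately hedged as ``in most real-world hypergraphs'' rather than stated as a universal. A secondary, more subtle difficulty is that at $t=1$ the $t$-hypercoreness can collapse onto very few distinct values (cf.\ Fig.~\ref{fig:exmaple_fragile_core}(d)), which artificially truncates the support; I would flag such degenerate cells explicitly rather than interpret their fits, thereby ensuring the observation reflects genuine tail behavior rather than artifacts of the fragile-hyperedge regime.
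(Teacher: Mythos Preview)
Your proposal is correct and matches the paper's own empirical support essentially step for step: the same grid of $t$ values, the same log-likelihood ratio comparison of power-law/log-normal against an exponential null (with maxima reported as in Fig.~\ref{fig:heavy_tailed_dist} and full $R$/$p$ values as in Tbl.~\ref{tab:dist_stats}), and the same log--log CCDF regression with $R^2$ for the power-law sub-claim (Fig.~\ref{fig:powerlaw_example}). Your added caveats about small-sample power and degeneracy at $t=1$ are reasonable refinements that go slightly beyond the paper's discussion but do not change the approach.
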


\begin{figure*}[t!]
	\centering
	\begin{subfigure}[b]{0.9\textwidth}
		\centering
		\includegraphics[scale=0.375]{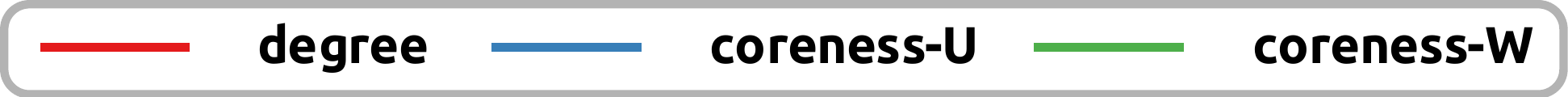}\\
	\end{subfigure}  
	\vspace{1mm}
	
	\begin{subfigure}[b]{0.48\linewidth}
		\centering
		\hspace{-5mm}
		\includegraphics[scale=0.4]{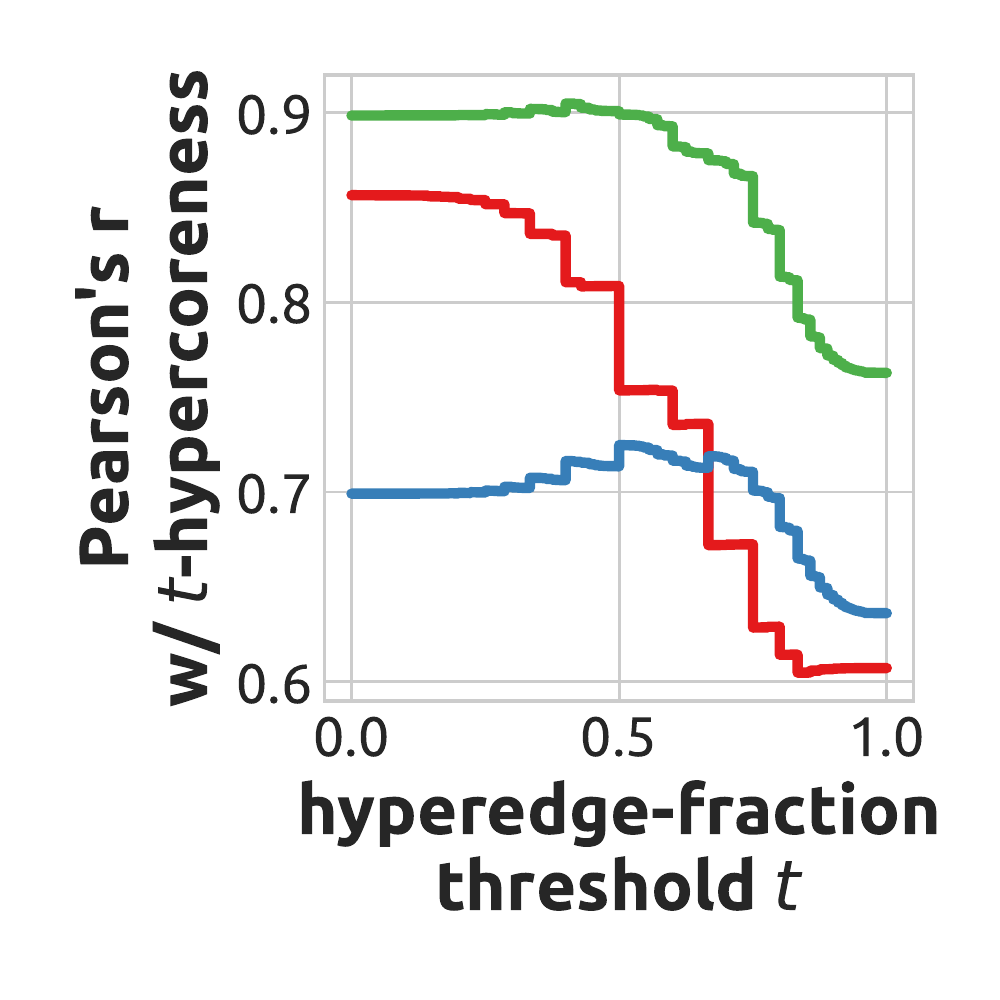}
		\includegraphics[scale=0.4]{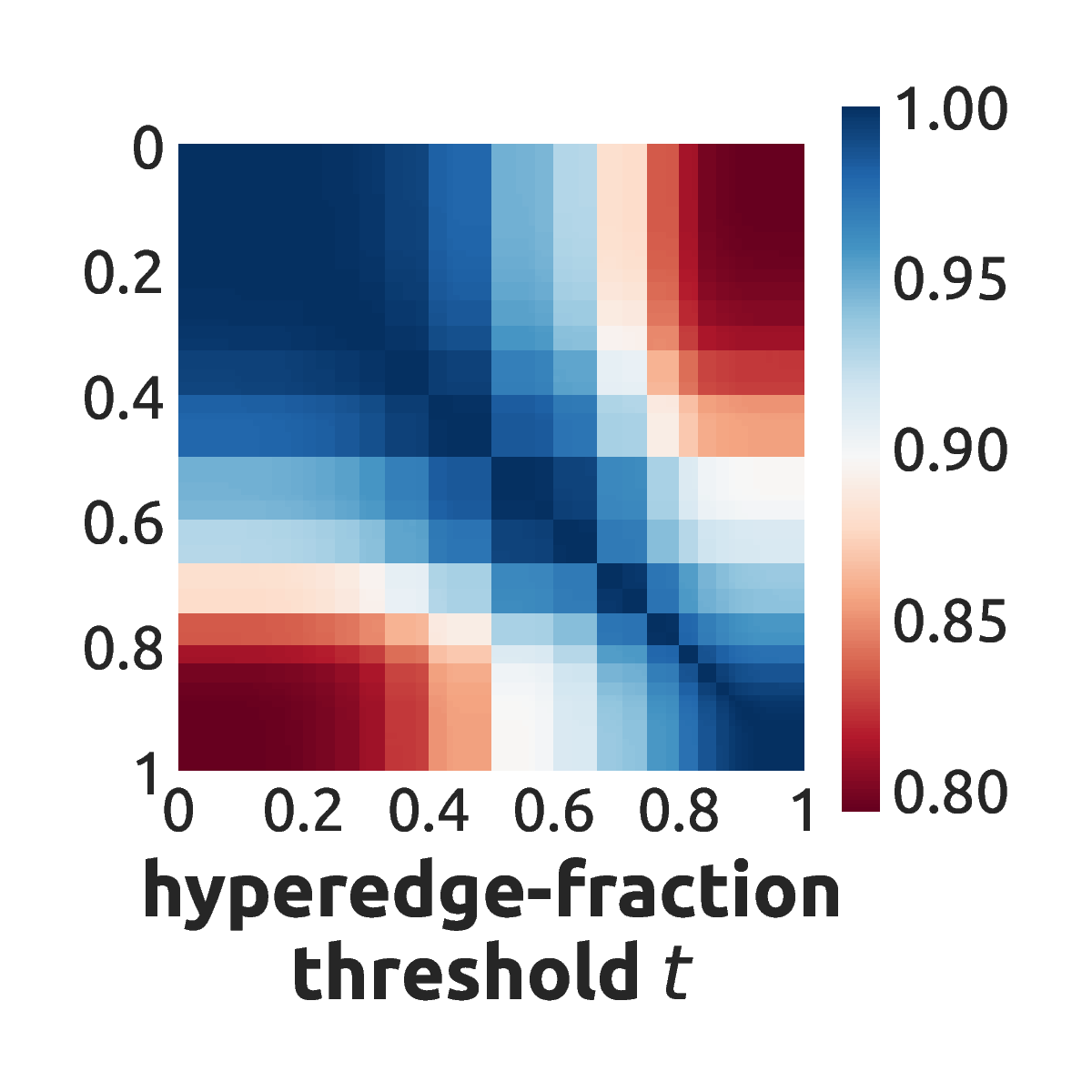}                
		\caption{coauth-DBLP}
	\end{subfigure}
	\begin{subfigure}[b]{0.48\linewidth}
		\centering
		\hspace{-5mm}
		\includegraphics[scale=0.4]{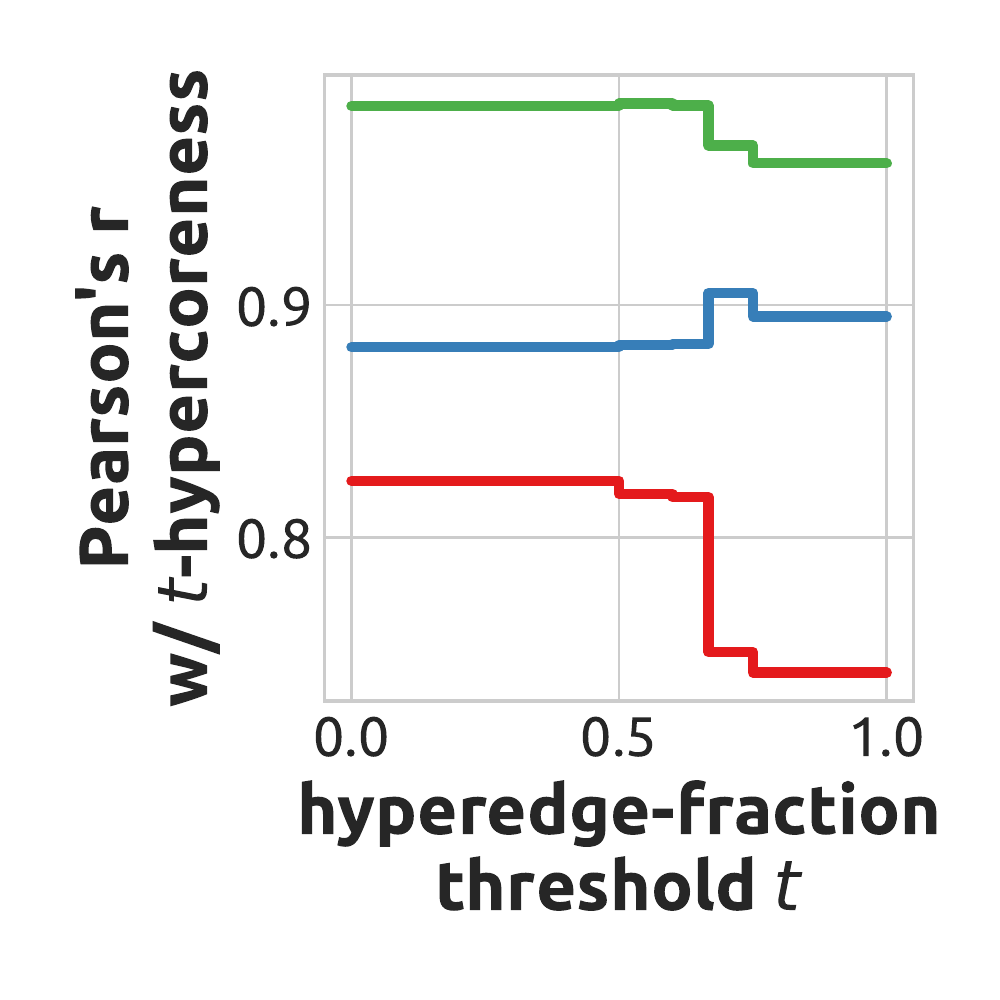}
		\includegraphics[scale=0.4]{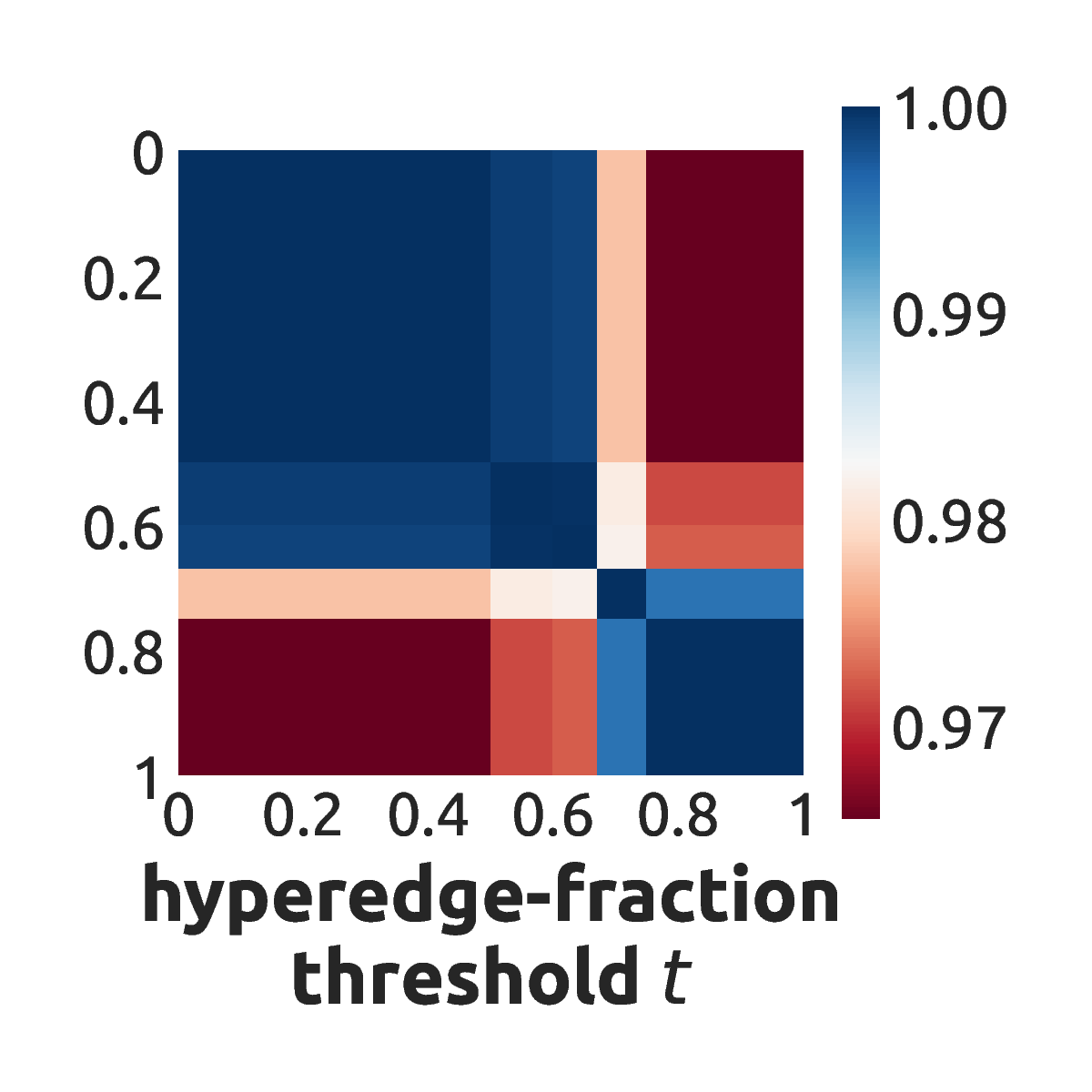}              
		\caption{contact-high}
	\end{subfigure}
	\begin{subfigure}[b]{0.48\linewidth}
		\centering
		\hspace{-5mm}
		\includegraphics[scale=0.4]{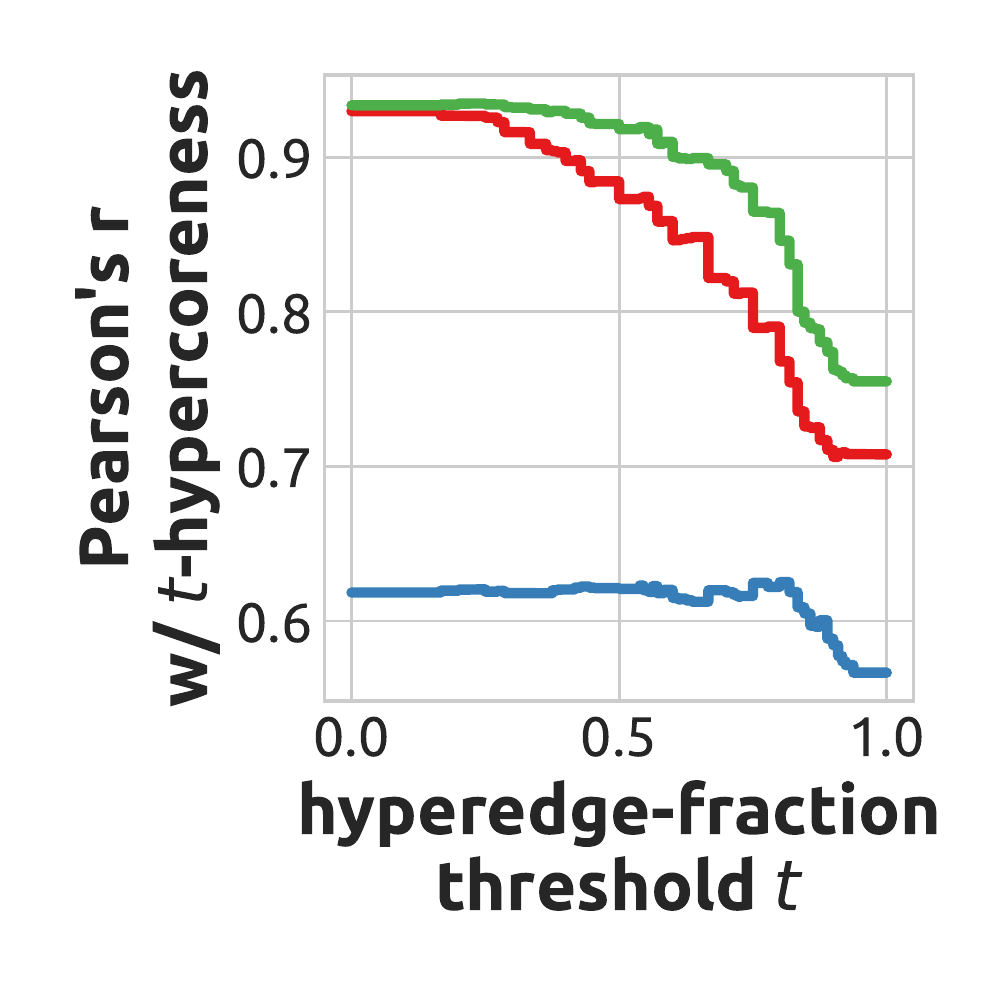}
		\includegraphics[scale=0.4]{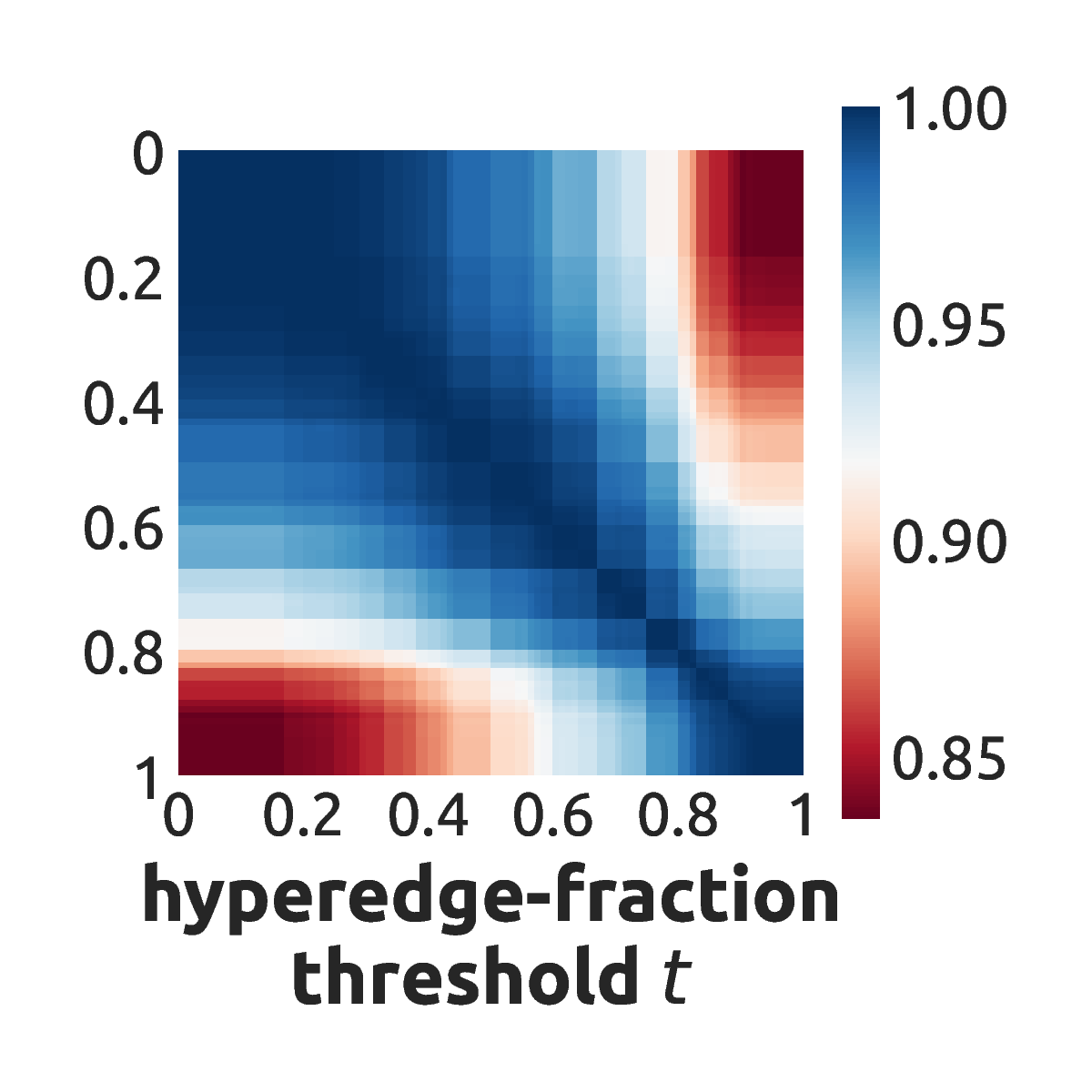}              
		\caption{email-Enron}
	\end{subfigure}
	\begin{subfigure}[b]{0.48\linewidth}
		\centering
		\hspace{-5mm}
		\includegraphics[scale=0.4]{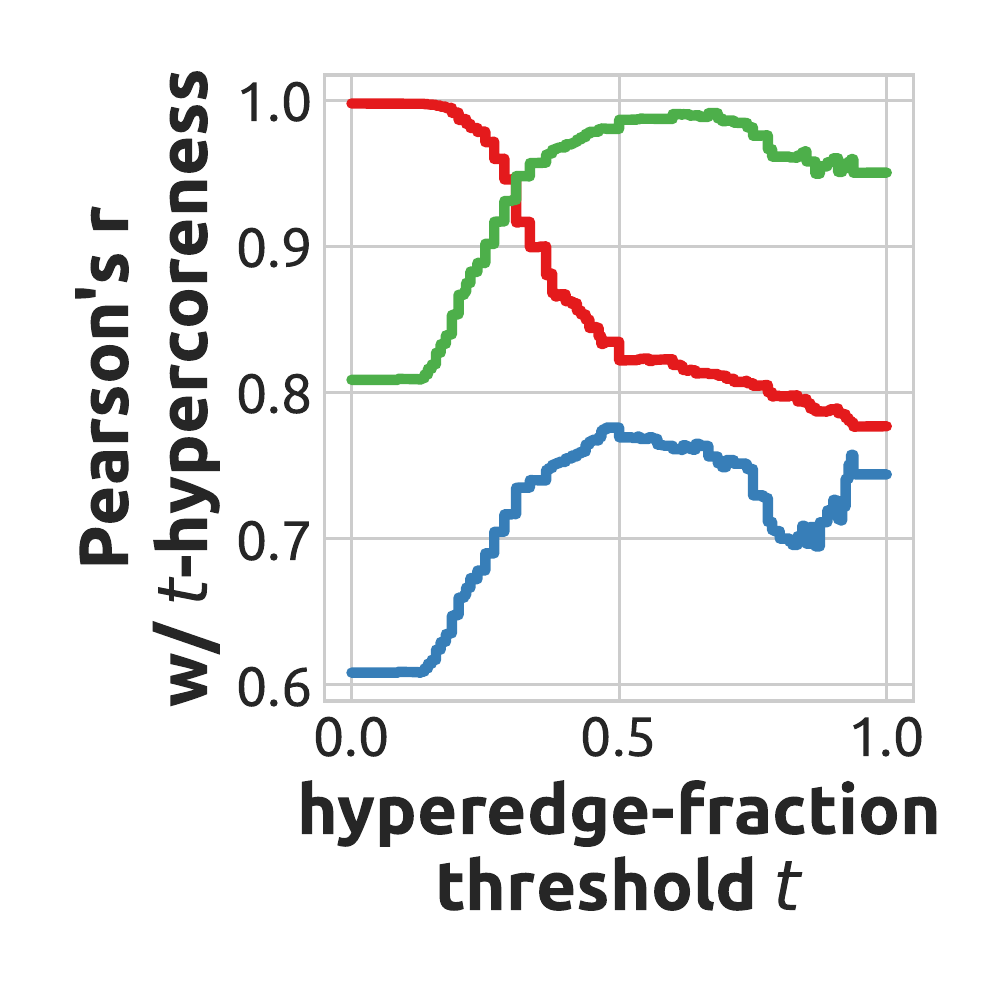}
		\includegraphics[scale=0.4]{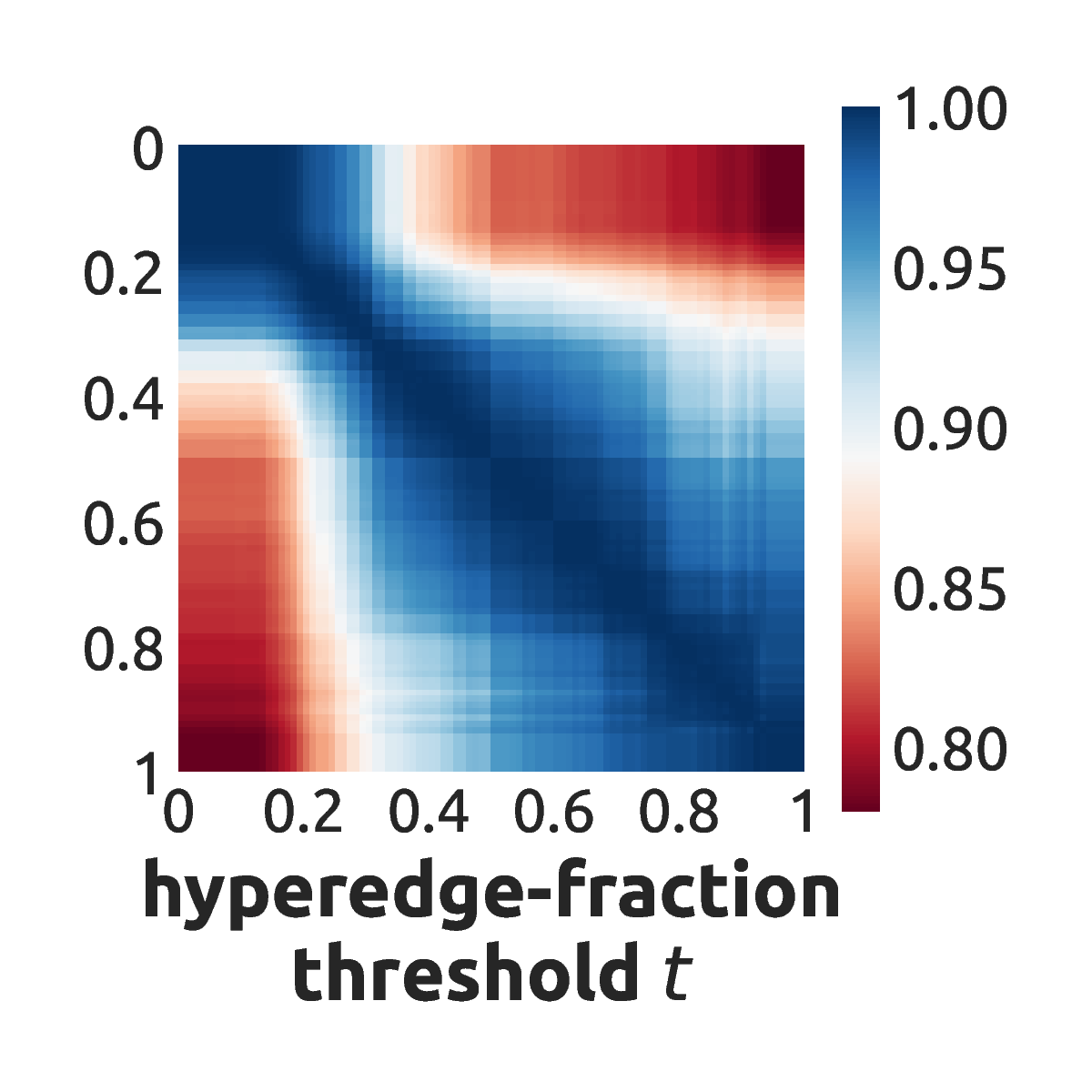}              
		\caption{NDC-classes}
	\end{subfigure}
	\begin{subfigure}[b]{0.48\linewidth}
		\centering
		\hspace{-5mm}
		\includegraphics[scale=0.4]{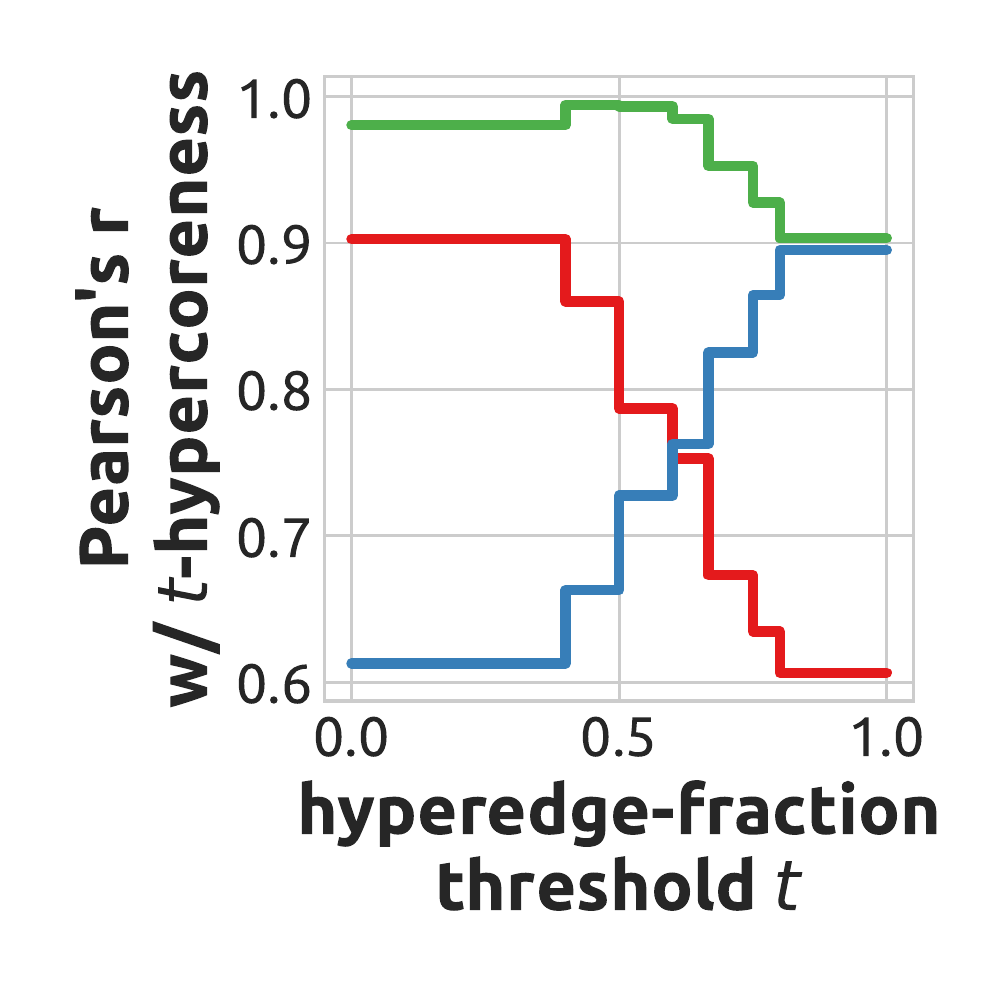}
		\includegraphics[scale=0.4]{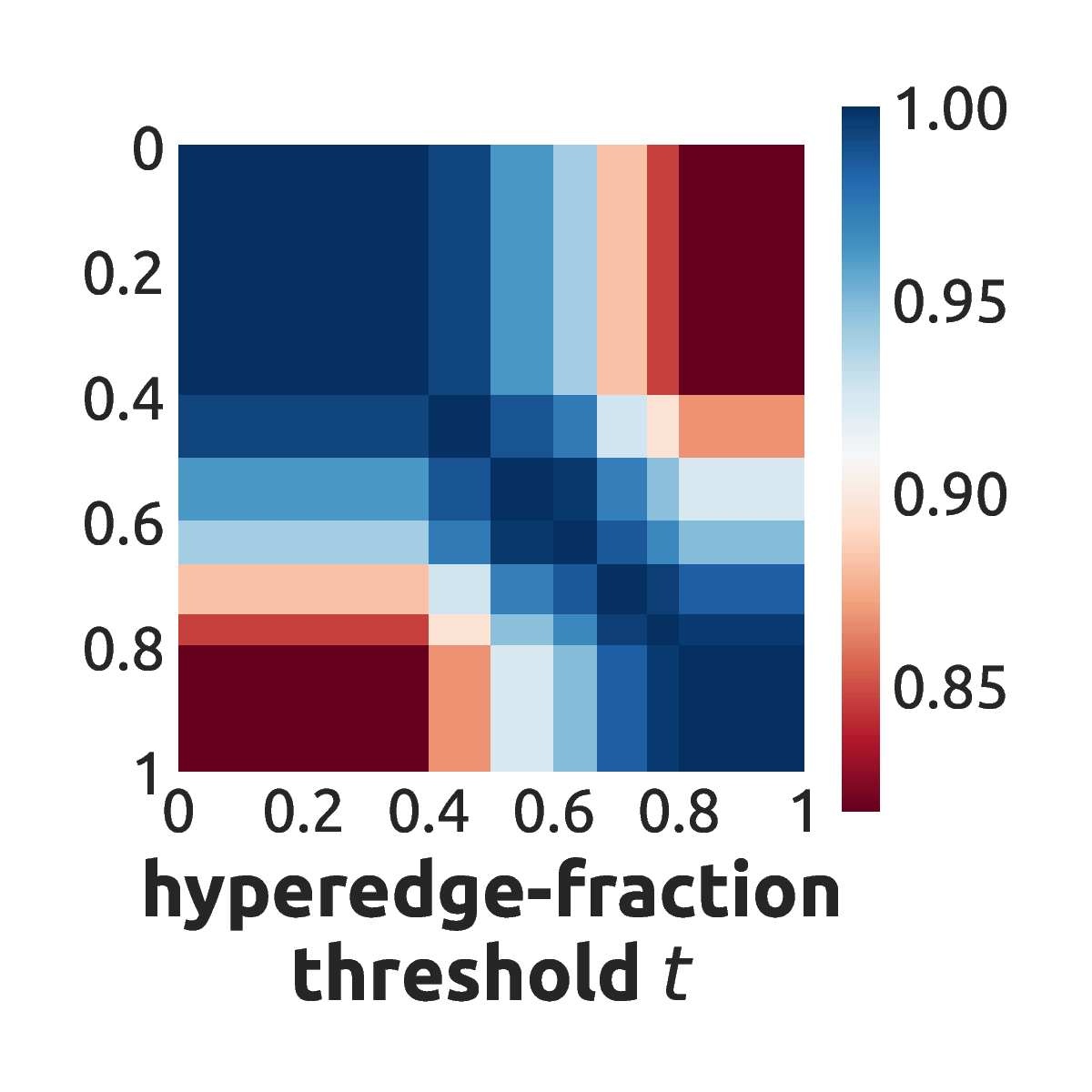}
		\caption{tags-ubuntu}
	\end{subfigure}    
	\begin{subfigure}[b]{0.48\linewidth}
		\centering
		\hspace{-5mm}
		\includegraphics[scale=0.4]{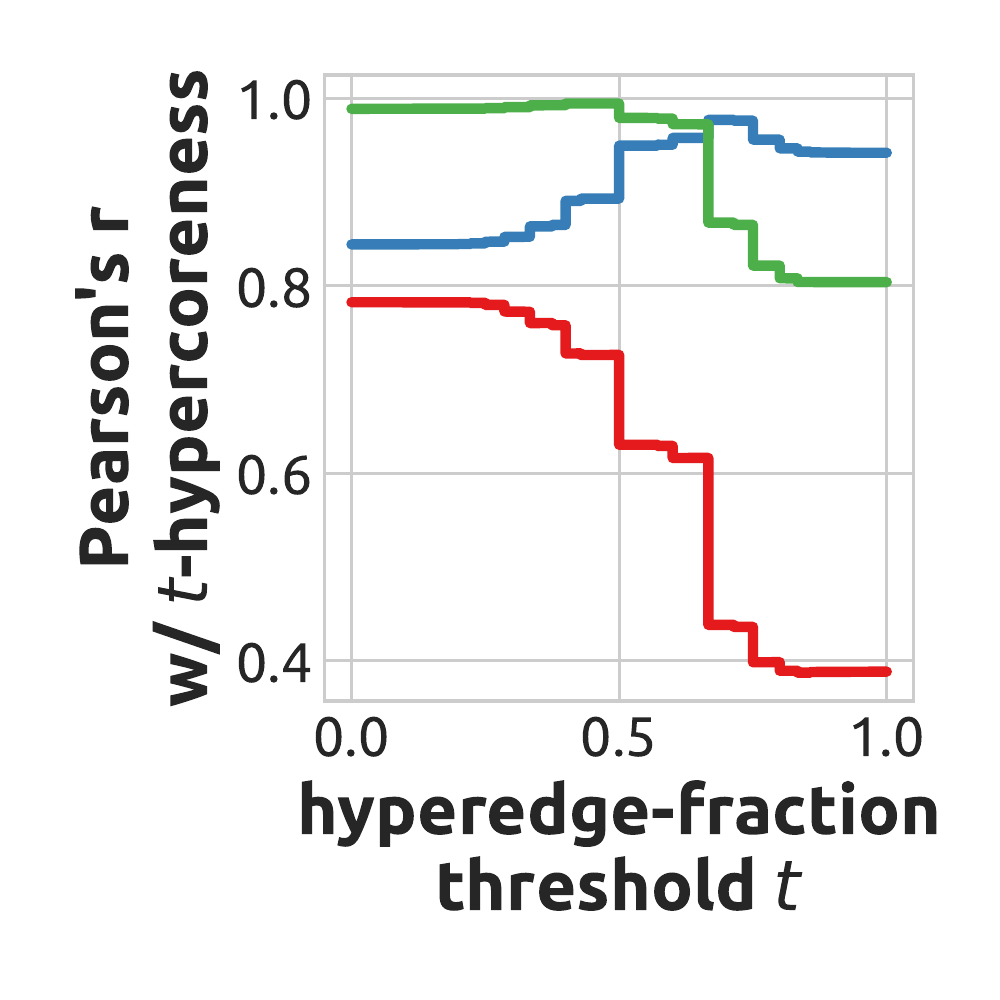}
		\includegraphics[scale=0.4]{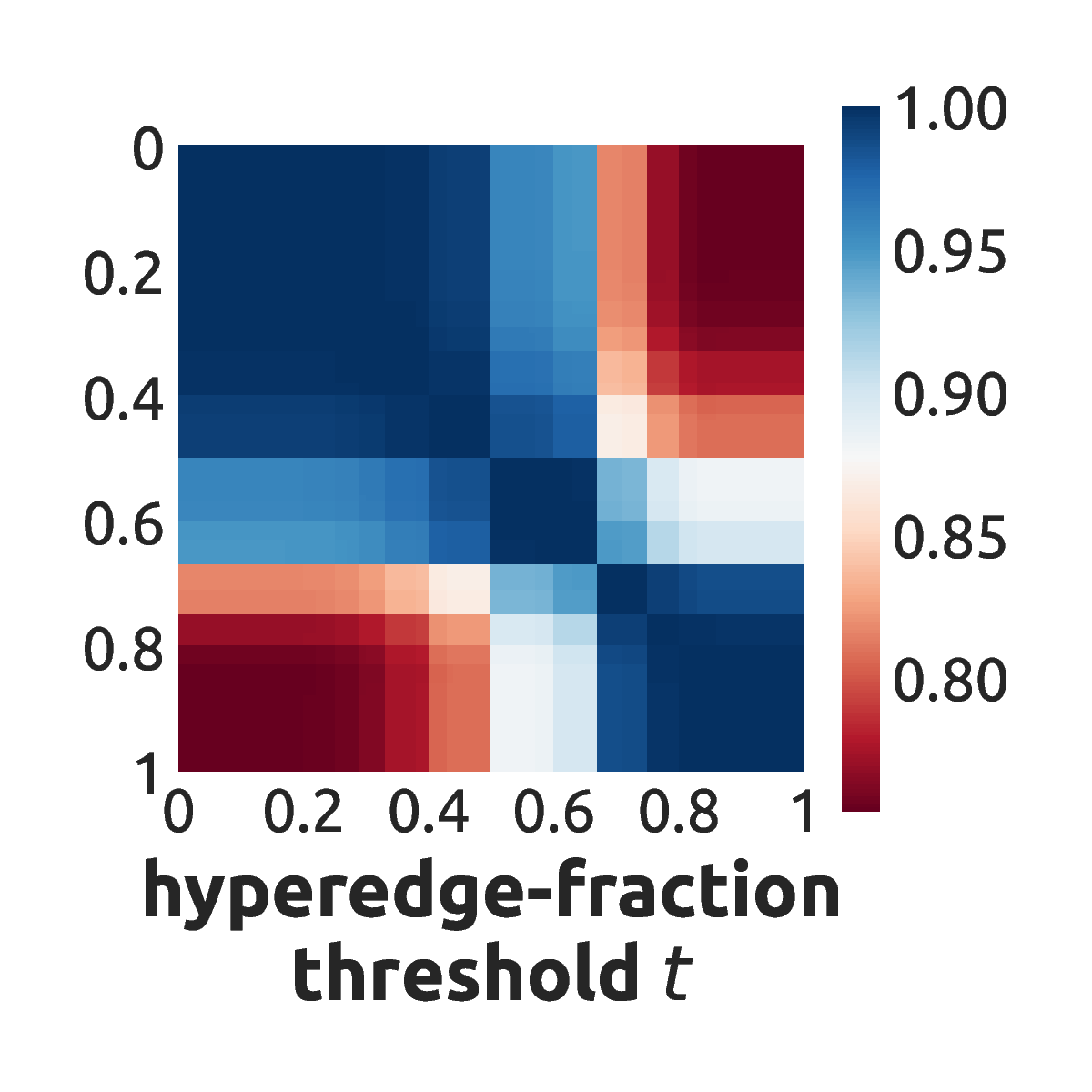}
		\caption{threads-math}
	\end{subfigure}
	\vspace{-1mm}
	\caption{\textbf{Statistical difference exists between $t$-hypercoreness and other centrality measures, as well as among $t$-hypercoreness with different $t$.}
		Left: the Pearson correlation coefficients between the $t$-hypercoreness sequences with different $t$ and each of the degree and coreness sequences in the unweighted (coreness-U) and weighted (coreness-W) clique expansions.
		Right: the Pearson correlation coefficient between each pair of $t$-hypercoreness sequences.
		\red{See Fig.~\ref{fig:rank_cor_sr} in Appendix~\ref{appendix:extra_exp} for the results on other datasets.}}
	\label{fig:rank_cor}
\end{figure*}

\subsection{Heterogeneity of $t$-hypercoreness}
We show that $t$-hypercoreness is statistically different from several existing centrality measures, and $t$-hypercoreness provides significantly different information depending on $t$.

\smallsection{Correlations.}
To show
(a) the distinctiveness of {$t$-}hypercoreness {from} existing centrality measures,
and
(b) the dissimilarity between {$t$-}hypercoreness with different $t$ values, we first measure {the Pearson correlation coefficients}.
In Fig.~\ref{fig:rank_cor}, we report Pearson's $r$ between the $t$-hypercoreness sequences with different $t$ values and each of the \textbf{degree} and \textbf{coreness} sequences in the \textit{unweighted} and \textit{weighted} clique expansions.
We also report Pearson's $r$ between each pair of $t$-hypercoreness sequences.
It is observed that even for the same hypergraph, the hypercoreness sequences with different $t$ values can be fairly dissimilar.

\begin{figure*}[t!]
	\centering
	\vspace{-1.2mm}
	\begin{subfigure}[b]{0.32\textwidth}
		\centering
		\includegraphics[scale=0.38]{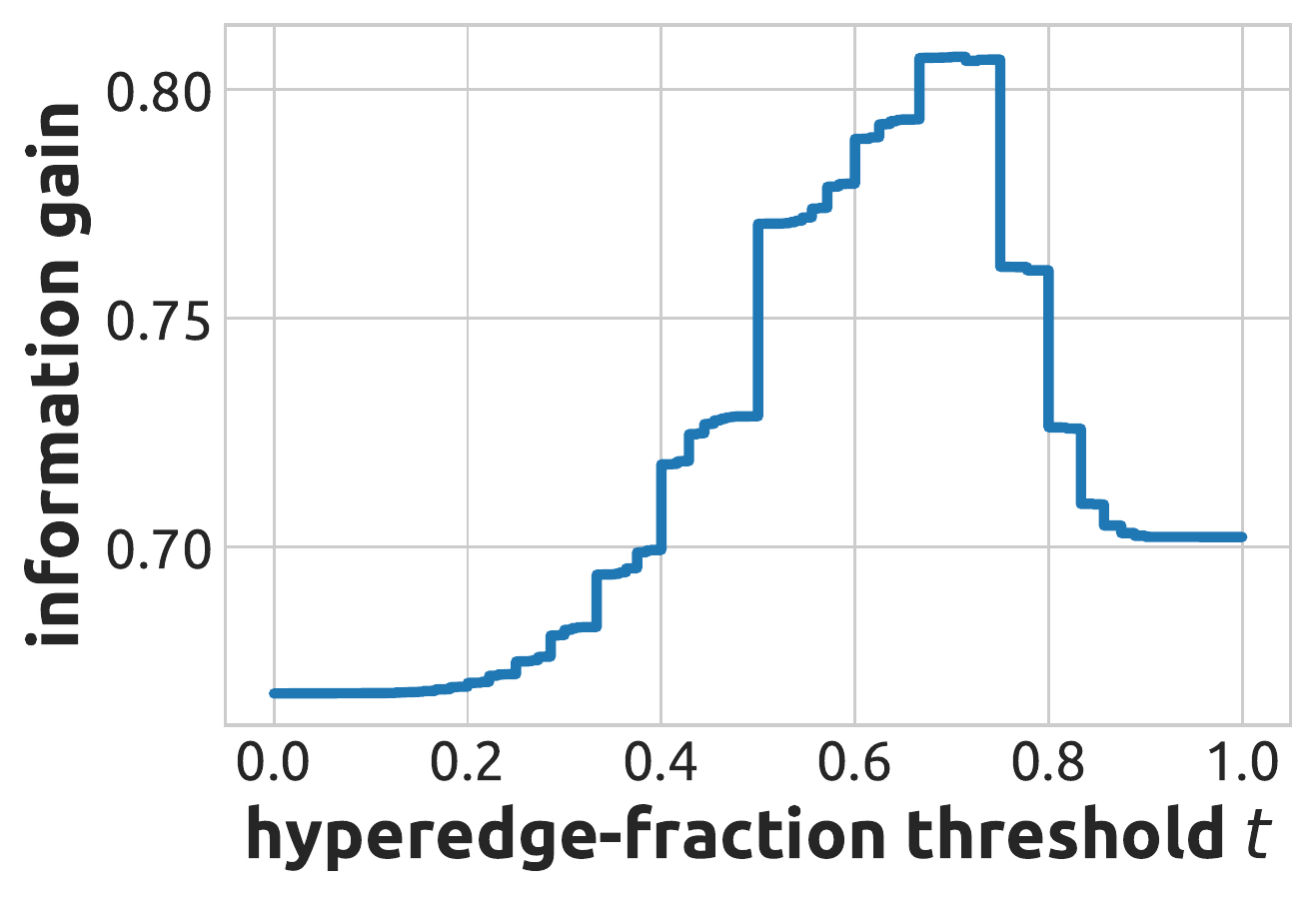}\\
		\caption{\Nq}
	\end{subfigure}
	\begin{subfigure}[b]{0.32\textwidth}
		\centering
		\includegraphics[scale=0.38]{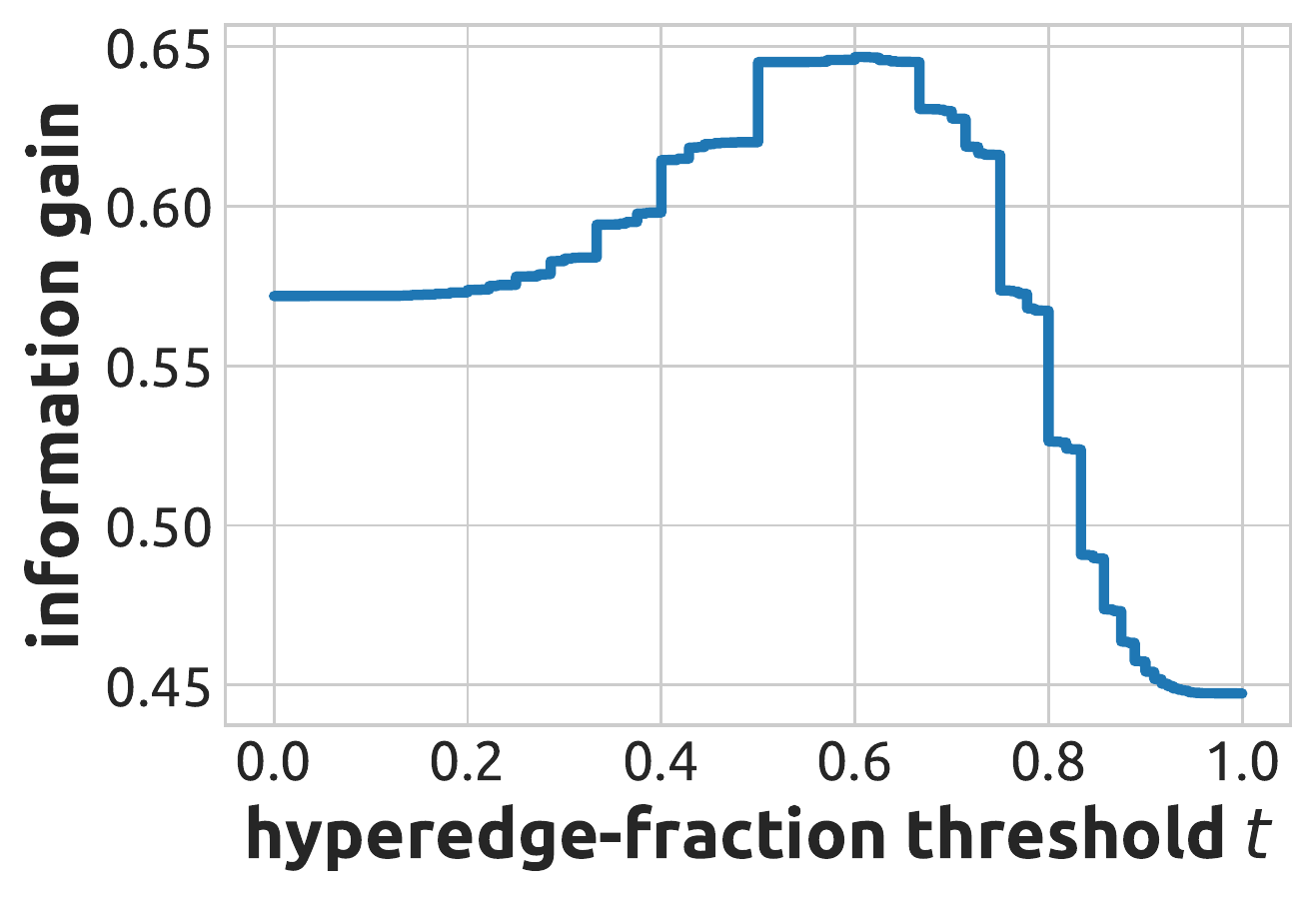}\\
		\caption{\Nw}
	\end{subfigure}
	\begin{subfigure}[b]{0.32\textwidth}
		\centering
		\includegraphics[scale=0.38]{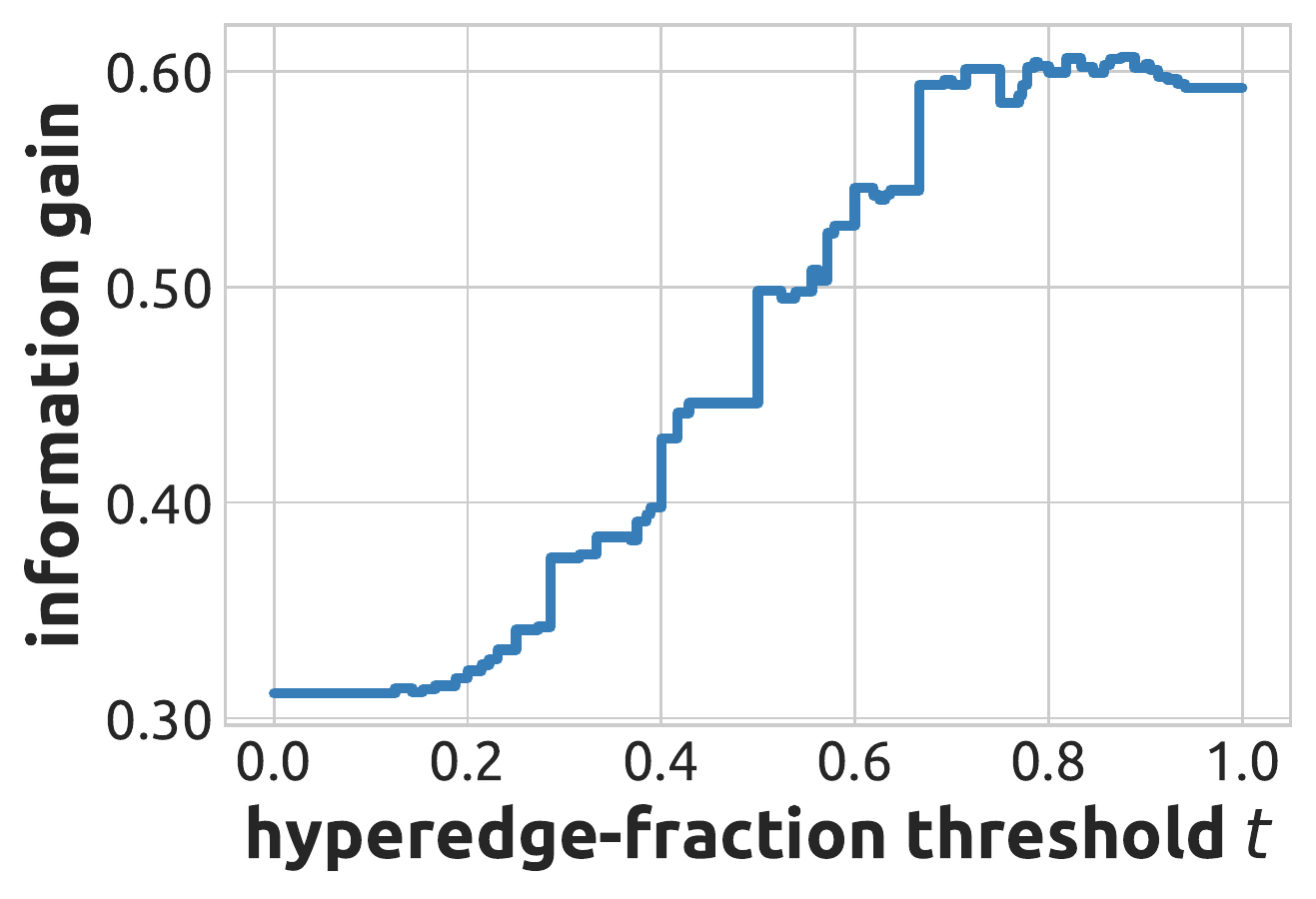}\\
		\caption{\Ne}
	\end{subfigure}
	\begin{subfigure}[b]{0.32\textwidth}
		\centering
		\includegraphics[scale=0.38]{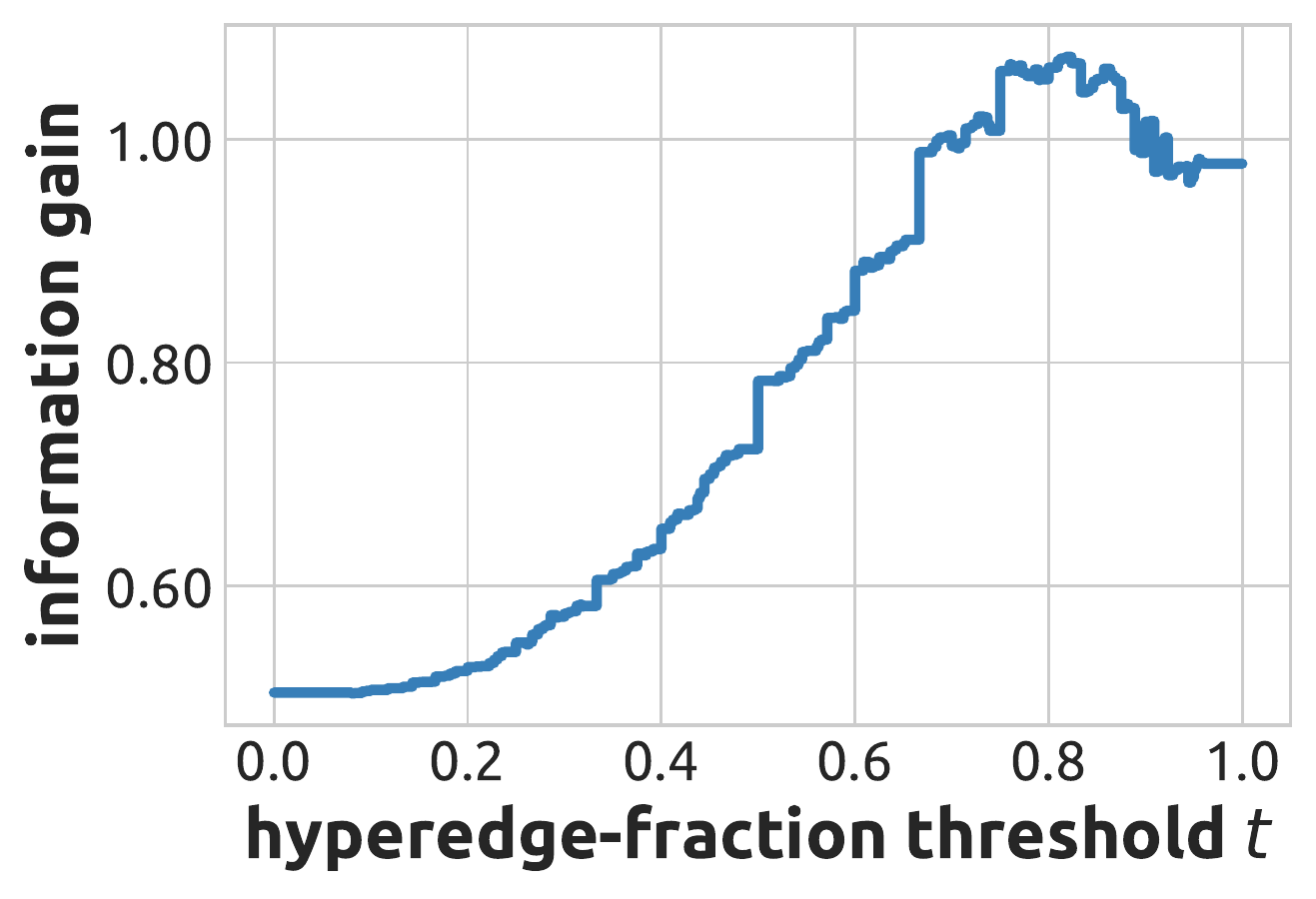}\\
		\caption{\Nr}
	\end{subfigure}
	\begin{subfigure}[b]{0.32\textwidth}
		\centering
		\includegraphics[scale=0.38]{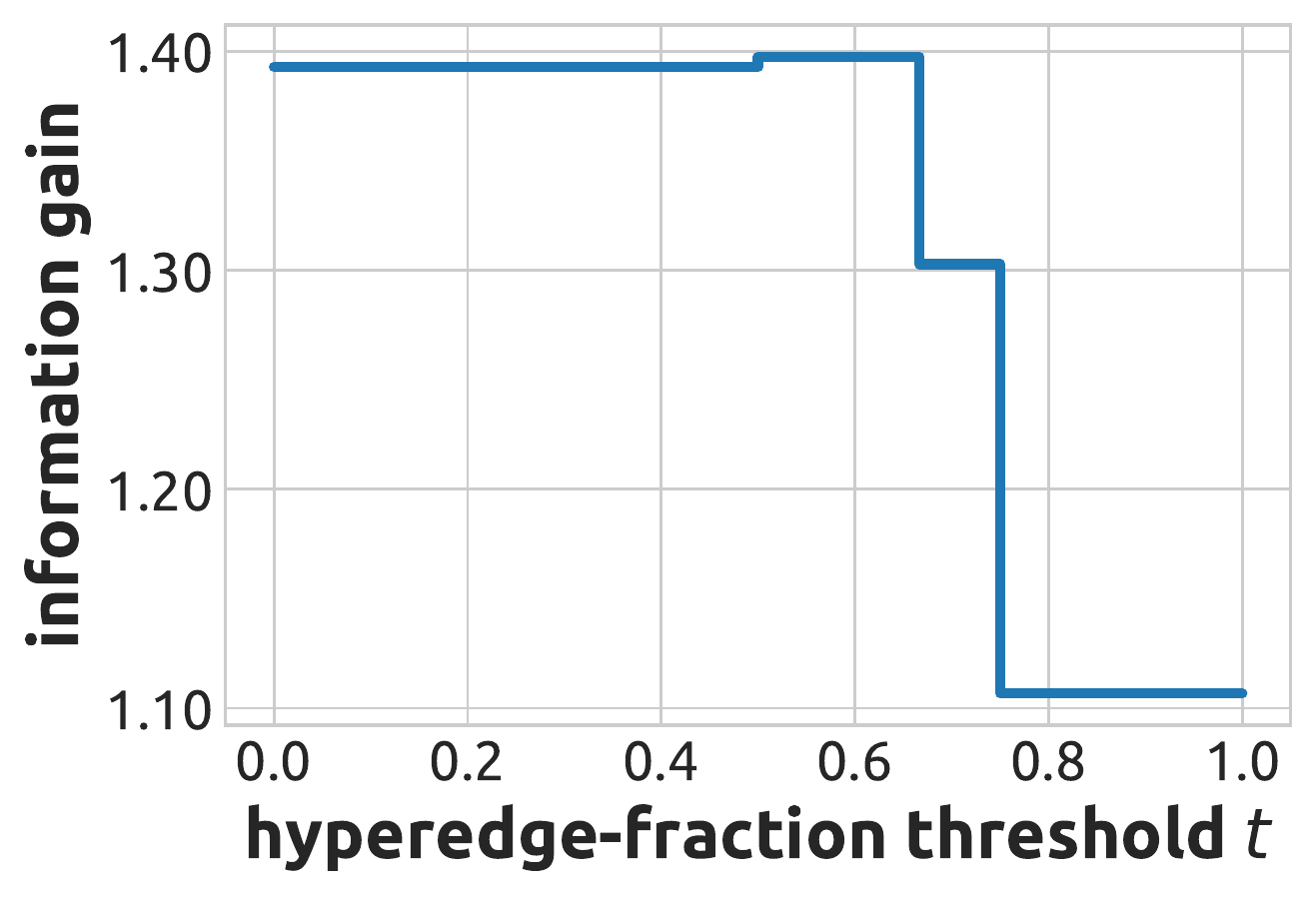}\\
		\caption{\Nt}
	\end{subfigure}        
	\begin{subfigure}[b]{0.32\textwidth}
		\centering
		\includegraphics[scale=0.38]{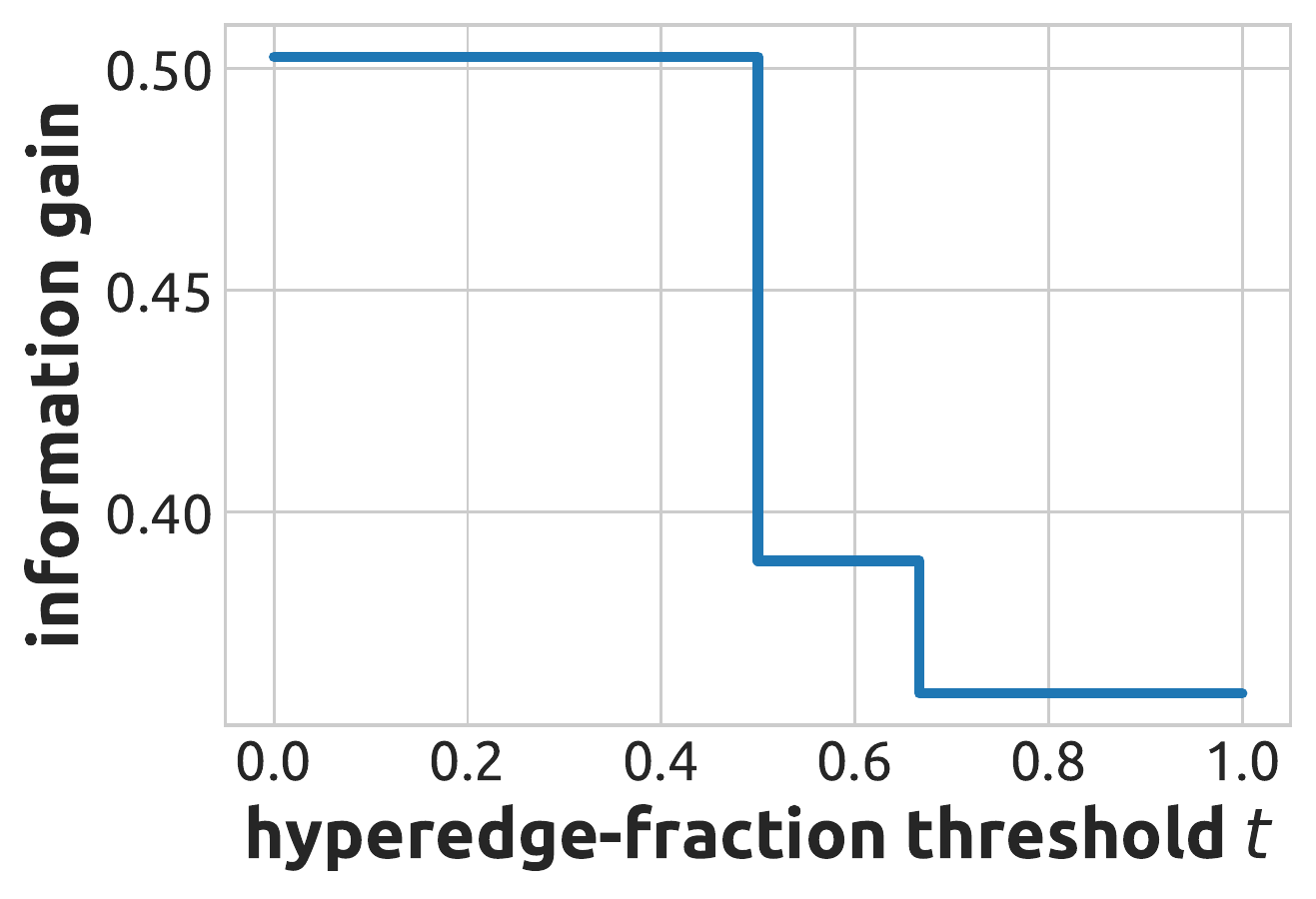}\\
		\caption{\Ny}
	\end{subfigure}
	\vspace{-0.3mm}
	\caption{\textbf{$t$-Hypercoreness has substantial information gain over degree, and it provides distinct information depending on $t$.}
		The average Pearson's $r$ between the information gain sequences is $0.232$ overall and $0.890$ within domains.
		The two values are significantly different with $5.0\mathrm{e}{-9}$ as the $p$-value of the $t$-test.
		\red{See Appendix~\ref{appendix:extra_exp} for the results on other datasets and  the results using other quantities.}}
	\label{fig:info_gain}
\end{figure*}

\smallsection{Information gain.}
We also show from the perspective of information theory that hypercoreness sequences with different $t$ values contain different information.
To this end, we define the \textit{information gain}.
\begin{definition}[Information gain \citep{quinlan1986induction}]\label{def:info_gain}
	Given $H = (V, E)$, for $i \in \bbN$, define $V_i \coloneqq \setbr{v \in V: d(v) = {i}}$ for $i \in \bbN$, and $V^t_i \coloneqq \setbr{v \in V: c_t(v) = i}$.
	The \textbf{information gain} of the $t$-hypercoreness sequence over the degree sequence is
	\begin{equation*}
		\mathcal{H}^t(H) \coloneqq - \sum_{i, j \in \bbN} \frac{\abs{V_i \cap V^t_j}}{n} \log_2 \frac{\abs{V_i \cap V^t_j}}{n} + \sum_{i \in \bbN} \frac{\abs{V_i}}{n} \log_2 \frac{\abs{V_i}}{n}.
	\end{equation*}
\end{definition}
The higher the information gain a hypercoreness sequence has, the more finely the nodes can be divided by the corresponding degree-hypercoreness pairs.
In Fig.~\ref{fig:info_gain}, we report the information gain for different $t$ values.
The highest information gain is achieved by different $t$ values in different datasets, and hypergraphs in the same domain show similar patterns.
In summary:
\begin{observation}[Heterogeneity of $t$-hypercoreness]
	In real-world hypergraphs, the $t$-hypercoreness of nodes provides statistically and information-theoretically distinct information depending on $t$.
\end{observation}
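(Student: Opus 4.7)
The plan is to substantiate the observation empirically, since the claim is a statistical and information-theoretic statement about real-world hypergraphs rather than a combinatorial identity. Concretely, I would read the observation as two sub-claims: (a) $t$-hypercoreness is not reducible to existing node-centrality quantities (degree, coreness of the unweighted clique expansion, coreness of the weighted clique expansion), and (b) for the same hypergraph, varying $t$ yields genuinely different rankings/partitions of nodes. For each of the fourteen datasets in Tbl.~\ref{tab:datasets}, I would compute, via Alg.~\ref{alg:t_hypercoreness}, the $t$-hypercoreness sequence for a fixed grid $t \in \{0, 0.2, 0.4, 0.6, 0.8, 1\}$, together with the three baseline centrality sequences on the same vertex set.

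For the statistical half of the claim, the natural instrument is a linear association measure applied pairwise to these sequences; I would report Pearson's $r$ both between $t$-hypercoreness and the three baselines, and between pairs of $t$-hypercoreness sequences at different $t$'s (this is exactly what Fig.~\ref{fig:rank_cor} displays). The argument for non-reducibility is then that many of these correlations are bounded away from $1$, and crucially that the correlations between $t$-hypercoreness sequences themselves decay as $|t - t'|$ grows, so no single $t$ can serve as a surrogate for the others. I would additionally cross-check with a rank correlation (Spearman/Kendall) to guard against the linear-association assumption.

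For the information-theoretic half, I would use the information gain $\mathcal{H}^t(H)$ from Def.~\ref{def:info_gain}, which measures how much finer the partition of $V$ becomes when the degree class is refined by the $t$-hypercoreness class. I would compute $\mathcal{H}^t(H)$ across the same $t$ grid and compare the resulting curves. Two sub-points need to be established: (i) $\mathcal{H}^t(H) > 0$ and non-negligible for some $t$ on each dataset, showing the hypercoreness is not a deterministic function of degree; (ii) the $t \mapsto \mathcal{H}^t(H)$ curve has a genuine $t$-dependence, so that different choices of $t$ extract different information. Aggregating across domains, I would further check that the information-gain curves cluster by domain (this mirrors Obs.~\ref{obs:domain:kt}) using Pearson's $r$ between curves.

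The main obstacle, I expect, will be making the qualitative impression of "distinct information" into a defensible quantitative statement, since the absolute magnitudes of correlation and of information gain vary considerably across the six domains. I would handle this by two complementary tests: a within-vs-between-domain $t$-test on the information-gain-curve correlations (as already reported in the caption of Fig.~\ref{fig:info_gain} with $p = 5.0\mathrm{e}{-9}$), and a sensitivity analysis that removes one dataset at a time to confirm the conclusion is not driven by a single outlier. Secondary obstacles are the choice of $t$-grid (which I would refine near the domain-specific "phase transitions" visible in Fig.~\ref{fig:kt_coresize}) and the handling of nodes with undefined $k$-fraction (set aside via the convention $f_k(v) = -1$ from Def.~\ref{def:k_fraction}).
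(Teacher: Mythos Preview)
Your proposal is correct and follows essentially the same approach as the paper: the paper substantiates this observation empirically via exactly the two instruments you identify, namely pairwise Pearson correlations between $t$-hypercoreness sequences and the baseline centralities (and among the $t$-hypercoreness sequences themselves) as in Fig.~\ref{fig:rank_cor}, and the information gain $\mathcal{H}^t(H)$ of Def.~\ref{def:info_gain} plotted against $t$ as in Fig.~\ref{fig:info_gain}. Your additional robustness checks (rank correlations, leave-one-out sensitivity, refined $t$-grid) go beyond what the paper actually reports but are sensible extensions; the remark about handling undefined $k$-fraction is irrelevant here since only $t$-hypercoreness is used.
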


\section{Applications}\label{sec:apps}
In this section, we present some successful applications of our proposed concepts to demonstrate their usefulness.

\subsection{Influential-node identification}\label{subsec:inf_node_id}
It is well-known that in pairwise graphs, coreness is a good indicator of influential nodes \citep{kitsak2010identification}.
However, influential-node identification in hypergraphs is still underexplored, while some trials have been done \citep{zhu2018social, antelmi2021social}.
We use the SIR model, a widely-used epidemic model.
The model is straightforwardly generalized so that it can be used on hypergraphs, where the probability of a susceptible node being infected by the infected nodes in a hyperedge is proportional to the proportion of infected nodes in the hyperedge.
At each time step, each infected node recovers with a given probability ($\gamma$) independently.
We simulate the \textsc{hyperSIR} (see Alg.~\ref{alg:hyperSIR}) process assuming a single initially infected node.
In Alg.~\ref{alg:hyperSIR}, we show the process of \textsc{hyperSIR}.
In the relatively large datasets (coauth-DBLP, coauth-Geology, and threads-SO), we randomly draw $10\%$ of the nodes, and perform the simulation $100$ times for each seed node.
In the other datasets, we simulate $10,000$ times for each node as the seed. 

\begin{algorithm}[t!]
	\small
	\caption{\textsc{hyperSIR}} \label{alg:hyperSIR}
	\hspace*{\algorithmicindent} \textbf{Input:} {$H = (V, E)$, seed node $v^*$, transmission rate $\beta$, and recovery rate $\gamma$} \\
	\hspace*{\algorithmicindent} \textbf{Output:} {number of ever-infected nodes $\abs{R}$}
	\begin{algorithmic}[1]
		\State $S \leftarrow V \setminus \setbr{v^*}; I \leftarrow \setbr{v^*}; R \leftarrow \varnothing$
		\While{$I \neq \varnothing$}
		\State $P_s(v_s) \leftarrow 1, \forall v_s \in S$
		\For{\upshape\textbf{each} $e \in E$ \upshape\textbf{s.t.} $e \cap I \neq \varnothing \land e \cap S \neq \varnothing$}
		\State $I_e \leftarrow e \cap I; S_e \leftarrow e \cap S$
		\State $P_s(u) \leftarrow P_s(u) (1 - 2 \beta \abs{I_e} / \abs{e}), \forall u \in S_e$ \label{line:linear_prob_hyperSIR} 
		\EndFor
		\State $v_i$ moves to $R$ with probability $\gamma$, $\forall v_i \in I$
		\State $v_s$ moves to $I$ with probability $1 - P_s(v_s)$, $\forall v_s \in S$
		\EndWhile
		\State \Return {$\abs{R}$}
	\end{algorithmic}
\end{algorithm}

We investigate the relations between the average number of ever-infected nodes and the following quantities of the seed node in addition to \bolden{$t$-hypercoreness} and \textbf{degree}:
\begin{itemize}
	\item \red{\textbf{Neighbor-hypercoreness}~\citep{arafat2023neighborhood}: see Def.~\ref{def:nbr_hypercoreness} (abbreviation: nbr-hypercoreness);}
	\item \red{\textbf{Neighbor-degree-hypercoreness}~\citep{arafat2023neighborhood}: see Def.~\ref{def:nbr_deg_hypercoreness} (abbreviation: nd-hypercoreness);}	
	\item \textbf{Coreness} in the unweighted (coreness-U) / weighted (coreness-W) clique expansion;
	\item \textbf{Eigencentrality} in unweighted (eigencentrality-U)/ weighted (eigencentrality-W) clique expansion;
	\item \textbf{Hyper-eigencentrality}~\citep{tudisco2021node}: three different versions, linear (hyperEC-L), log-exp (hyperEC-LE), and max (hyperEC-M);
	\item \red{\bolden{$\ell$-hypercoreness}~\citep{limnios2021hcore}: see Def.~\ref{def:l_hypercoreness}.\footnote{Recall that $\ell$-hypercoreness with $\ell = 2$ is include in $t$-hypercoreness with $t = 0$. For each dataset, we apply min-max normalization to all the possible $\ell$ values with $\ell \geq 3$ so that $t$-hypercoreness and $\ell$-hypercoreness can fit in the same $x$-axis with the range $[0, 1]$.}}
\end{itemize}

We also consider two supervised machine-learning methods.
Specifically, we apply \textbf{node2vec}~\citep{grover2016node2vec} to the unweighted clique expansion of each dataset,
and we apply a self-supervised hypergraph learning method \textbf{TriCL}~\citep{lee2022m} (which is based on the architecture proposed by~\cite{feng2019hypergraph}) directly to the original hypergraphs.
Both additional baseline methods output node embeddings of dimension $128$.
For each dataset, we sample $10\%$ of the nodes (for the three relatively large datasets where we only use $10\%$ of the nodes, we sample $1\%$ of the total nodes, i.e., $10\%$ of the $10\%$) uniformly at random and provide the ground-truth influence of the sampled nodes.\footnote{The average performance over five independent trials is reported.}
For both methods, we apply linear regression using the node embeddings as the features, and then we use the fitted linear regression model to predict the influence of the nodes.
Due to the scalability issues, results of them are unavailable on some large datasets.
\color{black}

\begingroup
\renewcommand{\arraystretch}{1.5}
\begin{table}[t!]
	\caption{Results of $t$-hypercoreness by choosing the $t$ values based on sampled nodes.
    Sampled nodes: the results where the $t$ values are chosen based on sampled nodes.
    Ground-truth best: the results where for each dataset, the best $t$ value is chosen among the candidate values.
    Best $t$: the most indicative $t$ value (in each of the five trials, or among the candidate values).
    Perm.: the average performance (the Pearson correlation coefficient; the higher the better) over the five trials.
    Rank: the rank (the lower the better) among all the baseline methods and each considered one (i.e., the result based on sampled nodes or using the ground-truth best $t$ value).
 } \label{tab:inf_node_id_suppl_res}
	\centering
	\resizebox{0.7\columnwidth}{!}{%
		\begin{tabular}{l|ccc|ccc}
			\hline
                & \multicolumn{3}{c|}{sampled nodes} & \multicolumn{3}{c}{ground-truth best} \\                
			dataset & best $t$ & perm. & rank & best $t$ & perm. & rank \\
			\hline		
                coauth-DBLP & $(\frac{1}{2}, \frac{1}{2}, \frac{1}{2}, \frac{1}{2}, \frac{1}{2})$ & $0.927 \pm 0.000$ & 1 & $\frac{1}{2}$ & $0.927$ & 1 \\
                coauth-Geology & $(\frac{1}{2}, \frac{1}{2}, \frac{1}{2}, \frac{1}{2}, \frac{1}{2})$ & $0.930\pm0.000$ & 1 & $\frac{1}{2}$ & $0.930$ & 1 \\
                NDC-classes & $(\frac{1}{2}, \frac{1}{2}, \frac{1}{2}, \frac{2}{3}, \frac{2}{3})$ & $0.939\pm0.001$ & 2 & $\frac{1}{2}$ & $0.940$ & 2 \\
                NDC-substances & $(\frac{2}{3}, \frac{2}{3}, \frac{2}{3}, \frac{2}{3}, \frac{2}{3})$ & $0.959\pm0.000$ & 1 & $\frac{2}{3}$ & $0.959$ & 1 \\
                contact-high & $(0, 0, 0, 0, 0)$ & $0.947\pm0.000$ & 1 & $0$ & $0.947$ & 1 \\
                contact-primary & $(0, 0, \frac{2}{3}, \frac{2}{3}, 1)$ & $0.970\pm0.007$ & 1 & $\frac{2}{3}$ & $0.975$ & 1\\
                email-Enron & $(\frac{2}{3}, \frac{2}{3}, \frac{2}{3}, \frac{2}{3}, \frac{2}{3})$ & $0.960\pm0.000$ & 1 & $\frac{2}{3}$ & $0.960$ & 1 \\
                email-Eu & $(\frac{1}{2}, \frac{1}{2}, \frac{1}{2}, \frac{2}{3}, \frac{2}{3})$ & $0.975\pm0.003$ & 1 & $\frac{2}{3}$ & $0.977$ & 1 \\
                tags-ubuntu & $(1, 1, 1, 1, 1)$ & $0.970\pm0.000$ & 2 & $1$ & $0.970$ & 2 \\
                tags-math & $(1, 1, 1, 1, 1)$ & $0.990\pm0.000$ & 1 & $1$ & $0.990$ & 1 \\
                tags-SO & $(1, 1, 1, 1, 1)$ & $0.844\pm0.000$ & 2 & $1$ & $0.844$ & 2 \\
                threads-ubuntu & $(\frac{2}{3}, \frac{2}{3}, \frac{2}{3}, \frac{2}{3}, \frac{2}{3})$ & $0.938\pm0.000$ & 1 & $\frac{2}{3}$ & $0.938$ & 1 \\
                threads-math & $(\frac{2}{3}, \frac{2}{3}, \frac{2}{3}, \frac{2}{3}, \frac{2}{3})$ & $0.971\pm0.000$ & 1 & $\frac{2}{3}$ & $0.971$ & 1 \\
                threads-SO & $(\frac{2}{3}, \frac{2}{3}, \frac{2}{3}, \frac{2}{3}, \frac{2}{3})$ & $0.962\pm0.000$ & 1 & $\frac{2}{3}$ & $0.962$ & 1 \\
			\hline
		\end{tabular}%
	}
\end{table}
\endgroup

We take the largest connected component of each dataset, as in previous works on pairwise graphs~\citep{kitsak2010identification}.
For simplicity, we use $\gamma = 1$, and choose $\beta \in \setbr{0.05, 0.025, 0.01, 0.005, 0.0025}$ to avoid the cases when almost all seed nodes lead to similar results.
For the \textit{email-Eu} dataset, Fig.~\ref{fig:inf_email_Eu} shows the detailed relations between the average number of ever-infected nodes (i.e., final $\abs{R}$) and each of the aforementioned quantities, with the best-fitted lines.
Fig.~\ref{fig:inf_summary} shows the Pearson correlation coefficient between the average number of ever-infected nodes and each quantity.
The comparison between $t$-hypercoreness and the coreness in clique expansions validates the information loss brought by the clique expansions.
On most of the datasets, at least one among the $t$-hypercoreness with $t \in \setbr{0, \frac{1}{2}, \frac{2}{3}, 1}$ works better than all the other baseline methods.
\red{On the remaining datasets, $t$-hypercoreness with a proper $t$ value ranks second.}
\red{Moreover, even if we always use the $t$-hypercoreness with $t = \frac{1}{2}$, $t$-hypercoreness still outperforms all the baseline methods on 10 out of 14 datasets.}
\red{In practice, we may sample a small number of nodes and choose the $t$ value that is most influence-indicative (w.r.t the Pearson correlation coefficient) on the sampled set of nodes.
For this purpose, we use the same $10\%$ nodes (or $1\%$ for some large datasets) that are used as a training set for the machine-learning methods.
In Table~\ref{tab:inf_node_id_suppl_res}, for each dataset, we show 
(1) the most indicative $t$ value in each of the five trials,
(2) the performance and rank of $t$-hypercoreness averaged on the five trials,\footnote{We count $\ell$-hypercoreness with each $\ell$ value as a separate method ($\ell = 2$ is not counted since it is already included in the concept of $t$-hypercoreness with $t = 0$).}
and (3) the performance and rank of $t$-hypercoreness with the best $t$ values among the four candidate values. 
We can observe that a well-performing $t$ value is always found (although the chosen $t$ values may vary), and $t$-hypercoreness performs well and stably, almost always outperforming all the baselines.}

\begin{observation}[Influence indicativeness of $t$-hypercoreness] \label{obs:influence}
	In real-world hypergraphs, $t$-hypercoreness identifies influential nodes well.
	In most cases, $t$-hypercoreness with a proper $t$ is the best indicator of influential nodes among all considered centrality measures.
	In different hypergraphs, the $t$ value maximizing the correlation between $t$-hypercoreness and node influence varies, and in most cases, such $t$ is neither $0$ nor $1$.
\end{observation}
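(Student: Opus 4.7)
The plan is to establish this observation empirically, since it is a claim about real-world hypergraph behavior rather than a purely combinatorial statement. The overall approach is to define a principled ground-truth notion of node influence via an epidemic-style simulation, compute a battery of competing centrality scores including $t$-hypercoreness, and then compare them via correlation with the ground truth across the fourteen benchmark datasets listed in Tbl.~\ref{tab:datasets}.

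First, I would generalize the classical SIR model to hypergraphs so that at each discrete step, a susceptible node in a hyperedge containing infected members becomes infected with probability proportional to the fraction of infected members in that hyperedge, and each infected node recovers independently with rate $\gamma$. Running this process to absorption from a single seed and averaging $\abs{R}$ over many trials gives a ground-truth influence score per node. To keep the Monte Carlo estimate stable, I would use a small $\gamma$ (taking $\gamma = 1$ for concreteness) and sweep $\beta \in \setbr{0.05,0.025,0.01,0.005,0.0025}$, since too large a $\beta$ causes essentially every seed to infect the whole component (flattening all correlations) and too small a $\beta$ causes every outbreak to die out immediately.

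Next, for each hypergraph, I would compute $t$-hypercoreness for several candidate $t \in \setbr{0,\tfrac{1}{2},\tfrac{2}{3},1}$ using Alg.~\ref{alg:t_hypercoreness}, together with the full baseline list: raw degree, coreness on the unweighted and weighted clique expansions, eigencentrality on both expansions, the three hyper-eigencentrality variants of~\cite{tudisco2021node}, the neighbor- and neighbor-degree-hypercorenesses of~\cite{arafat2023neighborhood}, $\ell$-hypercoreness of~\cite{limnios2021hcore}, and embeddings from node2vec and TriCL passed through linear regression to predicted influence. For every measure I would then compute the Pearson correlation between its value at a seed and the seed's simulated expected $\abs{R}$, and report per-dataset rankings. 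To validate the practical usability claim implicit in the observation, I would add a held-out selection protocol: sample $10\%$ of nodes (or $1\%$ on the largest graphs), compute their ground-truth influence, pick the $t$ maximizing in-sample correlation, and evaluate out-of-sample.

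The observation then splits into three verifiable sub-claims: (i) $t$-hypercoreness with some $t$ correlates strongly with expected infection size; (ii) this choice is top-ranked among all baselines on most datasets; (iii) the maximizing $t$ varies across hypergraphs and is typically neither $0$ nor $1$. Sub-claim (iii) is checked by tabulating the argmax $t$ per dataset. The main obstacle I anticipate is compute: faithful Monte Carlo on \Nq, \Nw, and \Nf is prohibitive if every node is used as a seed, so I would subsample $10\%$ of nodes as seeds with $100$ trials each, relying on the large sample size of nodes to stabilize the Pearson estimate, whereas on the smaller datasets I would use $10{,}000$ trials per seed over all nodes. A secondary subtlety is that some baselines (eigencentrality variants, TriCL) have their own scalability limits, so on the very largest graphs those entries would simply be marked unavailable; this does not affect the observation, since it is about $t$-hypercoreness being best among measures that can actually be computed.
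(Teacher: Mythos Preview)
Your proposal is correct and essentially replicates the paper's own methodology: the same \textsc{hyperSIR} model with $\gamma=1$ and $\beta \in \setbr{0.05,0.025,0.01,0.005,0.0025}$, the same candidate set $t \in \setbr{0,\tfrac12,\tfrac23,1}$, the same baseline list, Pearson correlation as the comparison metric, the same subsampling scheme on the three largest datasets, and the same held-out $10\%$ protocol for selecting $t$. The only minor omission is that the paper restricts each dataset to its largest connected component before simulating, following~\cite{kitsak2010identification}.
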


\begin{figure*}[t!]
	\centering	
	\begin{subfigure}[b]{\textwidth}
		\centering
		\includegraphics[scale=0.5]{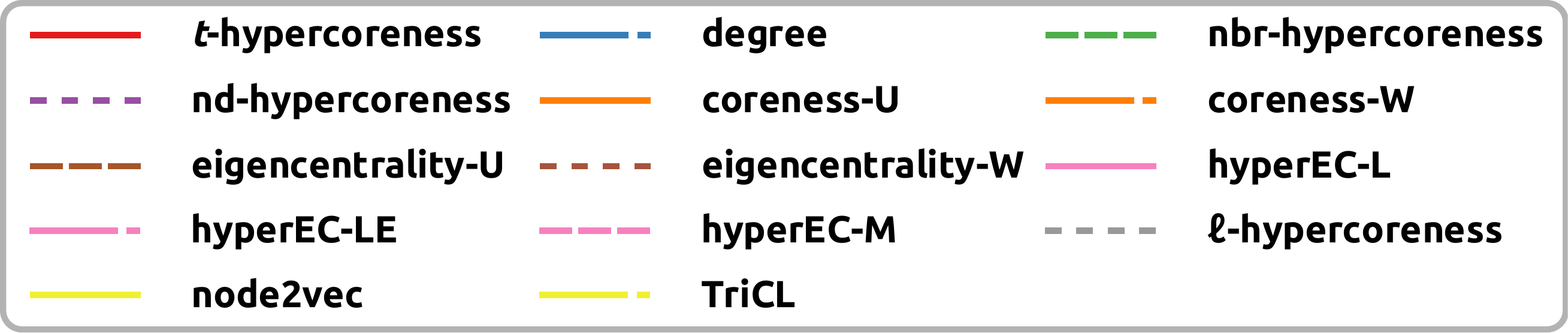}\\        
	\end{subfigure}    
	\vspace{1mm}
	\begin{subfigure}[b]{0.32\textwidth}
		\centering
		\includegraphics[scale=0.4]{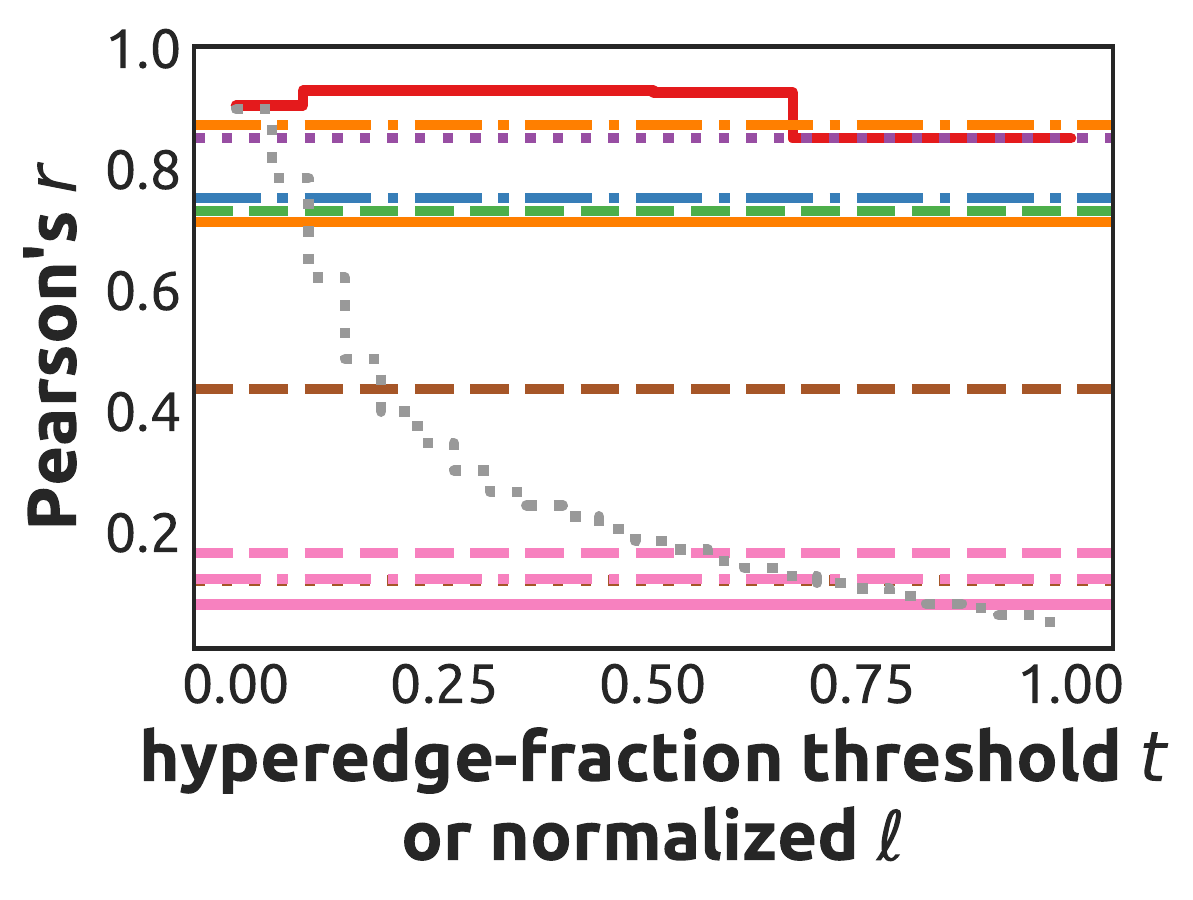}
		\caption{\Nq}
	\end{subfigure}
	\begin{subfigure}[b]{0.32\textwidth}
		\centering
		\includegraphics[scale=0.4]{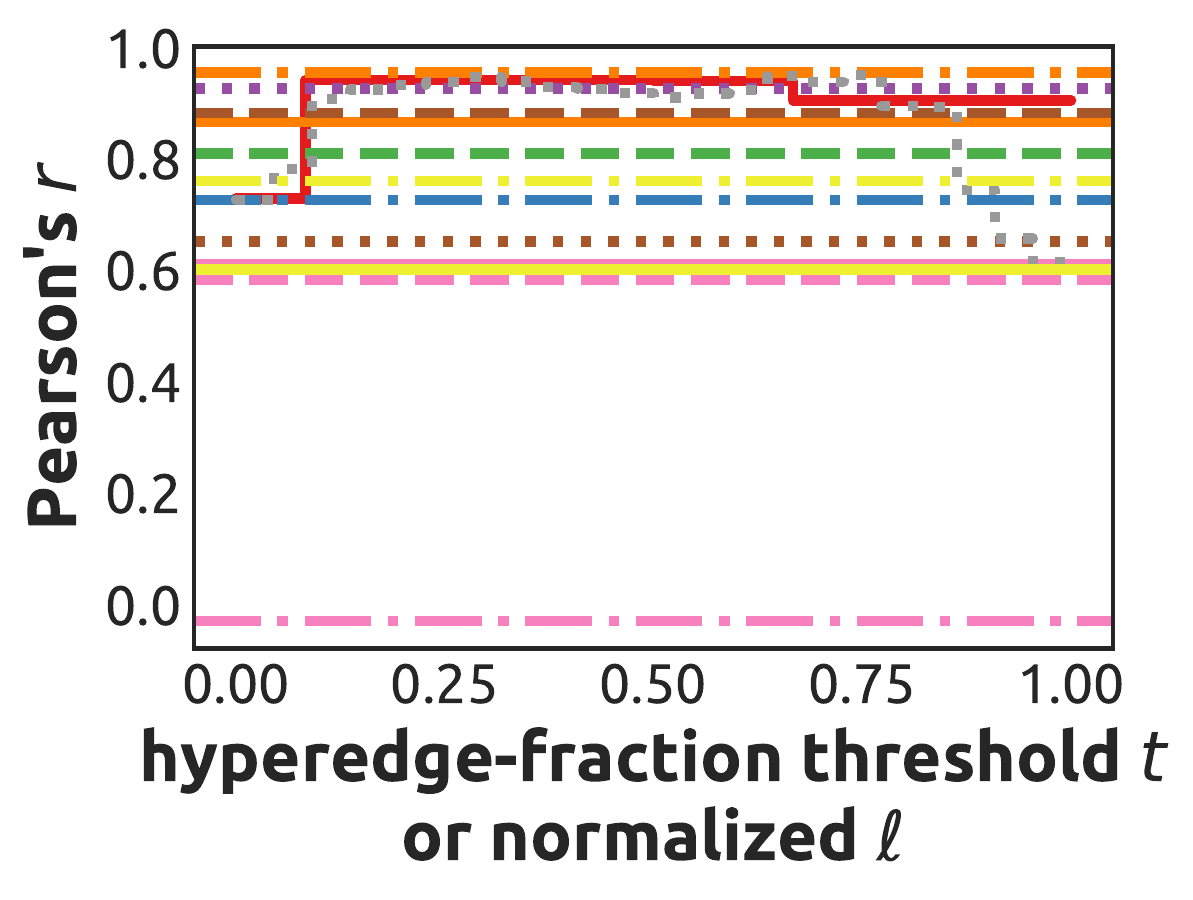}
		\caption{\Ne}
	\end{subfigure}
	\begin{subfigure}[b]{0.32\textwidth}
		\centering
		\includegraphics[scale=0.4]{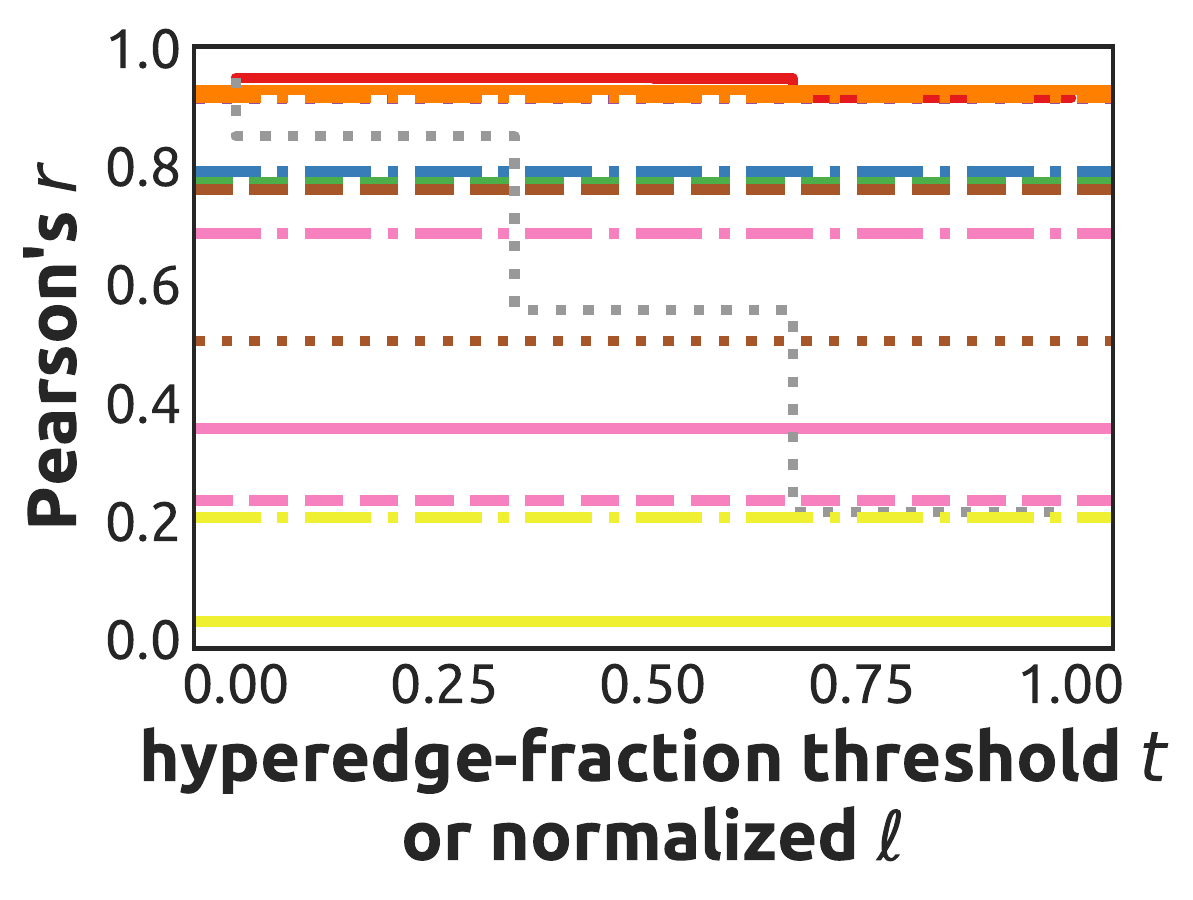}
		\caption{\Nt}
	\end{subfigure}
	\begin{subfigure}[b]{0.32\textwidth}
		\centering
		\includegraphics[scale=0.4]{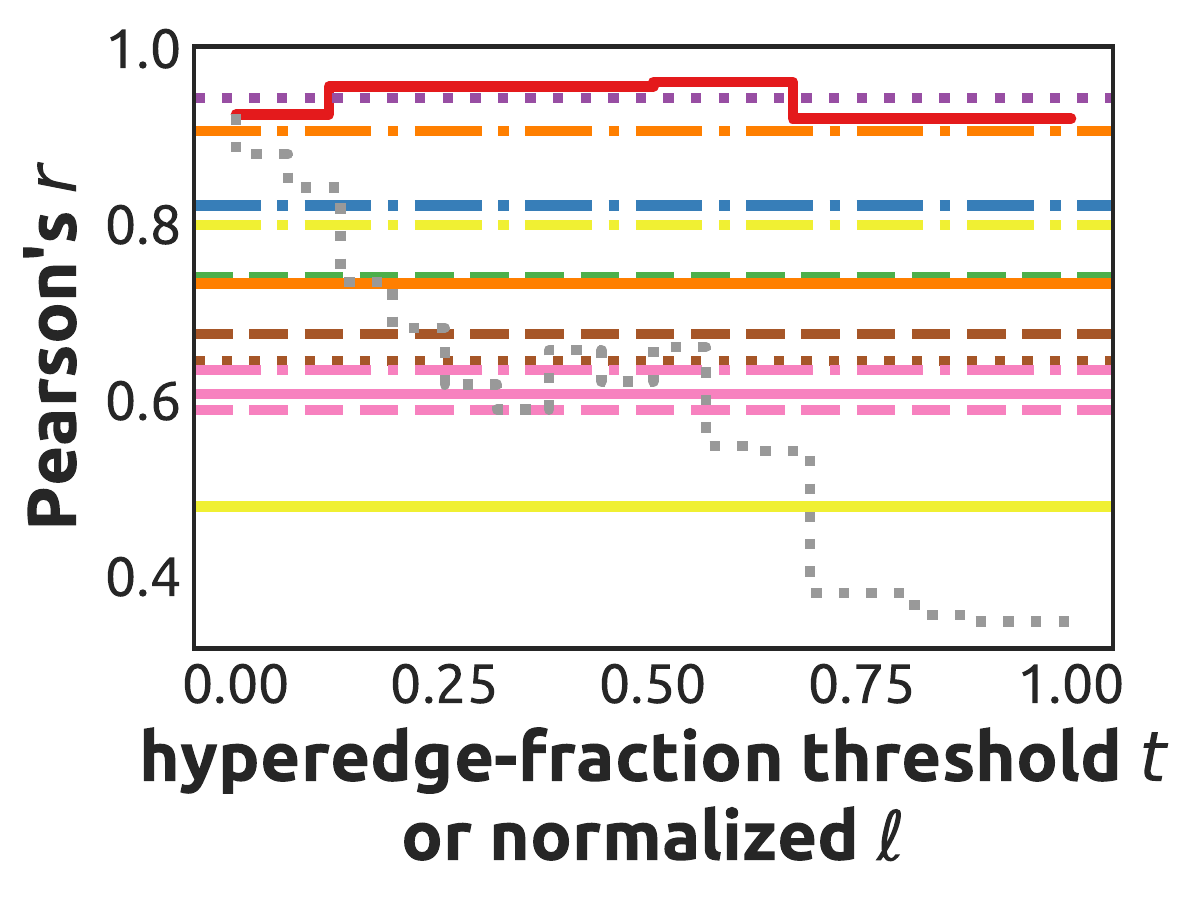}
		\caption{\Nu}
	\end{subfigure}
	\begin{subfigure}[b]{0.32\textwidth}
		\centering
		\includegraphics[scale=0.4]{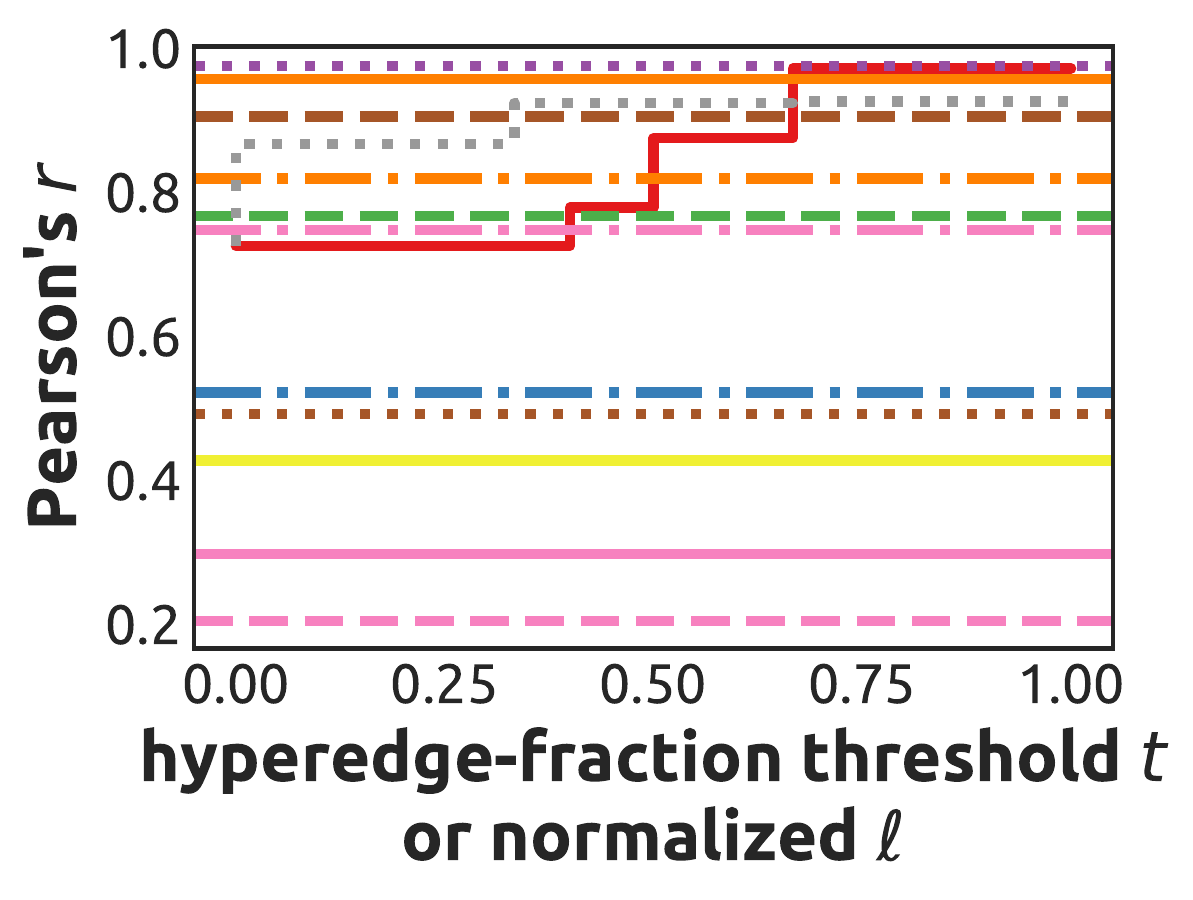}
		\caption{tags-ubuntu}
	\end{subfigure}
	\begin{subfigure}[b]{0.32\textwidth}
		\centering
		\includegraphics[scale=0.4]{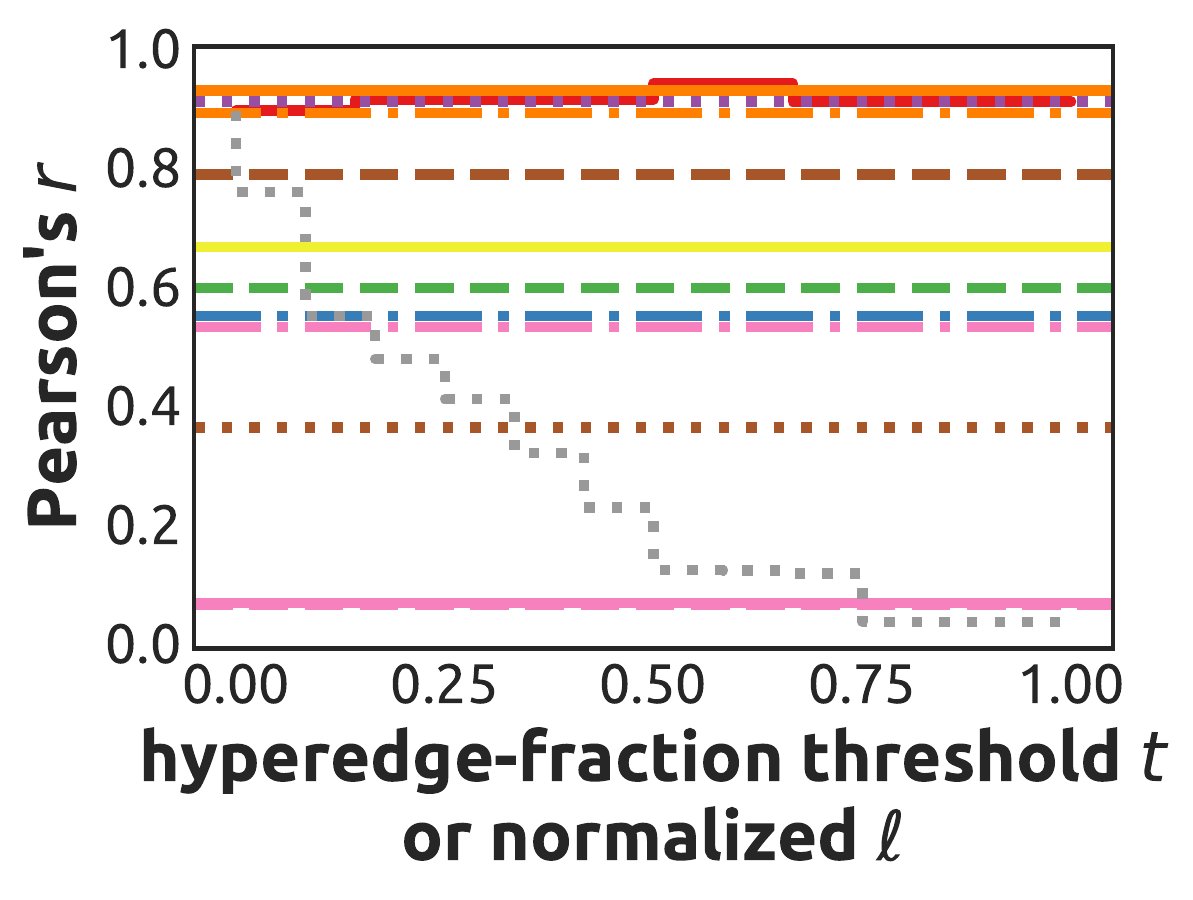}
		\caption{threads-ubuntu}
	\end{subfigure}
	\caption{\textbf{$t$-Hypercoreness is consistently indicative of influence in all datasets.} 
		We show the Pearson correlation coefficients between the average number of ever-infected nodes and each of the considered quantity of the seed node.
		See Fig.~\ref{fig:inf_summary_sr} in Appendix~\ref{appendix:extra_exp} for the results on other datasets.}
	\label{fig:inf_summary}
\end{figure*}

\begin{figure*}[t!]
	\centering
	\begin{subfigure}[b]{0.32\linewidth}
		\centering
		\includegraphics[scale=0.4]{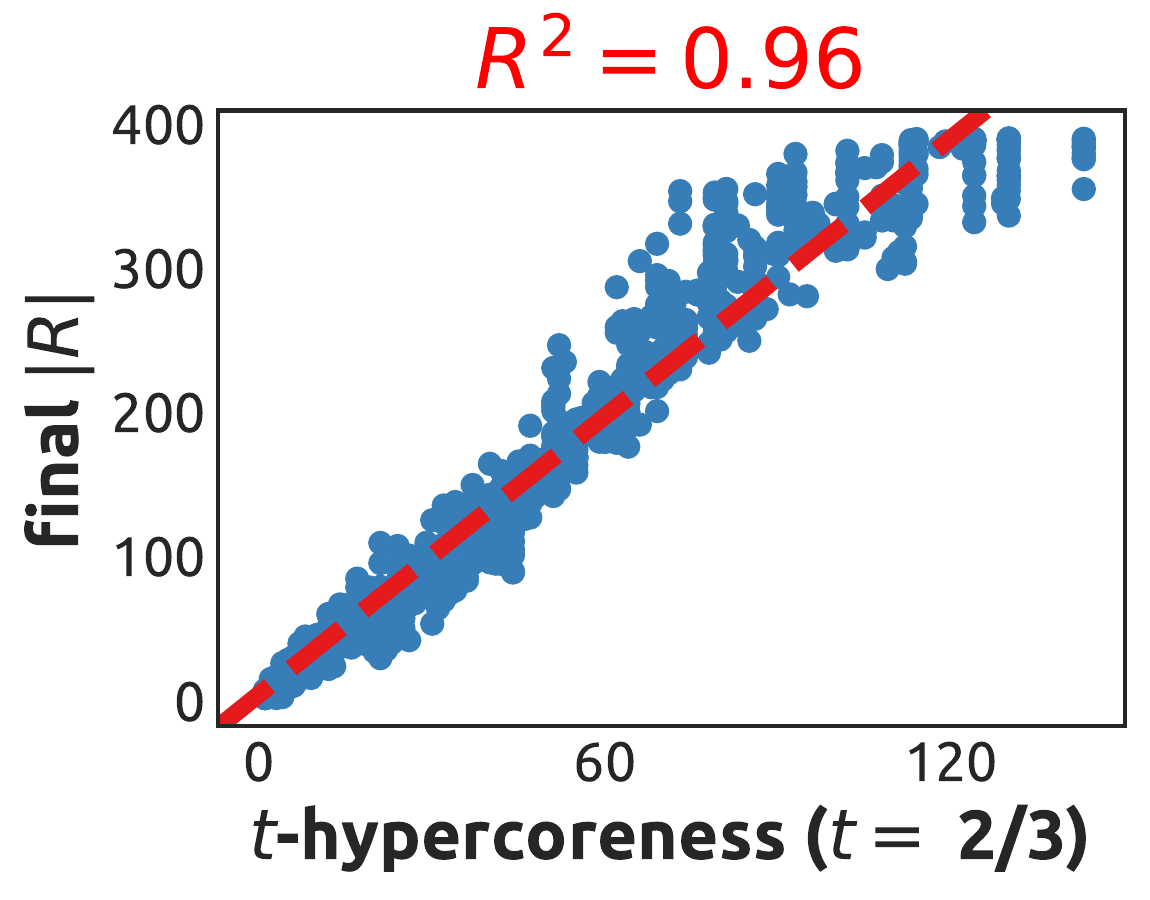}
	\end{subfigure}
	\begin{subfigure}[b]{0.32\linewidth}
		\centering
		\includegraphics[scale=0.4]{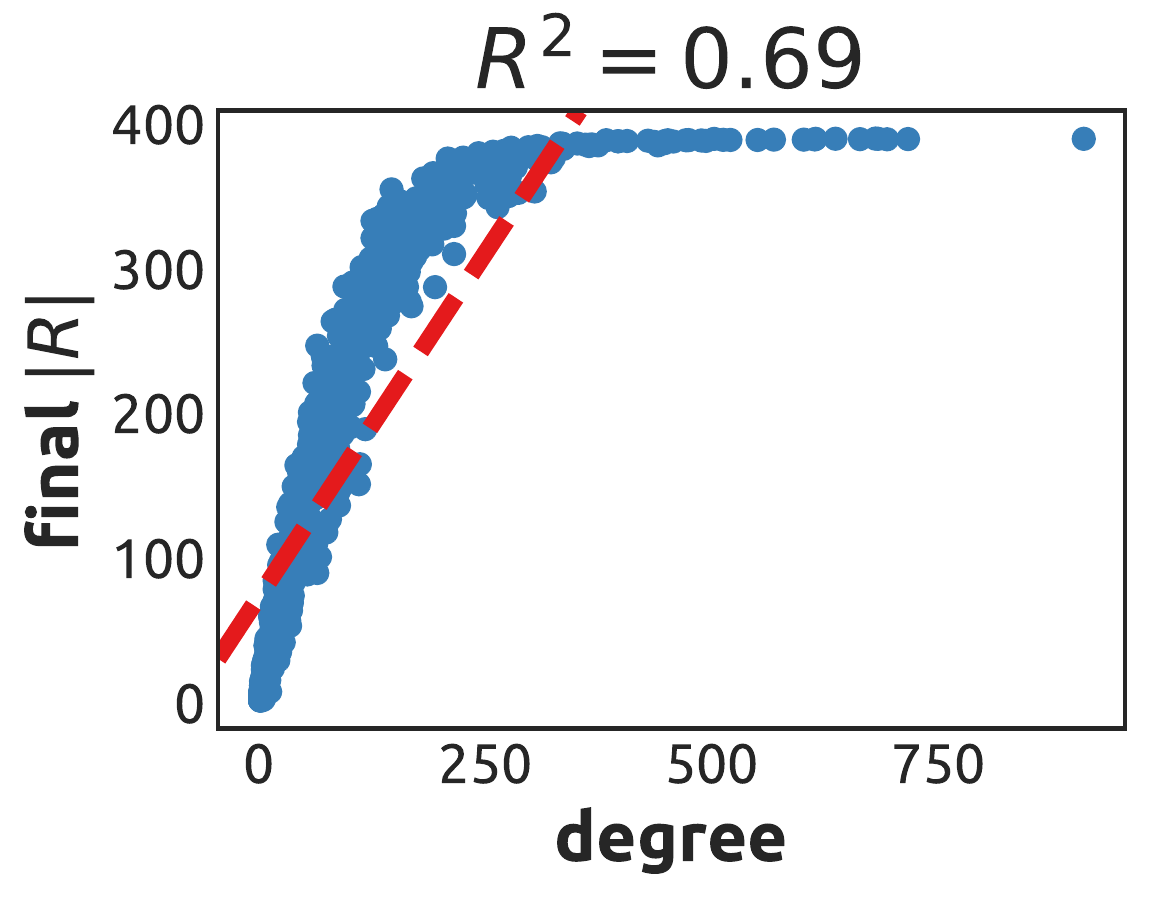}
	\end{subfigure}
	\begin{subfigure}[b]{0.32\linewidth}
		\centering
		\includegraphics[scale=0.4]{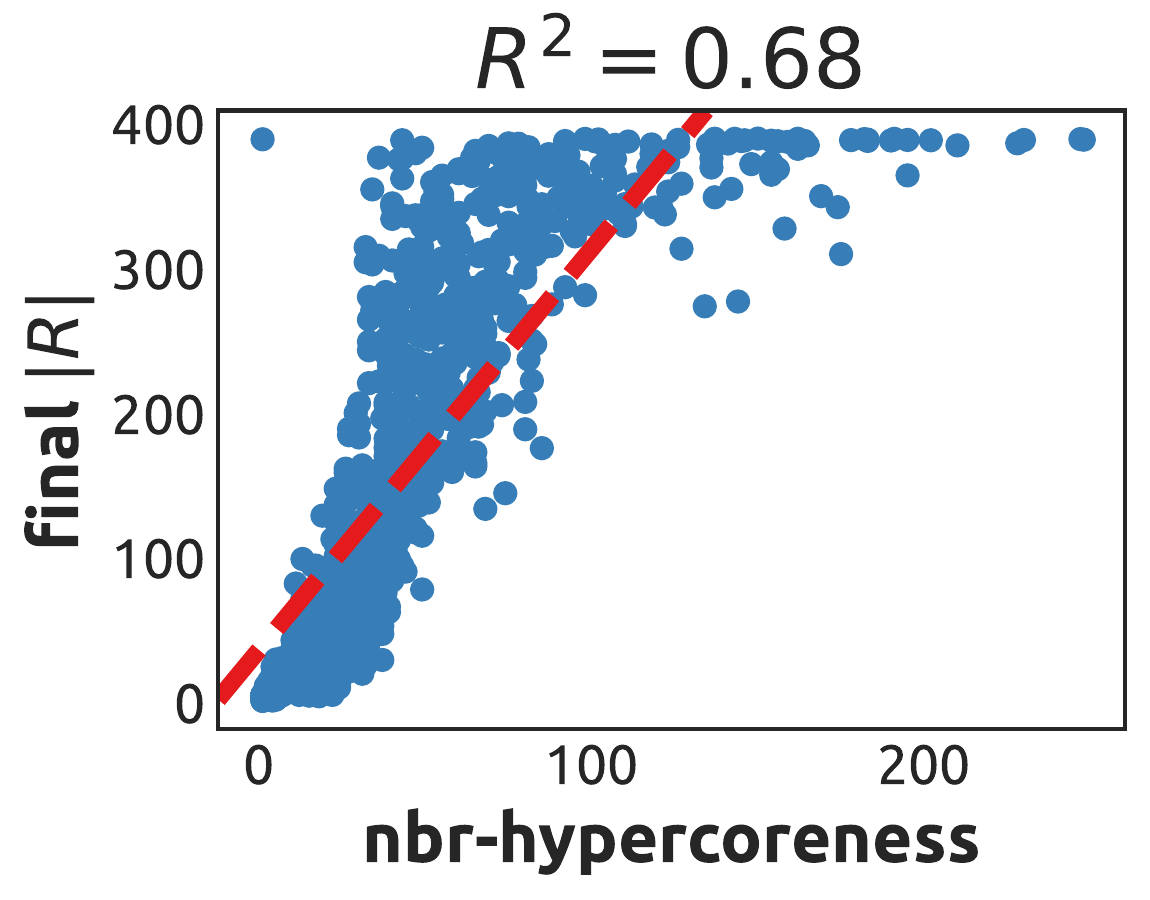}
	\end{subfigure}
	\begin{subfigure}[b]{0.32\linewidth}
		\centering
		\includegraphics[scale=0.4]{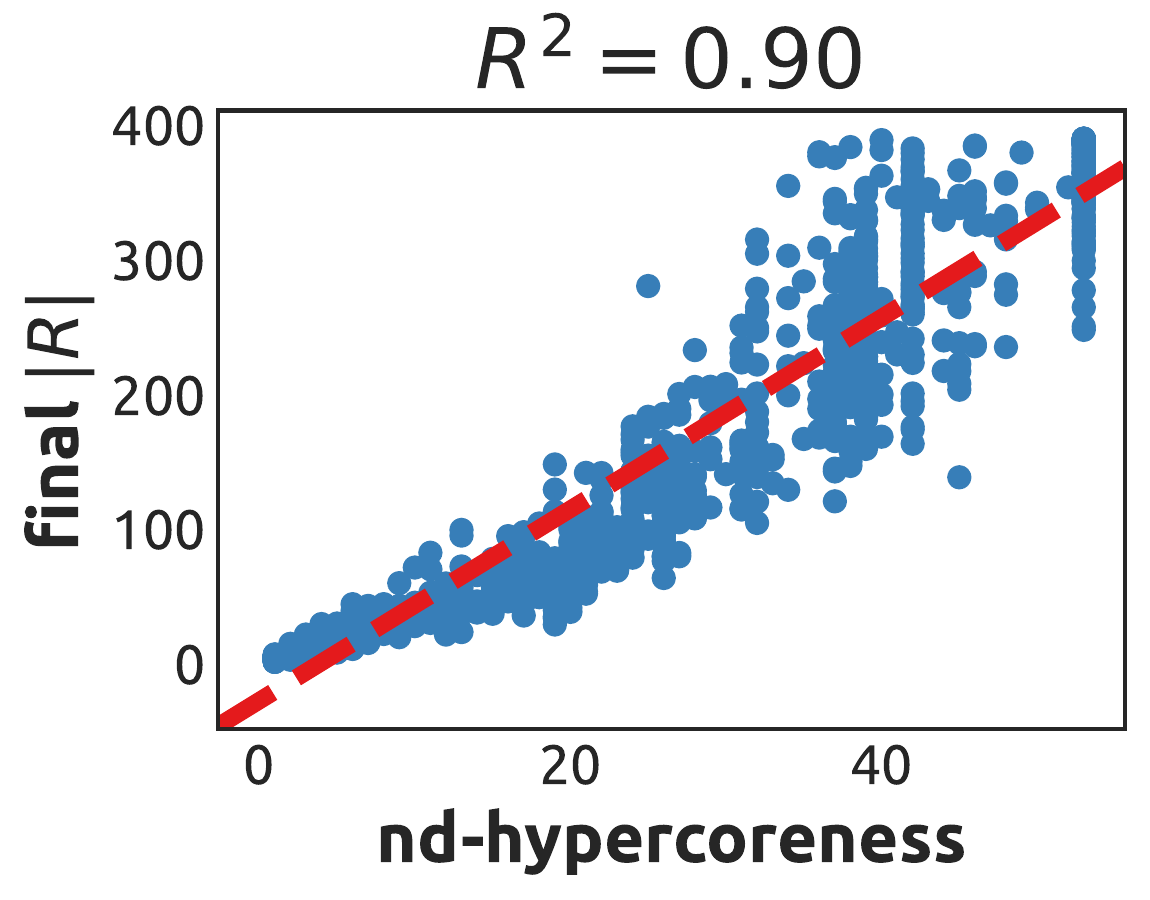}
	\end{subfigure}
	\begin{subfigure}[b]{0.32\linewidth}
		\centering
		\includegraphics[scale=0.4]{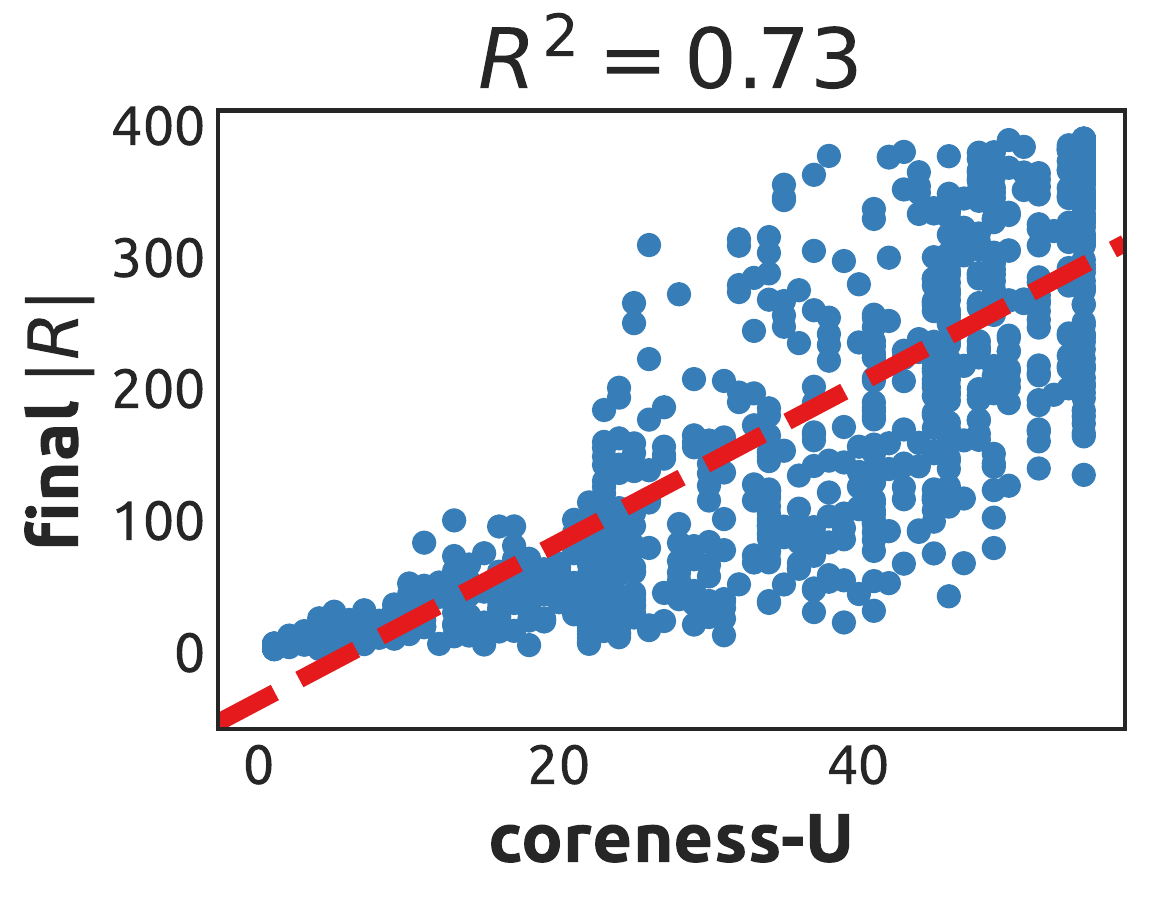}
	\end{subfigure}
	\begin{subfigure}[b]{0.32\linewidth}
		\centering
		\includegraphics[scale=0.4]{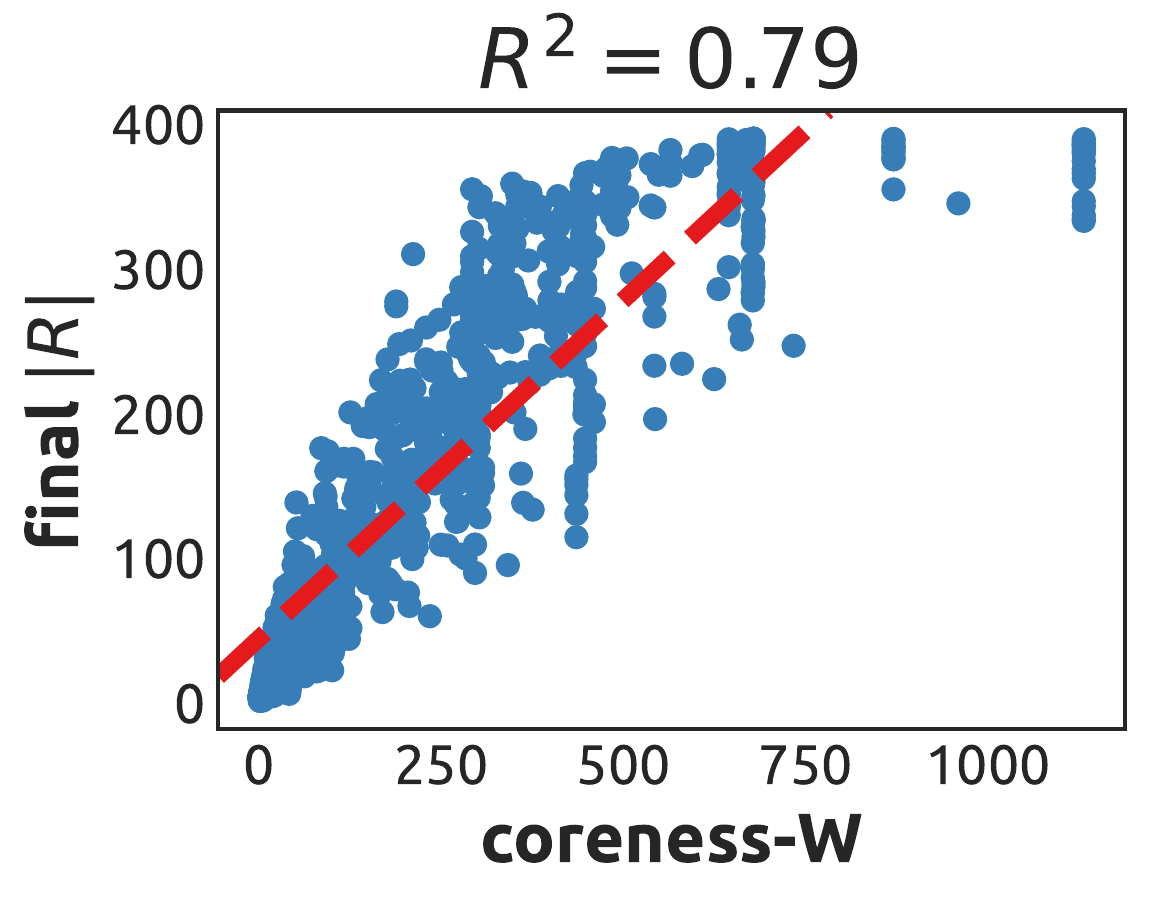}
	\end{subfigure}
	\begin{subfigure}[b]{0.32\linewidth}
		\centering
		\includegraphics[scale=0.4]{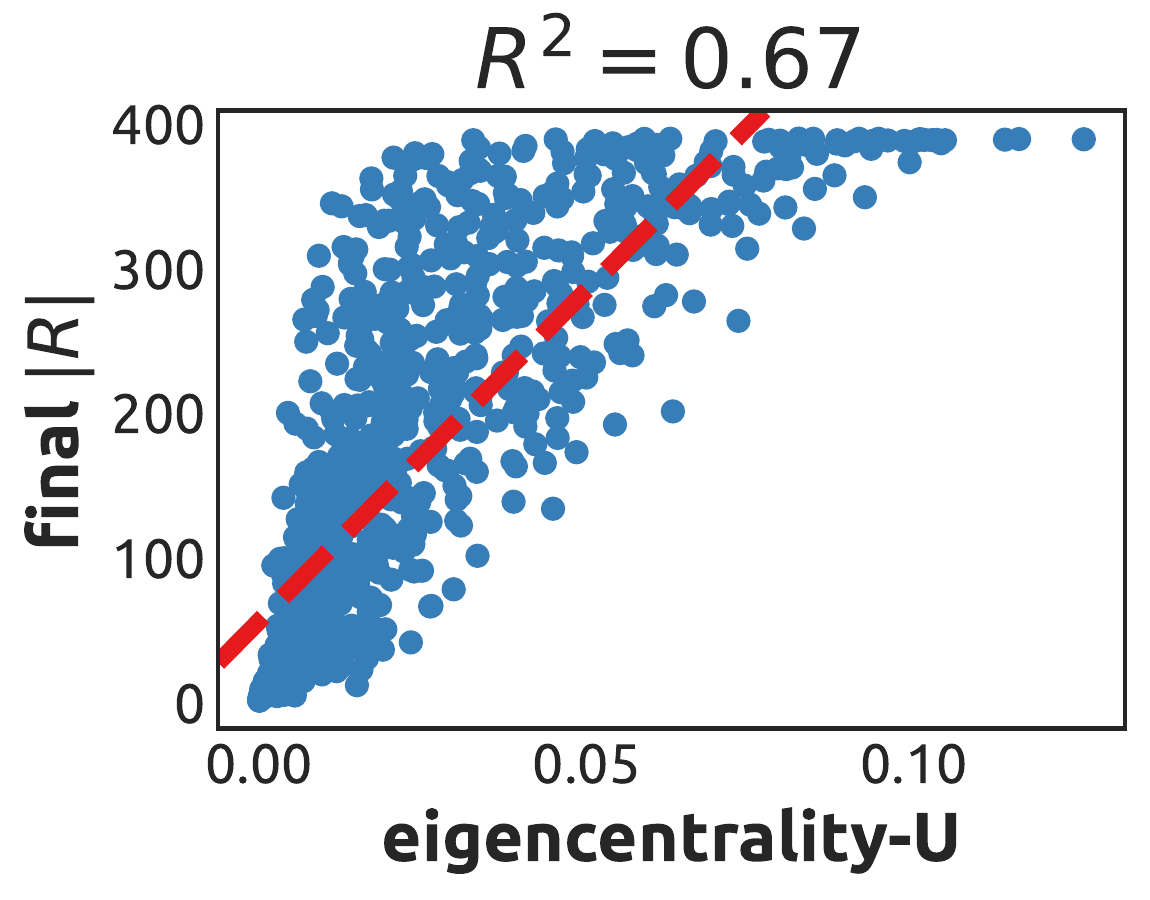}
	\end{subfigure}
	\begin{subfigure}[b]{0.32\linewidth}
		\centering
		\includegraphics[scale=0.4]{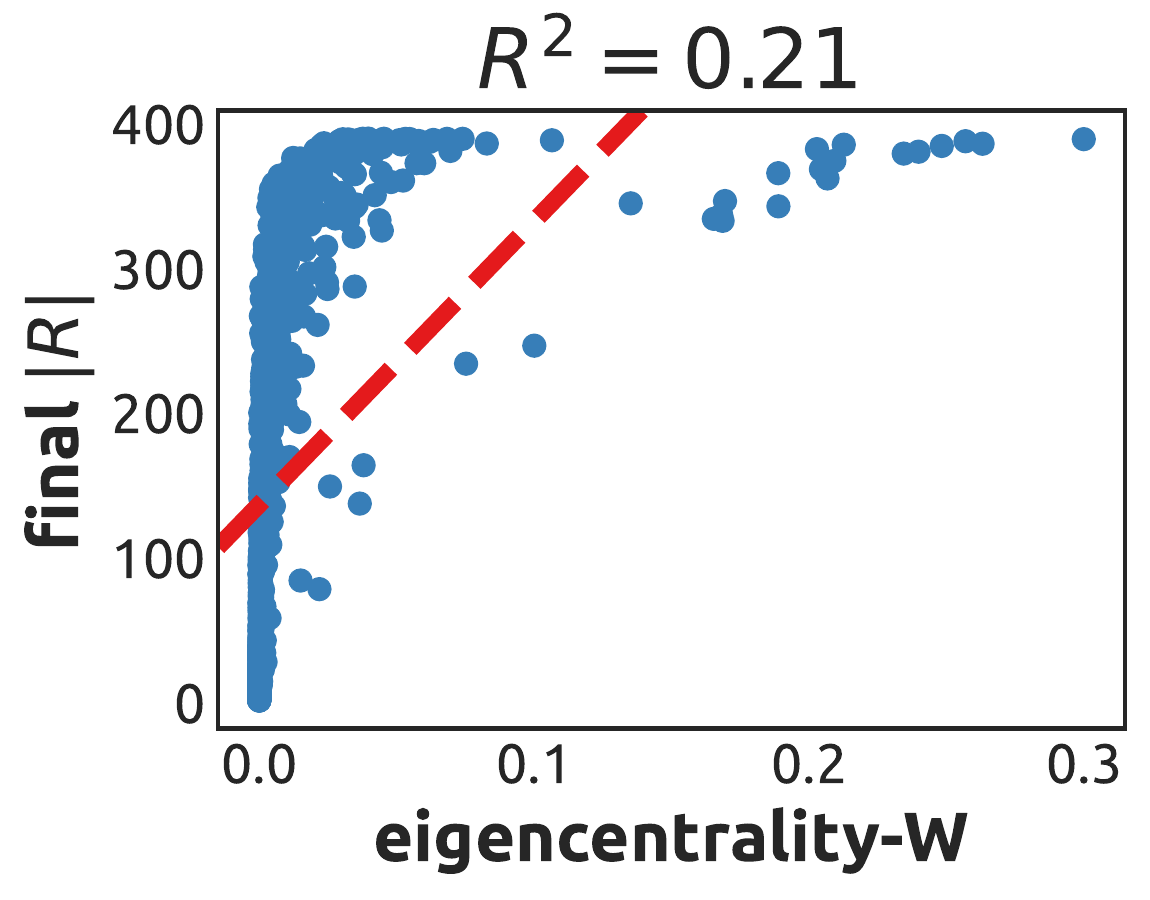}
	\end{subfigure}
	\begin{subfigure}[b]{0.32\linewidth}
		\centering
		\includegraphics[scale=0.4]{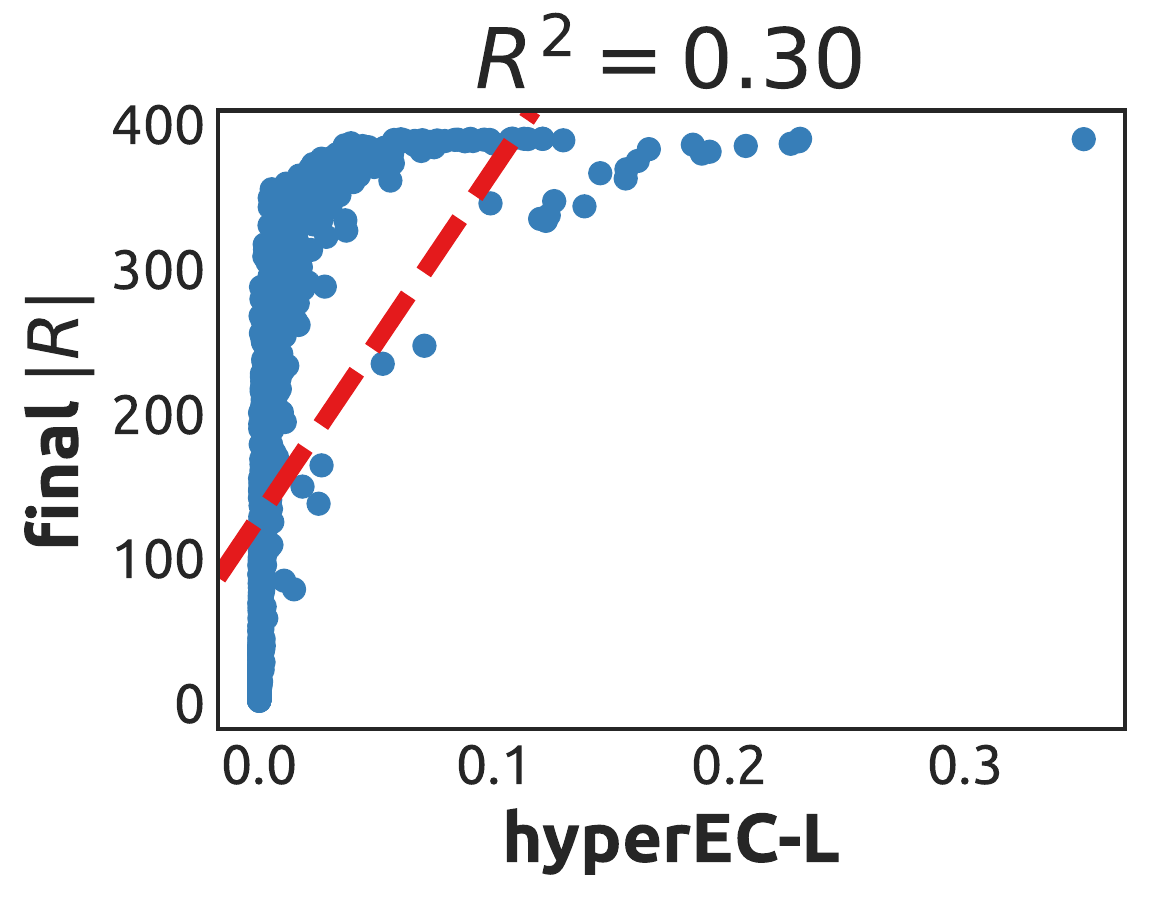}
	\end{subfigure}
	\begin{subfigure}[b]{0.32\linewidth}
		\centering
		\includegraphics[scale=0.4]{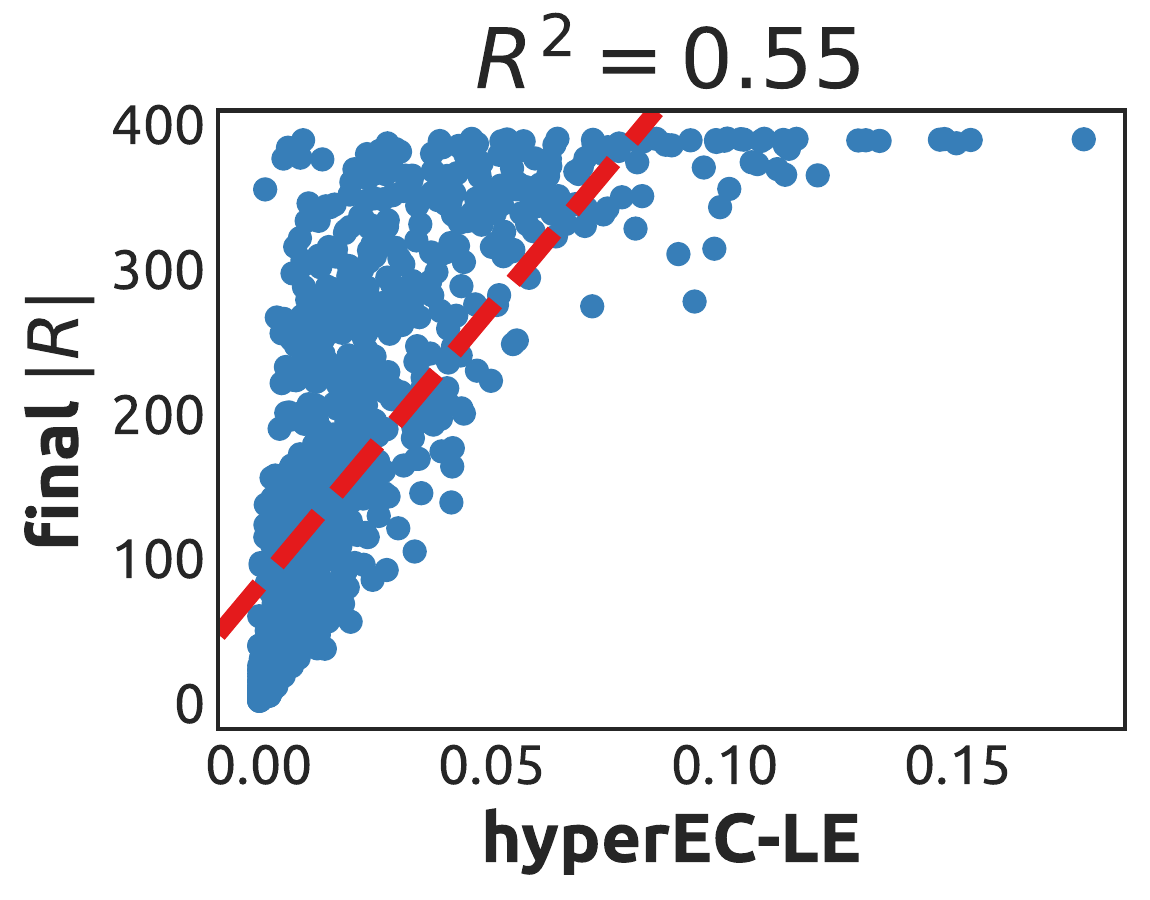}
	\end{subfigure}
	\begin{subfigure}[b]{0.32\linewidth}
		\centering
		\includegraphics[scale=0.4]{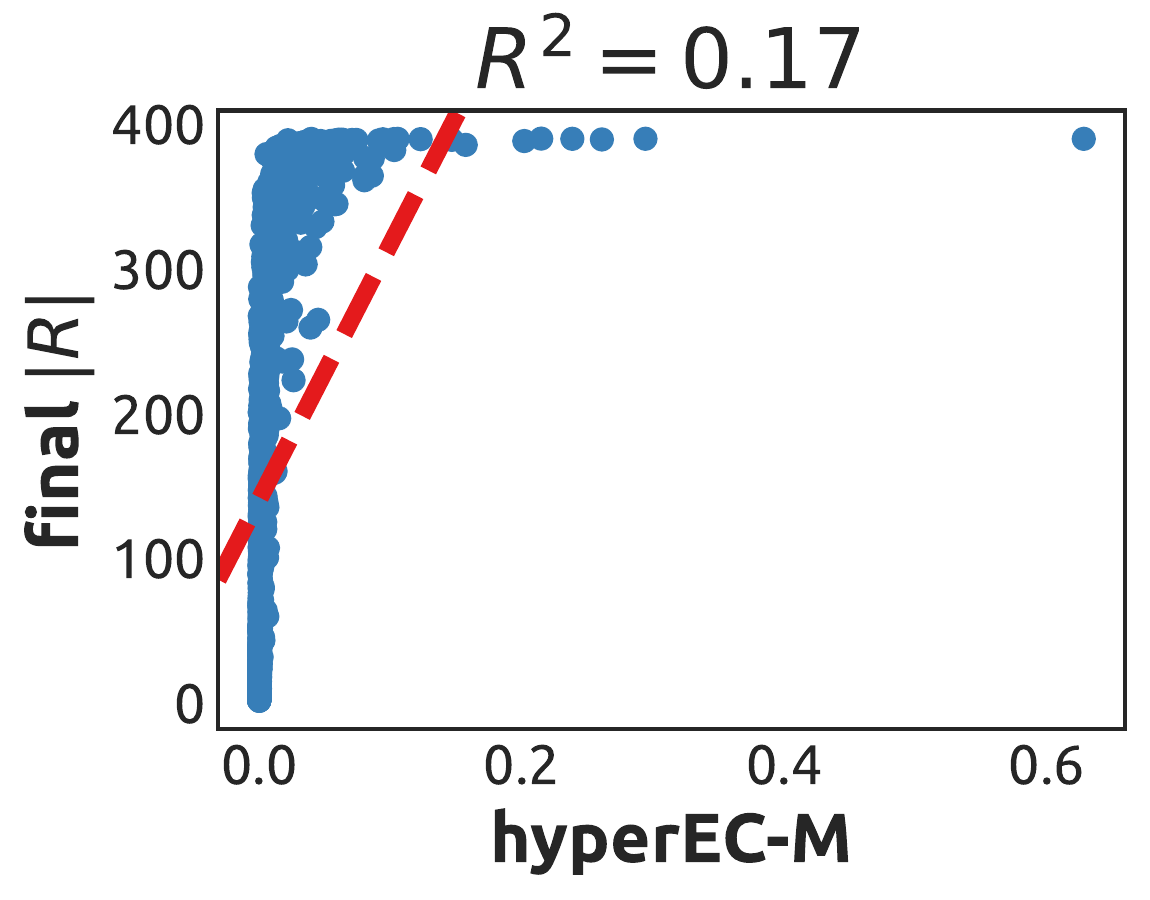}
	\end{subfigure}
	\begin{subfigure}[b]{0.32\linewidth}
		\centering
		\includegraphics[scale=0.4]{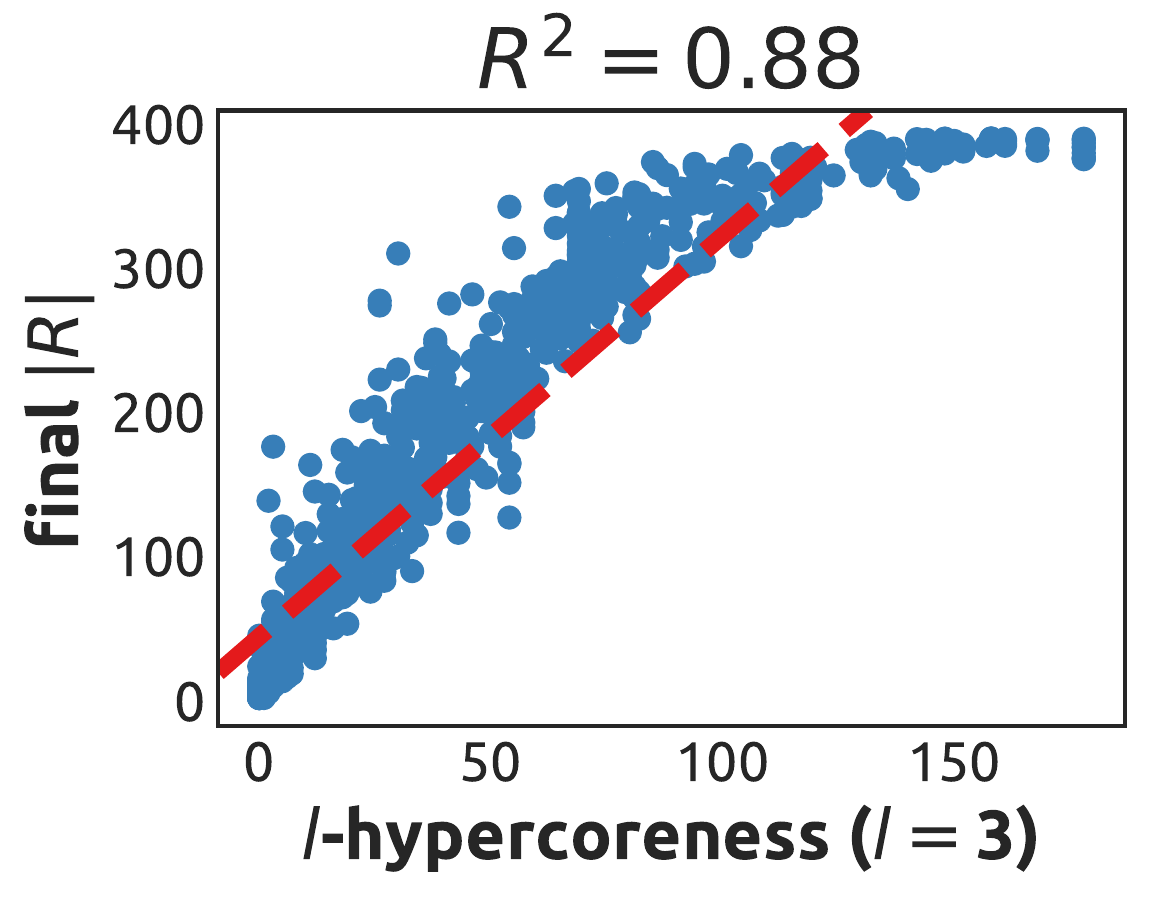}
	\end{subfigure}
	\begin{subfigure}[b]{0.32\linewidth}
		\centering
		\includegraphics[scale=0.4]{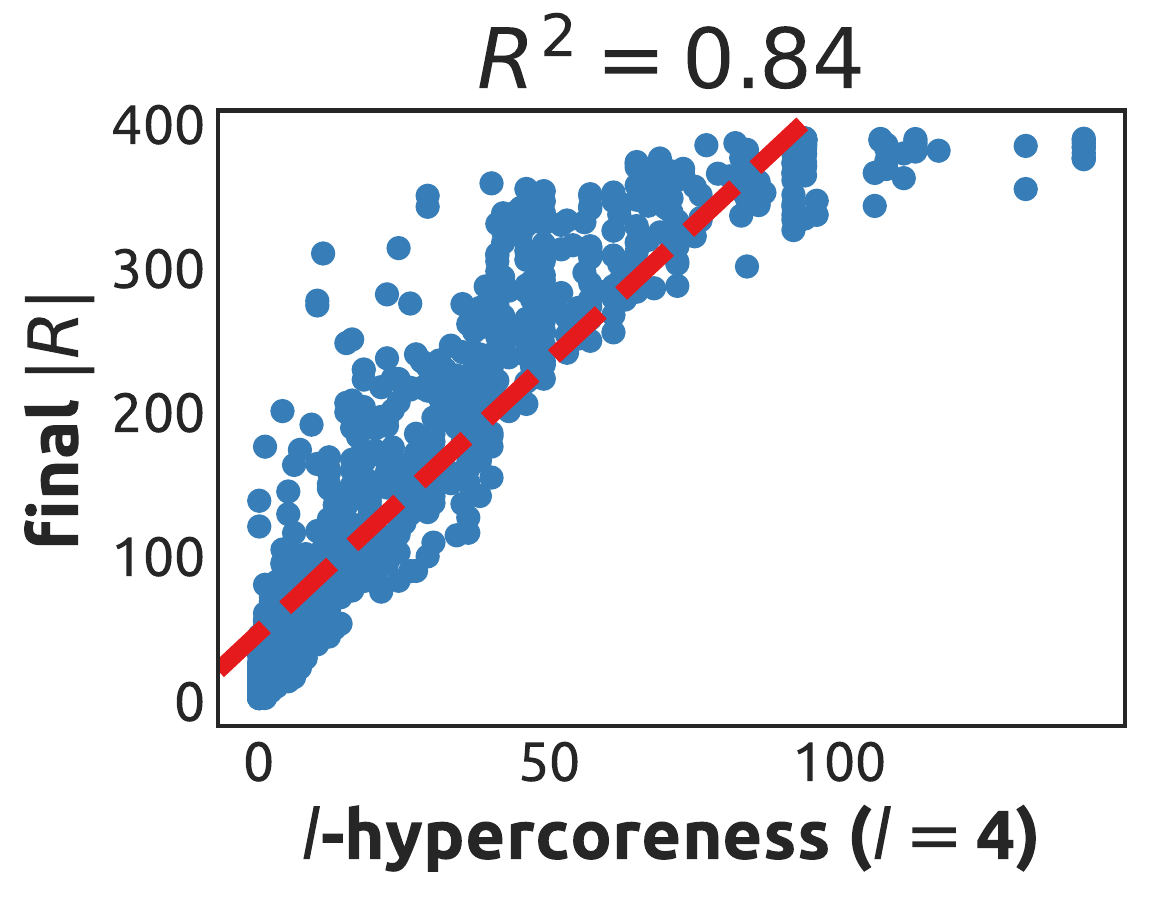}
	\end{subfigure}
        \begin{subfigure}[b]{0.32\linewidth}
		\centering
		\includegraphics[scale=0.4]{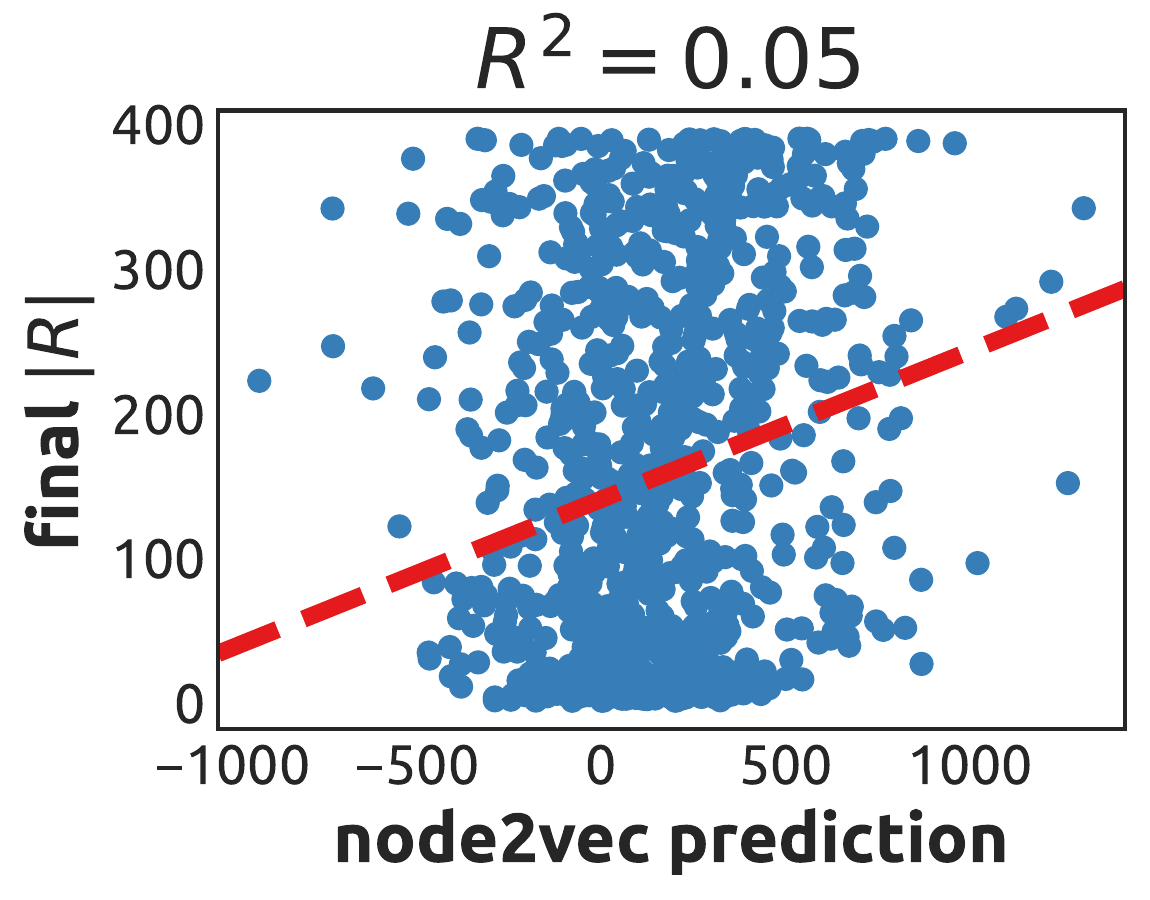}
	\end{subfigure}
        \begin{subfigure}[b]{0.32\linewidth}
		\centering
		\includegraphics[scale=0.4]{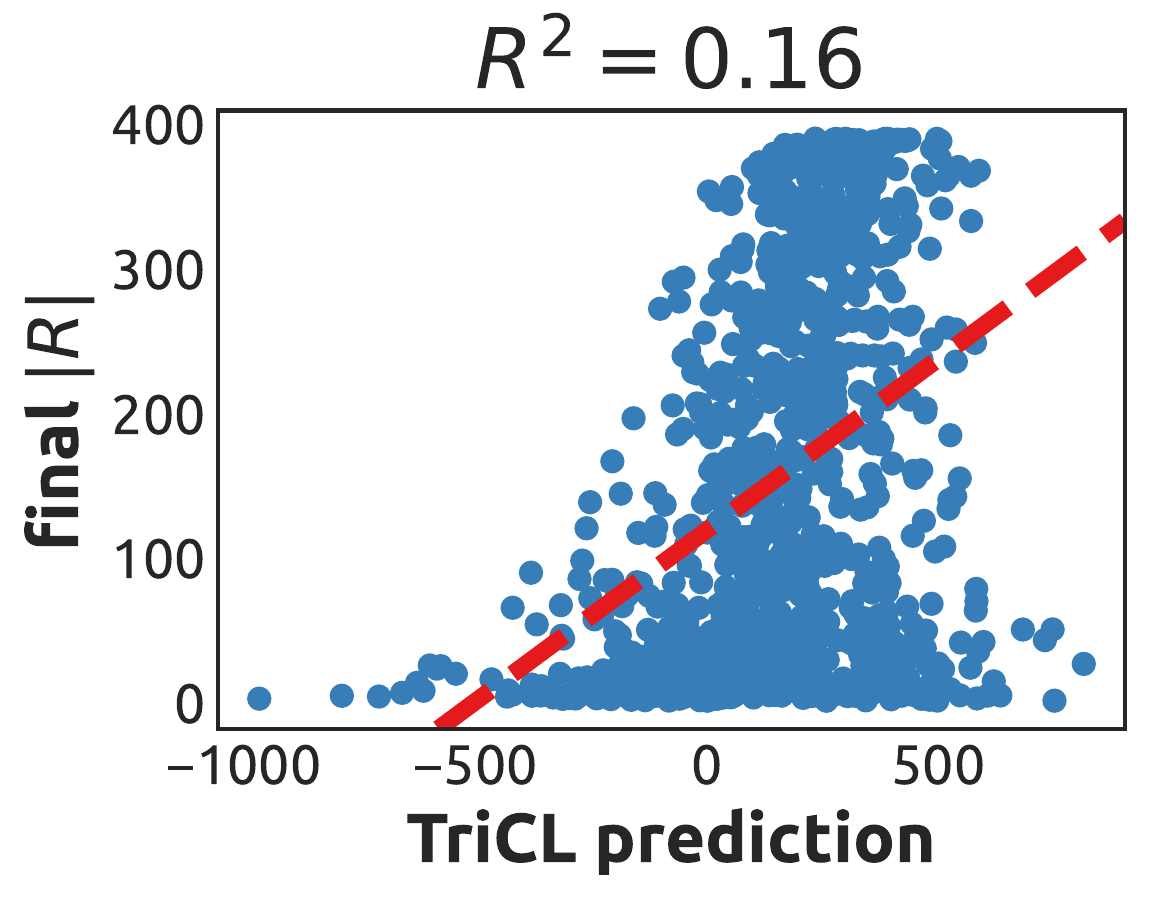}
	\end{subfigure}
	\caption{\textbf{$t$-Hypercoreness with a proper $t$ value is the best indicator of influence among all considered centrality measures (Dataset: email-Eu).} The red dashed line represents the best-fitted line, and the $R^2$ value is shown above each subfigure. 
		The full results on all the datasets are in the supplementary document~\citep{onlineSuppl}.}
	\label{fig:inf_email_Eu}
\end{figure*}

\subsection{Dense substructure discovery}
Intuitively, $(k,t)$-hypercores are not limited to complete subhypergraphs.
Thus, they can be denser than complete subhypergraphs, which previous works \citep{hua2023revisiting, luo2021hypercore, luo2022hypercore, gabert2021shared, gabert2021unifying, sun2020fully} focus on.

Given $H = (V, E)$, we define its density as $\delta(H) = \abs{E} / \abs{V}$.
In Fig.~\ref{fig:core_density}, for each dataset and each $t \in [0, 1]$, we show the relative density of the $(c_t^*, t)$-hypercore, which is defined as $\Tilde{\delta}_t = \delta(C_{c_t^*, t}) / \delta(H)$.
Note that the hypercores are significantly denser than the whole hypergraph, especially when $t$ is small.
In addition, except for the \textit{tags-SO} dataset, the similarity between hypergraphs in the same domain is observed.
Similar to the normalized hypercore-size-mean-difference (HSMD) distance used in Sec.~\ref{subsec:obs_coresize}, we define the relative-density-mean-difference (RDMD) distance between two hypergraphs to measure the similarity of the patterns.
\begin{definition}[Relative-density-mean-difference (RDMD) distance]\label{def:RDMD}
	Given two hypergraphs $H_1$ and $H_2$, the relative-density-mean-difference (RDMD) distance between $H_1$ and $H_2$ is defined as 
	\[
	\operatorname{RDMD}(H_1, H_2) \coloneqq \sqrt{\int_{0}^{1} (\log \Tilde{\delta}_t(H_1) - \log \Tilde{\delta}_t(H_2))^2 \, dt}.
	\]
\end{definition}
See Fig.~\ref{fig:dense_sim} for the RDMD distance between each pair of datasets.

\begin{observation}[Density of $(k, t)$-hypercores]\label{obs:density}
	In real-world hypergraphs, $(k,t)$-hypercores are dense, and the density tends to decrease as $t$ increases. The relative density with respect to $t$ tends to be similar in hypergraphs in the same domain.
\end{observation}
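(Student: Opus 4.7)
My plan is to establish this observation empirically in three parts, mirroring its three claims: density of the hypercores, monotone trend in $t$, and within-domain similarity of the relative-density profiles. Since it is an empirical observation rather than a formal theorem, the ``proof'' is a description of the measurements and statistical tests that support it.

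First, I would compute, for each of the fourteen datasets in Tbl.~\ref{tab:datasets} and for a dense grid of $t \in [0,1]$, the quantity $c_t^*(H)$ (using Alg.~\ref{alg:t_hypercoreness}) and then extract the $(c_t^*,t)$-hypercore $C_{c_t^*,t}(H)$ using Alg.~\ref{alg:kt_hypercore_decomp}. From these I form the relative density $\tilde{\delta}_t(H) = \delta(C_{c_t^*,t}(H))/\delta(H)$ and plot it versus $t$ in Fig.~\ref{fig:core_density}. The first claim (``$(k,t)$-hypercores are dense'') is verified by checking that $\tilde{\delta}_t(H) \gg 1$ across datasets, especially for small $t$; this is not a tautology because although the hypercore is a subhypergraph, removing nodes generally removes many hyperedges too, so the ratio could in principle be close to or below $1$. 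The second claim (``density tends to decrease as $t$ increases'') follows from inspecting the same curves: by Proposition~\ref{prop:containment}, larger $t$ yields a subhypergraph of the smaller-$t$ hypercore, but since we also normalize by $|V|$ within the hypercore (through $\delta$), monotonicity is empirical rather than guaranteed; I would quantify it by reporting the Spearman rank correlation between $t$ and $\tilde{\delta}_t$ per dataset.

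For the third claim, I would instantiate Def.~\ref{def:RDMD} and compute $\operatorname{RDMD}(H_1,H_2)$ for every pair of the fourteen datasets, displaying the resulting $14 \times 14$ matrix in Fig.~\ref{fig:dense_sim}. To argue that within-domain RDMD is significantly smaller than between-domain RDMD, I would compute the mean of the within-domain entries and the mean of all off-diagonal entries and run a two-sample $t$-test (or a permutation test, since the pair samples are dependent), reporting the resulting $p$-value, as was done for the HSMD distance in Fig.~\ref{fig:core_size_heatmap}.

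The main obstacle is the exception already flagged in the text: the \textit{tags-SO} dataset appears to break the within-domain similarity, so any aggregate statistic must be reported carefully and the outlier acknowledged; I would present both the full summary and a leave-one-out version omitting \textit{tags-SO} to confirm the pattern is genuine rather than an artifact of one domain. A secondary obstacle is numerical: $\log \tilde{\delta}_t$ in Def.~\ref{def:RDMD} can be undefined when $C_{c_t^*,t}$ is empty (so $\tilde{\delta}_t = 0$), which happens at $t$ close to $1$ for sparse datasets; I would handle this by restricting the integral to the interval of $t$ on which both hypergraphs have nonempty hypercores (or by a small-$\epsilon$ clipping that is reported explicitly), and verify that the qualitative conclusion is insensitive to the choice.
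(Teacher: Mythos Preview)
Your proposal is correct and mirrors the paper's own support for this observation: plotting $\tilde{\delta}_t$ in Fig.~\ref{fig:core_density} to check denseness and the decreasing trend, then computing pairwise $\operatorname{RDMD}$ (Def.~\ref{def:RDMD}) in Fig.~\ref{fig:dense_sim} and running a $t$-test on within-domain versus overall means (the paper reports $0.456$ vs.\ $1.741$, $p=0.0035$), with \textit{tags-SO} excluded as an outlier. One small correction: your ``secondary obstacle'' is moot, since $c_t^*\geq 1$ and the $(1,t)$-hypercore is always $H$ itself, so $C_{c_t^*,t}$ is never empty and $\log\tilde{\delta}_t$ is always well-defined.
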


We utilize the {high} density of $(k, t)$-hypercores for the \textit{max $(k_{c}, t_{c})$-vertex cover problem} below, where we say a hyperedge $e$ is $t_{c}$-\textit{covered} by a set of nodes $V'$ if $\abs{e \cap V'} \geq t_{c} \abs{e}$.
\begin{problem}[max $(k_{c}, t_{c})$-vertex cover problem]
	Given a hypergraph $H = (V, E)$, $k_{c} \in \bbN$ and $t_{c} \in (0, 1]$,
	the \textbf{max $(k_{c}, t_{c})$-vertex cover problem} aims to find 
	$V^* \in \binom{V}{k_{c}} \coloneqq \setbr{V' \subset V: \abs{V'} = k_{c}}$
	such that the number of hyperedges $t_{c}$-covered by $V^*$ is maximized.
\end{problem}
In our experiments, we compare three different algorithms:
\begin{itemize}
	\item \textbf{$t_{c}$-Hypercoreness}: $k_{c}$ nodes with highest $t_{c}$-hypercoreness in $H$ are chosen (tie broken by node-degrees);
	\item \textbf{Degree}: $k_{c}$ nodes with highest degree in $H$ are chosen;
	\item \textbf{Greedy}: it first chooses the node with the highest degree and greedily chooses a node that increases the number of $t_{c}$-covered hyperedges most until $k_{c}$ nodes are chosen. 
\end{itemize}

\begin{figure}[t]
	\centering
	
	\begin{subfigure}[b]{\textwidth}
		\centering
		\includegraphics[scale=0.5]{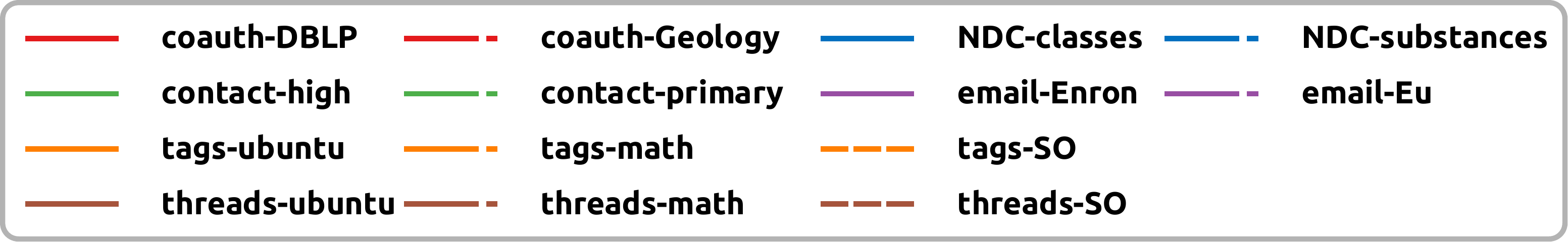}\\        
	\end{subfigure}
	
	\begin{subfigure}[b]{\textwidth}
		\centering
		\includegraphics[scale=0.4]{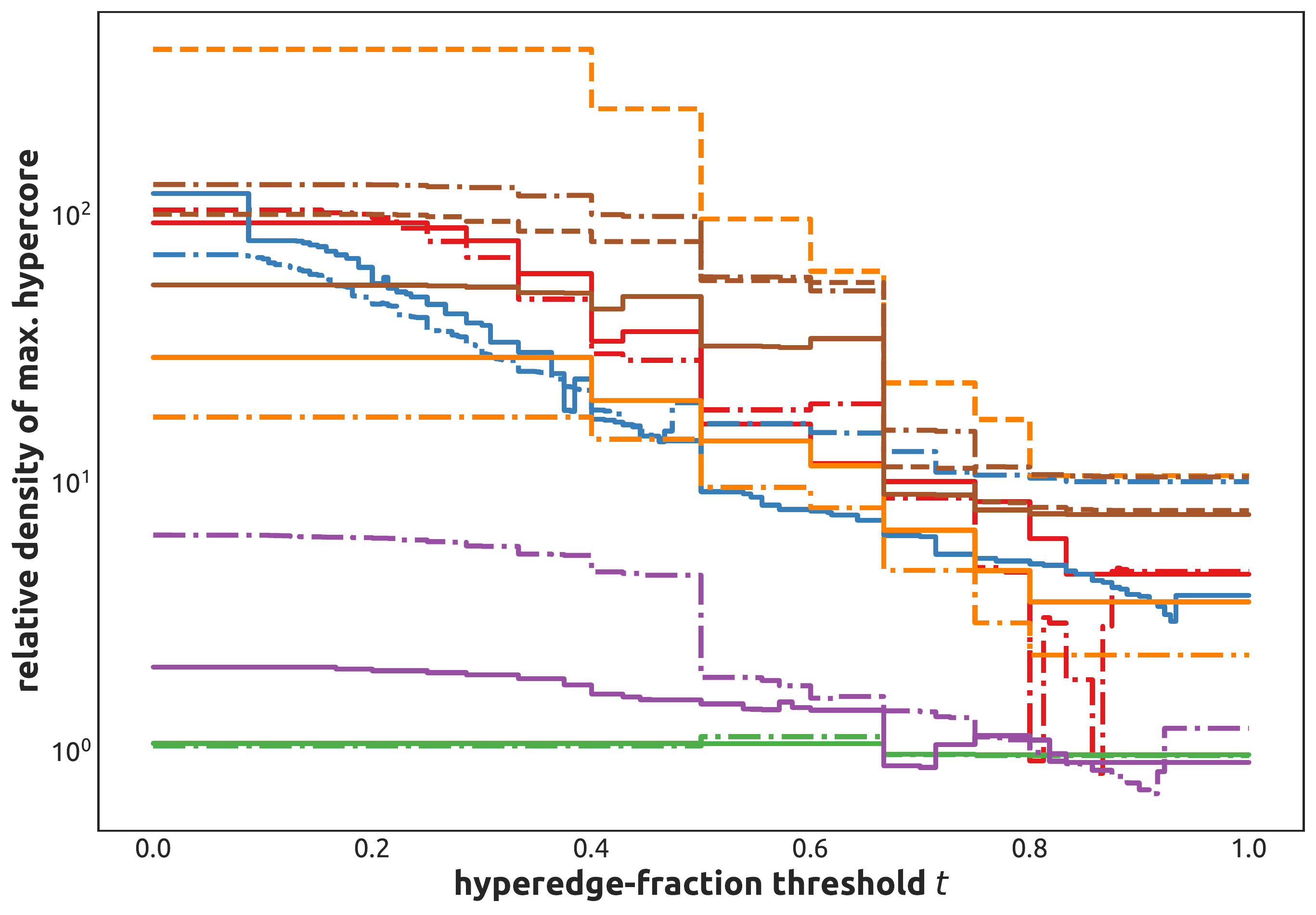}
	\end{subfigure}
	
	\caption{\textbf{Overall, hypercores are much denser than the whole hypergraph, and the density decreases as $t$ increases.}
		For each dataset, we report the relative density of the $(c_t^*, t)$-hypercore (i.e., the $(k, t)$-hypercore with maximal $k$) w.r.t $t$.}
	\label{fig:core_density}
\end{figure}

\begin{figure}
	\centering
	\includegraphics[scale=0.4]{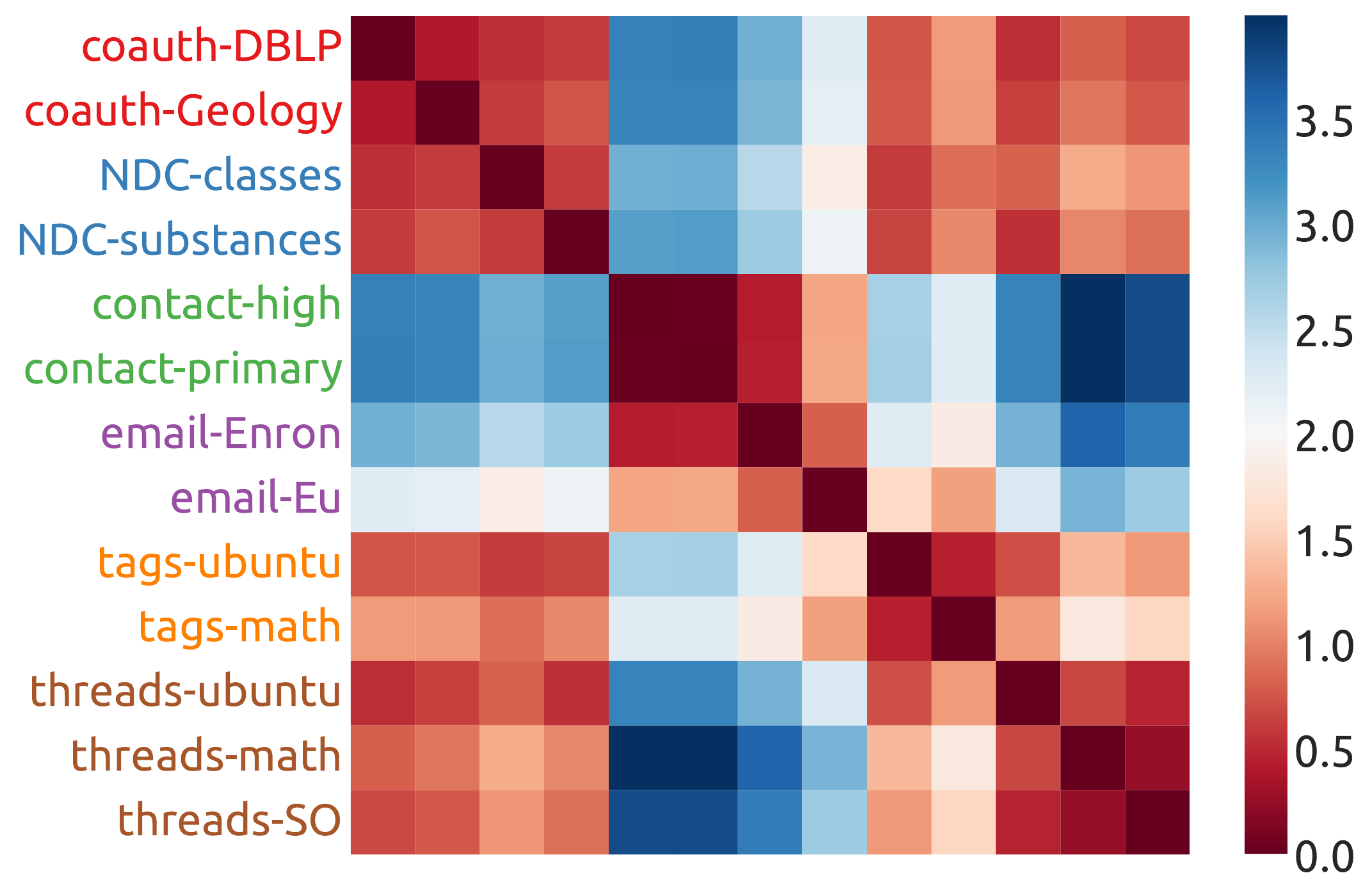}
	\vspace{-2mm}
	\caption{\textbf{The RDMD distance is small between datasets in the same domain ($0.456$ in average) while the overall average is $1.741$; the two means are significantly different with $p = 0.0035$ in the $t$-test.} We report the RDMD distance between each pair of datasets except for \textit{tag-SO}.}
	\label{fig:dense_sim}
\end{figure}

In each dataset, we track the count of $t_{c}$-covered hyperedges by the $k_{c}$ nodes chosen by each algorithm while varying $k_{c}$ from $10$ to $100$. 
Then, we divide each count by the count obtained by the \textit{degree} algorithm in the same setting.
The relative counts are averaged over all datasets for $t_{c} \in \setbr{0.6, 0.7, 0.8}$ and reported in Fig.~\ref{fig:density}.
We choose those $t_c$ values because they require a majority of, but not all of, the constituent nodes to cover a hyperedge.
On average, the algorithm \textit{$t_{c}$-hypercoreness} outperforms the other two algorithms, with clear superiority when $t_{c} \in \setbr{0.6, 0.7}$.

\subsection{Hypergraph vulnerability detection}
Through the observations and applications, we have shown the significance of the proposed concepts and the importance of nodes in the $(k, t)$-hypercore with large $k$ values.
Thus, in order to reinforce the engagement of nodes in a hypergraph (e.g., user engagement in online social networks), intuitively, the $(k,t)$-hypercores should be paid close attention to.
From another perspective, we should protect the nodes whose deletions will cause a large number of nodes to leave the $(k,t)$-hypercores.
For example, online social network providers should try to make such nodes stay.
Based on such ideas, in pairwise graphs, the \textit{collapsed $k$-core problem} \citep{zhang2017finding} and its variants \citep{zhu2018k, zhu2019pivotal} have been considered to find the critical users whose deletions reduce the size of $k$-core most, i.e., the most \textit{vulnerable} nodes in the $k$-core.
We generalize the problem to hypergraphs {based on} our proposed concepts.

\begin{figure}[t!]
	\centering        
	\includegraphics[scale=0.5]{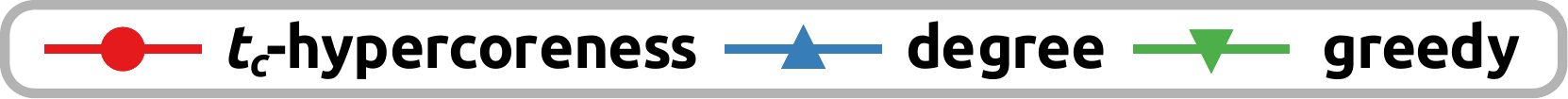}\\       
	\vspace{2mm}       
	\includegraphics[scale=0.4]{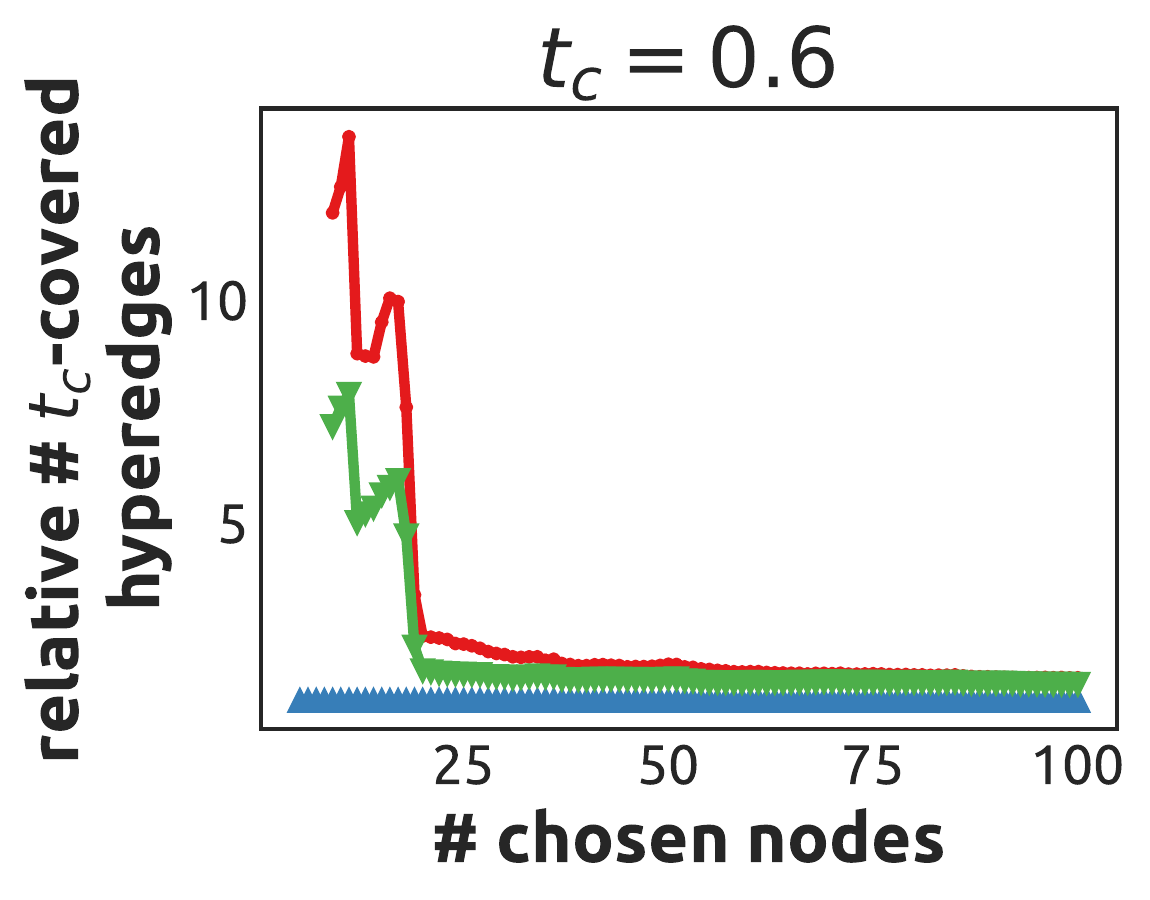}
	\includegraphics[scale=0.4]{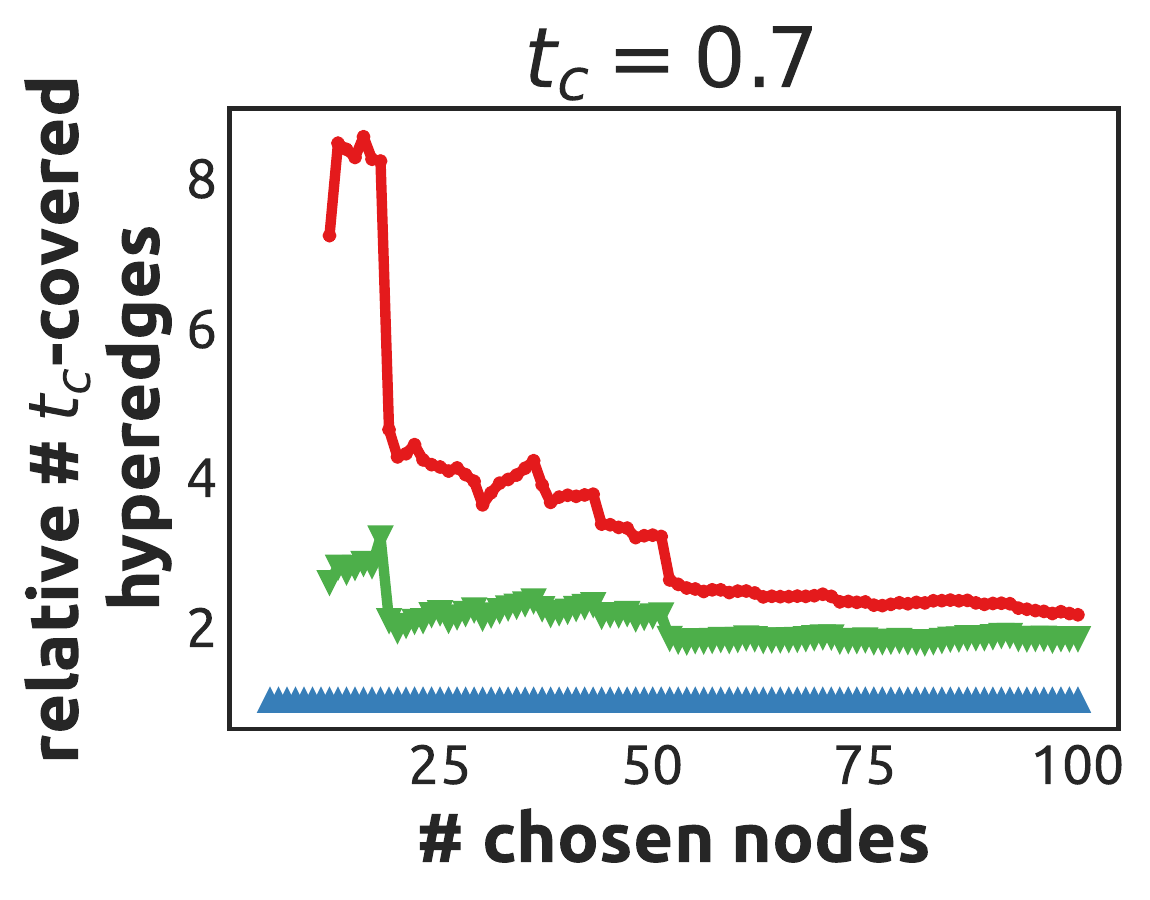}
	\includegraphics[scale=0.4]{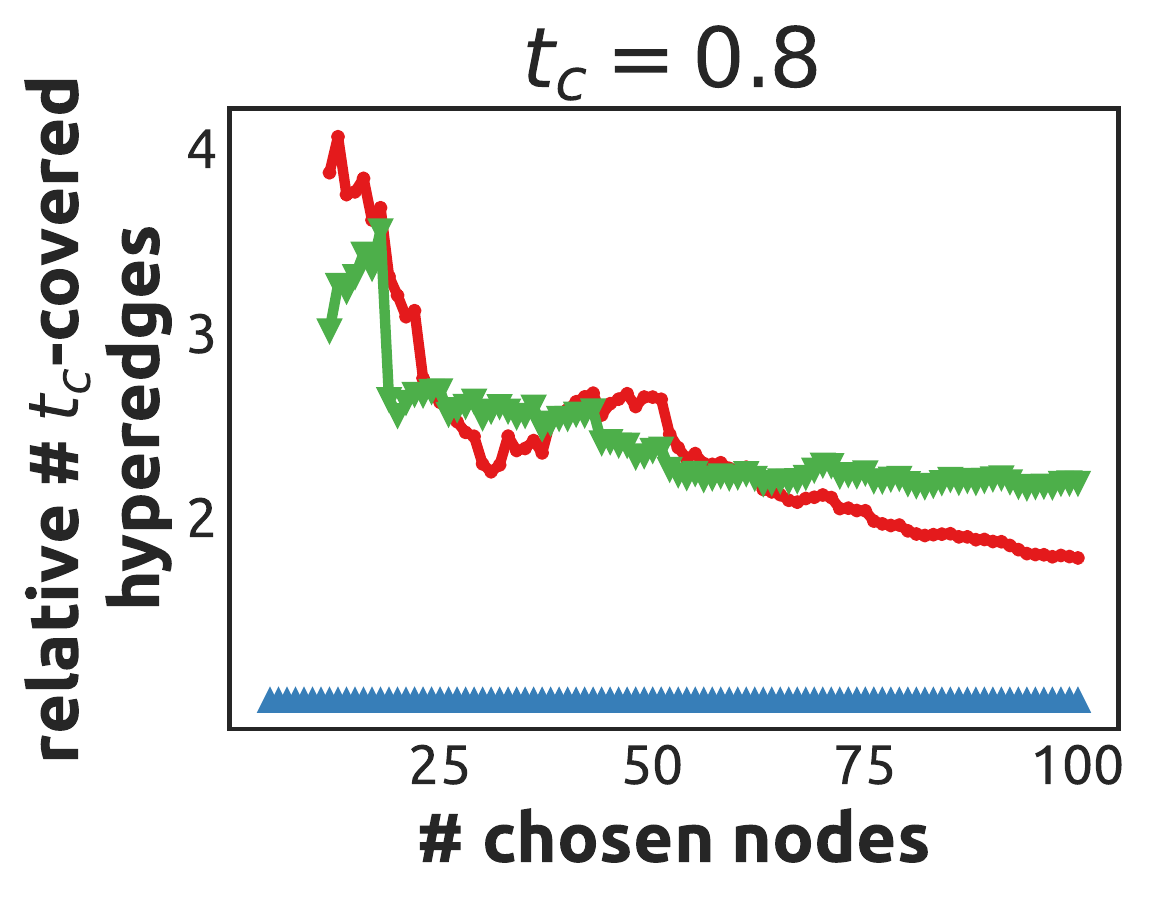}
	\caption{\textbf{Nodes chosen by $t_{c}$-hypercoreness cover most hyperedges.}	The performance of \textit{$t_{c}$-hypercoreness}, \textit{degree}, and \textit{greedy} in solving the max $(k_{c}, t_{c})$-vertex cover problem.}
	\label{fig:density}
\end{figure}

\begin{figure*}[t]
	\vspace{-2mm}
	\centering
	\includegraphics[scale=0.5]{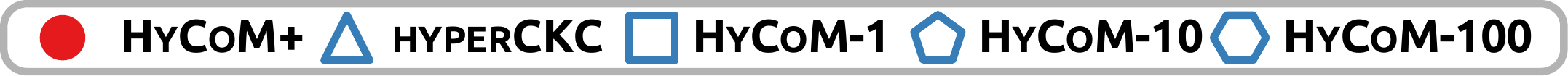}\\
	\vspace{2mm}
	\begin{subfigure}[b]{0.32\linewidth}
		\centering
		\includegraphics[scale=0.4]{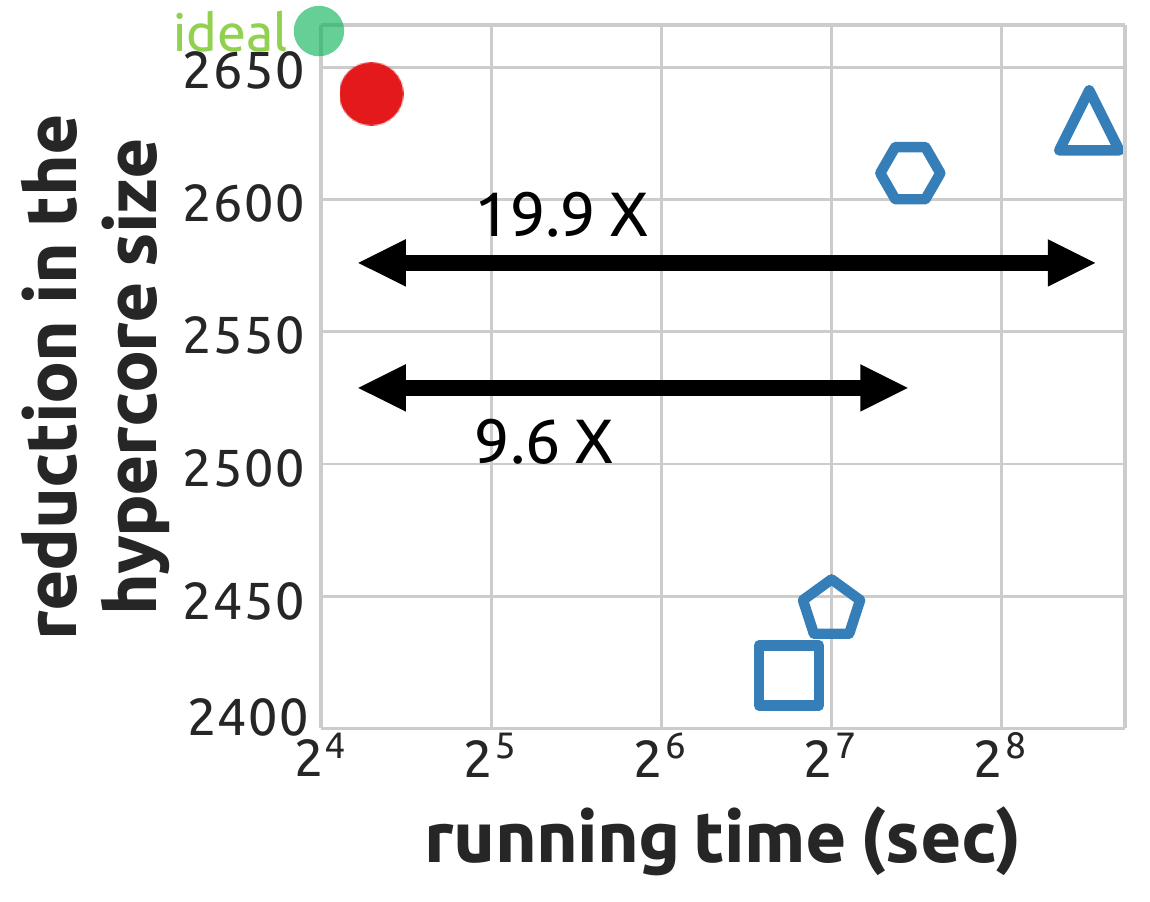}
		\caption{tags-SO}
	\end{subfigure}
	\begin{subfigure}[b]{0.32\linewidth}
		\centering
		\includegraphics[scale=0.4]{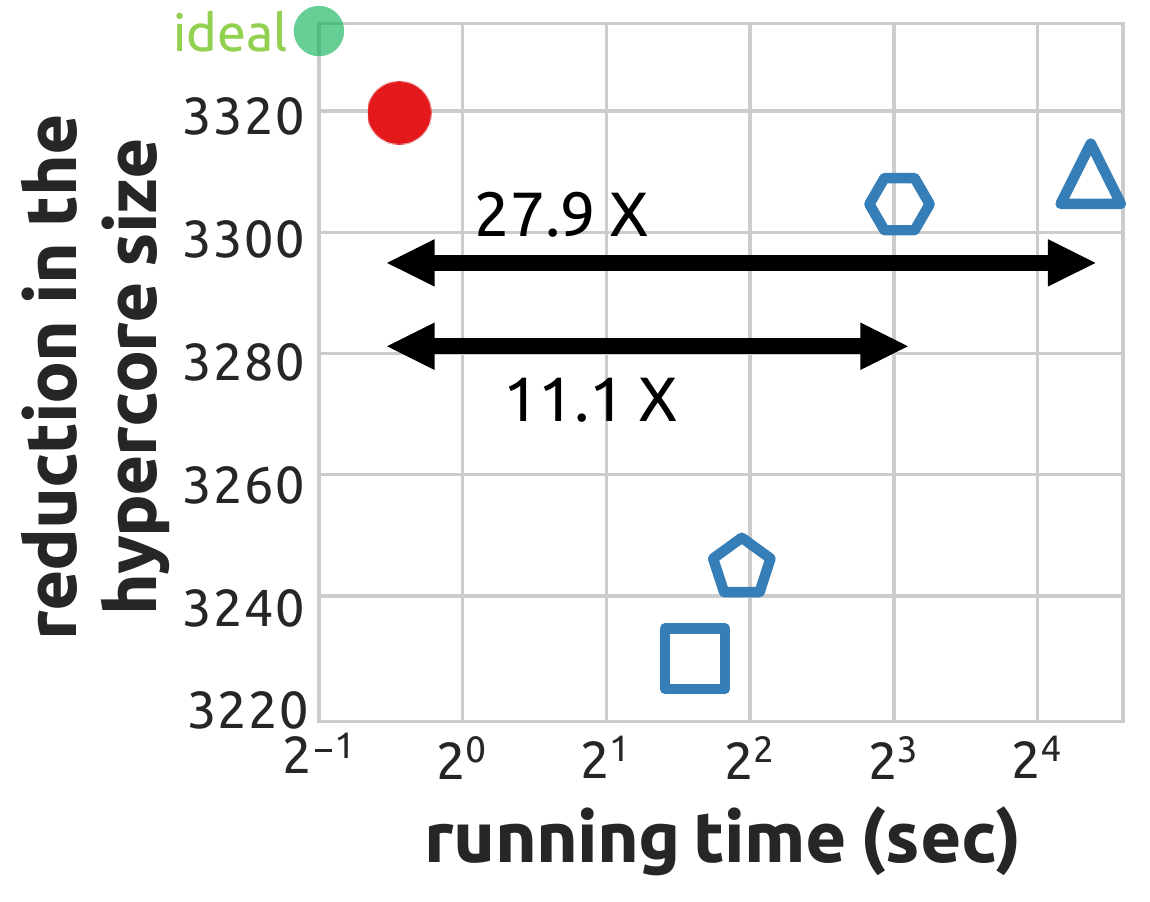}
		\caption{threads-math}
	\end{subfigure}
	\begin{subfigure}[b]{0.32\linewidth}
		\centering
		\includegraphics[scale=0.4]{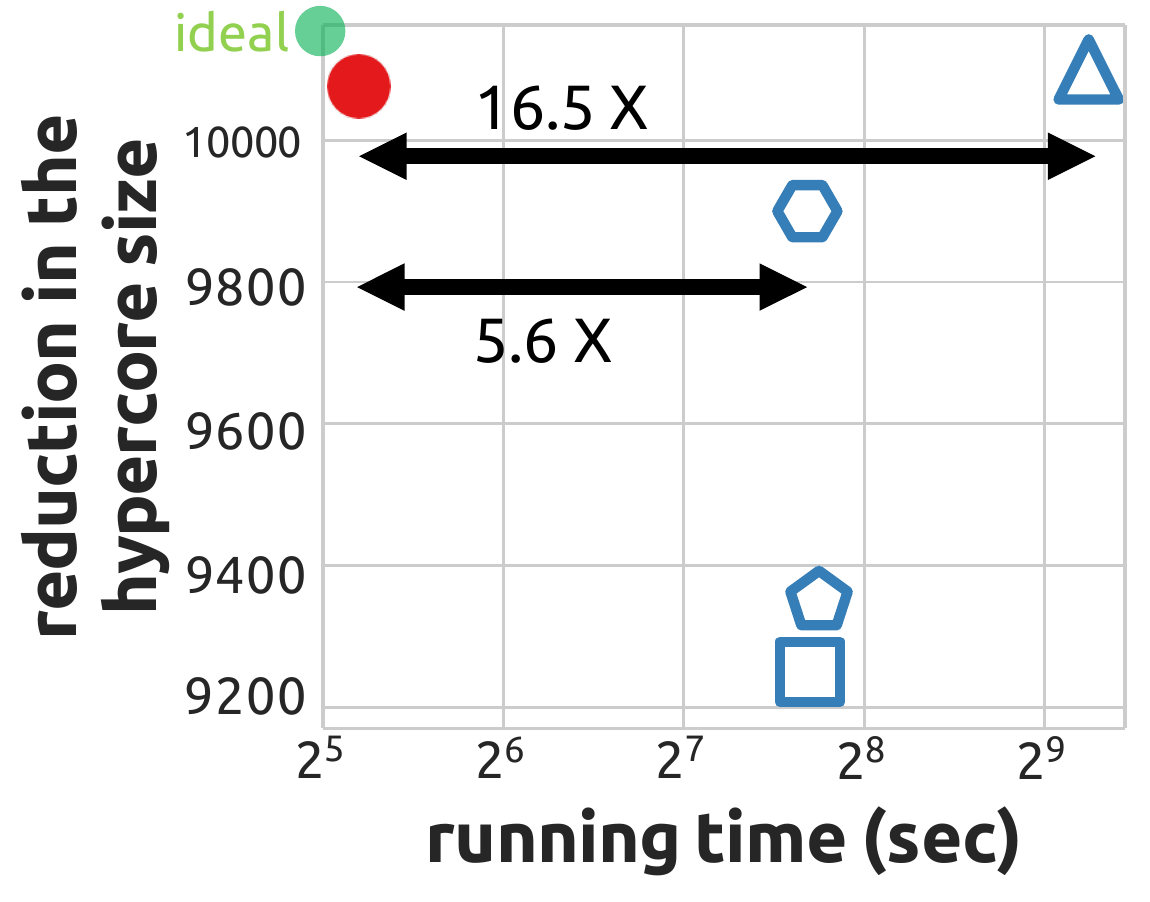}	    
		\caption{threads-SO}
	\end{subfigure}
	\begin{subfigure}[b]{0.32\linewidth}
		\centering
		\includegraphics[scale=0.4]{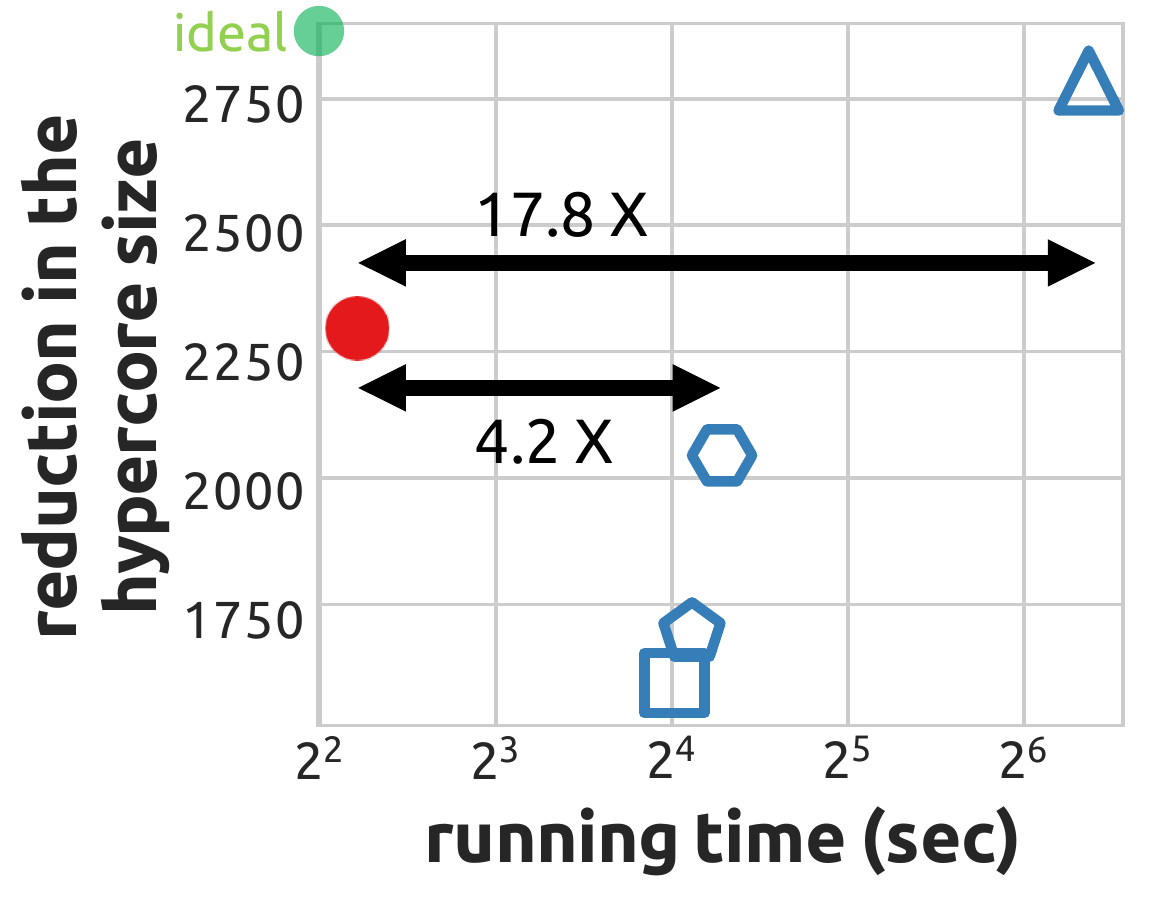}
		\caption{coauth-DBLP}
	\end{subfigure}
	\begin{subfigure}[b]{0.32\linewidth}
		\centering
		\includegraphics[scale=0.4]{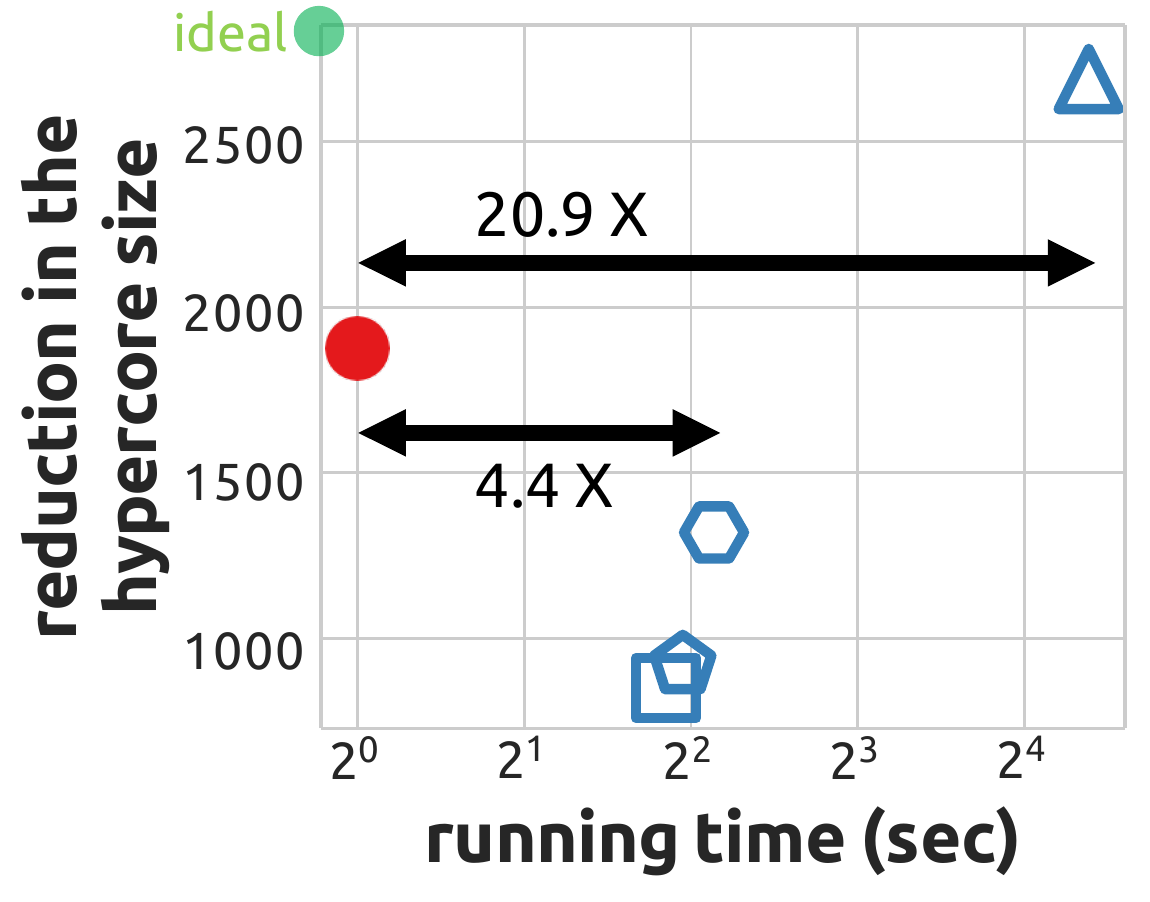}
		\caption{coauth-Geology}
	\end{subfigure}
	\caption{\textbf{\textsc{HyCoM+} shows outstanding efficiency and comparable effectiveness.}
		Given budget $b = 100$, we show the average running time over ten trials and the amount of reduction in the size of the $(10, 0.6)$-hypercore by different algorithms.}
	\label{fig:core_min_res}
\end{figure*}

\begin{figure}[htb]
	\centering
	\includegraphics[scale=0.5]{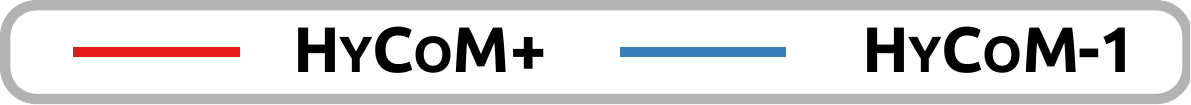}\\     
	\vspace{2mm}
	\begin{subfigure}[b]{0.48\linewidth}
		\centering
		\includegraphics[scale=0.4]{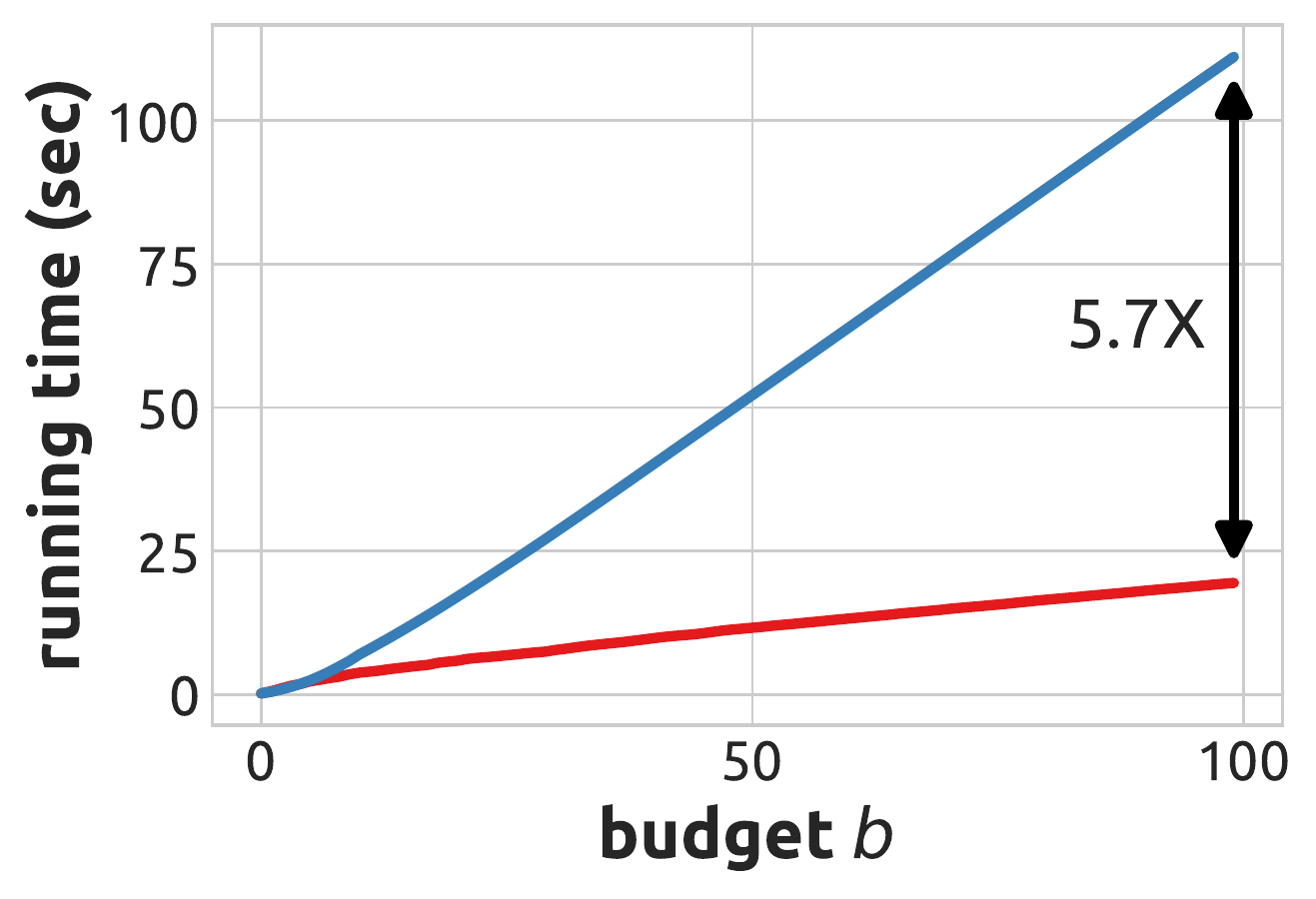}
	\end{subfigure}
	\begin{subfigure}[b]{0.48\linewidth}
		\centering
		\includegraphics[scale=0.4]{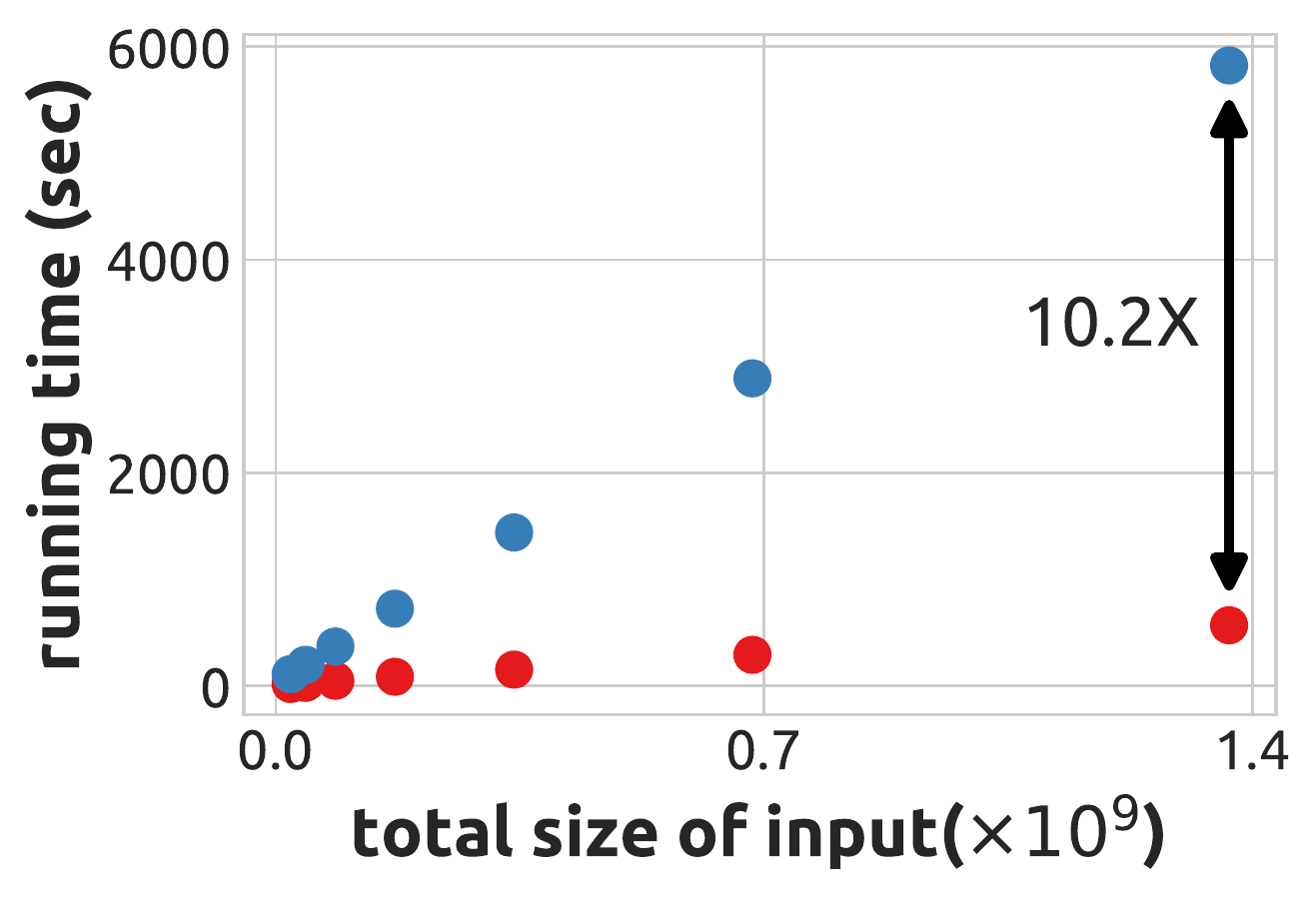}
	\end{subfigure}
	\caption{\textbf{\textsc{HyCoM}+ has linear scalability w.r.t the budget and the hypercore size.}
		On the left, we show the running time of \textsc{HyCoM}-$1$ and \textsc{HyCoM+} with $b$ increasing.
		On the right, we show the running time of \textsc{HyCoM}-$1$ and \textsc{HyCoM+} while upscaling the tags-SO dataset ($b = 100$).
		\textsc{HyCoM+} takes less than $10$ minutes (574 seconds) when the total size of the input hypergraph is $1.37$B ($64 \times$ upscaled).
		\vspace{-1.5mm}}
	\label{fig:core_min_time}
\end{figure}

\begin{problem}[Collapsed $(k, t)$-hypercore problem]
	Given a hypergraph $H = (V, E)$, $k \in \bbN$, $t \in [0, 1]$, and $b \in \bbN$.
	The \textbf{collapsed $(k, t)$-hypercore problem} aims to find $B \in \binom{V}{b}$ so that the size (i.e., the number of nodes) of $(k, t)$-hypercore is minimized when all nodes in $B$ are removed from $H$.
\end{problem}

\begin{algorithm}[t!]
	\small
	\caption{\textsc{HyCoM} / \blue{\textsc{HyCoM+}}} \label{alg:hyperCKC}
	\hspace*{\algorithmicindent} \textbf{Input:} {$H = (V, E)$, $k$, $t$, budget $b$, and max. \# candidates to check $n_c$} \\
	\hspace*{\algorithmicindent} \textbf{Output:} {set of the chosen collapsers $\mathcal{C}$}
	\begin{algorithmic}[1]
		\State $\mathcal{D}(i) \leftarrow \abs{e_i}; \tilde{\mathcal{T}}(i) \leftarrow \max(\ceil{t\abs{e_i}}, 2), \forall i \in I_E$
		\State $H' = (V', E') \leftarrow$ {Alg.~\ref{alg:kt_hypercore_decomp} with $H$, $k$, $t$, and $\mathcal{D}$}
		\State $\mathcal{C} \leftarrow \varnothing$
		\For{$i \in \setbr{1, 2, \ldots, b}$}
		\State $v^{*}\leftarrow \textsc{bestCollapser}()$
		\State $H' = (V', E') \leftarrow$ {Alg.~\ref{alg:kt_hypercore_decomp} with $H' \setminus \setbr{v^*}$, $k$, $t$, and $\mathcal{D}$}
		\State $\mathcal{C} \leftarrow \mathcal{C} \cup \setbr{v^*}$
		\EndFor
		\State \Return {$\mathcal{C}$}
		\Algphase{\textsc{bestCollapser}: find the best collapser}
		\Procedure{\textsc{bestCollapser}}{} \Comment{\blue{The blue parts are for \textsc{HyCoM+}}}
		\State \blue{\textbf{if} $i > 1$ \textbf{then} \textbf{goto} Line~\ref{line:T_comp}} \Comment{\blue{Construct $\tilde{E}$ once}} \label{line:co-edge}
		\State $\tilde{E}(u, v) \leftarrow \setbr{e_i' \in E': \setbr{u, v} \subseteq e_i', \abs{e_i'} = \tilde{\mathcal{T}}(i)}, \forall u, v$ \label{line:tilde_E}
		\State $T \leftarrow \setbr{v \in V': \exists u~s.t.~\abs{\tilde{E}(u, v)} > d(u; H') - k}$ \label{line:T_comp}
		\State \blue{$\mathcal{F}(v) \leftarrow \setbr{u \in V': \abs{\tilde{E}(u, v)} > d(u; H') - k}, \forall v \in T$} \label{line:F_comp}
		\State \textbf{sort} $T$ in the descending order by \blue{$\abs{\mathcal{F}(\cdot)}$ and} degree \label{line:T_sort}
		\State $n_{col}, v_{col}, n_t  \leftarrow \abs{V}, -1, 0$
		\While{$T \neq \varnothing$ {\upshape\textbf{and} $n_t \neq n_c$}}
		\State $v_0 \leftarrow$ the first element in $T$
		\State $H^* \leftarrow$ Alg.~\ref{alg:kt_hypercore_decomp} with $H' \setminus \setbr{v_0}$, $k$, $t$, and $\mathcal{D}$ \label{line:decomp_cand}
		\State \textbf{if} $\abs{V(H^*)} < n_{col}$ \textbf{then} $n_{col}, v_{col} \leftarrow \abs{V(H^*)}, v_0$
		\State $T \leftarrow T \setminus (V \setminus V(H^*))$
		\State $n_t \leftarrow n_t + 1$
		\EndWhile
		\State \blue{\textbf{update} $\tilde{E}$ by the difference between $H'$ and $H_{col}$} \label{line:E_update}
		\State \Return {$v_{col}$}
		\EndProcedure
	\end{algorithmic}
\end{algorithm}

Alg.~\ref{alg:hyperCKC} (with $n_c = -1$) shows the generalization of CKC \citep{zhang2017finding}, which was originally designed for the collapsed $k$-core problem, to the collapsed $(k, t)$-hypercore problem.
Following CKC, in each round, we find {a} best collapser (i.e., {a} node that reduces the size of $(k, t)$-hypercore most) in the candidate set $T$, and update the $(k,t)$-hypercore after removing the chosen collapser, until all $b$ collpasers are chosen.
However, the naive generalization encounters the following problems:
\begin{itemize}
	\item CKC only considers simple pairwise graphs. In simple pairwise graphs, where at most one edge exists between each node pair, the candidate set $T$, i.e., the set of nodes whose deletion will result in the removal of some other node, simply consists of the neighbors of the nodes with degree $k$.
	In hypergraphs, two nodes may co-exist in \textit{multiple} hyperedges. Therefore, we need to additionally check the number of \textit{endangered hyperedges} in the set $\tilde{E}$, where endangered hyperedges are the ones with exactly the minimum size satisfying the threshold determined by $t$.
	Furthermore, we need to count each node pair in each endangered hyperedge, which is time-consuming with time complexity $O(\sum_{e \in E'} \abs{e}^2)$ (Line~\ref{line:tilde_E}).
	To make the situation worse, this process is repeated in each round ($b$ times in total).
	\item CKC computes the $k$-core after removing each candidate to evaluate the candidates.
	Similarly, $(k, t)$-hypercore computation is required for \textit{each} candidate (Line~\ref{line:decomp_cand}), where the number of all candidates can be large.
	Compared to core computation with time complexity linear in the number of edges \citep{batagelj2003m}, as shown in Thm.~\ref{thm:alg_kt}, $(k, t)$-hypercore computation has considerably higher time complexity.
\end{itemize}

We propose \textsc{HyCoM} (\textbf{Hy}per-\textbf{Co}re \textbf{M}inimization) and the further-optimized \textsc{HyCoM+}, which are described in Alg.~\ref{alg:hyperCKC}, to address the above problems with the following improvements:
\begin{itemize}
	\item \textbf{Only checking the most promising candidates.}
	Although we may have a large number of candidates, not {every} candidate is likely to be the best collapser.
	Intuitively, we may set a maximum number of candidates to check in each round ($n_c \ll T$) and only check the most promising ones.
	The technique reduces the time of hypercore decompositions in each round from $O(T)$ to $O(n_c)$, which gives \textsc{HyCoM}.
	We further improve the algorithm by the following two techniques and have \textsc{HyCoM+}.
	\item \textbf{Sorting candidates by the number of direct followers.}
	In \textsc{HyCoM}, the degrees are used to sort the candidates and to find the promising ones.
	However, the degree does not always imply a node's ability in (hyper)core minimization.
	The \textit{direct followers} ($\mathcal{F}(\cdot)$ in Line~\ref{line:F_comp}, i.e., the nodes that will leave the hypercore immediately due to the deletion) of each candidate are accessible without additional cost during the process of finding candidates,
	and the number of direct followers provides a lower bound for the total number of followers.
	Thus, we use the number of direct followers to find the most promising candidates (Line~\ref{line:T_sort}).
	\item \textbf{Incremental update of the endangered hyperedges.}
	It is necessary to find the endangered hyperedges, and we incrementally update the set {whenever} the hypercore is updated (Line \ref{line:E_update}) instead of computing {it} from scratch in each round.
	By doing so, during the whole process, $\tilde{E}$ is constructed from scratch only once.
	The number of hyperedges needed to be checked in each round is the size of the \textit{symmetric difference} between the current set of endangered hyperedges and that in the previous round, which is empirically much less than the total number of hyperedges in the hypercore.
\end{itemize}

\smallsection{Algorithms.} The algorithms to compare are as follows:
\begin{itemize}
	\item \textsc{hyperCKC}: the naive generalization of CKC,
	which is equivalent to \textsc{HyCoM} with $n_c = -1$;
	\item \textsc{HyCoM}-$n_c$: \textsc{HyCoM} with $n_c \in \setbr{1, 10, 100}$;
	\item \textsc{HyCoM+}: \textsc{HyCoM+} with $n_c = 1$, i.e., the fastest version.
\end{itemize}

\smallsection{Settings.}
We conducted all the experiments on a machine with i9-10900K CPU and $64$GB RAM.
All algorithms are implemented in C++, and complied by G++ with O3 optimization.

\smallsection{Results.}
We show the results on five relatively large datasets: coauth-DBLP, coauth-Geology, tags-SO, threads-math, and threads-SO, and use $k = 10$ and $t = 0.6$.
Full results, where we use different datasets and different $(k, t)$ values, are \red{in Table~\ref{tab:min_res} in Appendix~\ref{appendix:extra_exp}.}
In Fig.~\ref{fig:core_min_res}, we report the running time and the reduction in the size of $(k,t)$-hypercore size when using different algorithms with $b = 100$, where \textsc{HyCoM+} shows outstanding efficiency and competent effectiveness.
{We do not count the time used on the initial $(k,t)$-hypercore computation since it is common in all algorithms.} 
In particular, in the tag-SO, thread-math, and threads-SO datasets, the performance of \textsc{HyCoM+} is comparable or even better than that of \textsc{hyperCKC} while \textsc{HyCoM+} is $16.5$-$27.9\times$ faster than \textsc{hyperCKC}.
Besides, on the largest dataset threads-SO whose input hypercore has $301$K nodes and $5.7$M hyperedges, \textsc{HyCoM+} takes only $38.2$ seconds.
In Fig.~\ref{fig:core_min_time}, we show the linear scalability w.r.t the budget and the hypercore size of \textsc{HyCoM} and \textsc{HyCoM+},
where we generate synthetic hypergraphs by upscaling the original ones.
In particular, we duplicate each hyperedge up to $64\times${, which is simple and generates realistic hypergraphs}.

\section{Related Work}\label{sec:relw}
\smallsection{$k$-Hypercores.}
The concept of $k$-cores in pairwise graphs was first proposed in \citep{seidman1983network} and has been used for various applications \citep{shin2016corescope, alvarez2008k, alvarez2006large, peng2014accelerating, corominas2014detection, luo2009core, wood2015minimal,malliaros2020core}.
Most previous works \citep{hua2023revisiting, luo2021hypercore, luo2022hypercore, gabert2021shared, gabert2021unifying, sun2020fully} are {based on} the straightforward generalization of {$k$-cores to hypergraphs} assuming fragile hyperedges (i.e., a hyperedge is removed when \textit{any} node leaves it), which is included in the proposed $(k, t)$-hypercore with $t = 1$.
\cite{limnios2021hcore} and \cite{vogiatzis2013influence}
considered a variant where each hyperedge is kept until only one node remains in it, which is equivalent to the $(\alpha, \beta)$-core in bipartite graphs \citep{liu2020efficient, sariyuce2018peeling} with $(\alpha, \beta) = (k, 2)$, and the proposed $(k, t)$-hypercore with $t = 0$.
No existing work has investigated the spectrum between the two extreme cases above, which is covered by our proposed concepts.

\smallsection{Generalized $k$-cores.}
\cite{zhang2012extracting} generalized $k$-cores to triangle $k$-cores, which are also known as $k$-trusses, to extract the information in pairwise graph.
\cite{peng2018efficient} generalized $k$-cores on uncertain graphs, where each edge exists in a probabilistic way.
Specifically, they considered the problem of $k$-core decomposition on uncertain graphs, and propose the concept of $(k, \theta)$-cores.
\cite{wang2018efficient} generalized $k$-cores on geo-social networks. 
Specifically, they proposed the radius-bounded $k$-core by taking the spatial constraints into consideration.
\cite{zhang2020exploring} generalized $k$-cores to $(k, p)$-cores. 
Specifically, given $k$ and $p$, they further required each node in the $(k, p)$-core to have at least $p$ fraction of its neighbors in the $(k, p)$-core.
\cite{lu2022time} further investigated $(k, p)$-cores on dynamic graphs.
\cite{bonchi2019distance} generalized $k$-cores to $(k, h)$-cores.
Specifically, they relaxed the node-degree condition by requiring each node in the $(k, h)$-cores to have at least $k$ other nodes at a distance at most $h$, i.e., to have at least $k$ $h$-hop neighbors.
\cite{dai2021scaling} further investigated $(k, h)$-cores.
\cite{zhang2017engagement} generalized $k$-cores to $(k, r)$-cores, where they took the similarity between each pair of nodes w.r.t the attributes also into consideration.
\cite{victor2021alphacore} generalized $k$-cores by combining multiple node properties and introducing the notion of data depth. 
\cite{chen2021higher} generalized triangle $k$-cores, i.e., $k$-trusses to $(k, \tau)$-trusses, by taking the $h$-hop neighbors of each node into consideration, which is based on a similar idea of the $(k, h)$-cores \citep{bonchi2019distance}.
\cite{sariyuce2018peeling} and \cite{shin2018fast} proposed to find dense substructures in bipartite graphs and tensors, respectively, by adapting the standard `peeling' algorithm for obtaining the $k$-core.
\cite{gabert2021unifying} used $k$-nuclei, a generalization of $k$-cores and $k$-trusses, to detect dense substructures. 
\cite{preti2021strud} generalized $k$-trusses to simplicial complexes.
\color{black}

\smallsection{Patterns in real-world hypergraphs.}
\cite{do2020structural} proposed to convert hypergraphs into pairwise graphs where each $k$-subset of the node set is regarded as a node in the converted graph and found some pervasive structural patterns.
\cite{lee2020hypergraph} defined hypergraph motifs that depict the connectivity patterns among each of three connected hyperedges. 
They revealed that the frequencies of hypergraph motifs are similar in hypergraphs in the same domain.
\cite{lotito2022higher} also studied hypergraph motifs using different definitions,
and \cite{Kim2023SC3} recently studied motifs in simplicial complexes.\footnote{Simplicial complexes can be seen as a special class of hypergraphs.}
\cite{lee2021hyperedges} defined the degree of overlaps of hyperedges and found some patterns related to the overlap. 
Moreover, temporal patterns have also been explored \citep{kook2020evolution,benson2018simplicial,benson2018sequences}. 
Structural properties (e.g., node centrality measures, the number of graph motifs involving each node) have been used as features of nodes in pairwise graphs~\citep{cui2022positional, he2021learning}.
We believe that structural properties on hypergraphs can also be useful for feature representation~\citep{arya2020hypersage}, especially as inputs of hypergraph neural networks~\citep{feng2019hypergraph,jiang2019dynamic,liao2021hypergraph,bai2021hypergraph,huang2021unignn,chien2021you,gao2022hgnn,kim2022equivariant,xia2022hypergraph,lee2022m,wu2023self,han2023intra}.
\color{black}

\smallsection{Influential-node identification in hypergraphs.}
Besides the trivial degree centrality, a variety of node centrality measures (e.g., eigenvector centrality \citep{bonacich2001eigenvector} and coreness \citep{kitsak2010identification})
have been used to find influential nodes in pairwise graphs~\citep{rossi2015spread}, and some have been generalized to hypergraphs \citep{benson2019three}.
Overall, influential-node identification in hypergraphs is still underexplored, 
although some theoretical analyses have been {made} \citep{zhu2018social, antelmi2021social}.
We provide an efficient and effective metric for practical use.

\section{Conclusion}\label{sec:ccls}
In this paper, we proposed the notion of $(k, t)$-hypercores and some related concepts (Definitions~\ref{def:kt_hypercore}-\ref{def:k_fraction})
for which we presented 
the theoretical properties (Propositions~\ref{prop:uniqueness}-\ref{prop:containment}) and 
computation algorithms (Algorithms~\ref{alg:kt_hypercore_decomp}-\ref{alg:k_fraction}) 
with analyses (Theorems~\ref{thm:alg_kt}-\ref{thm:alg_k}).
Through extensive experiments on real-world hypergraphs, we presented interesting findings from various perspectives (Observations~\ref{obs:domain:kt}-\ref{obs:density}), 
including striking similarities of the hypercore structure within each domain.
We also demonstrated the usefulness of the proposed concepts in identifying influential nodes (Figures~\ref{fig:inf_summary}-\ref{fig:inf_email_Eu}), detecting dense substructures (Figure~\ref{fig:density}), and revealing vulnerabilities (Figures~\ref{fig:core_min_res}-\ref{fig:core_min_time}).
For reproducibility, we made the code and datasets publicly available online {\citep{onlineSuppl}}. 

\section*{Declarations}
\subsection*{Funding}
This work was supported by National Research Foundation of Korea (NRF) grant funded by the Korea government (MSIT) (No. NRF-2020R1C1C1008296) and Institute of Information \& Communications Technology Planning \& Evaluation (IITP) grant funded by the Korea government (MSIT) (No. 2022-0-00871, Development of AI Autonomy and Knowledge Enhancement for AI Agent Collaboration) (No. 2019-0-00075, Artificial Intelligence Graduate School Program (KAIST)).

\subsection*{Competing interests}
The authors have no relevant financial or non-financial interests to disclose.

\bibliographystyle{plainnat}
\bibliography{ref}
\clearpage

\appendix

\section{Proofs}\label{sec:proofs}

\subsection{Proof of Proposition~\ref{prop:uniqueness}}\label{subsec:pf_prop_uniqueness}
\begin{proof}
	{Since $H$ is finite, the number of subhypergraphs of $H$ is also finite.
    Therefore, there exists \textit{one} subhypergraph with maximal total size (which is possibly an empty hypergraph) where each node has degree at least $k$ and at least $t$ proportion of the constituent nodes remain in each hyperedge, completing the proof of existence.}
	To show the uniqueness, suppose the opposite, and let $C^1 = (V^1, E^1)$ and $C^2 = (V^2, E^2)$ be two distinct $(k, t)$-hypercores of $H$.
	Then we consider the hypergraph $C' = (V', E')$ with $E' = \setbr{e^1_i \cup e^2_i: i \in I_{E^1} \cup I_{E^2}}$.
	Clearly, $C'$ is a subhypergraph of $H$ with a larger total size that satisfies the node-degree and hyperedge-fraction conditions, which {contradicts}
	the maximality and completes the proof.
\end{proof}

\subsection{Proof of Proposition~\ref{prop:containment}}\label{subsec:pf_prop_containment}
\begin{proof}
	Suppose that $C_{k, t_2}(H)$ is not a subhypergraph of $C_{k, t_1}(H)$.
	Then we take the union $C_{k, t_2}(H) \cup C_{k, t_1}(H)$ and we obtain a hypergraph that is strictly larger than $C_{k, t_1}(H)$ and satisfies the conditions of $(k, t_1)$-hypercore, which contradicts with the maximality, completing the proof.
	The second statement can be proved similarly.
\end{proof}

\subsection{Proof of Lemma~\ref{lem:kl_hypercore_equiv_bipartite_core}}\label{subsec:pf_lem_kl_hypercore_equiv_bipartite_core}
\begin{proof}
	This equivalence is immediate by two facts.
	First, for each node $v \in V$, the degree of $v$ in $H$ is equal to the degree of $v$ in $G_{bp}(H)$.
	Second, for each hyperedge $e \in E$, the number of nodes in $e$ is equal to the degree of $e$ in $G_{bp}(H)$.
	With the above two facts, this equivalence immediately follows.
\end{proof}

\subsection{Proof of Lemma~\ref{lem:kl2_hypercore_equiv_kt0_hypercore}}\label{subsec:pf_lem_kl2_hypercore_equiv_kt0_hypercore}
\begin{proof}
	By Def.~\ref{def:kt_hypercore}, when $t = 0$, the definition of $C_{k; t = 0}(H)$ is the maximal subhypergraph of $H$ where 
	(1) every node in $C_{k, t=0}(H)$ has degree at least $k$ and 
	(2) at least two nodes remain in every hyperedge of $C_{k, t=0}(H)$.
	Such a definition exactly coincides with $\tilde{C}_{k; \ell = 2}(H)$, completing the proof.
\end{proof}

\subsection{Proof of Lemma~\ref{lem:diff_kt_kl_hcore}}\label{subsec:pf_lem_diff_kt_kl_hcore}
We can understand the differences between $(k; \ell)$-hypercores and $(k, t)$-hypercores by two intuitions.
	When we obtain the $(k; \ell)$-hypercore of a given $H$ with $\ell > 2$, all hyperedges of cardinality $2$ are removed in the first place. Therefore, if we want to find an $\ell$ such that $\tilde{C}_{k; \ell} = C_{k, t}$ where $C_{k, t}$ contains any hyperedge of cardinality $2$, the only possible $\ell$ value is $\ell = 2$.
	Since the threshold in the $(k, t)$-hypercore is proportional, it imposes different absolute cardinality thresholds for hyperedges of different sizes. On the contrary, the $(k; \ell)$-hypercore imposes the same absolute cardinality threshold for each hyperedge.
We shall show two counterexamples from the two intuitions above.
\begin{proof}
	Consider $H = (V, E)$ with
	$E = \setbr{\setbr{1, 2}, \setbr{1, 3} \setbr{1, 2, 3, 4}, \setbr{1, 3, 4, 5, 6}}.$
	The $(k=2, t=3/4)$-hypercore of $H$ consists of the hyperedges
	$\setbr{\setbr{1,2}, \setbr{1,3}, \setbr{1, 2, 3}},$
	where the hyperedge $\setbr{1,3,4,5,6}$ is totally removed since only $3/5 < t = 3/4$ of the constituent nodes remain by the node-degree threshold $k = 2$. For the $(k; \ell)$-hypercore, the $(k=2; \ell=2)$-hypercore of $H$ consists of the hyperedges
	$\setbr{\setbr{1, 2}, \setbr{1, 3}, \setbr{1, 2, 3, 4}, \setbr{1, 3, 4}}$;
    the $(k=2; \ell=3)$-hypercore of $H$ consists of the hyperedges
	$\setbr{\setbr{1, 3, 4}, \setbr{1, 3, 4}}$;
    when $\ell \geq 4$, the $(k=2;\ell)$-hypercore of $H$ is empty, completing the proof.
	
 We show another counterexample. Consider $H = (V, E)$ with
	$$E = \setbr{\setbr{1,2,3,4}, \setbr{1,2,5,6}, \setbr{5,6,7,8},\setbr{3,4,9,10,11},\setbr{1,2,3,4,5,6,7,8}}.$$
	The $(k=3, t=1/2)$-hypercore of $H$ consists of the hyperedges 
	$$\setbr{\setbr{1,2}, \setbr{1,2,5,6}, \setbr{5,6}, \setbr{1,2,5,6}}.$$
	For the $(k; \ell)$-hypercore, when $\ell = 2$, the $(k=3; \ell=2)$-hypercore of $H$ consists of the hyperedges $$\setbr{\setbr{1,2,3,4},\setbr{1,2,5,6},\setbr{5,6},\setbr{3,4},\setbr{1,2,3,4,5,6}};$$
	when $\ell \geq 3$, the $(k=3; \ell)$-hypercore of $H$ is empty, completing the proof.
\end{proof}

\begin{remark}\label{rem:more_disc_kt_kl_hypercore}
	Our proposed $(k, t)$-hypercore allows arbitrarily fine-grained adjustment since the value of $t$ is continuous in $[0, 1]$, while $\ell$ must be an integer.
	In real-world hypergraphs, many hyperedges are of cardinality $2$. Therefore, the $(k; \ell)$-hypercore with $\ell > 2$ is significantly less meaningful than the $(k, t)$-hypercore since many hyperedges are not taken into consideration at all.
	See Tbl.~\ref{tab:num_card_edges} for the detailed number of hyperedges of different cardinality in each dataset we have used.
	See Figs.~\ref{fig:inf_summary} and \ref{fig:inf_email_Eu} for the performance of hypercoreness w.r.t $(k; \ell)$-hypercore to indicate the influence of nodes.
	Note again that the $(k; \ell=2)$-hypercore is included in our proposed concept as the $(k, t=0)$-hypercore.
	We observe that in most datasets, the $(k; \ell)$-hypercores become less meaningful and fail to indicate the influence of nodes when $\ell$ becomes large, as expected.
\end{remark}

\color{black}

\subsection{Proof of Theorem~\ref{thm:alg_kt}}\label{subsec:pf_thm_alg_kt}
\begin{proof}
	\noindent \textbf{Correctness.}
	The size of a hyperedge changes only when some node in $\mathcal{R}$ is removed from it, and the degree of a node changes only when some incident hyperedge is removed.
	Therefore, when Algorithm~1 ends, each node has degree at least $k$, otherwise it must have been included in $\mathcal{R}$ and removed, and each hyperedge satisfies the hyperedge-fraction condition, otherwise it must have been removed.
	This implies that the output of Algorithm~1 satisfies both the node-degree and hyperedge-fraction conditions w.r.t $H$, $k$, and $t$.
	We now show the maximality.
	Suppose not, and let $(v, e_i)$ be the first node-hyperedge pair that appears during the process of Algorithm~1 with $v \in e_i \in E(C_{k, t})$ but $v \notin e_i' \in E'$, where $C' = (V', E')$ is the returned hypergraph.
	This implies that $v$ is removed from $e$, and thus $v$ is included in $\mathcal{R}$ because its degree has been below $k$.
	However, by the definition of the $(k, t)$-hypercore and the assumption that $(v, e_i)$ is the first pair, before the deletion, the degree of $v$ is at least $k$, which completes the proof by contradiction.
	
	\noindent \textbf{Time complexity.}
	We assume the input hypergraph has been loaded in the memory and thus do not count the complexity of loading the hypergraph.
	Checking the initial degrees (Line~1) takes $O(\abs{V})$.
	In the while loop, each node is added to the set of nodes to be removed at most once since each node is added exactly when its degree decreases from $k$ to $k - 1$. Therefore, this process takes $O(\abs{V})$.
	{By checking the incident edges of each node in $\mathcal{R}$, we find all $e'_i$s intersecting with $\mathcal{R}$, which takes $O(\abs{\mathcal{R}}) = O(\abs{V})$.
		Hash tables are used to implement the sets.}
	Before a hyperedge $e \in E$ is totally removed, \red{at least $\max\left(\ceil{t\abs{e}}, 2\right)$ nodes remain in it (otherwise it has been removed earlier)}, and thus
	it can be visited at most $\abs{e} - \max\left(\ceil{t\abs{e}}, 2\right) + 1$ times \red{(because one node is removed at each time)}.
	This process takes $O(\sum_{e \in E} (\abs{e} - \max\left(\ceil{t\abs{e}}, 2\right) + 1)) = O(\abs{E} + (1 - t) \sum_{e \in E} \abs{e})$.
	\red{Therefore, the total time complexity is $O(\abs{V}) + O(\abs{E} + (1 - t) \sum_{e \in E} \abs{e}) = O(\abs{E} + (1 - t) \sum_{e \in E} \abs{e})$.}
\end{proof}

\subsection{Proof of Theorem~\ref{thm:alg_t}}\label{subsec:pf_thm_alg_t}
\begin{proof}
	\noindent \textbf{Correctness.}
	For each {node} $v$, the assignment of $c_t(v)$ happens only once when $v \in \mathcal{R}$, i.e., before its deletion.
	By Theorem~1, $c_t(v) = k - 1$ implies that $v$ is not in the $(k, t)$-hypercore but in the previous $(k', t)$-hypercore where each node has degree at least $k - 1$, i.e., $v$ is in the $(k-1, t)$-hypercore.\\
	
	\noindent \textbf{Time complexity.}
	The values of $k$ increases $O(c_t^*)$ times, thus the process in Lines~4 and 5 is repeated for $O(c_t^*)$ times and takes $O(c_t^* \abs{V})$.
	The assignment of $t$-hypercoreness of each node (Line~7) takes $O(\abs{V})$.
	As shown in the proof of Theorem~1, each hyperedge is visited at most $\abs{e} - \max(\ceil{t\abs{e}}, 2) + 1$ times before being deleted and each node is added to the set of nodes to be removed only once. Therefore, the remaining process takes $O(\abs{V} + \abs{E} + (1 - t) \sum_{e \in E} \abs{e})$.
\end{proof}

\subsection{Proof of Theorem~\ref{thm:alg_k}}\label{subsec:pf_thm_alg_k}
\begin{proof}
	\noindent \textbf{Correctness.}
	For each node $v$, the assignment of $f_k(v)$ happens only once when $v \in \mathcal{R}$, i.e., before its deletion.
	By Theorem~1, $f_k(v) = t$ implies that $v$ is in the $(k, t)$-hypercore with degree $k$ and is in at least one hyperedge that is in {the $(k,t)$-hypercore but} not in any $(k, t')$-hypercore with $t' > t$.
	Therefore, $v$ is not in any $(k, t')$-hypercore with $t' > t$.
	
	\noindent \textbf{Time complexity.}
	Recording the hyperedge sizes (Line~1) takes $O(\abs{E})$.
	By Theorem~1, computing $C_{k, 0}$ (Line~2) takes $O(\sum_{e \in E} \abs{e})$.
	As shown in the previous proofs, the while loop (Lines~4 to 12) takes $O(\abs{V} + \abs{E} + \sum_{e \in E} \abs{e}) = O(\sum_{e \in E} \abs{e})$.
\end{proof}

%

\section{Details of datasets}
In this section, we provide more details of the datasets used in our experiments.

\begin{itemize}
	\item \smallsection{coauth-DBLP/Geology.} In these two \textit{coauthorship} hypergraphs, each hyperedge represents a publication, and the constituent nodes of a hyperedge represent the authors of the corresponding publication.
	\item \smallsection{NDC-classes/substances.} In these two hypergraphs from the \textit{National Drug Code (NDC) Directory}, each hyperedge represents a drug (with its unique NDC code), and the constituent nodes of a hyperedge represent the class labels (for NDC-classes) or the ingredients (for NDC-substances) of the drug. 
	\item \smallsection{contact-high/primary.} In these two \textit{contact} hypergraphs, each hyperedge represents a group of interacting individuals (the constituent nodes) within a predetermined time period. 
 	\item \smallsection{email-Enron/Eu.} In these two \textit{email} hypergraphs, each hyperedge represents an email (possibly sent to multiple people individually at the same time), which contains the sender and all the receivers as its constituent nodes.
	\item \smallsection{tags-ubuntu/math/SO.} In these three \textit{tags} hypergraphs from \url{https://stackoverflow.com/}, each node represents a tag, and each hyperedge represents a question, where each constituent node represents a tag applied to the question.
	\item \smallsection{threads-ubuntu/math/SO.} In these three \textit{threads} hypergraphs also from \url{https://stackoverflow.com/}, each hyperedge represents a thread, where each constituent node represents a person that participates in it.
\end{itemize}

We have used the preprocessed version of the datasets where each hyperedge consists of at most $25$ nodes.
In Table~\ref{tab:num_card_edges}, we report the number of hyperedges of different cardinality in each dataset.
Notably, on \url{https://www.cs.cornell.edu/~arb/data/}, the \textit{full} version of the datasets, in which the cardinality of the hyperedges is not limited, is also available.

\begin{table}[t!]
	\begin{center}
		\caption{\textbf{The number of hyperedges of different cardinality in each dataset.} For each dataset, we list the number of hyperedges of each specific size. Specifically, in most datasets, a large number of hyperedges are of cardinality $2$. We use $E_s$ to denote the set of hyperedges of cardinality $s$, for each $s$.} \label{tab:num_card_edges}
		\resizebox{\columnwidth}{!}{%
			\begin{tabular}{ lrrrrrr }
				\toprule
				\textbf{Dataset}& $\abs{E}$ & $\abs{E_2}$ & $\abs{E_3}$ & $\abs{E_4}$ & $\abs{E_5}$ & $\abs{\bigcup_{s > 5} E_s}$ \\
				\midrule
				coauth-DBLP 	& 2,169,663 & 693,364 (31.96\%) & 667,302 (30.76\%) & 419,431 (19.33\%) & 205,965 (09.49\%) 	& 183,601 (08.46\%) \\
				coauth-Geology 	& 908,516   & 275,736 (30.35\%) & 227,950 (25.09\%) & 159,509 (17.56\%) & 99,140 (10.91\%) 	& 146,181 (16,09\%) \\
				\midrule
				NDC-classes 	& 1,047 	& 297 (28.37\%)		& 121 (11.56\%)		& 125 (11.94\%)		& 94 (08.98\%)		& 410 (39.16\%) \\
				NDC-substances  & 6,264 	& 1,130 (18.04\%)	& 745 (11.89\%)		& 535 (08.54\%)		& 500 (07.98\%)		& 3,354 (53.54\%) \\
				\midrule
				contact-high 	& 7,818     & 5,498 (70.32\%)	& 2,091 (26.75\%)	& 222 (02.84\%)		& 7 (00.09\%)		& 0 (00.00\%) \\
				contact-primary & 12,704	& 7,748 (60.99\%)	& 4,600 (36.21\%)	& 347 (02.73\%)		& 7 (00.09\%)		& 0 (00.00\%) \\
				\midrule
				email-Enron		& 1,457		& 809 (55.53\%)		& 317 (21.76\%)		& 138 (09.47\%)		& 63 (04.32\%)		& 130 (08.92\%) \\
				email-Eu		& 24,399	& 12,753 (52.27\%)	& 4,938 (20.24\%)	& 2,294 (09.40\%)	& 1,359 (05.57\%)	& 3,055 (12.52\%) \\
				\midrule
				tags-ubuntu 	& 145,053   & 28,138 (19.40\%)  & 52,282 (36.04\%)	& 39,158 (27.00\%)	& 25,475 (17.56\%)	& 0 (00.00\%) \\
				tags-math 		& 169,259   & 25,253 (14.92\%)  & 63,870 (37.74\%)	& 50,892 (30.07\%)	& 29,244 (17.28\%)	& 0 (00.00\%) \\
				tags-SO			& 5,517,054 & 399,051 (07.23\%)	& 1,537,702 (27.87\%)& 1,947,542 (35.30\%) & 1,632,759 (29.59\%) & 0 (00.00\%) \\
				\midrule
				threads-ubuntu  & 115,987  	& 88,301 (76.13\%)	& 21,621 (18.64\%)	& 4,560 (03.93\%)	& 1,117 (00.96\%) 	& 388 (00.33\%) \\
				threads-math 	& 535,323  	& 319,601 (59.70\%)	& 142,065 (26.54\%)	& 49,198 (09.19\%)	& 16,402 (03.06\%)   & 8,057 (01.51\%) \\
				threads-SO		& 8,589,420 & 5,210,916 (60.67\%)& 2,102,208 (24.47\%) & 787,701 (09.17\%) & 299,172 (03.48\%)& 189,423 (02.21\%) \\
				\bottomrule
			\end{tabular}
		}
	\end{center}
\end{table}

\begin{table}[t!]
	\begin{center}
		\caption{For each dataset, we report the information gain over degree, for each of the considered quantities: $\ell$-hypercoreness with $\ell \in \setbr{3, 4, 5}$, neighbor-hypercoreness, and neighbor-degree-hypercoreness.} \label{tab:info_gain_others}
			\begin{tabular}{ lccccc }
				\toprule
				\textbf{Dataset}& $\ell = 3$ & $\ell = 4$ & $\ell = 5$ & neighbor & neighbor-degree \\
				\midrule
				coauth-DBLP 	& 1.386 & 1.470 & 1.273 & 3.053 & 0.701 \\
				coauth-Geology 	& 1.239 & 1.394 & 1.313 & 3.262 & 0.447 \\
				\midrule
				NDC-classes 	& 1.186 & 1.190 & 1.159 & 2.515 & 0.592 \\
				NDC-substances  & 1.221 & 1.459 & 1.607 & 3.974 & 0.975 \\
				\midrule
				contact-high 	& 1.456 & 0.893 & 0.173 & 1.685 & 1.231 \\
				contact-primary & 0.725 & 0.534 & 0.204 & 0.886 & 0.545 \\
				\midrule
				email-Enron		& 1.228 & 1.176 & 1.114 & 1.340 & 1.020 \\
				email-Eu		& 1.706 & 1.654 & 1.627 & 1.929 & 1.305 \\
				\midrule
				tags-ubuntu 	& 2.104 & 2.176 & 1.919 & 2.734 & 1.467 \\
				tags-math 		& 1.489 & 1.683 & 1.531 & 1.876 & 1.203 \\
				tags-SO			& 2.530 & 3.213 & 3.007 & 4.046 & 2.219 \\
				\midrule
				threads-ubuntu  & 0.967 & 0.451 & 0.207 & 1.253 & 0.409 \\
				threads-math 	& 1.510 & 1.061 & 0.616 & 2.146 & 0.626 \\
				threads-SO		& 1.738 & 1.261 & 0.801 & 2.394 & 0.930 \\
				\bottomrule
			\end{tabular}
	\end{center}
\end{table}

\section{Additional experimental results}
\label{appendix:extra_exp}
In this section, we provide additional experimental results supplementing the main text.
In Fig.~\ref{fig:rank_cor_sr}, we report the results regarding the statistical difference between $t$-hypercoreness and
other centrality measures, as well as among $t$-hypercoreness with different $t$, on the datasets not covered in the main text.
In Fig.~\ref{fig:info_gain_sr}, we report the results regarding the information gain over degree, on the datasets not covered in the main text.
In Table~\ref{tab:info_gain_others}, for each dataset, we report the information gain over degree for the following quantities: $\ell$-hypercoreness with $\ell \in \setbr{3, 4, 5}$,\footnote{The case $\ell = 2$ is included in the proposed concept of $t$-hypercoreness with $t = 0$.} neighbor-hypercoreness, and neighbor-degree-hypercoreness (Defs.~\ref{def:l_hypercoreness}, \ref{def:nbr_hypercoreness}, and \ref{def:nbr_deg_hypercoreness}).
Notably, we do not claim that higher information gain is always better, since degree is still a reason measure by cohesiveness, and being too different from degrees can be negative as a cohesiveness measure.
In Fig.~\ref{fig:inf_summary_sr}, we report the results on influential-node identification, on the datasets not covered in the main text.
In Table~\ref{tab:min_res}, we report the full results of the collapsed $(k, t)$-hypercore problem.

\color{black}

\begin{figure*}[htb]
	\centering
	\begin{subfigure}[b]{0.9\textwidth}
		\centering
		\includegraphics[scale=0.5]{fig6_correlation_legend.pdf}\\
	\end{subfigure}  
	\begin{subfigure}[b]{0.48\linewidth}
		\centering
		\hspace{-5mm}
		\includegraphics[scale=0.4]{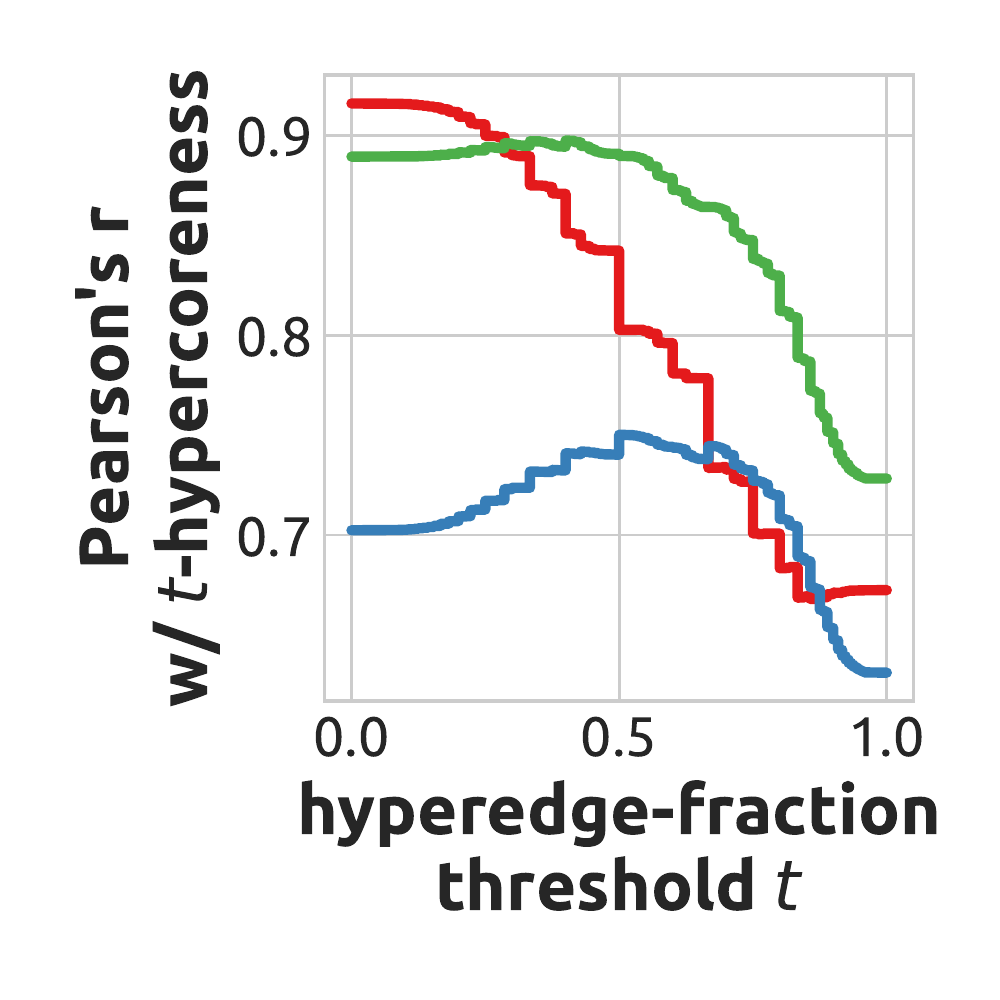}
		\includegraphics[scale=0.4]{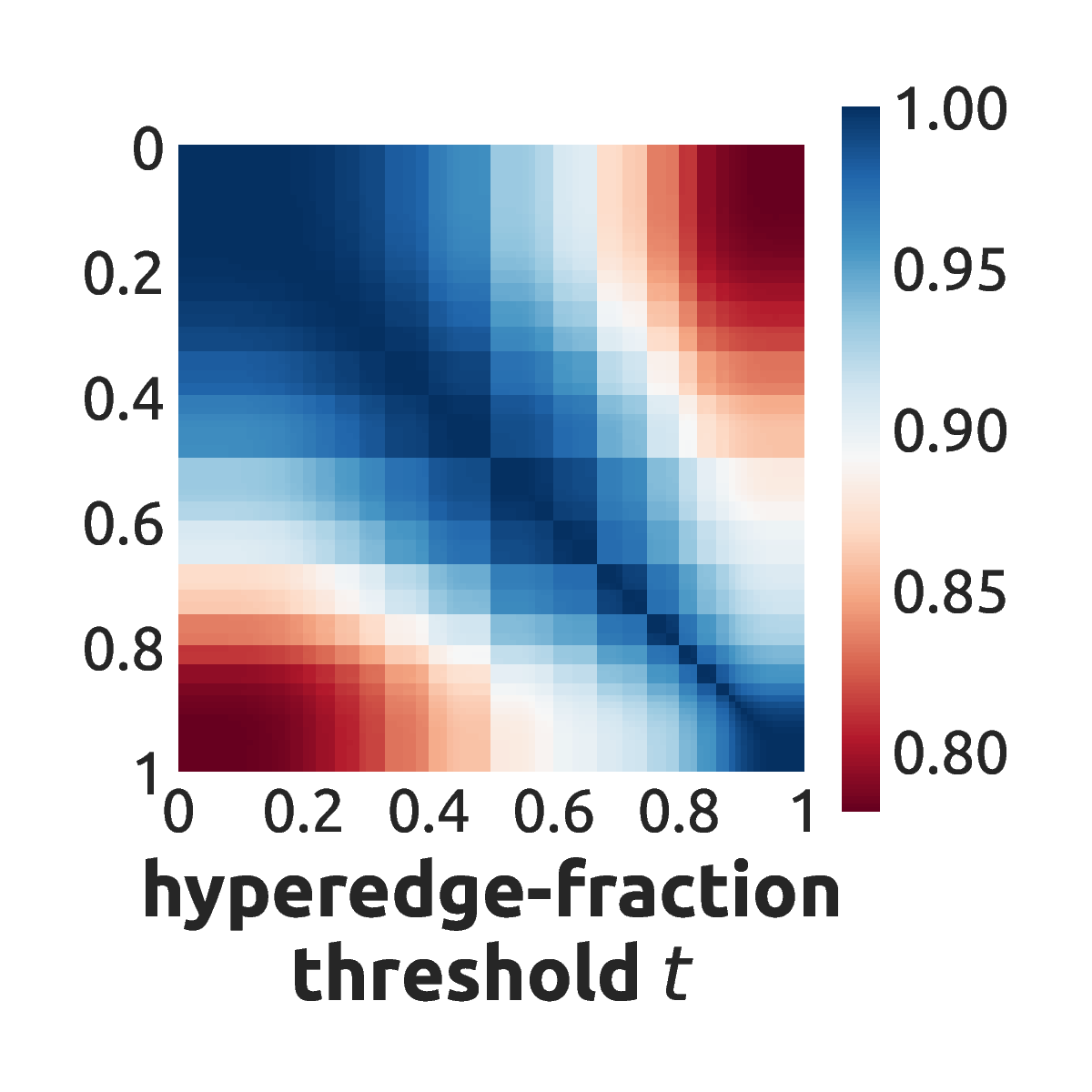}
		\vspace{-3mm}
		\caption{coauth-Geology}
	\end{subfigure}
	\begin{subfigure}[b]{0.48\linewidth}
		\centering
		\hspace{-5mm}
		\includegraphics[scale=0.4]{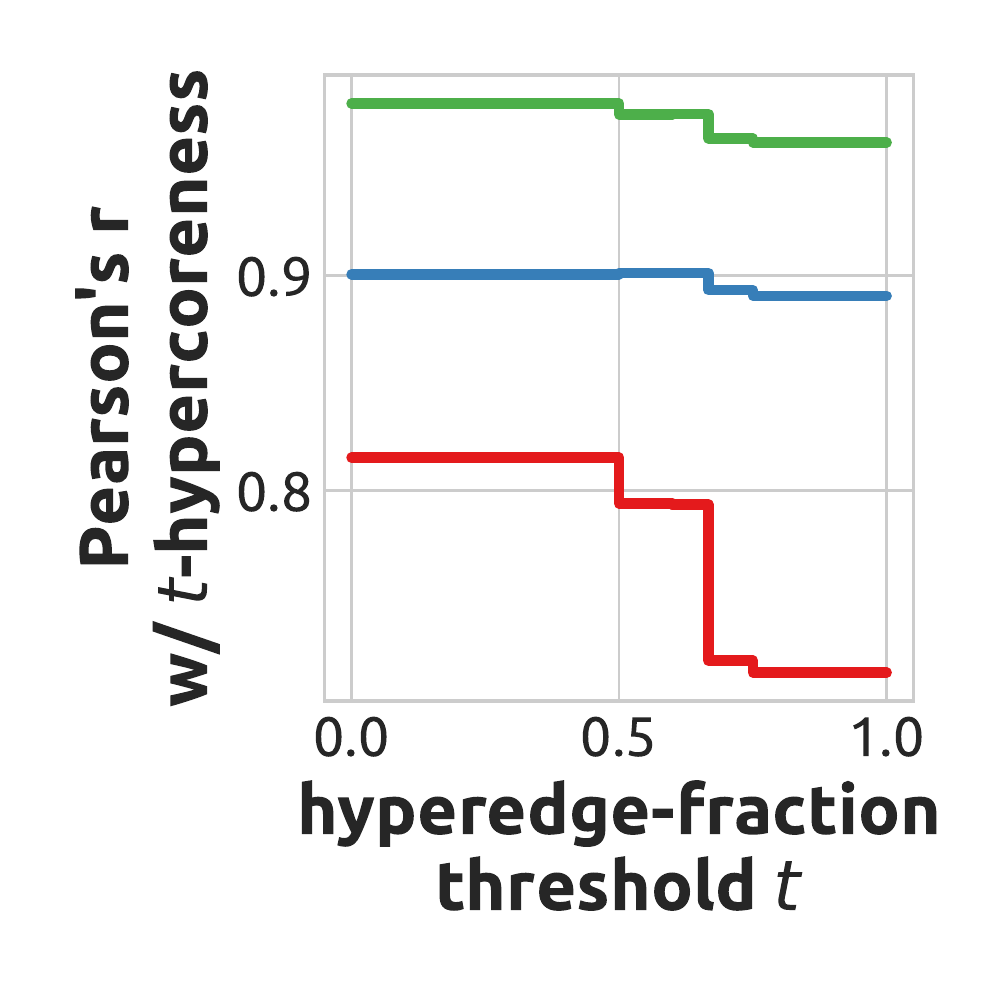}
		\includegraphics[scale=0.4]{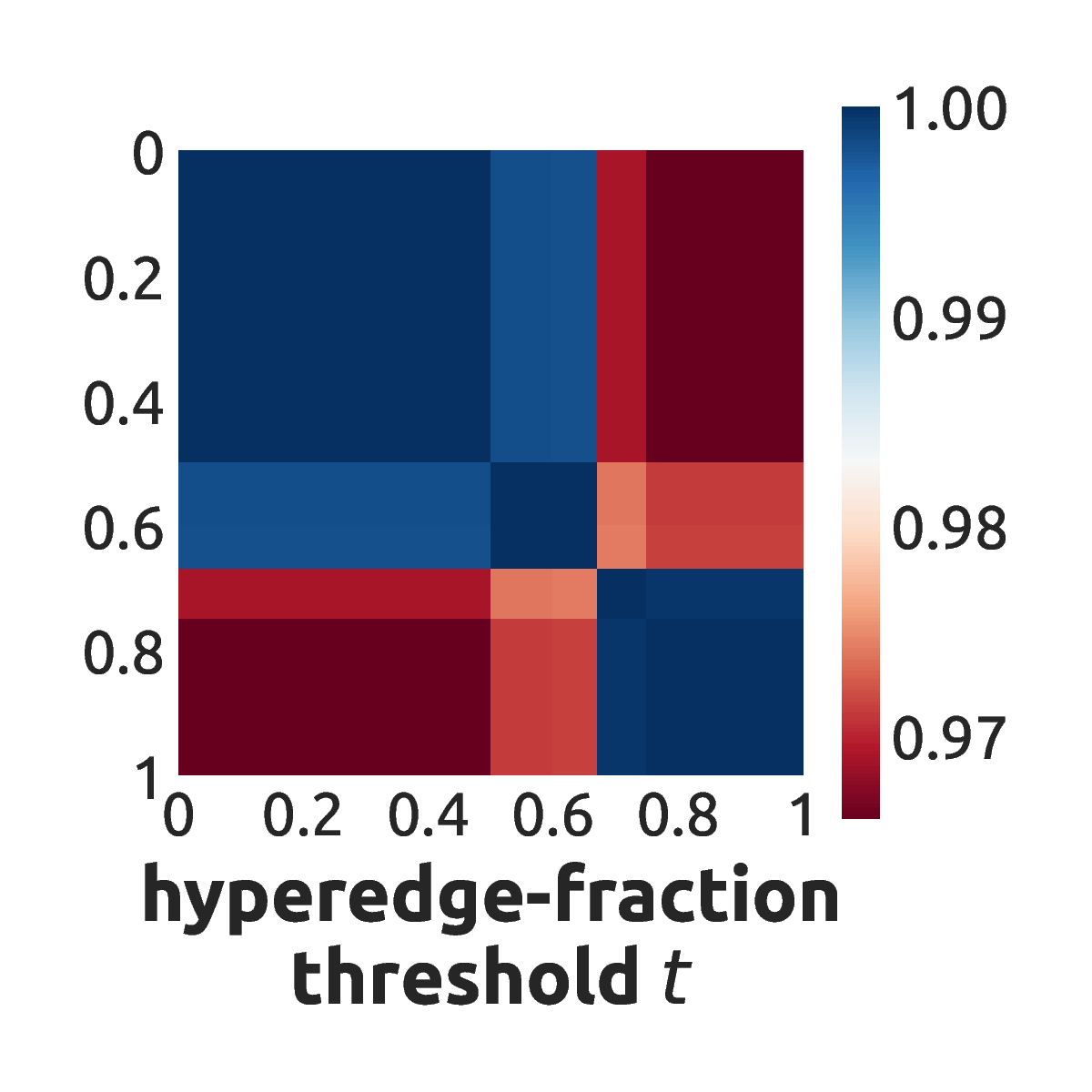}   
		\vspace{-3mm}
		\caption{contact-primary}
	\end{subfigure}
	\begin{subfigure}[b]{0.48\linewidth}
		\centering
		\hspace{-5mm}
		\includegraphics[scale=0.4]{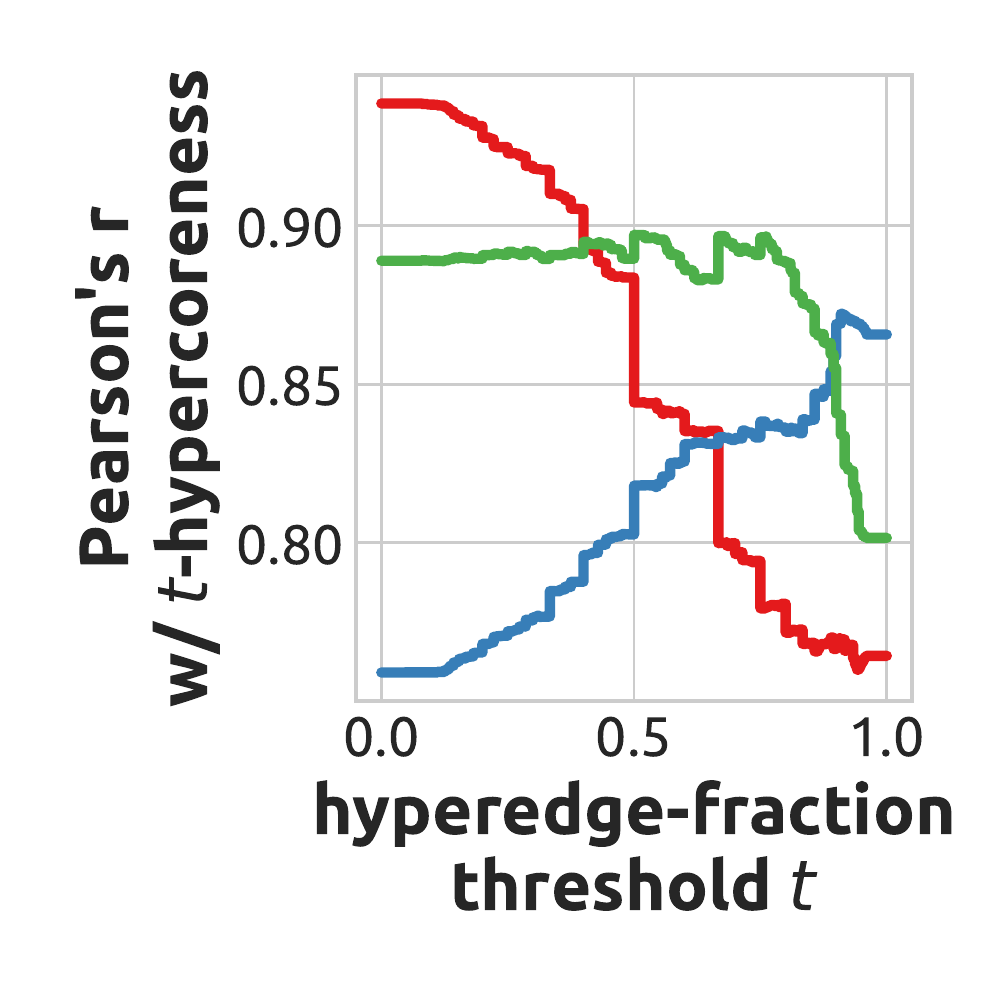}
		\includegraphics[scale=0.4]{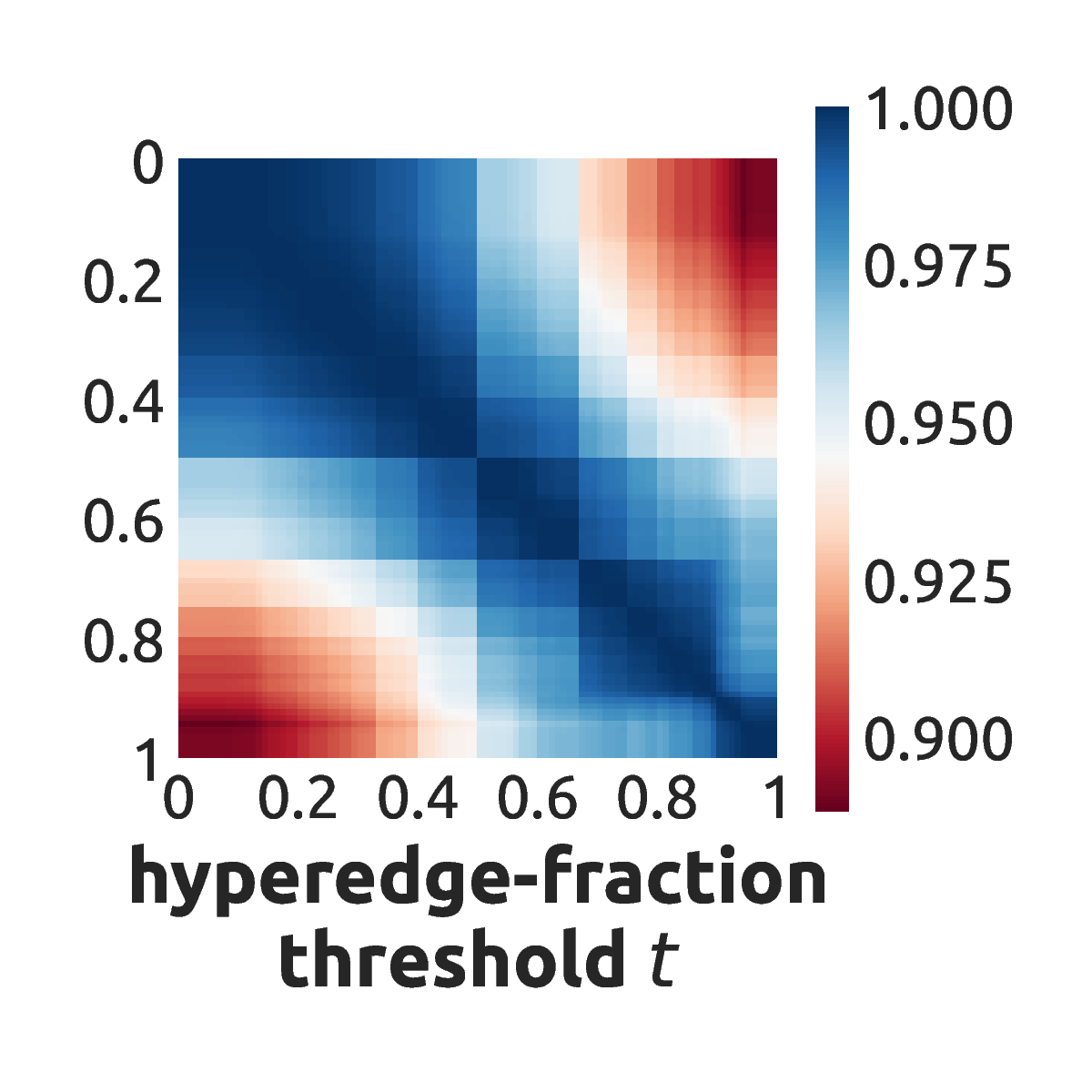}             
		\vspace{-3mm}
		\caption{email-Eu}
	\end{subfigure}
	\begin{subfigure}[b]{0.48\linewidth}
		\centering
		\hspace{-5mm}
		\includegraphics[scale=0.4]{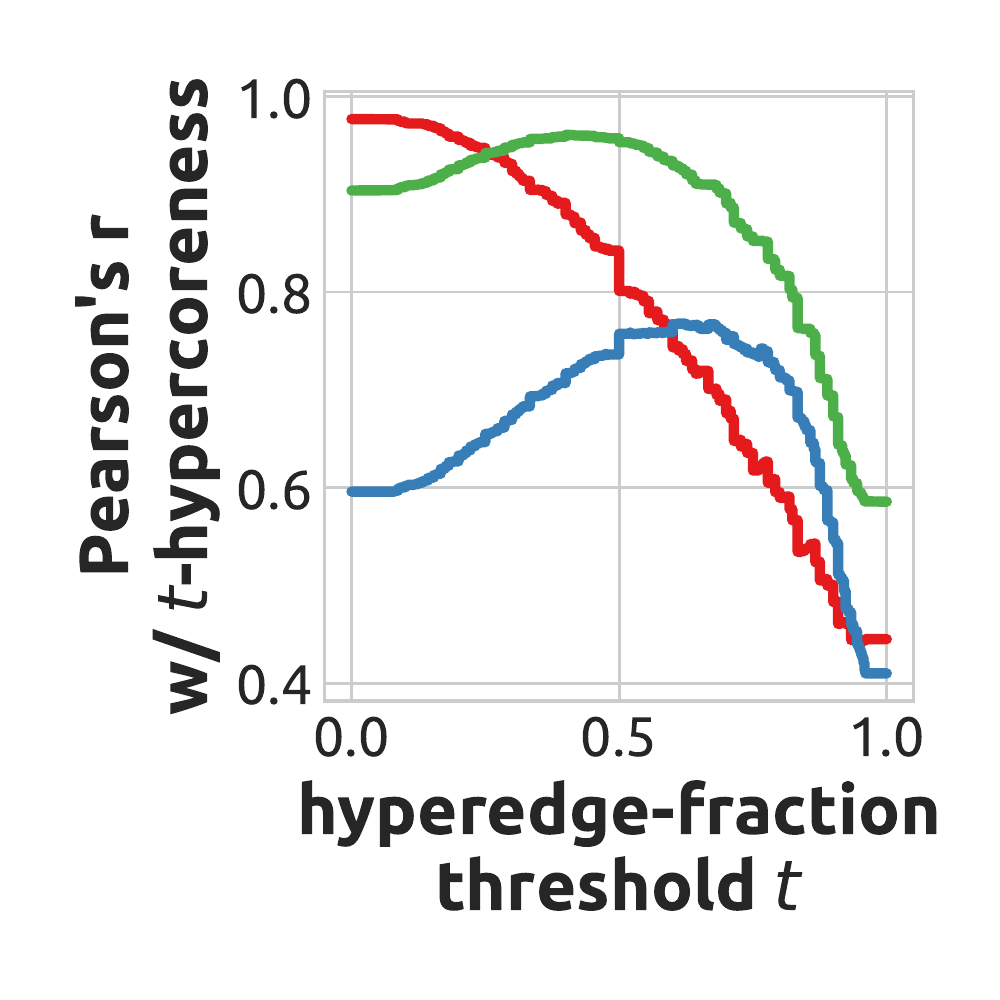}
		\includegraphics[scale=0.4]{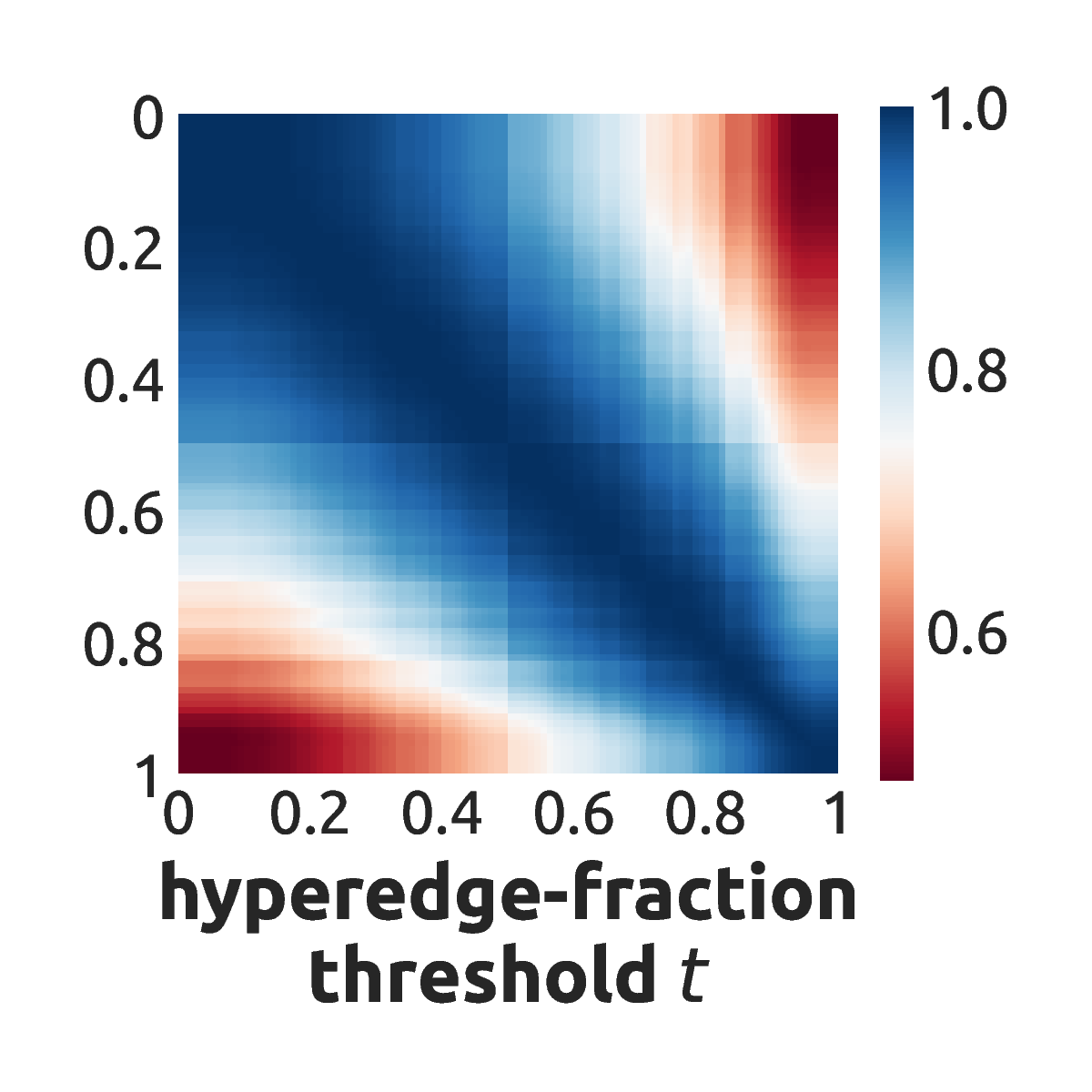}             
		\vspace{-3mm}		
		\caption{NDC-substances}
	\end{subfigure}
	\begin{subfigure}[b]{0.48\linewidth}
		\centering
		\hspace{-5mm}
		\includegraphics[scale=0.4]{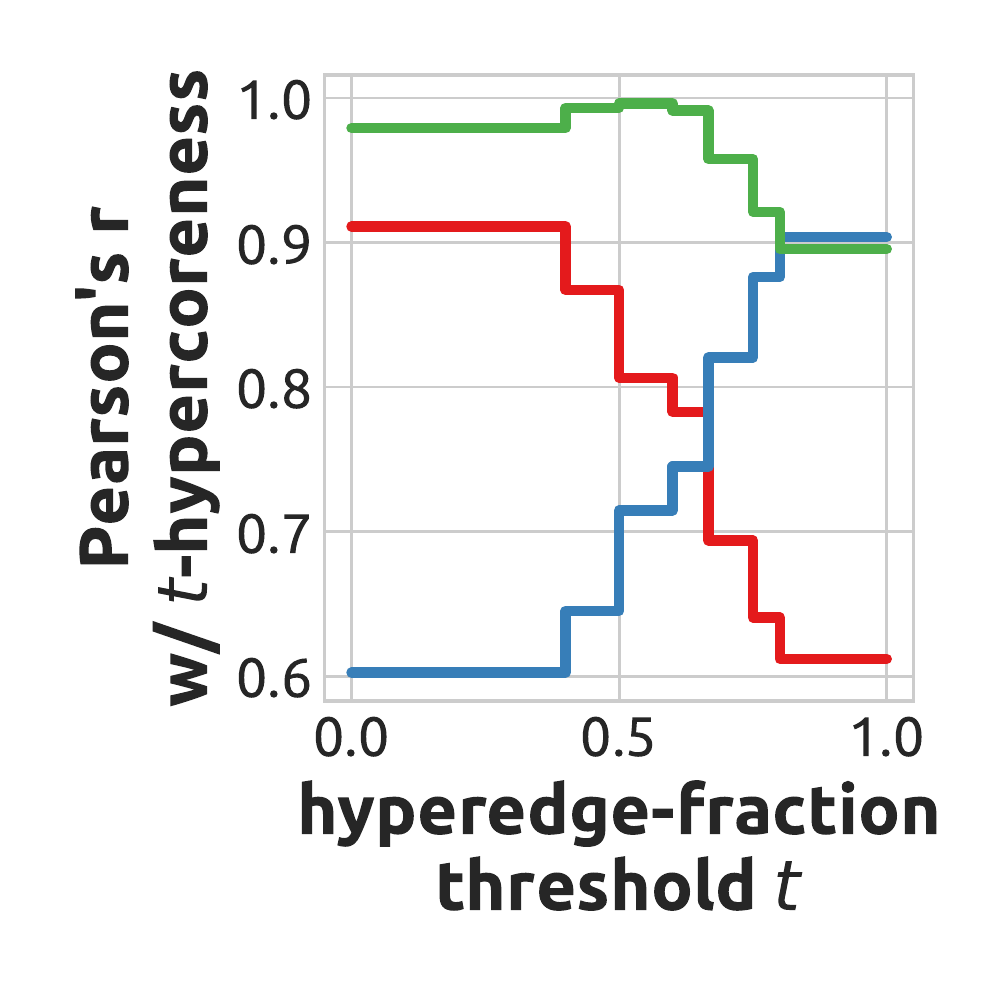}
		\includegraphics[scale=0.4]{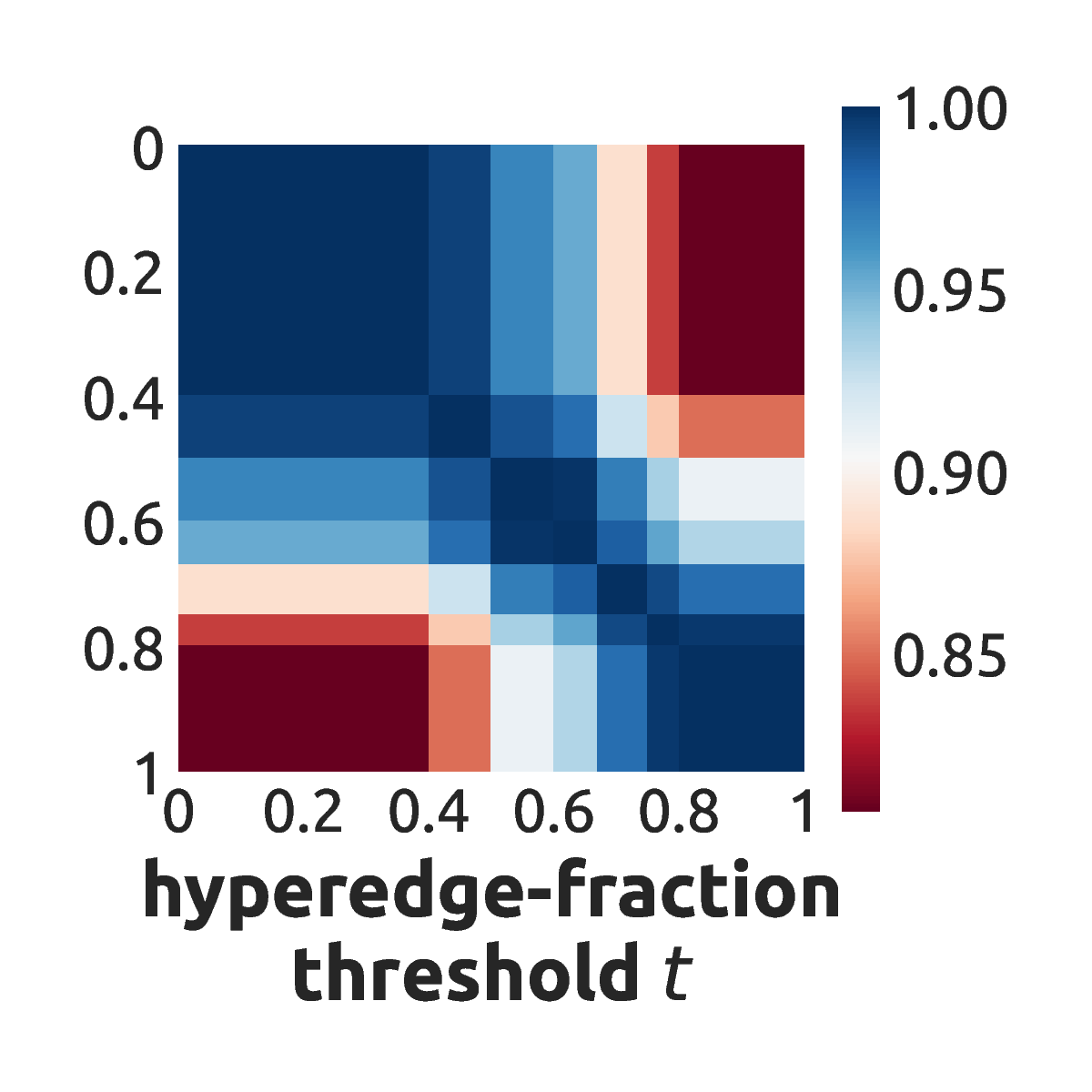}
		\vspace{-3mm}
		\caption{tags-math}
	\end{subfigure}    
	\begin{subfigure}[b]{0.48\linewidth}
		\centering
		\hspace{-5mm}
		\includegraphics[scale=0.4]{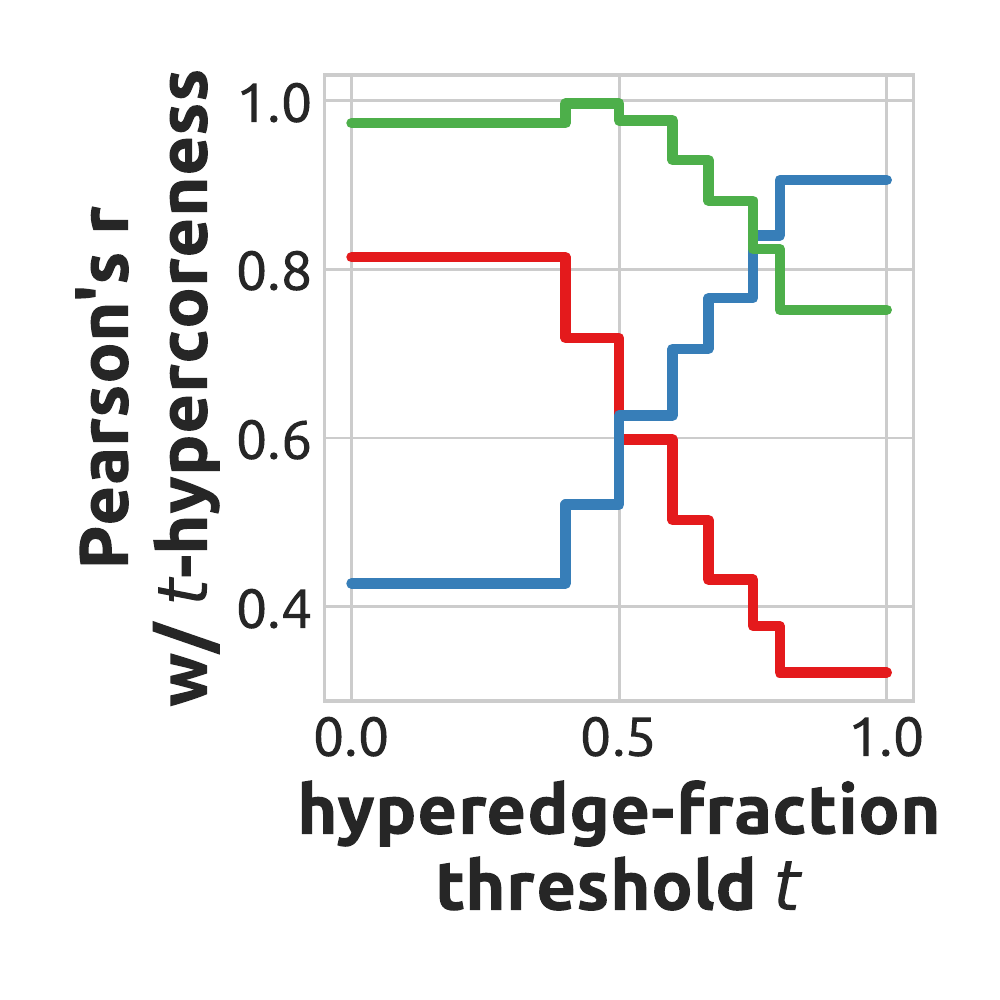}
		\includegraphics[scale=0.4]{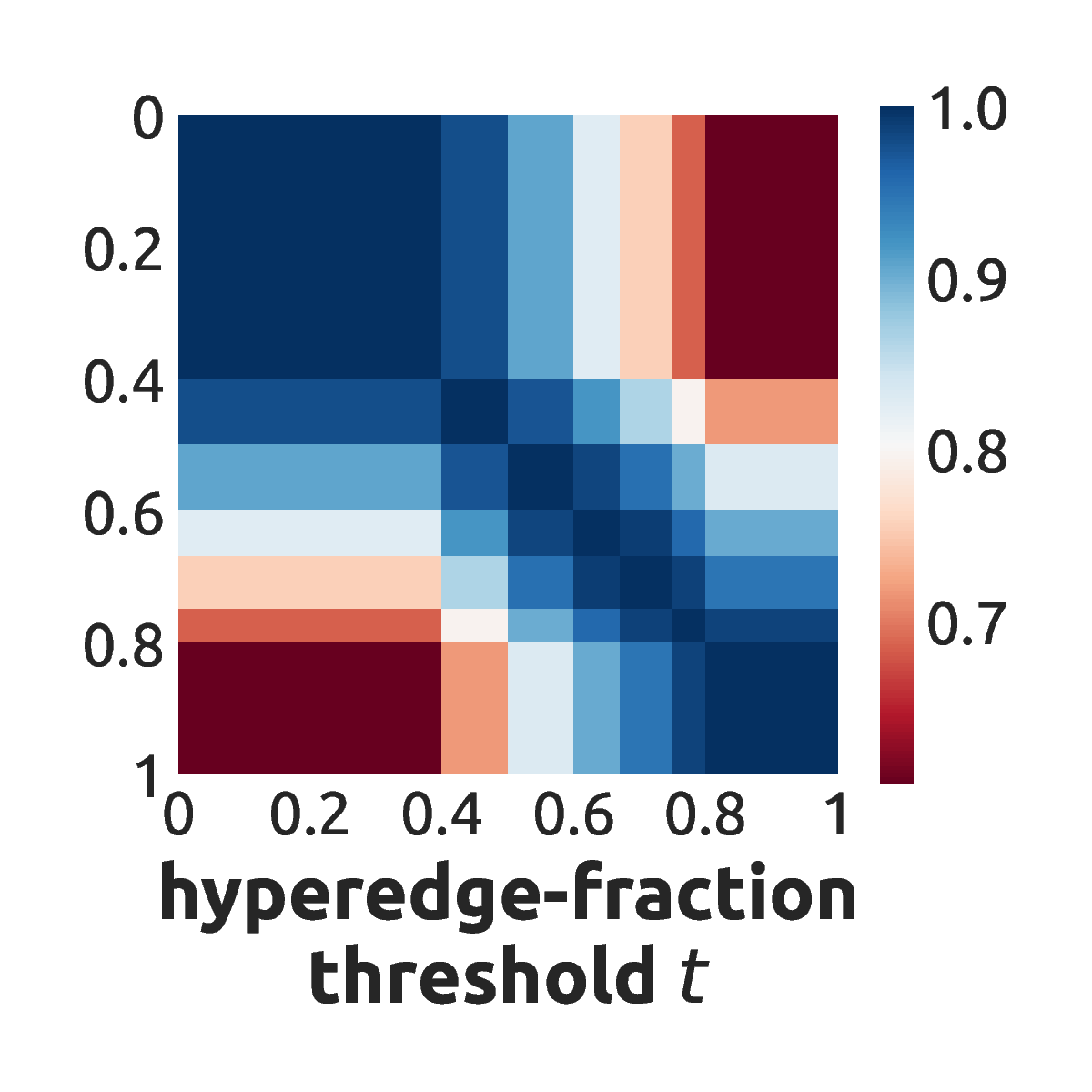}
		\vspace{-3mm}
		\caption{tags-SO}
	\end{subfigure}    
	\begin{subfigure}[b]{0.48\linewidth}
		\centering
		\hspace{-5mm}
		\includegraphics[scale=0.4]{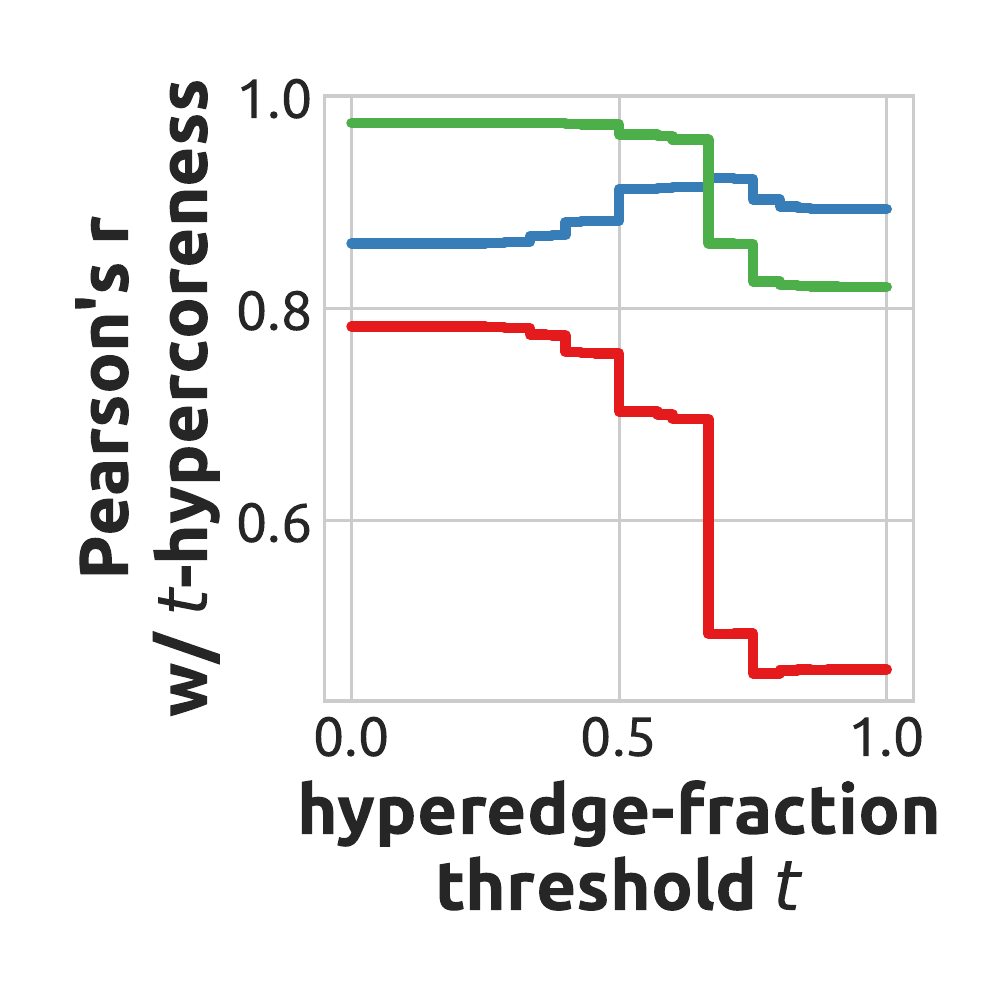}
		\includegraphics[scale=0.4]{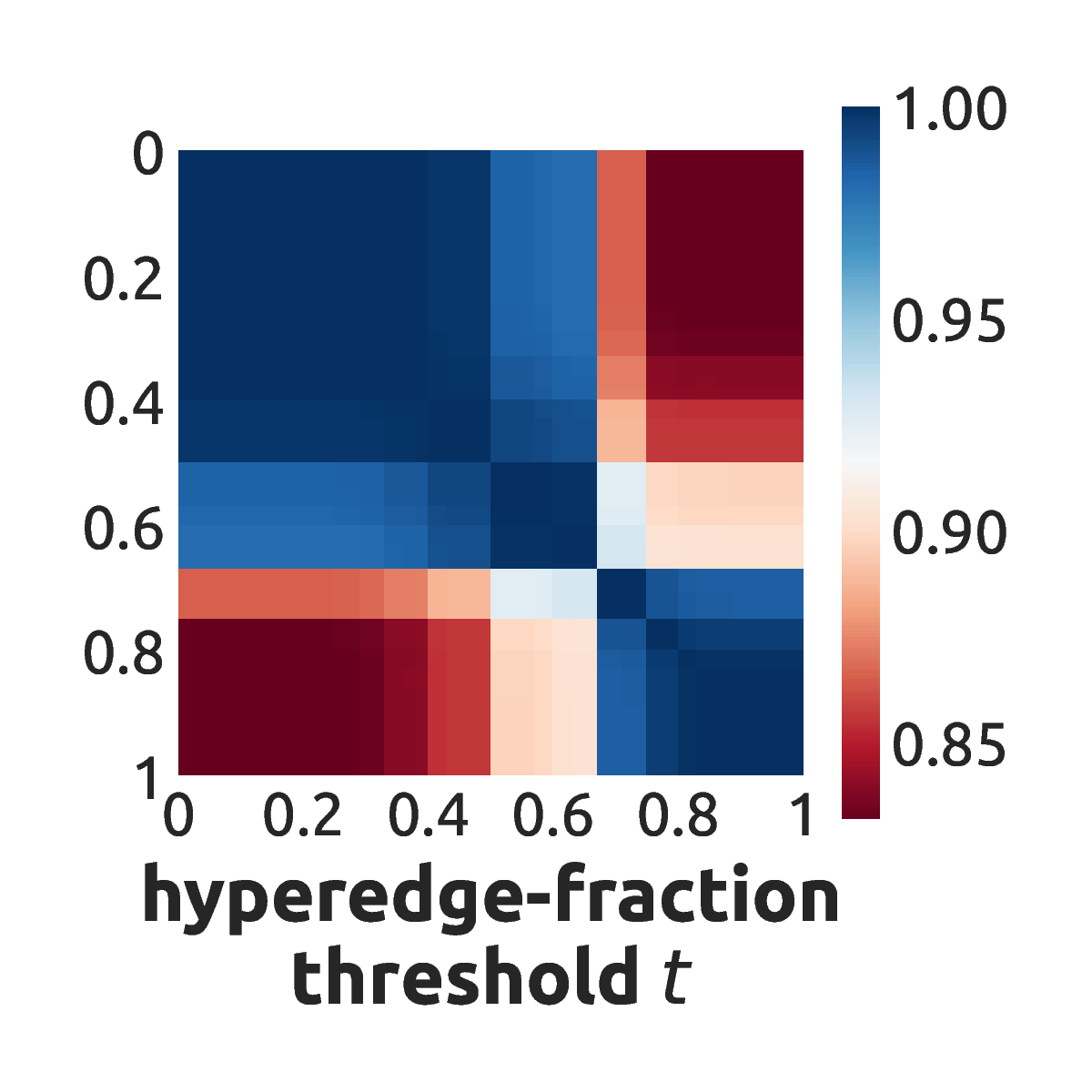}
		\vspace{-3mm}
		\caption{threads-ubuntu}
	\end{subfigure}
	\begin{subfigure}[b]{0.48\linewidth}
		\centering
		\hspace{-5mm}
		\includegraphics[scale=0.4]{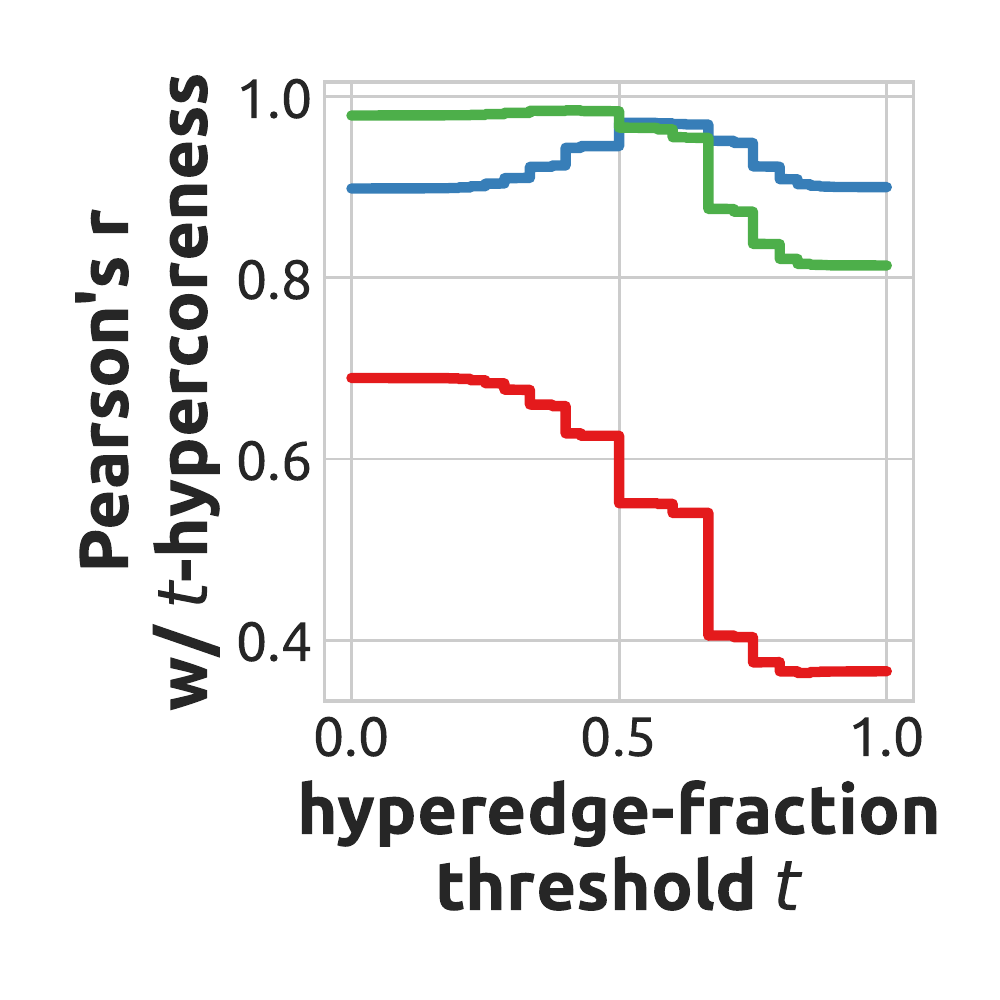}
		\includegraphics[scale=0.4]{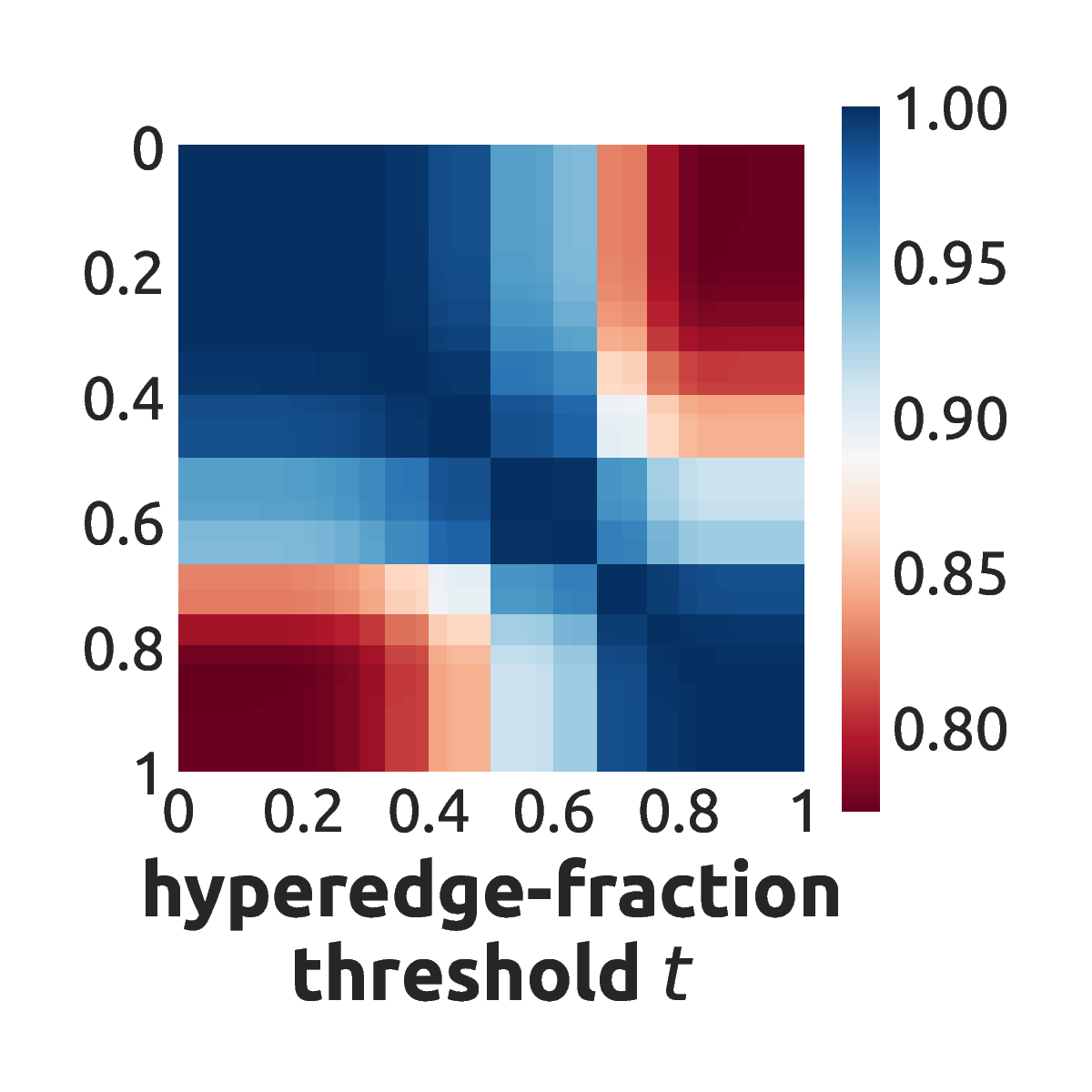}
		\vspace{-3mm}
		\caption{threads-SO}
	\end{subfigure}
	\vspace{-1mm}
	\caption{Supplementary results for Fig.~\ref{fig:rank_cor}.}
	\label{fig:rank_cor_sr}
\end{figure*}

\begin{figure*}[t]
	\centering
	\begin{subfigure}[b]{0.32\textwidth}
		\centering
		\includegraphics[scale=0.38]{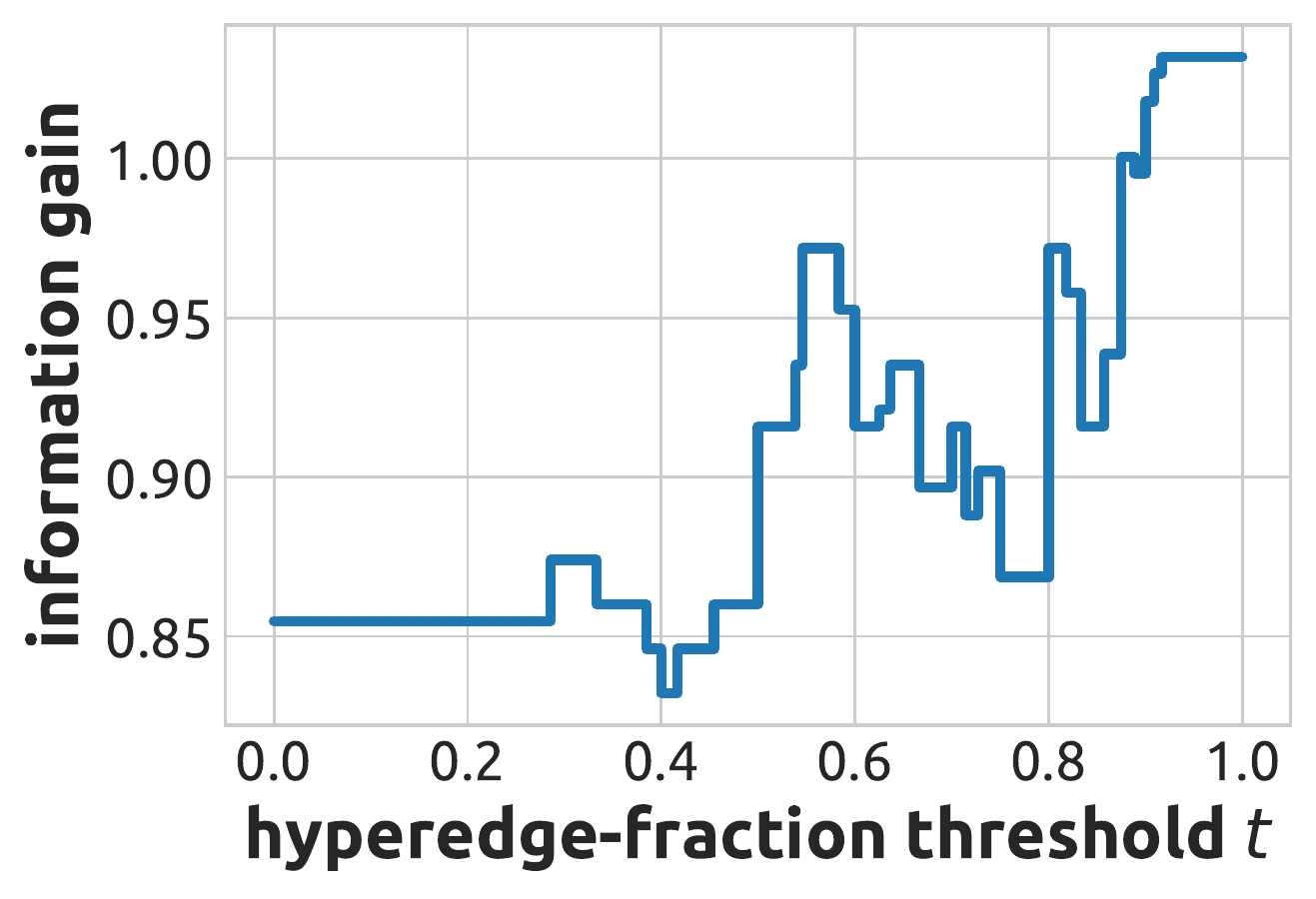}\\
		\caption{\Nu}
	\end{subfigure}
	\begin{subfigure}[b]{0.32\textwidth}
		\centering
		\includegraphics[scale=0.38]{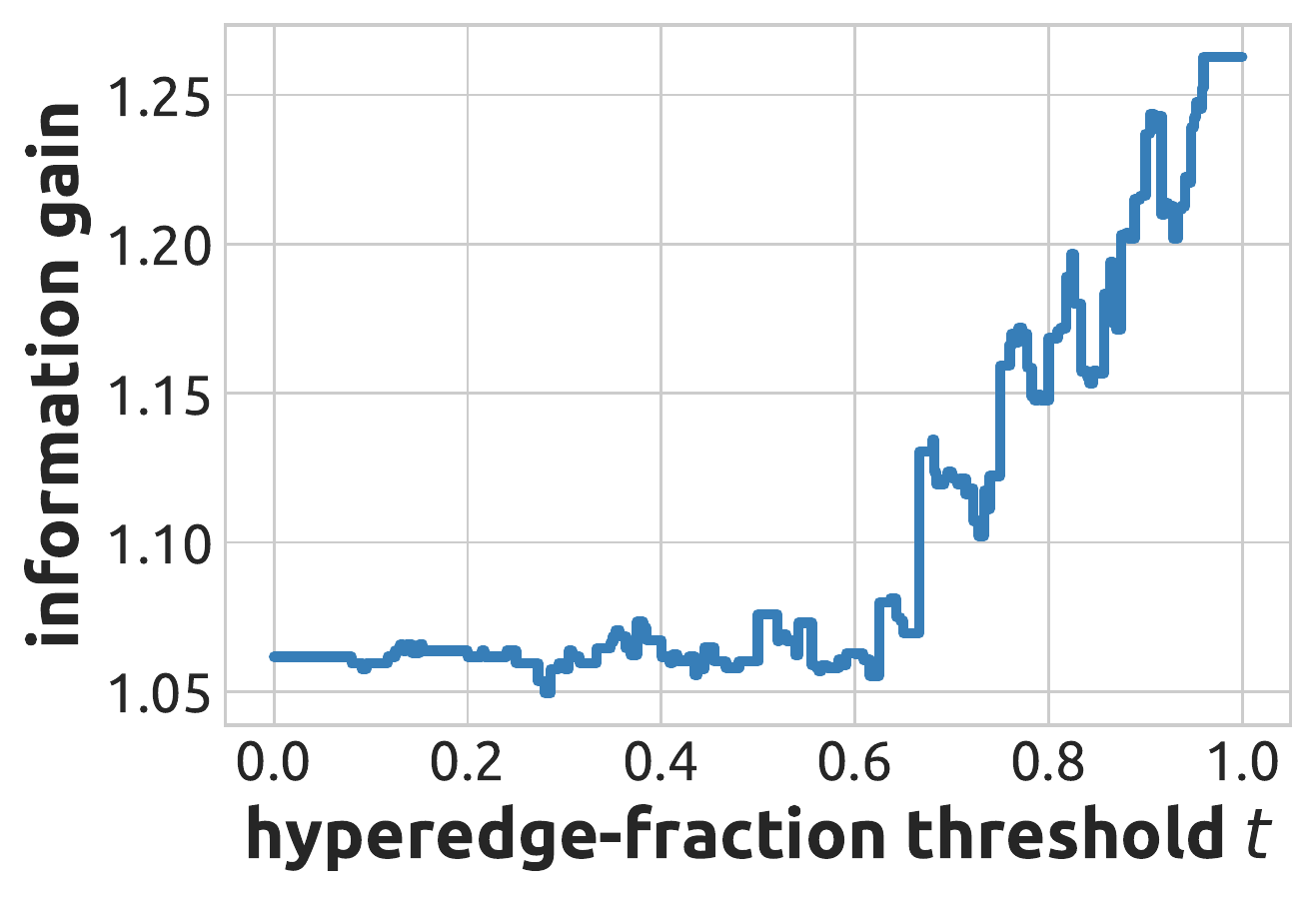}\\
		\caption{\Ni}
	\end{subfigure}\\
	\begin{subfigure}[b]{0.32\textwidth}
		\centering
		\includegraphics[scale=0.38]{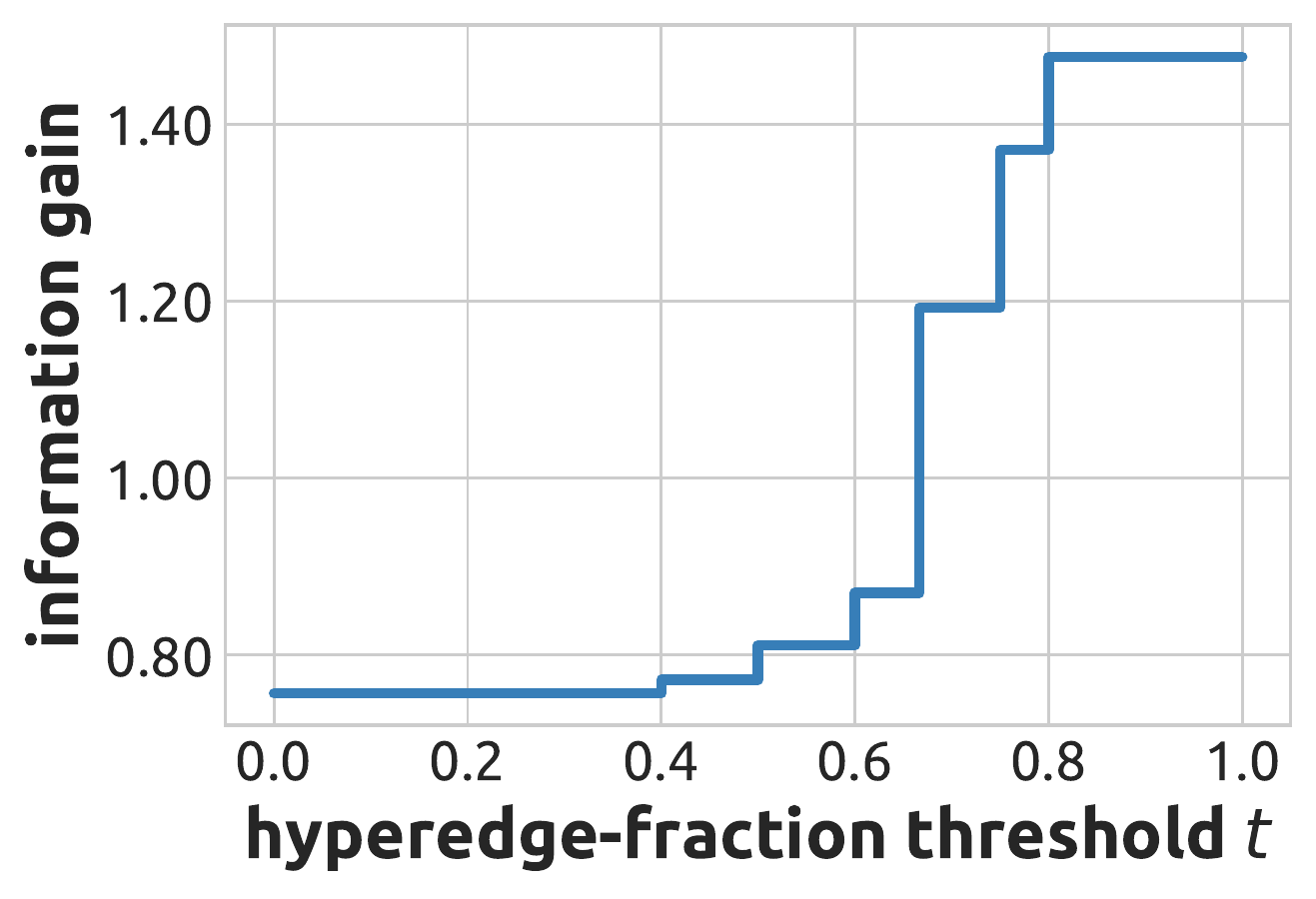}\\
		\caption{\No}
	\end{subfigure}
	\begin{subfigure}[b]{0.32\textwidth}
		\centering
		\includegraphics[scale=0.38]{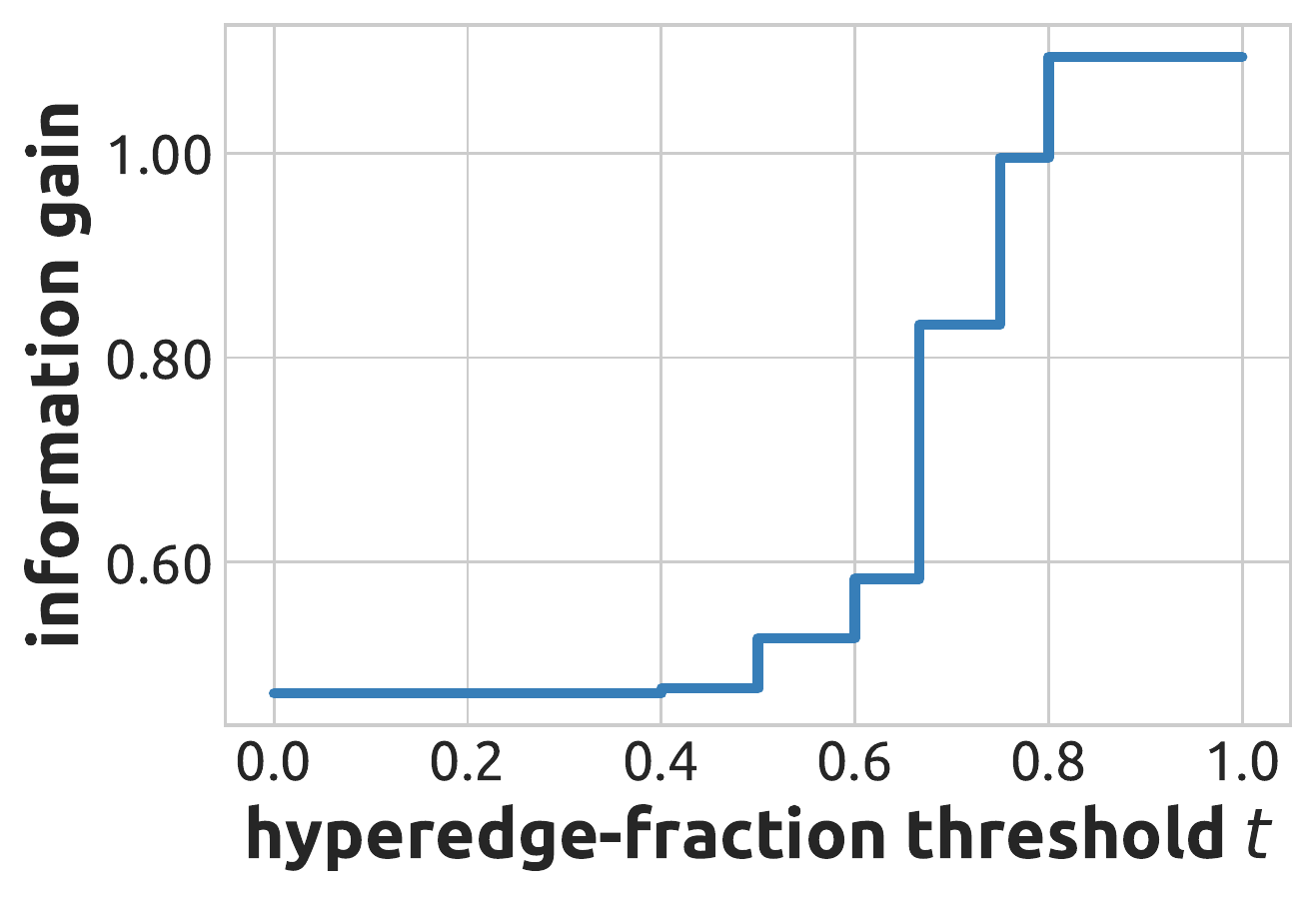}\\
		\caption{\Np}
	\end{subfigure}
	\begin{subfigure}[b]{0.32\textwidth}
		\centering
		\includegraphics[scale=0.38]{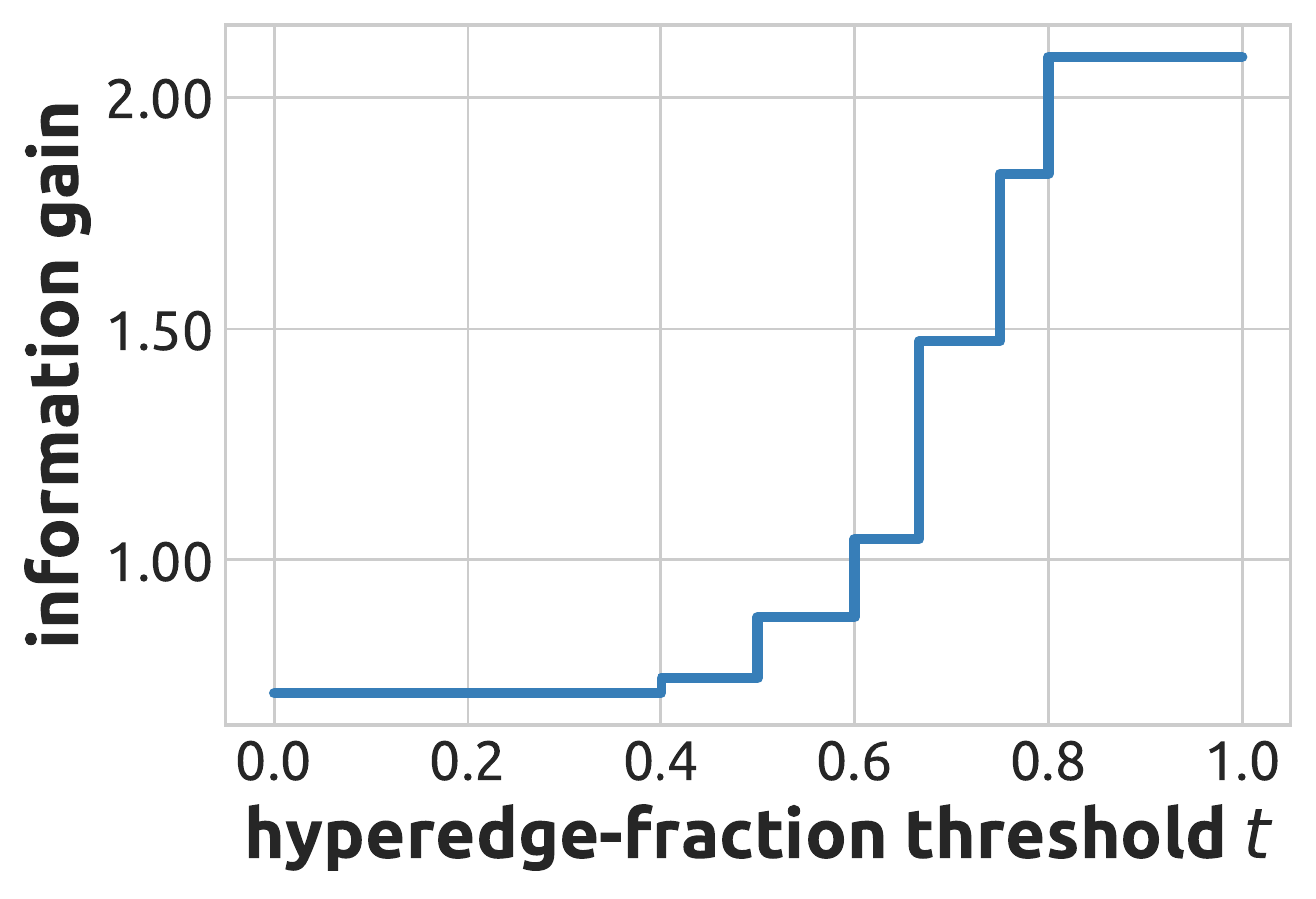}\\
		\caption{\Na}
	\end{subfigure}
	\begin{subfigure}[b]{0.32\textwidth}
		\centering
		\includegraphics[scale=0.38]{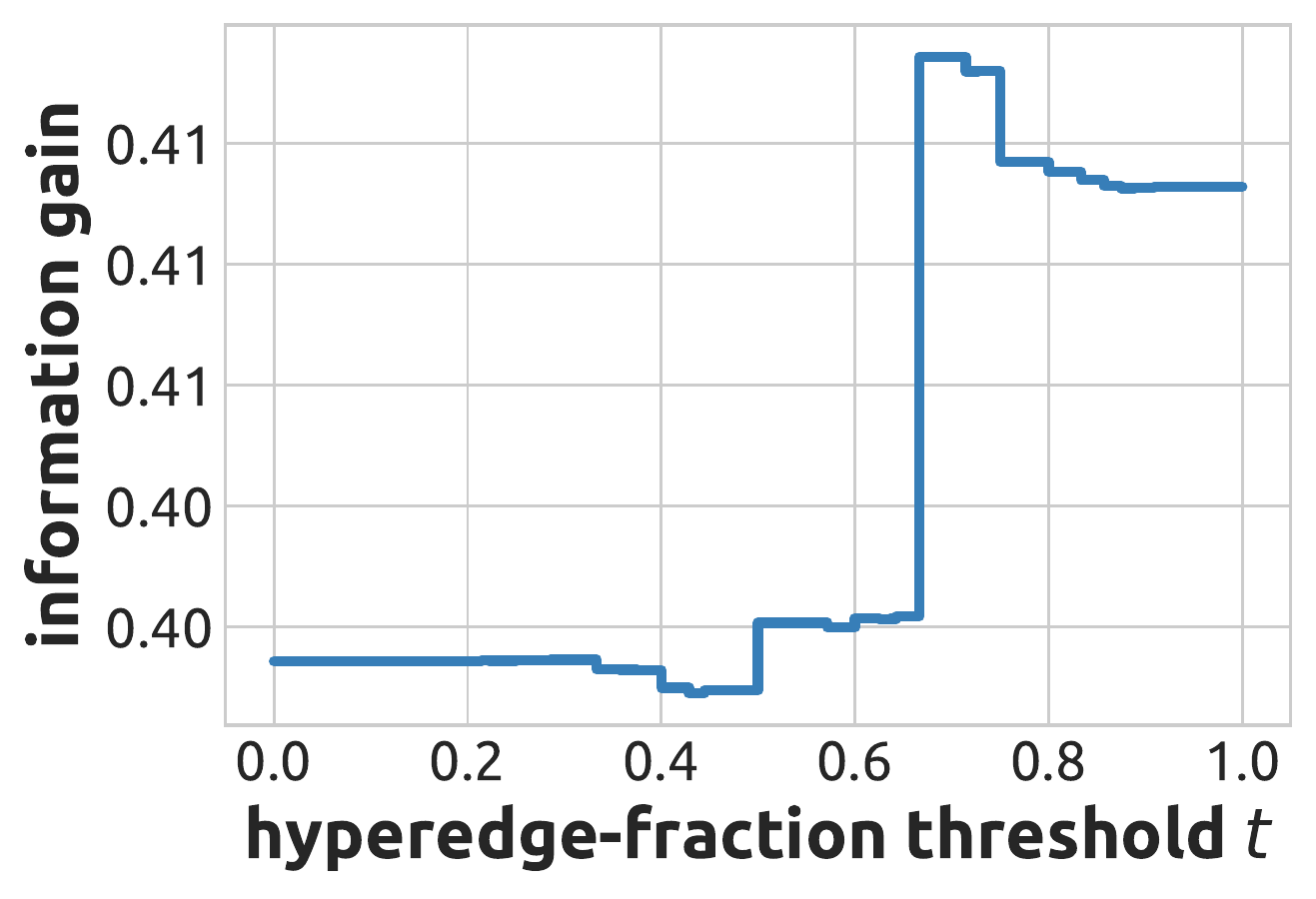}\\
		\caption{\Ns}
	\end{subfigure}
	\begin{subfigure}[b]{0.32\textwidth}
		\centering
		\includegraphics[scale=0.38]{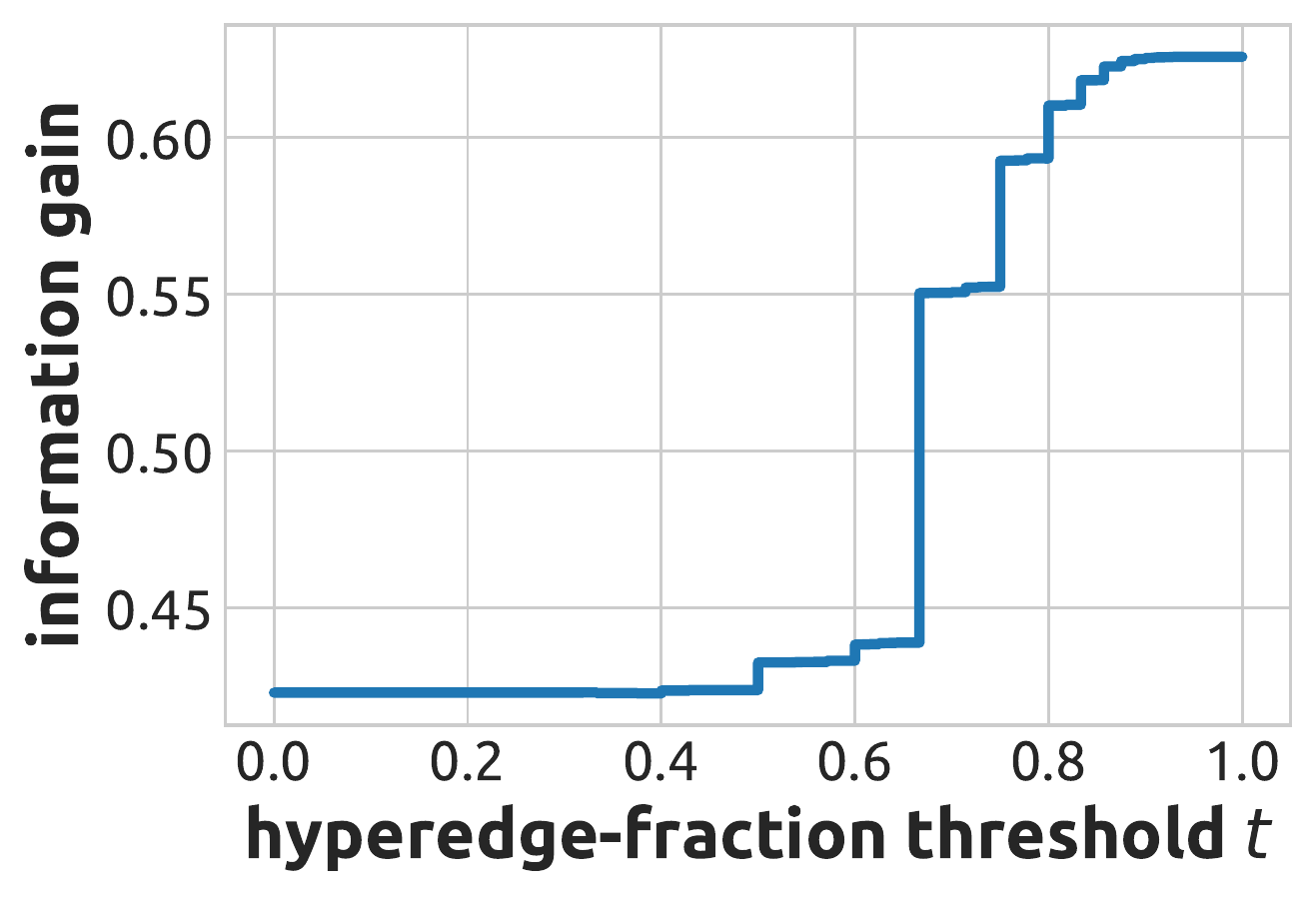}\\
		\caption{\Nd}
	\end{subfigure}
	\begin{subfigure}[b]{0.32\textwidth}
		\centering
		\includegraphics[scale=0.38]{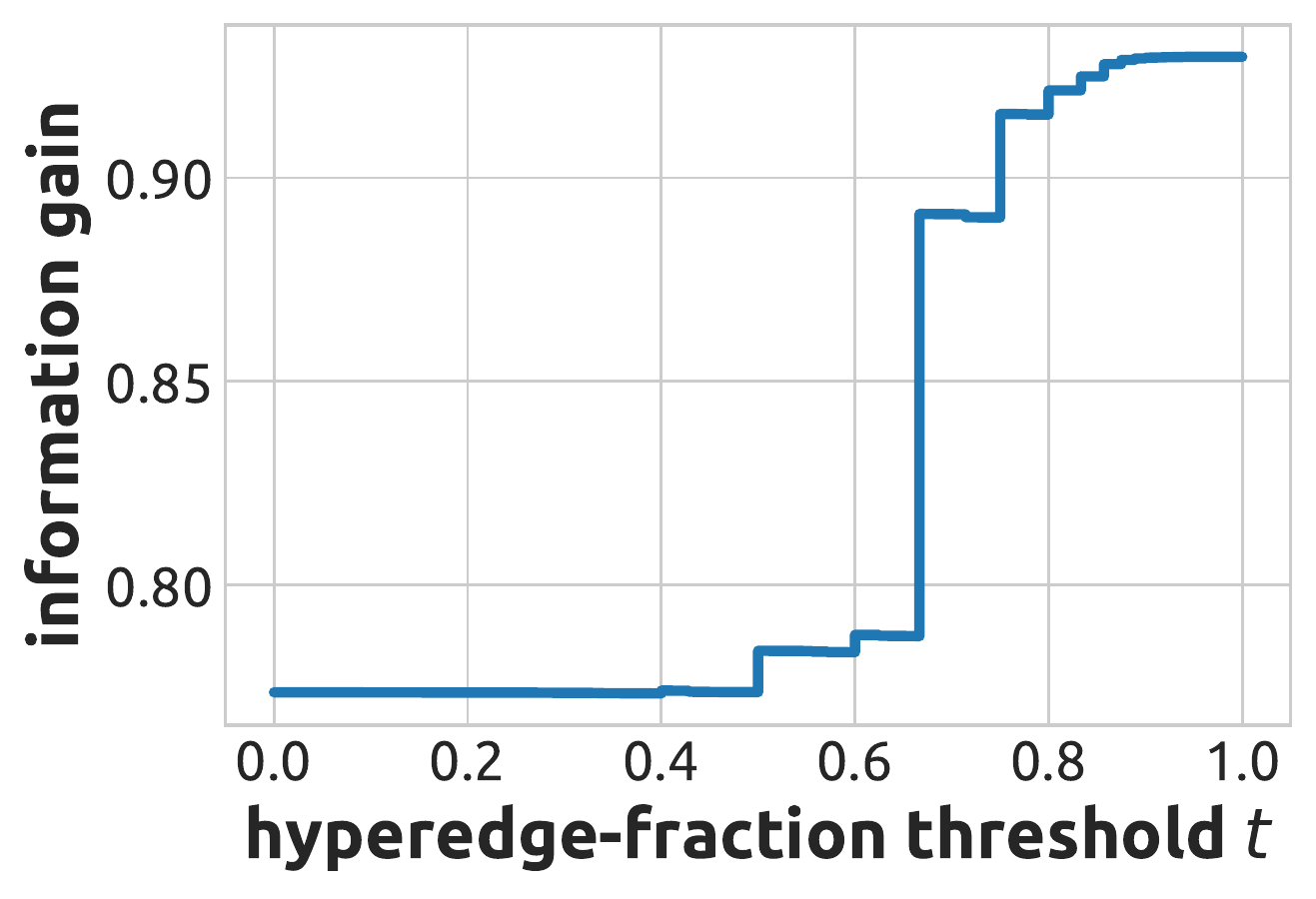}\\
		\caption{\Nf}
	\end{subfigure}
	\vspace{-0.3mm}
	\caption{Supplementary results for Fig.~\ref{fig:info_gain}.}
	\label{fig:info_gain_sr}
\end{figure*}

\begin{figure*}[b]
	\centering	
	\begin{subfigure}[b]{\textwidth}
		\centering
		\includegraphics[scale=0.5]{fig8_influence_legend.pdf}\\        
	\end{subfigure}    
	\vspace{1mm}
	\begin{subfigure}[b]{0.32\textwidth}
		\centering
		\includegraphics[scale=0.4]{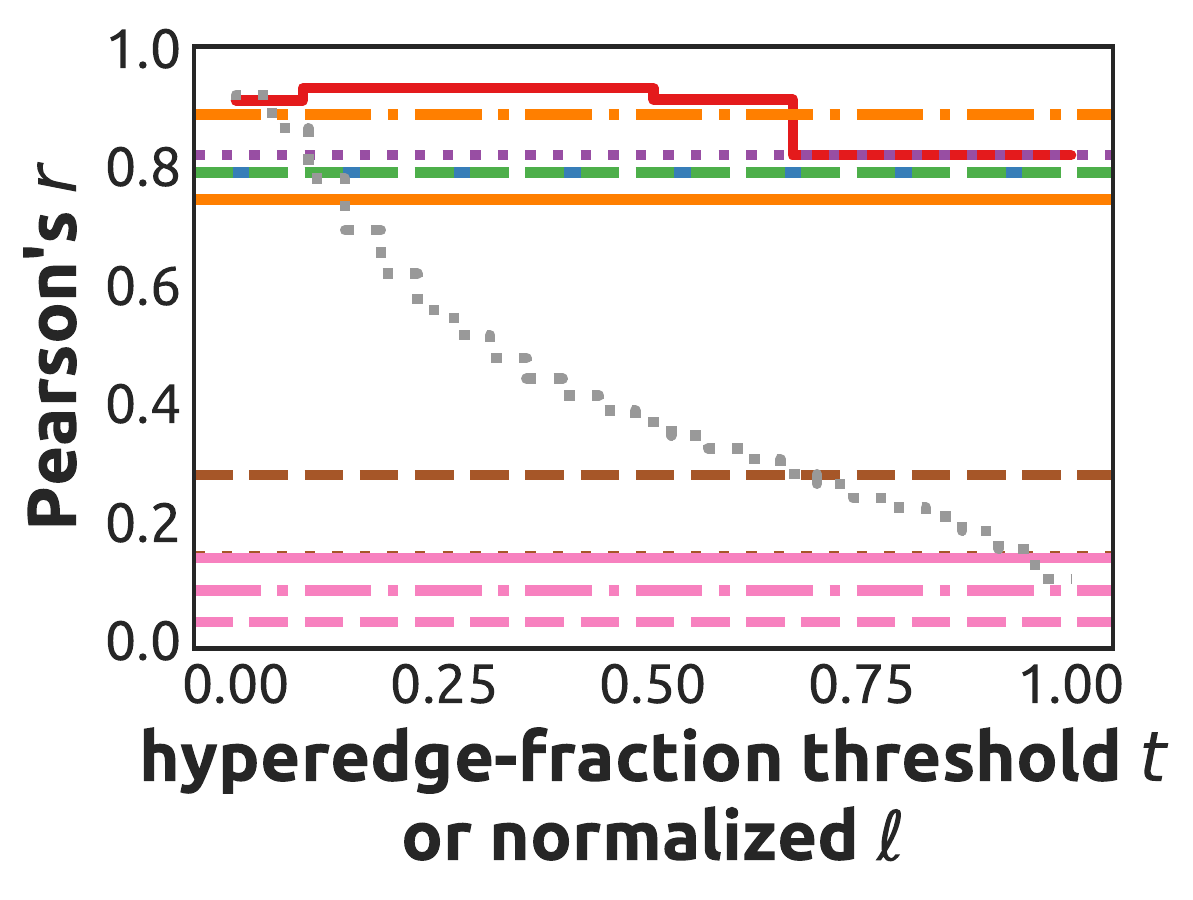}
		\caption{\Nw}
	\end{subfigure}
	\begin{subfigure}[b]{0.32\textwidth}
		\centering
		\includegraphics[scale=0.4]{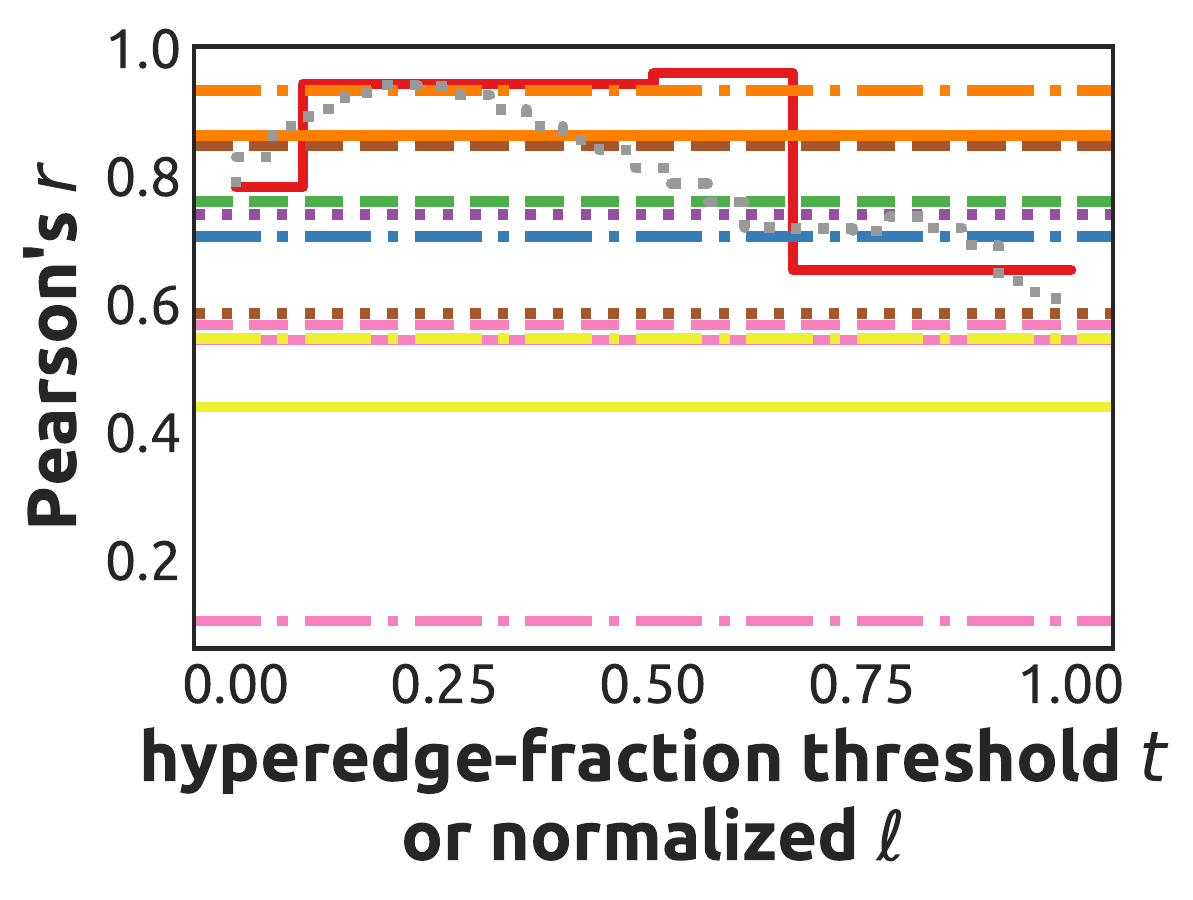}
		\caption{\Nr}
	\end{subfigure}
	\begin{subfigure}[b]{0.32\textwidth}
		\centering
		\includegraphics[scale=0.4]{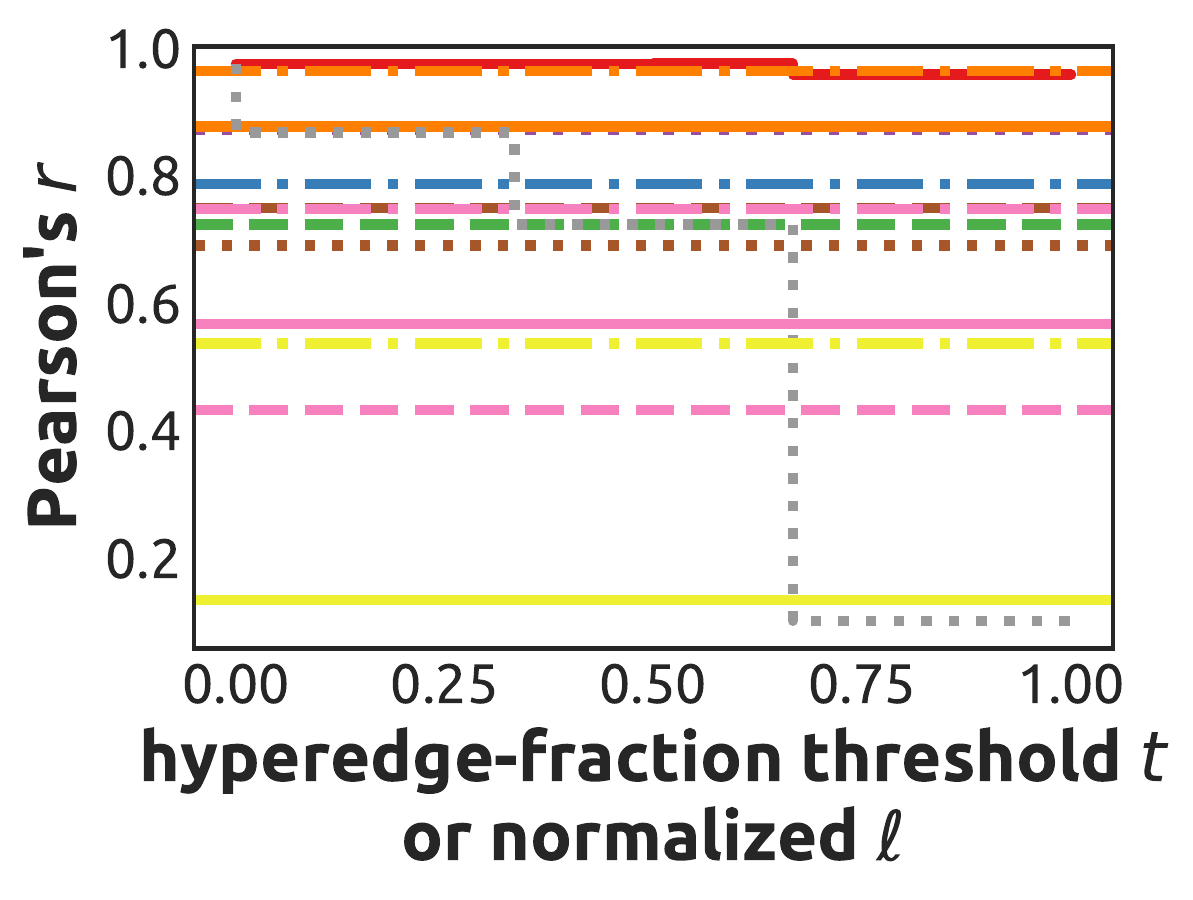}
		\caption{\Ny}
	\end{subfigure}
	\begin{subfigure}[b]{0.32\textwidth}
		\centering
		\includegraphics[scale=0.4]{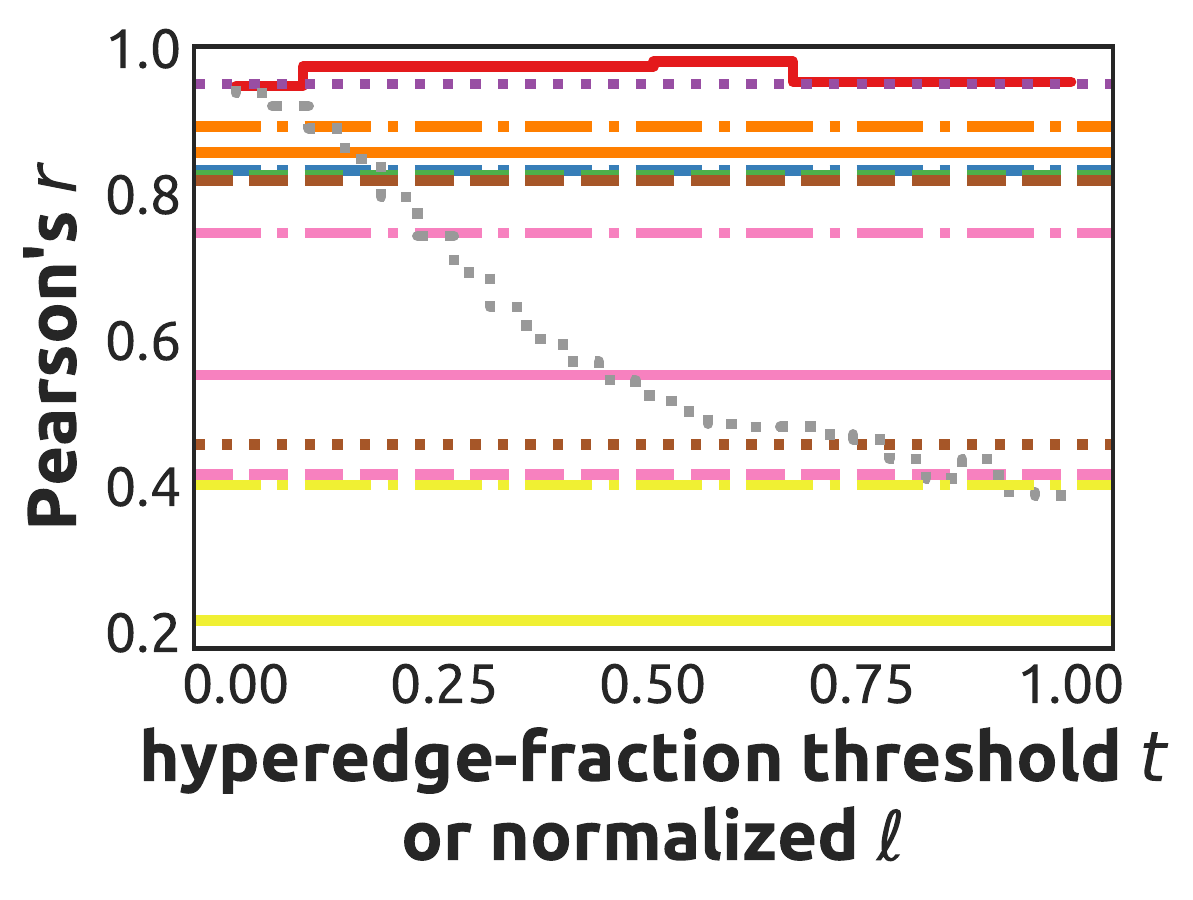}
		\caption{\Ni}
	\end{subfigure}
	\begin{subfigure}[b]{0.32\textwidth}
		\centering
		\includegraphics[scale=0.4]{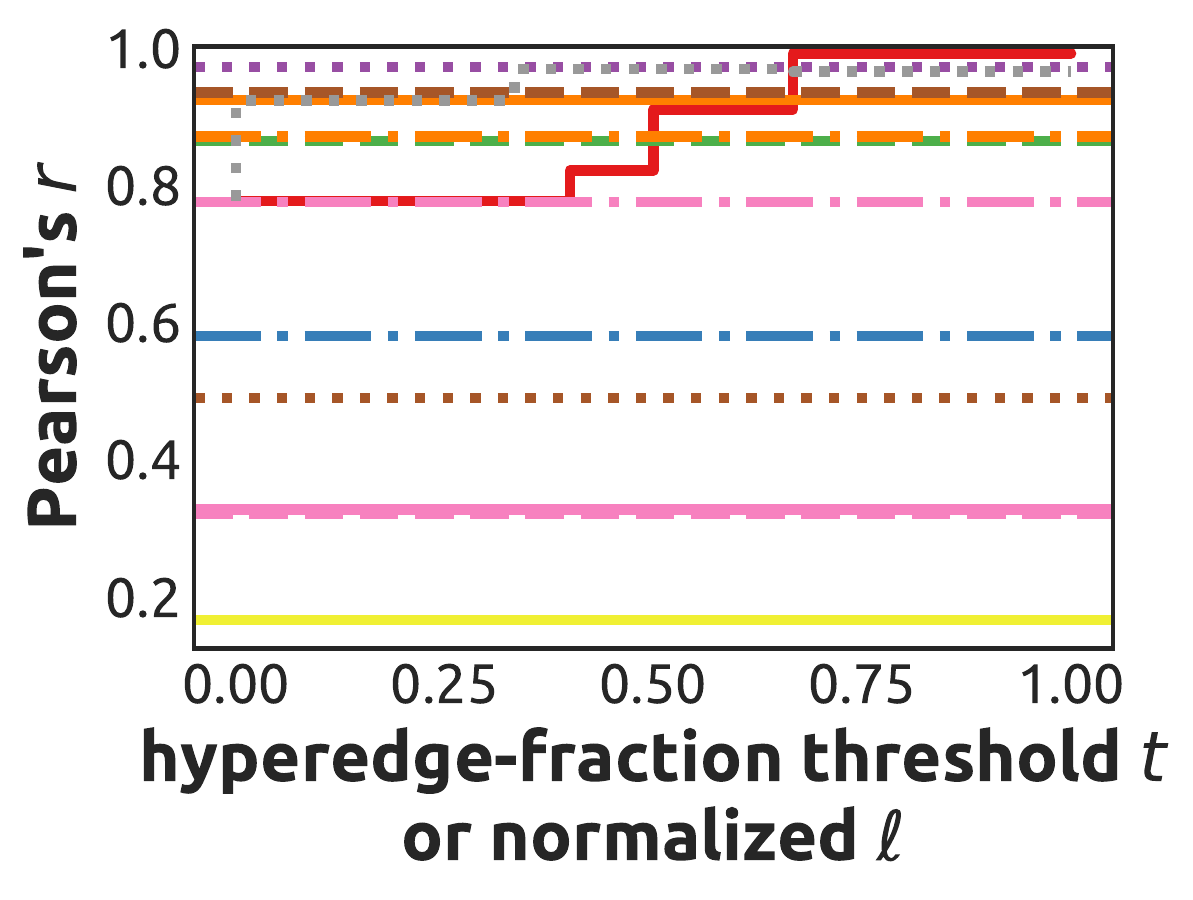}
		\caption{tags-math}
	\end{subfigure}
	\begin{subfigure}[b]{0.32\textwidth}
		\centering
		\includegraphics[scale=0.4]{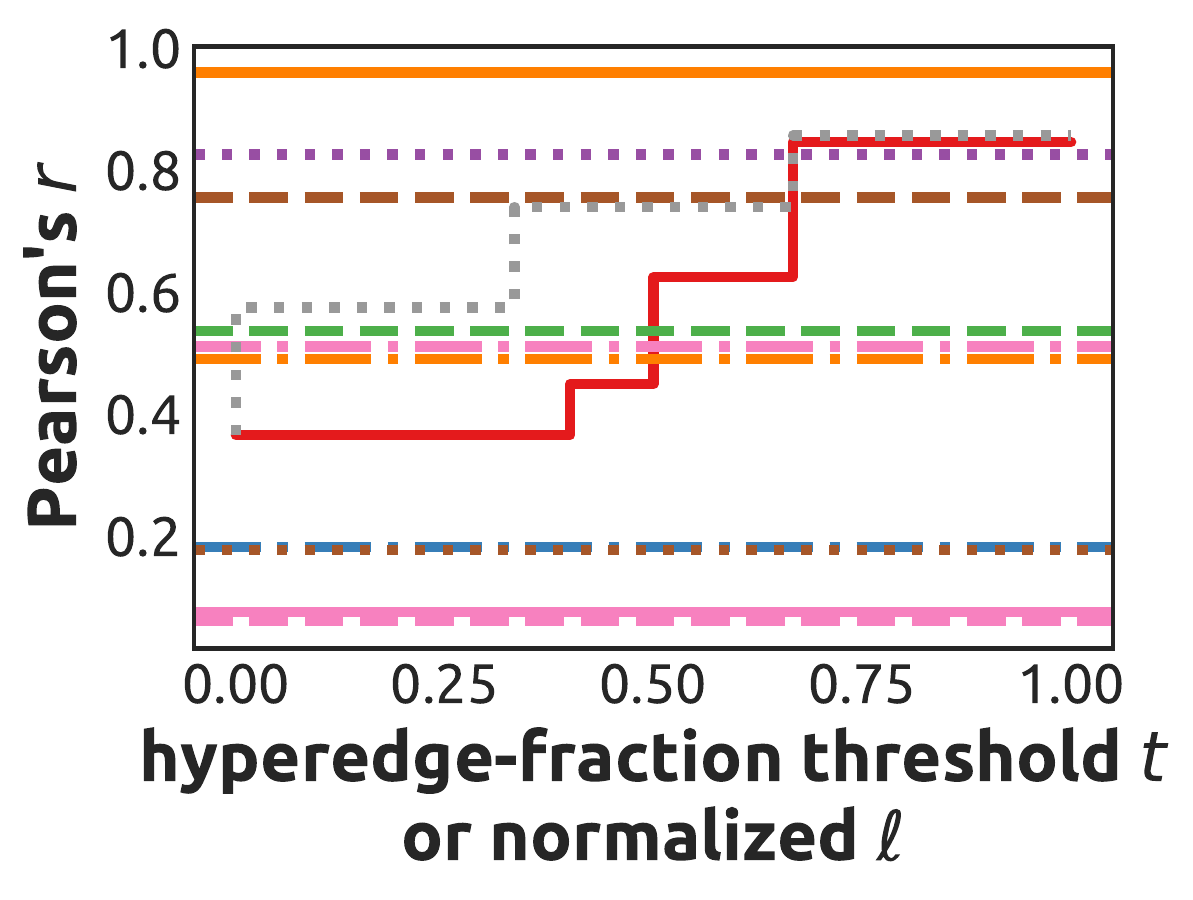}
		\caption{tags-SO}
	\end{subfigure}
	\begin{subfigure}[b]{0.32\textwidth}
		\centering
		\includegraphics[scale=0.4]{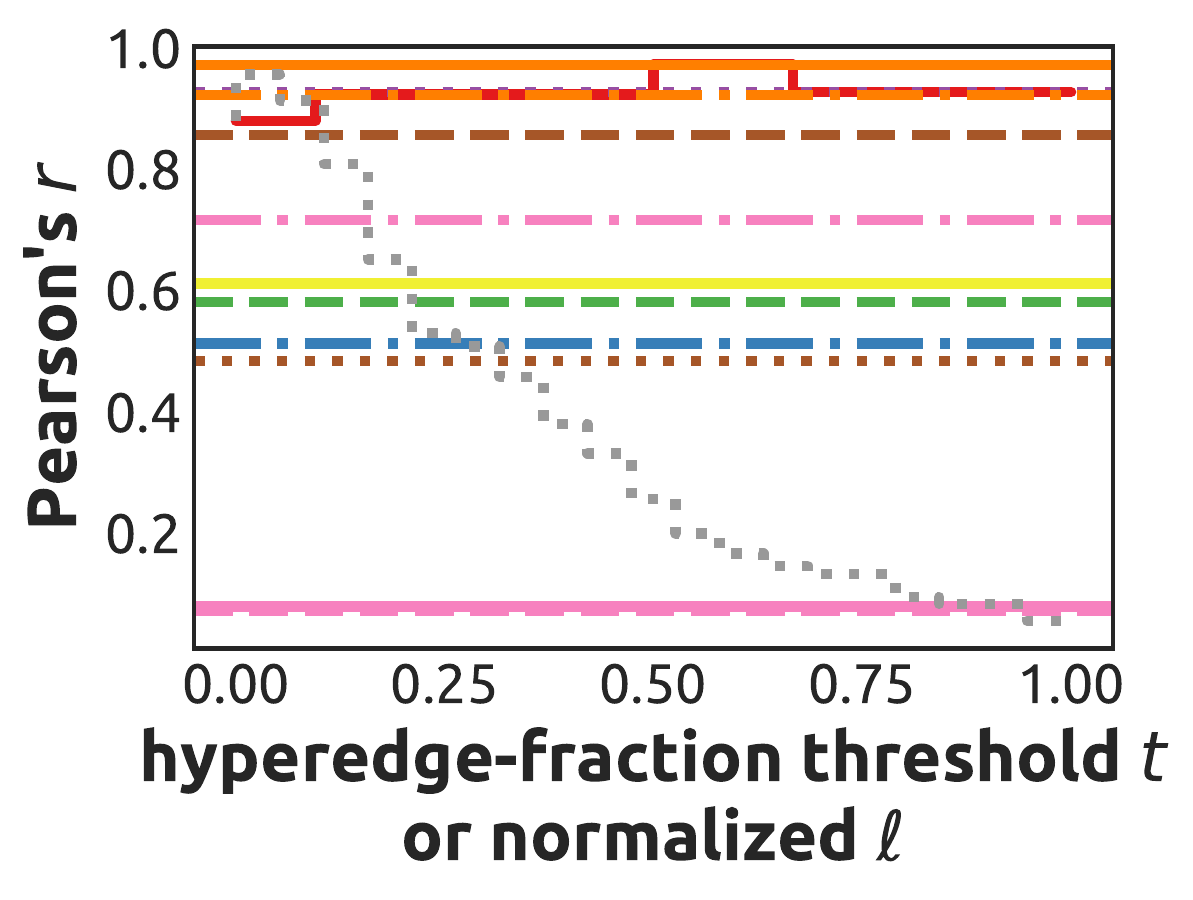}
		\caption{threads-math}
	\end{subfigure}
	\begin{subfigure}[b]{0.32\textwidth}
		\centering
		\includegraphics[scale=0.4]{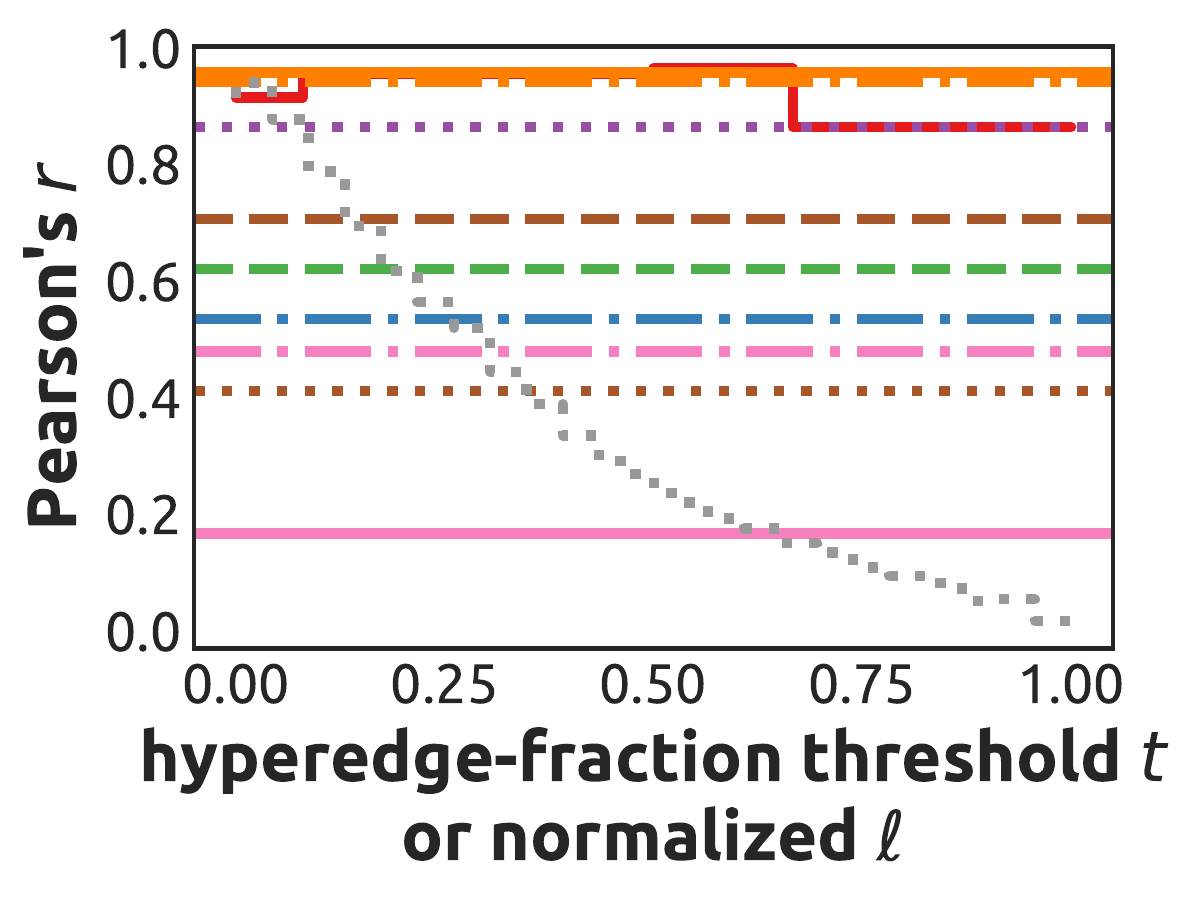}
		\caption{threads-SO}
	\end{subfigure}  
	\caption{Supplementary results for Fig.~\ref{fig:inf_summary}.}
	\label{fig:inf_summary_sr}
\end{figure*}

\begin{table}[t!]
	\caption{Full results of the collapsed $(k, t)$-hypercore problem ($b = 100$). Time: running time (in seconds). Red.: reduction in the hypercore size.} \label{tab:min_res}
	\centering
	\resizebox{0.76\columnwidth}{!}{%
		\begin{tabular}{lll rrrrr rrrrr}
			\toprule			
			&&& \multicolumn{2}{c}{\textsc{hyperCKC}} & \multicolumn{2}{c}{\textsc{HyCoM}-1} & \multicolumn{2}{c}{\textsc{HyCoM}-10} & \multicolumn{2}{c}{\textsc{HyCoM}-100} & \multicolumn{2}{c}{\textsc{HyCoM}+} \\
			dataset & $k$ & $t$ & time & red. & time & red. & time & red. & time & red. & time & red. \\
			\hline
   
		\multirow{15}{*}{coauth-DBLP} & $5$ & $0$   	& 63.01 & 1,176  & 15.80 & 680   & 16.44 & 729   & 17.50 & 812   & 5.37 & 1,088  \\
			 & $5$ & $0.2$ 	& 63.38 & 1,185  & 15.87 & 691   & 16.47 & 738   & 17.54 & 819   & 5.41 & 1,091  \\
			 & $5$ & $0.4$ 	& 70.43 & 1,378  & 18.21 & 831   & 18.87 & 897   & 19.91 & 986   & 6.28 & 1,250  \\
			 & $5$ & $0.6$ 	& 113.38 & 2,536 & 34.17 & 1,675 & 35.00 & 1,795 & 36.18 & 1,969 & 11.83 & 2,259 \\
			 & $5$ & $0.8$ 	& 130.19 & 6,681 & 43.21 & 4,487 & 43.90 & 5,075 & 45.86 & 5,785 & 13.88 & 5,943 \\
			 & $5$ & $1$ 	    & 87.75 & 7,523  & 25.40 & 5,914 & 26.10 & 6,108 & 27.93 & 6,752 & 8.70 & 6,731 \\
			 & $10$ & $0$ 	& 50.42 & 1,181  & 9.74 & 661    & 10.08 & 699   & 11.37 & 832   & 2.51 & 1,008 \\
			 & $10$ & $0.2$ 	& 50.82 & 1,195  & 9.74 & 676    & 10.21 & 716   & 11.46 & 852   & 2.55 & 1,019 \\
			& $10$& $0.4$ 	& 56.94 & 1,462  & 11.38 & 844   & 11.78 & 877   & 13.13 & 1,001 & 3.08 & 1,206 \\
			& $10$& $0.6$ 	& 75.03 & 2,798  & 16.05 & 1,621 & 16.45 & 1,711 & 18.14 & 2,050 & 4.30 & 2,328 \\
			& $10$& $0.8$ 	& 0.24 & 2,699   & 0.034 & 2,742 & 0.051 & 2,699 & 0.13 & 2,699  & 0.13 & 2,740 \\
			& $20$& $0$ 	& 33.32 & 1,251  & 5.21 & 590    & 5.45 & 632    & 6.72 & 788    & 1.02 & 945 \\
			& $20$& $0.2$ 	& 33.56 & 1,286  & 5.24 & 588    & 5.50 & 617    & 6.73 & 802    & 1.03 & 1,000 \\
			& $20$& $0.4$ 	& 34.72 & 1,558  & 5.14 & 749    & 5.41 & 784    & 6.71 & 1,003  & 1.11 & 1,206 \\
			& $20$& $0.6$ 	& 13.67 & 2,817  & 2.06 & 1,813  & 2.19 & 1,975  & 3.17 & 2,288  & 0.51 & 2,218 \\
			\hline		
			\multirow{15}{*}{coauth-Geology} & $5$& $0$   	& 16.65 & 859    & 3.83 & 343   & 4.03 & 352   & 4.44 & 423   & 1.35 & 708 \\
			& $5$& $0.2$ 	& 16.79 & 877    & 3.84 & 344   & 4.05 & 352   & 4.46 & 429   & 1.36 & 720 \\
			& $5$& $0.4$ 	& 20.55 & 1,120  & 4.60 & 386   & 4.86 & 403   & 5.26 & 518   & 1.77 & 865 \\
			& $5$& $0.6$ 	& 36.58 & 2,004  & 9.22 & 660   & 9.58 & 686   & 10.07 & 914  & 3.29 & 1,578 \\
			& $5$& $0.8$ 	& 32.32 & 5,334  & 7.58 & 2,992 & 7.92 & 3,164 & 8.45 & 3,986 & 2.53 & 3,891 \\
			& $5$& $1$ 		& 4.67 & 7,294   & 0.94 & 9,498 & 0.94 & 9,576 & 1.24 & 9,536 & 0.47 & 6,891 \\
			& $10$& $0$ 	& 12.81 & 856    & 2.65 & 316   & 2.77 & 328   & 3.14 & 423   & 0.64 & 720 \\
			& $10$& $0.2$ 	& 12.97 & 857    & 2.68 & 331   & 2.80 & 349   & 3.17 & 441   & 0.65 & 715 \\
			& $10$& $0.4$ 	& 16.81 & 1,132  & 3.34 & 374   & 3.49 & 386   & 3.89 & 524   & 0.85 & 926 \\
			& $10$& $0.6$ 	& 21.51 & 2,652  & 3.92 & 914   & 4.03 & 966   & 4.61 & 1,311 & 1.03 & 1,867 \\
			& $10$& $0.8$ 	& 0.0027 & 299   & 0.0023 & 388 & 0.0027 & 299 & 0.0027 & 299 & 0.03 & 388 \\
			& $20$& $0$ 	& 9.08 & 940     & 1.73 & 319   & 1.81 & 330   & 2.14 & 462   & 0.29 & 689 \\
			& $20$& $0.2$ 	& 9.18 & 977 	 & 1.74 & 334 	& 1.81 & 343   & 2.22 & 476   & 0.30 & 713 \\
			& $20$& $0.4$ 	& 10.13 & 1,470  & 1.78 & 528   & 1.85 & 550   & 2.23 & 742   & 0.32 & 1,105 \\
			& $20$& $0.6$ 	& 0.031 & 459    & 0.0093 & 529 & 0.014 & 459  & 0.030 & 459  & 0.033 & 527 \\
			\hline		
			\multirow{12}{*}{tags-SO} & $5$& $0$ 	& 191.84 & 1,042  & 42.03 & 930     & 55.66 & 962     & 108.93 & 1,012  & 7.87 & 1,042 \\
			& $5$& $0.6$ 	& 361.12 & 2,103  & 111.72 & 1,924  & 125.07 & 1,968  & 184.97 & 2,056  & 18.83 & 2,102 \\
			& $5$& $0.8$ 	& 605.38 & 9,669  & 185.49 & 9,232  & 208.27 & 9,306  & 296.96 & 9,550  & 25.39 & 9,663 \\
			& $5$& $1$ 	& 488.02 & 15,560 & 152.06 & 15,057 & 177.59 & 15,150 & 252.22 & 15,517 & 20.05 & 15,552 \\
			& $10$& $0$ 	& 202.64 & 1,258  & 42.03 & 1,126   & 55.65 & 1,160   & 107.56 & 1,203  & 8.26 & 1,256 \\
			& $10$& $0.6$ 	& 380.46 & 2,638  & 110.93 & 2,423  & 124.09 & 2,450  & 185.16 & 2,609  & 18.99 & 2,640 \\
			& $10$& $0.8$ 	& 609.48 & 11,253 & 179.73 & 10,919 & 203.65 & 11,035 & 291.75 & 11,227 & 24.31 & 11,244 \\
			& $10$& $1$ 	& 476.07 & 16,950 & 145.35 & 16,687 & 169.36 & 16,750 & 240.98 & 16,947 & 19.44 & 16,951 \\				
			& $20$& $0.4$ 	& 213.17 & 1,328  & 41.38 & 1,198   & 54.94 & 1,224   & 107.45 & 1,313  & 8.49 & 1,327 \\
			& $20$& $0.6$ 	& 394.77 & 2,905  & 108.75 & 2,743  & 122.387 & 2,796 & 183.53 & 2,885  & 18.99 & 2,900 \\
			& $20$& $0.8$ 	& 586.81 & 11,431 & 166.97 & 11,147 & 190.06 & 11,254 & 275.55 & 11,439 & 22.98 & 11,438 \\
			& $20$& $1$ 	& 445.40 & 16,427 & 131.03 & 16,302 & 153.92 & 16,355 & 219.95 & 16,427 & 18.12 & 16,437 \\
			\hline		
			\multirow{15}{*}{threads-math} & $5$& $0$   	& 18.65 & 4,549   & 2.89 & 4,319    & 3.61 & 4,395    & 7.04 & 4,543    & 0.70 & 4,552 \\
			& $5$& $0.4$ 	& 18.71 & 4,564   & 2.92 & 4,345    & 3.62 & 4,419    & 7.12 & 4,557    & 0.70 & 4,567 \\
			& $5$& $0.6$ 	& 21.23 & 5,339   & 3.93 & 5,209    & 4.62 & 5,259    & 8.35 & 5,339    & 0.91 & 5,336 \\
			& $5$& $0.8$ 	& 32.89 & 9,366   & 8.31 & 9,262    & 9.52 & 9,296    & 14.53 & 9,366   & 1.72 & 9,359 \\
			& $5$& $1$ 	& 32.61 & 9,741   & 8.59 & 9,619    & 9.74 & 9,670    & 14.54 & 9,741   & 1.72 & 9,731 \\
			& $10$& $0$ 	& 15.93 & 2,708   & 2.40 & 2,545    & 3.02 & 2,595    & 6.33 & 2,697    & 0.50 & 2,708 \\
			& $10$& $0.4$ & 16.08 & 2,722   & 2.45 & 2,565    & 3.07 & 2,615    & 6.39 & 2,713    & 0.51 & 2,708 \\
			& $10$& $0.6$ & 18.93 & 3,318   & 3.26 & 3,228    & 3.96 & 3,241    & 7.65 & 3.316    & 0.67 & 3,320 \\
			& $10$& $0.8$ & 22.86 & 5,618   & 4.78 & 5,522    & 5.71 & 5,582    & 9.51 & 5,618    & 0.94 & 5,608 \\
			& $10$& $1$ 	& 21.49 & 5,773   & 4.53 & 5,644    & 5.32 & 5,707    & 8.96 & 5,768    & 0.87 & 5,760 \\				
			& $20$& $0$ 	& 13.06 & 1,658   & 1.95 & 1,573    & 2.49 & 1,598    & 5.59 & 1,661    & 0.37 & 1,673 \\
			& $20$& $0.4$ & 13.25 & 1,691   & 1.99 & 1,594    & 2.55 & 1,623    & 5.64 & 1,690    & 0.38 & 1,683 \\
			& $20$& $0.6$ & 16.12 & 2,157   & 2.62 & 2,099    & 3.29 & 2,143    & 6.69 & 2,160    & 0.49 & 2,159 \\
			& $20$& $0.8$ & 12.80 & 3,693   & 2.09 & 3,627    & 2.63 & 3,667    & 5.03 & 3,692    & 0.43 & 3,701 \\
			& $20$& $1$ 	& 10.83 & 3,625   & 1.68 & 3,589    & 2.12 & 3,616    & 4.25 & 3,635    & 0.36 & 3,622 \\				
			\hline		
			\multirow{15}{*}{threads-SO} & $5$& $0$   	& 634.38 & 13,739  & 183.04 & 11,927 & 186.79 & 12,209 & 205.62 & 13,083 & 41.64 & 13,736 \\
			& $5$& $0.4$ 	& 642.64 & 13,748  & 182.99 & 11,935 & 187.02 & 12,216 & 205.41 & 13,093 & 41.60 & 13,745 \\
			& $5$& $0.6$ 	& 653.52 & 15,050  & 224.97 & 13,462 & 227.72 & 13,659 & 240.00 & 14,536 & 49.05 & 15,046 \\
			& $5$& $0.8$ 	& 1409.66 & 31,383 & 522.59 & 29,317 & 529.38 & 30,198 & 565.72 & 31,133 & 112.59 & 31,336 \\
			& $5$& $1$ 		& 1592.06 & 36,082 & 572.42 & 34,044 & 580.25 & 34,663 & 638.44 & 35,913 & 133.73 & 36,013 \\
			& $10$& $0$ 	& 562.91 & 8,425   & 148.98 & 7,432  & 154.16 & 7,532  & 170.28 & 8,100  & 29.87 & 8,417 \\
			& $10$& $0.4$ 	& 564.14 & 8,449   & 149.99 & 7,456  & 153.27 & 7,558  & 170.84 & 8,124  & 30.08 & 8,433 \\
			& $10$& $0.6$ 	& 641.40 & 10,100  & 204.75 & 9,210  & 205.99 & 9,331  & 219.53 & 9,879  & 38.35 & 10,065 \\
			& $10$& $0.8$ 	& 1105.59 & 21,202 & 340.25 & 19,490 & 346.77 & 19,819 & 375.53 & 21,091 & 67.12 & 21,099 \\
			& $10$& $1$ 	& 1131.47 & 23,234 & 342.87 & 22,056 & 349.27 & 22,311 & 382.53 & 23,058 & 66.88 & 23,115 \\				
			& $20$& $0$ 	& 477.38 & 5,155   & 120.34 & 4,502  & 123.52 & 4,621  & 140.99 & 4,943  & 19.31 & 5,124 \\
			& $20$& $0.4$ 	& 482.70 & 5,177   & 123.59 & 4,539  & 128.87 & 4,645  & 145.46 & 4,967  & 19.66 & 5,154 \\
			& $20$& $0.6$ 	& 610.01 & 7,060   & 165.86 & 6,457  & 170.97 & 6,591  & 187.87 & 6,889  & 28.21 & 7,019 \\
			& $20$& $0.8$ 	& 690.80 & 14,129  & 158.07 & 13,622 & 161.77 & 13,725 & 185.95 & 14,029 & 27.34 & 14,009 \\
			& $20$& $1$ 	& 533.83 & 14,734  & 109.77 & 14,259 & 113.39 & 14,358 & 134.04 & 14,614 & 20.65 & 14,461 \\				
			\hline		
		\end{tabular}%
	}
\end{table}

\end{document}